%% file: pmc_output_linear_open_access.tex
\documentclass[11pt,letterpaper]{article}

\usepackage[utf8]{inputenc}
\usepackage{amsmath,amsthm,amssymb}
\usepackage{fullpage}
\usepackage[ruled,vlined]{algorithm2e}
\usepackage{tikz}
\usepackage{esvect}
\usepackage{caption}
\usepackage{subcaption}
\usepackage[left= 3 cm,right=3 cm,top=3 cm,bottom=3 cm]{geometry}
\usepackage{mathtools}
\usepackage[colorlinks = true, citecolor = {blue}]{hyperref}
\usepackage{array}
\usepackage{comment}
\usepackage{authblk}
\usepackage{thm-restate}
\usepackage{enumitem}
\usepackage{todonotes}
\usetikzlibrary{positioning}
\usetikzlibrary{patterns}
\usetikzlibrary{decorations}
\usetikzlibrary{shapes.geometric}
\usetikzlibrary{shapes.misc}
\usetikzlibrary{arrows.meta}
\usetikzlibrary{patterns}
\usetikzlibrary{calc}

\tikzstyle{noeud}=[scale=0.8,draw, circle, fill=black]

\tikzset{
bicolor/.style 2 args={
  dashed,dash pattern=on 10pt off 10pt,#1,
  postaction={draw,dashed,dash pattern=on 10pt off 10pt,#2,dash phase=10pt}
  },
}

\newtheorem{theorem}{Theorem}
\newtheorem{lemma}{Lemma}
\newtheorem{corollary}{Corollary}
\newtheorem{definition}{Definition}

\newtheorem{observation}{Observation}

\newcommand{\defdecproblem}[3]{
 \vspace{1mm}
\noindent\fbox{
 \begin{minipage}{0.96\textwidth}
 #1 \\
 {\bf{Input:}} #2 \\
 {\bf{Question:}} #3
 \end{minipage}
 }
 \vspace{1mm}
}

\newcommand{\absep}{$(a,b)$-separator}
\newcommand{\notxi}[2]{\overline{#1_{#2}}}
\newcommand{\sepa}[2]{\mathcal{S}(#1,#2)}
\newcommand{\sepag}[3]{\mathcal{S}_{#1}(#2,#3)}
\newcommand{\torso}{\textsf{torso}}
\newcommand{\removal}{\textsc{conflict-greedy}}
\newcommand{\btenum}{\textsc{BT-enum}}

\newcommand{\lis}[1]{\textsc{lis}-(tw $\le #1)$}
\newcommand{\cmso}[1]{\textsc{max}-(tw $\le #1,\varphi)$}

\newcommand{\ALGO}[0]{\textsc{PMC-Traversal}}

\title{Output-linear time enumeration of potential maximal cliques\footnote{This work was supported by the French ``Agence Nationale de la Recherche'' project SPAWN (ANR-25-CE48-1750) and project MIAI (ANR-23-IACL-0006).}}

\author[1]{Pierre Berg\'e}
\author[2]{Vincent Limouzy}
\author[2]{Nicolas Schivre\footnote{Corresponding author}}

\affil[1]{Université Grenoble Alpes, CNRS, Grenoble INP, LIG, 38000 Grenoble, France}
\affil[2]{Université Clermont-Auvergne, CNRS, Mines de Saint-Etienne, Clermont-Auvergne-INP, LIMOS, 63000 Clermont-Ferrand, France}

\date{}

\begin{document}

\maketitle

\begin{abstract}
First defined by Bouchitté and Todinca, Potential Maximal Cliques (PMC) are a well studied notion in graph theory due to its numerous applications. The PMCs of an undirected graph $G$ are defined as the maximal cliques of the minimal triangulations of $G$. Their number can be exponential in the size of the graph. PMCs were crucial in the design of a polynomial-time algorithm which solves \textsc{Maximum Weight Independent Set} on $P_6$-free graphs (Grzesik, Klimošová, Pilipczuk, Pilipczuk, 2020). They also play a central role in the proposal of a polynomial-time algorithm to solve \textsc{Largest Induced Subgraph of treewidth $\le t$} (Fomin, Todinca, Villanger, 2013). 

Last but not least, certainly one of the main application of PMCs is the existence of an exact dynamic-programming algorithm to determine the treewidth of a graph (Fomin, Kratsch, Todinca, Villanger, 2008) which uses its list of PMCs as an input. In fact, the enumeration of PMCs stands as the bottleneck for the computation of treewidth with this method: the PMC-enumeration step takes output-quadratic time (Bouchitté, Todinca, 2002) while the dynamic-programming one takes a time linear in the number of PMCs. This two-step procedure was proven to be among the most performing approaches in practice to determine treewidth. 

Answering a longstanding open question, we propose an algorithm to enumerate PMCs in output-linear time, {\em i.e.} a running time linear in the number of PMCs. This offers nice perspectives not only from the theoretical viewpoint - better running time for the PMC-based methods determining treewidth, for \textsc{Minimum Fill-In} and \textsc{Largest Induced Subgraph of treewidth $\le t$} - but also from the practical side, with the possibility to propose top-performance implementation.
\end{abstract}

\newpage

\tableofcontents

\newpage

\section{Introduction} \label{sec:intro}

The theory of potential maximal cliques (PMC) was pioneered more than twenty years ago by Bouchitté and Todinca~\cite{BoTo98,BoTo99,BoTo01,BoTo02}. This combinatorial structure in graphs was introduced in the first place to compute efficiently the treewidth. Through the years, other applications of PMCs arose, a famous example being the design of polynomial-time algorithms for \textsc{maximum weight independent set}~\cite{GrKlPiPi19,LoVaVi14}. Also, many efficient algorithms for NP-hard problems on graphs use the collection of PMCs as an input to achieve their goal, making thus the enumeration of PMCs be the bottleneck for numerous algorithmic challenges.

\subsection{Motivations}

Considering an undirected connected graph $G = (V,E)$, a vertex set $X \subseteq V$ is a \emph{minimal separator} of $G$ if there exists a pair $a,b \in V \setminus X$ such that not only $a$ and $b$ are disconnected in $G \setminus X$ but also no proper subset of $X$ disconnects $a$ from $b$. We denote by $\Delta_G$ the set of minimal separators of $G$. An apparently unrelated notion - but in fact with strong connections - is the concept of \emph{potential maximal clique} (PMC). A vertex set $X \subseteq V$ is a PMC of $G$ if it is a maximal clique of some \emph{minimal triangulation} of $G$, {\em i.e.} an inclusion-wise minimal add-on of edges to make $G$ become chordal. Let $\Pi_G$ denote the set of all PMCs of $G$. Concretely, due to the strong relationship between triangulations and tree decompositions, PMCs can be seen as ``candidates for bags of useful tree decompositions'', echoing the words of Grzesik, Klimosov, Pilipczuk and Pilipczuk in~\cite{GrKlPiPi19}. This reason explains why PMCs play a crucial role in the design of exact exponential-time algorithms, as detailed below. Note that PMCs can be defined for disconnected graphs, but since the set of PMCs of a disconnected graph is simply the disjoint union of PMCs of its connected components, all graphs considered in this paper are connected.

We call \textsc{Treewidth} the problem of determining the treewidth of a graph. Coming back to the analogy between PMCs and bags of tree decompositions, computing the set $\Pi_G$ seems to be a natural approach to obtain the treewidth of $G$ and/or an optimal tree decomposition. This is what Bouchitté and Todinca stated formally in~\cite{BoTo01}. They proposed a dynamic programming procedure which solves \textsc{Treewidth} in time $O(n^2 \cdot \vert \Delta_G \vert \cdot \vert \Pi_G \vert )$, given that the sets $\Delta_G$ and $\Pi_G$ are provided, with $n = \vert V \vert$. This was later improved~\cite{FoKrToVi08} with a running time $O(n^3 \vert \Pi_G \vert)$. Such execution times are exponential since a graph may admit a number of minimal separators and/or PMCs exponential in $n$.

In these aforementioned algorithms for \textsc{Treewidth}, PMCs are essential to achieve the divide-and-conquer scheme which determines the treewidth of successive graphs until going up to the input one. However, if we aim at using such algorithms in practice, we must be able to enumerate PMCs before the execution, as efficiently as possible. Minimal separators of a graph $G$ can be enumerated in polynomial delay~\cite{BeBoCo99}, which implies a global enumeration time of $O(n^3\vert \Delta_G \vert)$, which is output-linear. Bouchitté and Todinca~\cite{BoTo02} proposed an algorithm (that we call \btenum ) enumerating all PMCs of a graph with a running time in $O(n^2m\vert \Delta_G \vert^2)$, where $m = \vert E \vert$. Additionally, they established the following inequality which strengthens even more the links between minimal separators and PMCs: $\frac{\vert \Delta_G \vert}{n} \le \vert \Pi_G \vert \le n\vert \Delta_G \vert^2 + n\vert \Delta_G \vert + 1$. Due to the left-hand side of this inequality, their algorithm \btenum\ guarantees an output-quadratic time: there can be instances on which the time taken by the enumeration is quadratic in the size of the output $\Pi_G$. Thanks to \btenum , by assembling results together, \textsc{Treewidth} can be solved in time $O(n^2m\vert \Delta_G \vert^2 + n^3\vert \Pi_G \vert) = O(n^4 \vert \Delta_G \vert^2)$, which can be quadratic in the number of PMCs (if we neglect polynomial factors).

An important challenge consists thus in identifying efficient algorithms for the enumeration of PMCs. Since \btenum\ is output-quadratic, a natural question is whether an output-linear time PMC enumeration algorithm can be built. This question is part of the list of open problems in parameterized and exact computations at IPEC 2006 conference~\cite{BoCaChFe06}, and was also stated in several articles~\cite{FoKrToVi08,GrKoTaZa25,Ko20,KoBeJa19,Vi05} as a way to obtain a faster exponential time algorithm for \textsc{Treewidth}. Indeed, a PMC enumeration algorithm running in $O(n^{O(1)}\vert \Pi_G \vert) = \tilde{O}(\vert \Pi_G \vert)$\footnote{Notation $\tilde{O}$ neglects factors which are polynomial in the number $n$ of vertices.} would imply the same time complexity for \textsc{Treewidth}, improving the current $\tilde{O}(\vert \Delta_G \vert^2)$, especially on graphs where values $\vert \Delta_G \vert$ and $\vert \Pi_G \vert$ are close. In this paper, we answer positively to this open question.

A bunch of results on PMCs deals with their number in function of $n$. Fomin and Villanger presented the two most recent ones. In the first one~\cite{FoVi08}, they prove that $\vert \Pi_G \vert = O(1.7549^n)$ and, moreover, the list $\Pi_G$ can be computed in time $O(1.7549^n)$, matching tightly this upper bound. In the second one~\cite{FoVi10}, they establish exactly the same contributions, but with the new bound $O(1.7346^n)$. Since the computation of PMCs is the bottleneck for \textsc{Treewidth}, this latter result offers the possibility to solve it in exponential time $O(1.7346^n)$. This is, to the best of our knowledge, the lower running time for \textsc{Treewidth} if we look at exact exponential-time algorithms, function of $n$. Unfortunately, this enumeration algorithm for PMCs admits some limits in practice, since it may take time $\Theta(1.7346^n)$ while the size of $\Pi_G$ is much smaller. Moreover, this bound is probably not tight. For these reasons, the importance of designing an output-linear time algorithm for PMC enumeration is reinforced.

While Fomin and Villanger studied deeply the enumeration of PMCs through extremal combinatorics, they also identified other applications~\cite{FoKrToVi08,FoVi08,FoVi10}. One of their interesting results follows~\cite{FoVi10}.
Given a graph $G$, an integer $t \ge 0$, and sets $\Delta_G$ and $\Pi_G$, one can compute the largest induced subgraph of treewidth at most $t$ in time $O(n^{O(t)}(\vert \Pi_G \vert + \vert \Delta_G \vert))$. We call this problem \lis{t}.
Since enumerating PMCs can be done in time $O(n^2m\vert \Delta_G \vert^2)$ as written above, this result can be restated as function of the input graph only. Given a graph $G$ and an integer $t \ge 0$, \lis{t}\ can be solved in time $O(n^{O(t)}\vert \Delta_G \vert^2)$. Again, our main contribution to PMC enumeration will allow us to improve this statement. 

Furthermore, a few years later, this result was generalized to a larger set of problems in~\cite{FoToVi14}. In fact, given a graph $G$ and its set of PMCs $\Pi_G$, an integer $t \ge 0$, and a CMSO\footnote{Counting Monadic Second Order logic.} formula $\varphi$, one can compute in time $O(n^{O(t)}\cdot f(t,\varphi) \cdot \vert \Pi_G \vert)$ the largest set of vertices $X \subseteq F$ such that $F$ has treewidth at most $t$ and $(G[F],X) \models \varphi$, where $f$ is some exponential function. Behind this rather abstract formulation - let us call it \cmso{t} - we can find many problems, classical ones as \textsc{Maximum Independent Set} and \textsc{Maximum Induced Forest} and more exotic ones as \textsc{Independent Cycle Packing} and \textsc{Maximum Induced Subgraph with $\le \ell$ copies of $p$-cycles}.

In 2014, a major algorithmic impact of PMCs came with the design of a polynomial-time algorithm for \textsc{Maximum Weight Independent Set} (MWIS) on $P_5$-free graphs by Lokshtanov, Vatshelle and Villanger~\cite{LoVaVi14}. The algorithm consists in two steps: first the computation of a polynomial-size family $\mathcal{F}$ of particular PMCs, second a dynamic programming procedure based on $\mathcal{F}$ to retrieve the maximum-weighted independent set. Inspired by this technique, the authors of~\cite{GrKlPiPi19} proposed a polynomial-time algorithm for MWIS on $P_6$-free graphs. This breakthrough strengthened the general belief that MWIS is polynomial-time solvable on $P_k$-free graphs, for any constant $k$. Very recently, the authors of~\cite{ChMcPiPiRz24} designed a similar technique to tackle \cmso{t} on $P_6$-free graphs in polynomial-time for constant $t$. Furthermore, $K_{2,q}$-induced-minor-free graphs were characterized thanks to PMCs, enabling MWIS to be solved in polynomial time in this graph class~\cite{DaMiSt24}.

Another application of PMCs is the study of the \textsc{Minimum fill-in} problem: can a graph $G$ become chordal with the add-on of at most $k$ edges? This question is naturally related to PMCs as it deals with triangulations. As for \textsc{Treewidth}, Bouchitté and Todinca~\cite{BoTo01} proved that, given a graph $G$ with its lists of minimal separators $\Delta_G$ and PMCs $\Pi_G$, we can solve \textsc{Minimum fill-in} in time $O(n^2 \cdot \vert \Delta_G \vert \cdot \vert \Pi_G \vert )$. This was later improved~\cite{FoKrToVi08} with a running time $O(n^3 \vert \Pi_G \vert)$, depending only on list $\Pi_G$. Again, PMC enumeration stands as a bottleneck for solving \textsc{Minimum fill-in}. From a parameterized point of view, an exact subexponential parameterized algorithm was proposed~\cite{FoVi12}, with a running time $O(2^{O(\sqrt{k}\log k)} + k^2 nm)$. It enumerates, as a subroutine, a list of certain PMCs called \emph{vital} PMCs.

From a more practical viewpoint, Korhonen presented a few years ago an experimental study of the enumeration algorithms based on PMCs in~\cite{Ko20}. It confirmed that going through PMC enumeration is among the most effective methods to solve \textsc{Treewidth}, outperforming declarative approaches. Furthermore, the PMC-based algorithm evoked earlier for \textsc{Minimum fill-in} obtains the best performance on benchmarks, defeating integer-programming alternatives. The latter conclusion also holds for PMC-based approaches on other applications: computing generalized and fractional \emph{hypertreewidth}, identifying the \emph{perfect phylogeny}, or determining the \emph{treelength}. As a last remark, another work proposed alternative practical techniques to compute the list of PMCs, using SAT and ASP declarative solvers~\cite{KoBeJa19}.

\subsection{Enumeration algorithms}

Let us dive now into the details of PMC enumeration. Several results have been stated above, however the algorithm \btenum\ \cite{BoTo02} stands as the reference for enumerating PMCs in general graphs, even in practice~\cite{Ko20}. We recall that its time complexity is $\tilde{O}(\vert \Delta_G \vert^2)$, without any guarantee on the delay between two solution outputs (all PMCs might be returned to the very end of the execution of the algorithm). \btenum\ uses simple but insightful observations on PMCs. It is a recursive algorithm: given a graph $G$, it begins with a recursive call on graph $G \setminus \{a\}$ for some arbitrary vertex $a \in V$. Indeed, given the list of PMCs of  $G \setminus \{a\}$, one can obtain all PMCs of $G$ following certain rules, recalled below.

On one hand, Bouchitté and Todinca~\cite{BoTo02} showed that the PMCs of $G$ containing $a$ can be written either $S \cup \{a\}$ or $\Omega' \cup \{a\}$, where $S$ is a minimal separator of $G$ and $\Omega'$ is a PMC of $G \setminus \{a\}$. Since the enumeration of minimal separators is achieved in output-linear time~\cite{BeBoCo99} and that $\Delta_G \le n\vert \Pi_G \vert$, checking whether all such sets are PMCs can be done in output-linear time after the recursive call. On the other hand, keeping identical notation, the PMCs which do not contain $a$ can be written either $\Omega'$ or $S \cup (C \cap T)$, where the additional variable $T$ is also a minimal separator of $G$ and $C$ is a connected component of $G \setminus S$ fully adjacent to $S$. The time complexity of \btenum\ comes from this last costly case, where the check of all such sets  $S \cup (C \cap T)$ takes $\tilde{O}(\vert \Delta_G \vert^2) = \tilde{O}(\vert \Pi_G \vert^2)$. \btenum\ highlights the fact that PMCs can be enumerated in output-polynomial time, and more precisely in output-quadratic time. Handling differently the PMCs provided by this last costly case is the strategy we will adopt to decrease the running time for PMC enumeration.

Some alternative results on specific graph classes were proposed to decrease this output-quadratic running time. Several graph classes admit a number of PMCs which is polynomial in $n$: weakly chordal, polygon-circle, circular-arc, and $d$-trapezoid~\cite{BoTo99,FoToVi14}. \btenum\ guarantees that, on graph classes with a number of PMCs (and hence minimal separators) polynomial in $n$, its running time is polynomial in $n$ too.
Then, a very recent contribution was proposed in~\cite{GrKoTaZa25}: PMCs can be enumerated in polynomial delay on triconnected planar graphs, which gives thus an output-linear time on this large family of graphs.
From a more parameterized viewpoint, two PMC enumeration algorithms were proposed~\cite{FoLiMoTo14} where the running time depends exponentially on respectively two parameters: \emph{vertex cover} and \emph{modular width}.

\btenum\ might use a quantity of memory exponential in $n$, since it is necessary to store all PMCs of $G \setminus \{a\}$ but also all minimal separators of $G$ to return the final set $\Pi_G$. It was shown very recently~\cite{BrCoLiPu24} that it is possible to modify \btenum\ so that its running time stays output-polynomial while the memory used in the process is polynomial in $n$.

A natural technique in enumeration consists in looking at the \emph{extension version} of the enumeration problem: if this extension version can be solved in polynomial time, then it gives an enumeration algorithm with polynomial delay and space~\cite{ReTa75}. Such an approach - called \emph{Flashlight Search}, but also sometimes \emph{Backtrack} or \emph{Binary Partition} - could be tried for PMC enumeration. The following problem is called \textsc{PMC Extension}: given a graph $G$, a \emph{mandatory} set $M \subseteq V(G)$ and a \emph{forbidden} one $F \subseteq V(G)$, is there a PMC $\Omega \in \Pi_G$ satisfying $M \subseteq \Omega$ and $\Omega \cap F = \emptyset$? Observe that \textsc{PMC Extension} is thus a decision problem. Being able to solve it in polynomial time would offer the possibility to ``flashlight'' PMCs, {\em i.e.} construct a binary decision tree, where each node represents a suitable partition $(M,F)$ and each leaf is related to a unique PMC. Flashlight Search works for the enumeration of minimal separators in polynomial delay and space~\cite{Ta10}. Unfortunately, we will show that such an approach will unlikely not work for PMC, strengthening our choice to propose different techniques.

\subsection{Contributions and consequences}

Our main contribution is the proposal of an output-linear time algorithm enumerating all PMCs of a graph.

\begin{restatable}{theorem}{mainResult}
	\label{th:main}
	Given a graph $G$ with $n = \vert V(G) \vert$, one can compute the list of all PMCs of $G$ in output-linear time $\tilde{O}(\vert \Pi_G \vert) = O(n^{4}m\vert \Pi_G \vert)$.
\end{restatable}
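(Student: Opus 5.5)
The plan is a traversal of a connected ``supergraph'' whose vertices are the PMCs of $G$; this is the \ALGO\ idea suggested by the macro name. Instead of the vertex-incremental recursion of \btenum, we enumerate $\Pi_G$ directly: start from one PMC, and from each discovered PMC generate a polynomial number of candidate neighbouring PMCs. Every PMC of $G$ must be reachable. We store discovered PMCs in a trie or dictionary keyed by their vertex sets, so membership tests cost $O(n)$. The total time is then $|\Pi_G|$ times the cost of generating and testing the neighbours of one PMC. Testing whether a vertex set is a PMC takes $O(nm)$ via the Bouchitt\'e--Todinca characterization: no component of $G\setminus\Omega$ is full to $\Omega$, and $\Omega$ becomes a clique after completing every $N(C)$. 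So the target $O(n^4m|\Pi_G|)$ requires that each PMC produce $O(n^3)$ candidates.

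The key steps, in order, are as follows.

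\textbf{(1) A starting PMC.} Compute a minimal triangulation of $G$ in polynomial time and take any of its maximal cliques.

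\textbf{(2) A local move.} Given a PMC $\Omega$ and the minimal separators $S_i=N(C_i)$ for the components $C_i$ of $G\setminus\Omega$, define neighbours of $\Omega$ in two ways:
\begin{itemize}
\item sets $S\cup\{x\}$ and $S\cup (C\cap T)$, where $S$ ranges over the $\le n$ separators bordering $\Omega$;
\item sets obtained by replacing one vertex, or by ``pushing'' $\Omega$ across one separator $S_i$ into the full component $C_i$ toward a vertex $x\in C_i$.
\end{itemize}
The push is the analogue of the Berry--Bordat--Cogis move $S\mapsto N(C')$ for $C'$ a component of $G\setminus (S\cup N(x))$. The natural candidate is: for $x\in C_i$, look at the components of $G\setminus(S_i\cup N(x))$ restricted to $C_i$, and form the PMC generated by $S_i$ together with $x$'s side. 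There are $O(n^2)$ choices of $(S_i,x)$, each giving $O(n)$ sets.

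\textbf{(3) Connectivity.} Show that every PMC is reachable from the start. For this, fix an arbitrary target PMC $\Omega^*$ and define a potential, for instance the size of the symmetric difference, or the distance between $\Omega^*$ and $\Omega$ in a common clique tree of a minimal triangulation. Then show that some move from any $\Omega\ne\Omega^*$ strictly decreases the potential.

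Step (3) is the heart of the paper and the main obstacle: \btenum's costly case $S\cup(C\cap T)$ pairs two arbitrary separators, and we must avoid enumerating all such pairs. My first attempt would use the fact that any PMC $\Omega^*$ not containing $a$ has the form $S\cup(C\cap T)$ with $S,T$ minimal separators. I would then argue by induction, along a path in the separator graph of Berry--Bordat--Cogis (whose edges are local moves), that $T$ can be reached through a sequence of ``intermediate'' PMCs of the form $S'\cup(C'\cap T')$. In each such PMC, $S'$ borders the current PMC, and consecutive PMCs differ by one local move. If this fails globally, I would fall back to a vertex-incremental scheme: process $G_i=G[\{v_1,\dots,v_i\}]$ and ensure $|\Pi_{G_i}|\le |\Pi_{G_{i+1}}|$ up to a factor polynomial in $n$, by mapping each PMC of $G_i$ injectively to one of $G_{i+1}$. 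I would also need each PMC of $G_{i+1}$ to be produced from polynomially many PMCs of $G_i$ by local moves, so that the total cost telescopes to $\tilde O(|\Pi_G|)$. Which of the two routes works determines the final $n^4m$ factor.
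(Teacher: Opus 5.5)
\textbf{Gap in step (3).} The step you flag as the main obstacle is not a lemma to fill in later; it is the whole content of the theorem. Neither of your routes comes with a concrete move for which completeness is proved.

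Your primary route is a single traversal from one maximal clique of a minimal triangulation, with polynomially many polynomial-time computable neighbours. If complete, it would give polynomial-delay enumeration. The paper lists that as open, and leaves open whether even its own relation $\mathcal{N}$, explored from a single PMC, reaches all of $\Pi_G$.

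The potentials you suggest do not work as stated:
\begin{itemize}
\item A common clique tree for $\Omega$ and $\Omega^*$ exists only when both are maximal cliques of one minimal triangulation, which fails for arbitrary pairs of PMCs.
\item For $|\Omega \mathbin{\triangle} \Omega^*|$ you give no reason why a decreasing move staying inside $\Pi_G$ exists. Vertex replacements or sets like $S\cup\{x\}$ rarely satisfy both shyness and the reachability property; compare \autoref{fig:neighbor}, where a natural local modification of a PMC is not a PMC.
\end{itemize}
Several moves are also undefined or not polynomial:
\begin{itemize}
\item ``The PMC generated by $S_i$ together with $x$'s side'' presupposes you can extract a PMC from a given set. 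That is \textsc{PMC Superset}, which the paper solves only through the $X$-close/$X$-shy separator characterization and a guided run of \removal.
\item $S\cup(C\cap T)$ is polynomial only if $T$ is restricted, and you do not say how.
\item Nothing makes the intermediate sets $S'\cup(C'\cap T')$ along a Berry--Bordat--Cogis path PMCs.
\end{itemize}

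\textbf{Your fallback and what it is missing.} The fallback is the right skeleton, and it is what the paper does. Recurse on $G\setminus\{a\}$ and get all PMCs that are not $a$-nice (PMCs of $G\setminus\{a\}$, and sets $\Omega'\cup\{a\}$, $S\cup\{a\}$) in output-linear time. This uses $|\Pi_{G\setminus\{a\}}|\le|\Pi_G|$ exactly, since a factor $p(n)>1$ per level would compound to $p(n)^n$, together with $|\Delta_G|\le n|\Pi_G|$. Then seed a BFS in the PMC-graph of $G$ with these PMCs.

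What is missing is everything that makes this BFS complete. The paper's move is: for $u^*,b^*\in\Omega'$, set $X^*=\Omega'\cup N_G(u^*)$, let $S^*$ be the $X^*$-close $(b^*,u^*)$-separator, and output $X^*\cap C_G[b^*,S^*]$. This is close in spirit to your Berry--Bordat--Cogis instinct, but $u^*$ lies inside the current PMC and there are only $O(n^2)$ candidates. That gives $O(n^3m)$ per PMC and $O(n^4m|\Pi_G|)$ over the $n$ recursion levels.

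Completeness is a backward statement with potential $|C_G(a,\Omega)|$. Every $a$-nice $\Omega$ has an in-neighbour $\Omega'$ with $a\in\Omega'$ or $C_G(a,\Omega')\subsetneq C_G(a,\Omega)$. So the chains, of length at most $n$, pass through other $a$-nice PMCs of $G$; they do not start directly from PMCs of $G\setminus\{a\}$.

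With $S=N(C_G(a,\Omega))$ and $T=\Omega\setminus S$, proving this needs three ingredients:
\begin{itemize}
\item an $a$-shutter $D\subseteq C_G(a,\Omega)$: an inclusion-wise minimal, close separator of pairs of $S$ in $G[\overline{\Omega}\cup\{u,v\}]$, built iteratively while $C_G(a,\Omega\cup D_i)$ grows;
\item an (extended) shutter-close set $S_D\subsetneq S$ fully adjacent to $D$;
\item a run of \removal\ on $\Omega\cup D$ whose priority rules delete only vertices of $S_D$.
\end{itemize}
Then $\Omega\subseteq\Omega'\cup N_G(u^*)$ for $u^*\in D$, and from $b^*\in T$ the $X^*$-close separator is exactly $S$, so the move returns $\Omega$. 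Without these ingredients, your requirement that each new PMC be produced from earlier ones by local moves restates what has to be proved rather than proving it.
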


Many literature results mentioned earlier looks like: given a graph $G$ with its lists of PMCs $\Pi_G$, one can solve a problem $\mathcal{P}$ in time $O(\vert \Pi_G \vert f_{\mathcal{P}}(n))$, where $f_{\mathcal{P}}$ is some function depending on $\mathcal{P}$. Thanks to \autoref{th:main}, we can now get rid of this assumption that set $\Pi_G$ is given. We can simply list the PMCs in output-linear time and then set $\Pi_G$ is available, without affecting asymptotically the running time for problem $\mathcal{P}$.

For example, as the time complexity of the enumeration of PMCs used to be in $\tilde{O}(\vert \Delta_G \vert^2)$, we already had that the treewidth can be computed in $\tilde{O}(\vert \Delta_G \vert^2)$, but this was potentially quadratic in the number of PMCs, since we have $\vert \Pi_G \vert = \tilde{O}(\vert \Delta_G \vert)$ on certain graph classes. Our output-linear time enumeration of PMCs thus allows us to deduce an exact algorithm for treewidth whose running time depends linearly on the number of PMCs. The same observation holds for \textsc{minimum fill-in}, \lis{t}\ and \cmso{t}. Combining \autoref{th:main} with contributions in~\cite{FoKrToVi08,FoToVi14,FoVi10}, we obtain the following corollary.

\begin{corollary}
Given a graph $G$, an integer $t \ge 0$ and a CMSO formula $\varphi$, we have:
\begin{itemize}
\item \textsc{treewidth} and \textsc{minimum fill-in} can be solved in $\tilde{O}(\vert \Pi_G \vert)$,
\item \lis{t} can be solved in $O(n^{O(t)}\vert \Pi_G \vert)$,
\item \cmso{t} can be solved in $O(n^{O(t)}\cdot f(t,\varphi) \cdot \vert \Pi_G \vert)$,
\end{itemize}
without the need to possess initially the list of PMCs.
\label{co:treewidth}
\end{corollary}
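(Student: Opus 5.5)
The statement to prove is Corollary~\ref{co:treewidth}. The plan is to compose \autoref{th:main} with the known dynamic-programming results that take $\Pi_G$ (and possibly $\Delta_G$) as input. The key observation is that every such algorithm has running time of the form $O(f_{\mathcal{P}}(n)\cdot \vert \Pi_G\vert)$, possibly plus a term in $\vert \Delta_G \vert$. Running the enumeration of \autoref{th:main} first costs $O(n^4 m \vert \Pi_G\vert)$, which is $\tilde{O}(\vert \Pi_G \vert)$. The total is therefore linear in $\vert\Pi_G\vert$ up to polynomial factors.

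The first step handles any minimal separators the downstream algorithms need. They are recovered from $\Pi_G$: each minimal separator is of the form $N(C)$ for $C$ a component of $G\setminus\Omega$ with $\Omega\in\Pi_G$, and every minimal separator arises this way. So $\Delta_G$ is computable in time $O(nm\vert\Pi_G\vert)$, with sorting or a trie for deduplication. Alternatively, one can run the output-linear algorithm of~\cite{BeBoCo99} in $O(n^3\vert\Delta_G\vert)$. In either case we use $\vert\Delta_G\vert\le n\vert\Pi_G\vert$ from~\cite{BoTo02}, so this costs $\tilde O(\vert\Pi_G\vert)$.

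The second step plugs into each problem:
\begin{itemize}
\item For \textsc{treewidth} and \textsc{minimum fill-in}, the algorithms of~\cite{FoKrToVi08} run in $O(n^3\vert\Pi_G\vert)$.
\item For \lis{t}, the algorithm of~\cite{FoVi10} runs in $O(n^{O(t)}(\vert\Pi_G\vert+\vert\Delta_G\vert))$. By the same inequality this is $O(n^{O(t)}\vert\Pi_G\vert)$.
\item For \cmso{t}, the algorithm of~\cite{FoToVi14} runs in $O(n^{O(t)} f(t,\varphi)\vert\Pi_G\vert)$.
\end{itemize}
Adding the $O(n^4m\vert\Pi_G\vert)=O(n^{6}\vert\Pi_G\vert)$ enumeration cost is absorbed in each bound, since $t\ge 0$ and $n^{O(t)}$ can be taken to include a fixed polynomial factor.

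There is essentially no obstacle; the only point requiring care is the bookkeeping. We must make sure each cited result needs only $\Pi_G$ and possibly $\Delta_G$, and that the latter is obtainable within the budget. The inequality $\vert\Delta_G\vert\le n\vert\Pi_G\vert$ settles this. We must also check the connectivity assumption: for disconnected graphs we apply everything per component, and the sum of the per-component PMC counts is $\vert\Pi_G\vert$.
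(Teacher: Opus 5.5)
Your proposal is correct and matches the paper's argument. The paper obtains the corollary by composing \autoref{th:main} with the $\Pi_G$-based algorithms of \cite{FoKrToVi08,FoVi10,FoToVi14}, and your use of $\vert\Delta_G\vert\le n\vert\Pi_G\vert$ to handle $\Delta_G$ is the same bookkeeping the paper's introduction relies on.

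One small quibble concerns your closing per-component remark. It is unnecessary, since the paper assumes $G$ connected throughout. It would also need more care for \cmso{t}, where $\varphi$ need not decompose over the components of $G$.
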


Let us describe briefly our approach for \autoref{th:main}. Our algorithm, called \ALGO , reuses some principles of \btenum\ while developing a new technique which allows to decrease the overall running time. It is also recursive: it enumerates all PMCs by launching first a recursive call on a subgraph $G \setminus \{a\}$ with an arbitrary vertex $a \in V$. The PMCs of $G$ containing $a$ - formed like $S \cup \{a\}$ or $\Omega' \cup \{a\}$ - are listed as in \btenum . Also, as in \btenum , all PMCs of $G$ which were PMCs of $G \setminus \{a\}$ are enumerated easily thanks to the recursive call. The major difference between both algorithms lies in the way to enumerate PMCs of $G$ which are neither PMCs of $G \setminus \{a\}$ nor containing $a$. Instead of listing brute-forcely all pairs of minimal separators of $G$, we introduce a novel and completely different technique to achieve this task: we execute a traversal of some ``PMC-graph'', whose nodes are PMCs and edges define a notion of neighborhood between PMCs. This approach was inspired by the \emph{Proximity Search} framework~\cite{CoUn19}.

More concretely, we define a neighborhood relation $\mathcal{N} \in \Pi_G \times \Pi_G$ which associates any PMC with a collection of neighbors. Relation $\mathcal{N}$ is fast to handle, since any PMC admits at most $n^2$ neighbors and, moreover, these neighbors can be computed in time polynomial in $n$. The PMC-graph mentioned above is then the graph $(\Pi_G,\mathcal{N})$. Naturally, this graph is partially known even after exploiting the recursive call, since otherwise all PMCs would have already been listed. Our idea consists of traversing the PMC-graph, starting from the 
PMCs already enumerated (PMCs containing $a$ or PMCs of $G \setminus \{a\}$), exploring their neighbors through $\mathcal{N}$ and so on. Said differently, we launch a \emph{Breadth-First Search} (BFS) whose initial queue contains the PMCs already enumerated, and the edges visited are given by relation $\mathcal{N}$. We show that this procedure allows us to attain any PMC of $G$.

By the way, a large part of the paper is dedicated to the proof of this latter statement. An essential tool to complete this proof consists, at some moment, in deciding whether some given set $X$ contains (or not) a PMC $\Omega \subseteq X$, and, if yes, producing such $\Omega$. We call this problem \textsc{PMC Superset}. Observe that \textsc{PMC Superset} is a special case of \textsc{PMC Extension} defined above, fixing $M = \emptyset$. We show that \textsc{PMC Superset} is polynomial-time solvable.

\begin{restatable}{theorem}{pmcSuperset}
	\label{th:pmc-superset}
	There is an algorithm deciding \textsc{PMC Superset} in time $O(n^3 m)$. Furthermore, for a yes-instance, it produces a solution PMC.
\end{restatable}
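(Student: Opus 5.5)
Following the paper's own description just before the statement, I read \textsc{PMC Superset} as: given $G$ and $X \subseteq V(G)$, decide whether some PMC $\Omega \in \Pi_G$ satisfies $\Omega \subseteq X$ (so $X$ is a \emph{superset} of the PMC), and output one if it exists. This is \textsc{PMC Extension} with $M=\emptyset$ and $F = V\setminus X$. The reading "find $\Omega \supseteq X$" would be \textsc{PMC Extension} with $F=\emptyset$, $M=X$; the paper announces it will show such extension problems unlikely to be tractable, so I do not plan for it.

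\textbf{Approach.} I would reduce the problem to computing one minimal triangulation of $G$ that is forced to respect $X$, and then inspect its maximal cliques. For each component $C$ of $G \setminus X$, the set $N(C) \subseteq X$ separates $C$ from the rest of the graph. The full components $D$ of $G \setminus N(C)$ give minimal separators $N(D) \subseteq N(C)$. I collect the family $\mathcal{S}_X$ of minimal separators obtained this way; they should be pairwise parallel, because each one leaves all of $X\setminus N(C)$ inside a single component. Let $G^+$ be $G$ with every $S \in \mathcal{S}_X$ saturated into a clique. By the classical result of Parra and Scheffler (also used by Bouchitté and Todinca), any minimal triangulation $H$ of $G^+$ is then a minimal triangulation of $G$. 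I compute such an $H$ in $O(nm')$ time using LEX M or MCS-M. Every maximal clique of $H$ is a PMC of $G$, so the algorithm answers "yes" and outputs a maximal clique $K$ of $H$ whenever some $K \subseteq X$. This check is linear in the size of $H$.

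\textbf{Key steps, in order.}
\begin{enumerate}
\item Show that the separators in $\mathcal{S}_X$ are pairwise parallel, so that the Parra--Scheffler theorem applies.
\item Show that every maximal clique of $H$ meeting a component $C$ of $G\setminus X$ lies inside $C \cup N(C)$. This holds because the saturated separators cut $C$ off in $H$.
\item Prove completeness, which I expect to be the main obstacle: if some PMC $\Omega \subseteq X$ exists, then some maximal clique of $H$ lies in $X$.
\end{enumerate}

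\textbf{Plan for the completeness step.} A single arbitrary $H$ may fail, because all its cliques inside $X$ could be swallowed by cliques that also use vertices of some component $C$. I would handle this with an iterative refinement. If no maximal clique of $H$ lies in $X$, take a clique tree of $H$ and consider the maximal subtree whose bags meet $V \setminus X$. Using the PMC characterisation (no full component, and every non-adjacent pair is covered by a component), I would show that some component $C$ has $N(C)$ acting as a ``blocking'' set. In that case either $N(C)$ plus one extra vertex already contains no PMC, or $X$ can be replaced by a strictly smaller set $X'$ such that every PMC $\Omega \subseteq X$ also satisfies $\Omega \subseteq X'$. The natural candidate is to remove a vertex $x$ whose only role is to be adjacent to $C$, which is justified by the absence of full components in $G\setminus\Omega$. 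I would also double-check the base case: when $G \setminus X$ has a full component relative to every candidate, the answer is "no".

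\textbf{Running time.} Each round costs $O(n\cdot m)$ for the components and separators, plus $O(nm)$ for the triangulation and clique inspection. At most $n$ shrink steps occur, one per removed vertex. I therefore aim to bound each round by $O(n^2m)$ (one triangulation per component in the worst case), which gives the claimed $O(n^3 m)$ in total.
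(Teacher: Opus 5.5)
Your soundness direction is fine. Once $\mathcal{S}_X$ is pairwise non-crossing, Parra--Scheffler makes $H$ a minimal triangulation of $G$, so any maximal clique of $H$ inside $X$ is a PMC inside $X$. The gap is the step you flag yourself, completeness, and that is where the content of the theorem lies. The ``iterative refinement'' is not an argument. You never define when $N(C)$ is ``blocking''. You never prove the dichotomy: either $N(C)$ plus one vertex contains no PMC, or some $X'\subsetneq X$ still contains every PMC $\Omega\subseteq X$. You never justify that deleting a vertex ``whose only role is to be adjacent to $C$'' loses no PMC. Shrinking $X$ merges components of $G\setminus X$ and changes $\mathcal{S}_X$, so nothing proved in one round carries over to the next. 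Without an invariant, neither a final ``no'' answer nor the bound of $n$ rounds (hence the $O(n^3m)$ time) is established.

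For comparison, the paper avoids triangulations. It shows that a shy $X$ contains a PMC iff, for some pair $(a,b)$, the $X$-close $(a,b)$-separator $S$ is $X$-shy with regard to $C_G(a,S)$. It tests all $O(n^2)$ pairs at $O(nm)$ each, then extracts a PMC by running \removal\ on $S\cup(X\cap C_G(a,S))$, taking protected vertices in $S$ first.

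The idea you are missing is that no refinement is needed: every minimal triangulation $H$ of $G^+$ works. The argument rests on two facts.

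\textbf{First fact.} Let $\Omega\subseteq X$ be a PMC, $C$ a component of $G\setminus X$, and $S\subseteq N(C)$ a minimal separator. Then $\Omega\setminus S$ lies in one component of $G\setminus S$. If $S\subseteq\Omega$, this is \autoref{lem:PMC_in_one_FC_of_S}. Otherwise, let $B$ be the component of $G\setminus\Omega$ containing $C$. Since every vertex of $S$ is adjacent to $C\subseteq B$, we get $S\subseteq B\cup N(B)$. Hence the full component $A$ of $N(B)$ containing $\Omega\setminus N(B)$ avoids $S$. Every vertex of $N(B)\setminus S$ is adjacent to $A$, so $\Omega\setminus S$ lies in the component of $G\setminus S$ containing $A$.

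\textbf{Second fact.} $\Omega\not\subseteq N(C)$, since otherwise the component of $G\setminus\Omega$ containing $C$ would be full.

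\textbf{Conclusion.} Suppose no maximal clique of $H$ lies in $X$. By your Step 2, a clique tree of $H$ splits into disjoint subtrees $T_C$, where $T_C$ consists of the cliques meeting $C$. A tree edge $e$ between $T_{C_1}$ and $T_{C_2}$ carries a minimal separator $S_e\subseteq N(C_1)\cap N(C_2)$ of $H$, hence of $G$. Orient $e$ toward the side containing $\Omega\setminus S_e$; this is well defined by the first fact. Take a sink $T_{C^*}$ of the contracted tree. Every vertex of $\Omega$ must then lie in a clique of $T_{C^*}$, so $\Omega\subseteq C^*\cup N(C^*)$, i.e.\ $\Omega\subseteq N(C^*)$. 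This contradicts the second fact.

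Two smaller repairs are also needed. Step 2 requires $\mathcal{S}_X$ to contain $N(D)$ for \emph{every} component $D\neq C$ of $G\setminus N(C)$; full components only give $N(C)$ itself. Your stated reason for parallelism does not hold in general, because $X\setminus N(C)$ may meet several such $D$. The correct reason is this: for $S_1\subseteq X$ and $S_2\subseteq N(C_2)$, the set $S_2\setminus S_1$ is attached to $C_2$, which avoids $S_1$.

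With these fixes, your route is a genuine alternative to the paper's, built on Parra--Scheffler and clique trees rather than on $X$-close separators and \removal.
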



One may wonder what is the computational complexity of the more general \textsc{PMC Extension} problem, since a polynomial-time algorithm for it would imply polynomial delay for PMC enumeration, which is stronger than~\autoref{th:main}. We provide a NP-hardness proof, using a $3$-SAT reduction. This proof is preceded by another NP-hardness reduction for \textsc{Minimal $(a,b)$-separator Extension} which stands as an intermediary step.

\begin{restatable}{theorem}{hardness}
	\label{th:pmc-hardness}
	\textsc{PMC Extension} is NP-complete.
\end{restatable}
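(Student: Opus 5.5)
Membership in NP is immediate: a certificate is the set $\Omega$ itself. Whether $\Omega$ is a PMC can be checked in polynomial time with the Bouchitté--Todinca characterization: $G \setminus \Omega$ has no component $C$ with $N(C)=\Omega$, and for every pair $x,y\in\Omega$ either $xy\in E$ or $x,y$ both lie in $N(C)$ for some component $C$ of $G\setminus\Omega$. The constraints $M\subseteq\Omega$ and $\Omega\cap F=\emptyset$ are checked trivially. So all the work is in NP-hardness, and I will follow the two-stage route announced in the statement.

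\textbf{Stage 1: hardness of \textsc{Minimal $(a,b)$-separator Extension}.} The problem asks: given $G$, vertices $a,b$, and sets $M,F$, is there a minimal $(a,b)$-separator $S$ with $M\subseteq S$ and $S\cap F=\emptyset$? I reduce from $3$-SAT with variables $x_1,\dots,x_p$ and clauses $c_1,\dots,c_q$.

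The construction has three parts.
\begin{itemize}
\item \emph{Variable gadgets.} For each variable, add two parallel $a$--$b$ paths, one through a vertex $x_i$ and one through $\notxi{x}{i}$. Connect them so that any minimal separator must cut exactly one of the two literal vertices; the other literal vertex then stays on the $a$-side or on the $b$-side.
\item \emph{Clause vertices.} For each clause $c_j$, add a vertex $c_j$ and force it into $S$ via $M$.
\item \emph{Forbidden vertices.} Put the remaining auxiliary vertices in $F$.
\end{itemize}
For $c_j\in S$ to be non-redundant, $c_j$ must have a neighbour in the $a$-component and a neighbour in the $b$-component of $G\setminus S$. I therefore wire $c_j$ to the literals it contains on one side (say through their uncut copies reaching $a$) and to $b$ directly. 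Then $c_j$ is essential exactly when some literal of $c_j$ is ``true''.

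The key lemma is that minimal separators avoiding $F$ and containing $M$ correspond exactly to satisfying assignments. The main obstacle is the direction from separator to assignment: one must rule out unintended separators, for example ones that cut both $x_i$ and $\notxi{x}{i}$, or ones where clause vertices create extra $a$--$b$ paths. My first attempt would be to use $F$ aggressively, forbidding every vertex except literal and clause vertices, so that $S$ is confined to a controlled set whose structure can be analysed by hand.

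\textbf{Stage 2: reduction to \textsc{PMC Extension}.} The plan is to use the known fact that for a minimal separator $S$ with full components $C_a\ni a$ and $C_b\ni b$, the set $S\cup\{v\}$ is a PMC for a suitable vertex $v$. Concretely, I would modify $G$ so that $a$ is adjacent to all candidate separator vertices (the literal and clause vertices), keeping every other vertex in $F$. Then PMCs containing $a$ take the form $S\cup\{a\}$, with $S$ a minimal separator, or $\Omega'\cup\{a\}$, by the Bouchitté--Todinca facts recalled in the introduction. I set $M' = M\cup\{a\}$ and $F'=F\cup\{b\}$, plus any extra vertices needed to rule out the $\Omega'\cup\{a\}$ case.

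I would then prove that a PMC $\Omega$ satisfying $a\in\Omega$ and $\Omega\cap F'=\emptyset$ exists if and only if a minimal $(a,b)$-separator extension exists. This uses the PMC characterization: $\Omega\setminus\{a\}$ must be a minimal separator whose component containing $b$ is full. A clean alternative is to make $a$ simplicial with respect to the gadget. I expect checking that no spurious PMC survives to be the second delicate point, but it should be routine once the Stage 1 gadget is fixed.
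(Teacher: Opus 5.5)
Your NP-membership argument is fine. The hardness part, however, is a plan rather than a reduction, and where it becomes concrete it fails. In Stage~1 you never specify the variable gadget, and the version you sketch cannot work. If each variable contributes two parallel $a$--$b$ paths and every vertex other than literal and clause vertices is forbidden, then the only admissible vertex on one path is $x_i$ and on the other is $\overline{x_i}$. So every $F$-avoiding $(a,b)$-separator contains both, and no Boolean choice is encoded. The substance of this stage is exactly what is missing: a gadget whose minimal separators containing $M$ are precisely those induced by satisfying assignments, including the separator-to-assignment direction you yourself flag as the obstacle. The paper achieves this with $F=\emptyset$. Each variable gadget is a $3$-cube on $a$, $b$, two mandatory vertices $a_i,b_i$ and four free vertices $\alpha_i,\overline{\alpha_i},\beta_i,\overline{\beta_i}$. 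Minimality forces exactly one diagonal pair $\{\alpha_i,\beta_i\}$ or $\{\overline{\alpha_i},\overline{\beta_i}\}$ per gadget. Each clause vertex is mandatory and adjacent to $a$ and to the $\beta$-vertices of its negated literals, so it has a neighbour in $C(b,S)$ iff the clause is satisfied.

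In Stage~2, making $a$ adjacent to every literal and clause vertex is fatal. Every vertex outside $F\cup\{a,b\}$ then lies in $N(a)$, so every $F$-avoiding minimal $(a,b)$-separator of the new graph lies in $N(a)$. It must therefore be the close $(a,b)$-separator, of which there is at most one. Concretely, the uncut literal vertices on the $b$-side become neighbours of $a$, so your Stage-1 separators stop separating, and the modified graph cannot carry one separator per satisfying assignment. Also, the ``known fact'' that $S\cup\{v\}$ is a PMC for a suitable $v$ is false in general. $S\cup\{v\}$ is a PMC iff $C(v,S)$ is full and no component of $C(v,S)\setminus\{v\}$ is full for $S$. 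If $S=\{s_1,s_2\}$ has two full components, each a path $u_1u_2u_3$ with $s_1,s_2$ adjacent to $u_1$ and $u_3$, then no $v$ qualifies. Both the existence of the PMC and the exclusion of spurious ones therefore depend on a gadget-specific analysis you have deferred. The spurious ones are $\Omega'\cup\{a\}$, or $S\cup\{a\}$ with $S$ not an $(a,b)$-separator.

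Your strategic choice of putting $a$ into the PMC and forbidding $b$ is not itself the problem. On the paper's unmodified $G(\Phi)$ one checks directly that $S(\sigma)\cup\{a\}$ is a PMC, and the paper's last theorem gives the converse: a PMC extending $M$ that contains $a$ or $b$ contains a minimal $(a,b)$-separator extending $M$. The paper's main proof instead takes $F=\{a,b\}$. It shows that such a PMC meets exactly one gadget in three vertices and every other gadget in a diagonal pair, whence $N(C(a,\Omega))$ or $N(C(b,\Omega))$ is the required minimal $(a,b)$-separator containing $M$.
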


\subsection{Perspectives}

Even if our algorithm produces PMCs in output-linear time, it provides no guarantee on the delay between two outputs. In fact, we are even sure that the time taken before the return of the first PMC of $G$ cannot be less than exponential in $n$, because of the recursive nature of our procedure (as for \btenum ). Hence, finding a polynomial-delay enumeration algorithm for PMCs remains an open challenge. We wonder whether the traversal of the PMC-graph - our approach introduced to improve \btenum\ - could not be itself a polynomial-delay algorithm enumerating PMCs. Concretely, one could compute one arbitrary PMC of $G$, fix it as a single element in the starting queue of a BFS, and explore the PMC-graph using our neighborhood relation $\mathcal{N}$. For sure, the running time between two solutions would be polynomial in $n$ as all neighbors of a PMC can be determined fast. However, the main question is whether all PMCs would be enumerated with this method. We did not identify promising directions to prove this statement so far. But this BFS did not miss any PMCs on the numerous graph classes we tested experimentally, although we are blocked computationally by the brute-force checking (at most $20$ vertices). 

Another natural question is whether output-linear time can be maintained while only polynomial space is used. To reach this goal, one could think of exploring the recursion tree in a DFS manner, as done in~\cite{BrCoLiPu24}. Perhaps such an approach stands also as another promising direction to obtain polynomial-time delay.

Fomin and Villanger~\cite{FoVi08} showed that, for any integer $k \ge 1$, knowing the list $\Pi_G^{k+1}$ of PMCs of size at most $k+1$ allows checking whether the input graph $G$ has treewidth at most $k$ in time linear in $\vert \Pi_G^{k+1} \vert$.\footnote{Separators of size at most $k$ are also needed.} As a direct consequence, the existence of an FPT-output-linear time algorithm enumerating PMCs of small size would be a significant result, {\em i.e.} with running time $O(f(k)n^{O(1)}\vert \Pi_G^k \vert)$ where $f$ is an arbitrary function and $\Pi_G^k$ is the set of PMCs of size at most $k$. Indeed, it would give an algorithm deciding whether treewidth is at most $k$ in time $\tilde{O}(f(k)\vert \Pi_G^k \vert)$, which could be very efficient in practice. 

A final perspective is simply to implement our algorithm. The experimental study produced in~\cite{Ko20}, which displays numerous benchmarks of algorithms/instances, could be extended with our improvement for PMC enumeration. Such an experimental contribution would allow assessing in practice the performance gain offered by output-linear time, instead of output-quadratic one. 

\section{Preliminaries}

\input{preliminaries}

\section{PMC superset problem} \label{sec:superset}

In this section, we will focus on the following question: given a set of vertices $X \subseteq V(G)$, is there any chance that some subset of $X$ is a PMC? This question was defined as the \textsc{PMC Superset} problem in \autoref{subsec:problems}. Note that a naive algorithm would check all subsets of $X$ in time exponential in the size of $X$. Our objective is to present an algorithm for this decision problem in time polynomial in $n = \vert V(G) \vert$. This contribution will be important for showing that our final algorithm \ALGO\ finds all PMCs.

\subsection{A polynomial-time algorithm for reducing shy sets} \label{subsec:removal}

As a first step, we present a very simple and natural algorithm called \removal\ which, given a shy set $X \subseteq V(G)$ of some graph $G$, computes a subset of $X$. \removal\ will be used in the next subsections as a crucial subroutine to solve \textsc{PMC Superset}.

Before presenting this subroutine, we explain why it does not solve \textsc{PMC Superset} itself. Unfortunately, the subset computed by \removal\ is not necessarily a PMC. This is due to this simple fact: there exists shy sets which are not PMC supersets.

\begin{figure}[h]
\centering
\begin{subfigure}{0.45\textwidth}
\centering
\scalebox{0.8}{\input{tikz/negative-shy-set}}
\caption{An example of shy set $X$ (in red) which is not a PMC and such that any of its proper subset has a full component.}
\label{subfig:negative-shy-set}
\end{subfigure}
\hspace{0.05\textwidth}
\begin{subfigure}{0.45\textwidth}
\centering
\scalebox{0.8}{\input{tikz/positive-shy-set}}
\caption{Almost the same graph (an edge was removed between $a'$ and $S$): here $X$ is still shy but is a PMC superset with $\Omega = S \cup \{a\}$.}
\label{subfig:positive-shy-set}
\end{subfigure}
\caption{Two graphs with shy sets $X$ but different decision answer to \textsc{PMC Superset}.}
\label{fig:example-superset}
\end{figure}

\autoref{subfig:negative-shy-set} provides such example. The set $X = S \cup \{a,b\}$ represented is shy as connected components $\{a'\}$ and $\{b'\}$ are not full for $X$. However, $X$ is not a PMC: it does not fulfil the reachability property because $b \notin R_G(a,X)$. Furthermore, no proper subset of $X$ is shy. Removing $a$ creates a full component: set $X' = X \setminus \{a\} = S \cup \{b\}$ admits a full component $\{b'\}$. By symmetry, the same statement holds for $X \setminus \{b\}$. Finally, for any $s \in S$, the set $X'' = X \setminus \{s\}$ admits also a full component $\{s,a',b'\}$. In summary, for any vertex $x \in X$, set $X \setminus \{x\}$ is not shy and hence is not a PMC. Since any proper subset of $X$ is a subset of some $X \setminus \{x\}$, according to \autoref{obs:shy-set}, any proper subset of $X$ is not shy and thus is not a PMC. \autoref{subfig:positive-shy-set} shows that with a tiny modification of the graph of \autoref{subfig:negative-shy-set}, the shy set $X$ is now a PMC superset. Indeed, $\Omega = S \cup \{a\}$ is a PMC as $\{a'\}$ is not a full component anymore.

Below is a technical lemma to help us analyze our algorithm \removal , presented on the next page.

\begin{lemma}\label{lem:minus-one-comp}
    Let $X \subseteq V(G)$ be any shy set of vertices of $G$. Assume that $(u,v)$ is a conflict pair for $X$. Let $X_v = X \setminus \{v\}$. The neighborhood of the connected component $C_v$ containing $v$ in $G \setminus X_v$ is: $N(C_v) = R_G(v,X)$.
\end{lemma}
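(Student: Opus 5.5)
The plan is to describe the component $C_v$ explicitly in terms of the components of $G \setminus X$, and then prove $N(C_v) = R_G(v,X)$ by double inclusion. I will use $R_G(v,X)$ in the sense of the Preliminaries: the vertices $x \in X$, $x \neq v$, such that either $xv \in E(G)$ or some connected component $D$ of $G \setminus X$ has $v, x \in N(D)$. Equivalently, $x$ is reachable from $v$ by a path whose internal vertices all lie outside $X$. If the paper's convention counts $v$ itself in $R_G(v,X)$, the same argument goes through with the obvious bookkeeping, since $v \in C_v$.

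\textbf{Structure of $C_v$.} Since $G \setminus X_v = (G \setminus X) + v$, the only new connections come through $v$. So I first show
\[
C_v = \{v\} \cup \bigcup \{\, D : D \text{ a component of } G \setminus X \text{ with } v \in N(D) \,\}.
\]
The inclusion $\supseteq$ is clear: each such $D$ is connected, avoids $X_v$, and has a vertex adjacent to $v$. For $\subseteq$, take a vertex $w \in C_v$ with $w \neq v$ and a path from $v$ to $w$ in $G \setminus X_v$. Cut this path at its last occurrence of $v$. The remaining subpath lies in $G \setminus X$, so it lies in a single component $D$. That component contains a neighbour of $v$, hence $v \in N(D)$.

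\textbf{First inclusion, $N(C_v) \subseteq R_G(v,X)$.} Let $x \in N(C_v)$. Then $x \notin C_v$, and $x$ is not in any component of $G \setminus X$ merged into $C_v$, so $x \in X_v$. Moreover $x$ is adjacent either to $v$, giving $x \in R_G(v,X)$ directly, or to some merged component $D$. In the latter case $x, v \in N(D)$, so again $x \in R_G(v,X)$.

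\textbf{Second inclusion, $R_G(v,X) \subseteq N(C_v)$.} Let $x \in R_G(v,X)$. Then $x \in X \setminus \{v\} = X_v$, so $x \notin C_v$. If $xv \in E(G)$, then $x \in N(C_v)$ because $v \in C_v$. Otherwise there is a component $D$ with $v, x \in N(D)$. By the description above $D \subseteq C_v$, and therefore $x \in N(C_v)$.

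\textbf{Role of the hypotheses and expected obstacle.} The step I expect to need the most care is the role of the hypotheses: that $X$ is shy and that $(u,v)$ is a conflict pair. The set identity above appears to hold for any $v \in X$. So I will check the paper's exact definitions of $R_G$ and of conflict pairs to see whether they are used only to fix conventions, or to guarantee $v \notin R_G(u,X)$, which is what later lemmas will need when showing that $C_v$ is not full because $u \notin N(C_v)$. If one of them is genuinely needed, I expect it to enter only in the bookkeeping of whether $v$ itself is counted in $R_G(v,X)$, not in the double inclusion.
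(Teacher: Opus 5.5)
Your proof is correct and is essentially the paper's argument. The paper observes in one line that $N(C_v)$ is the set of $w \in X$, $w \neq v$, joined to $v$ by a path whose internal vertices avoid $X$; your double inclusion spells this out, and you are right that neither shyness nor the conflict pair is used, and that if $v$ is counted in $R_G(v,X)$ the identity should be read as $N(C_v) = R_G(v,X)\setminus\{v\}$.

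One wording fix in your first inclusion: $x \in X_v$ holds because $x \notin C_v$ while $x$ has a neighbour in $C_v$, so $x$ cannot lie in $G \setminus X_v$ at all. Merely avoiding the merged components of $G \setminus X$ would not suffice.
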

\begin{proof}
    Set $N(C_v)$ can be seen as the subset of $X_v$ which contains all vertices $w \in X$, $w\neq v$, such that there exists a path from $w$ to $v$ whose internal vertices do not belong to $X$. This corresponds to the definition of reachability (\autoref{def:reachable-set}).
\end{proof}

We define the notion of \emph{brightening} of a shy set. The result of our algorithm \removal\ will be some brightening of the input shy set.

\begin{definition}[Brightening of a shy set]
Let $G$ be a graph and $X \subseteq V(G)$ be a shy set of $G$. A subset $X' \subseteq X$ is a \emph{brightening} of $X$ if one of the following statements hold:
\begin{itemize}
\item $X'$ is a PMC,
\item $X'$ admits only \emph{native full components}, {\em i.e.} each connected component $C$ of $G \setminus X'$ which is full for $X'$ does not contain any vertex removed from $X$: formally, $C \cap (X \setminus X') = \emptyset$.
\end{itemize}
\label{def:brightening}
\end{definition}

We introduce our algorithm \removal. Its principle is very simple: consider the input shy set $X$. If $X$ does not admit conflicts (\autoref{def:reachable}), then $X$ is a PMC and we return it. Otherwise, we select an arbitrary vertex $u \in X$ - that we call \emph{protected vertex} - which is in conflict with another one, and we remove from $X$ all the vertices which are in conflict with the selected $u$. If the obtained set is still shy, we repeat the process, otherwise we return the obtained set. The pseudocode of \removal\ is given in \autoref{alg:removal}.

\begin{algorithm}[h]
	\SetKw{And}{and}
    \SetKwFor{For}{for}{do}{\nl endfor}
    \SetKwFor{Forall}{for all}{do}{\nl endfor}
    \SetKwIF{If}{ElseIf}{Else}{if}{then}{else if}{else}{}
    \DontPrintSemicolon
    \SetNlSty{}{}{:}
    \SetAlgoNlRelativeSize{0}
    \SetNlSkip{1em}

 	\nl \KwIn{A graph $G$ and a shy set $X \subseteq V(G)$.}
 	\nl \KwOut{A subset of $X$ which is a brightening of $X$ (proven in~\autoref{th:brightening}).}

%
%

	\nl $U \leftarrow X$; \label{line:removal-starts}\;
 	\nl \While{$U$ has at least one conflict pair \label{line:while-removal}}{
 		\nl $u \leftarrow$ any vertex of $U$ involved in a conflict pair for $U$;\label{line:selection-removal}\; 
 		\nl \Forall{$v \in U$ such that $(u,v)$ is a conflict pair for $U$\label{line:for-removal}}{
 			\nl $U \leftarrow U \setminus \{v\}$;\label{line:minus-one-removal}\;
 		}
 	}
 	
 	\nl \Return{} $U$;\label{line:return-removal}\;
 	\caption{\removal}
 	\label{alg:removal}
 \end{algorithm}
 
\removal\ starts in fact by checking whether the given set is a PMC (line~\autoref{line:while-removal} of \autoref{alg:removal}). If it is, it does not enter in the while loop and returns $X$ which is thus a brightening of itself, by \autoref{def:brightening}. Otherwise, it enters the while loop and the removal process starts (lines~\autoref{line:selection-removal}-\autoref{line:minus-one-removal}). It consists in removing from $U$ all vertices in conflict with some arbitrarily fixed protected vertex $u \in U$. When the removal process is achieved, the condition of the while loop is checked: we verify whether the obtained set $U$ is still shy. If no, we return $U$ (line~\autoref{line:return-removal}). If it is still shy, we restart the removal process until no conflict pair remain. Refer to \autoref{fig:removal} as a visual aid on an iteration of \removal.
 
\begin{theorem}[Existence of a brightening]
For any shy set $X \subseteq V(G)$, the subset returned by \removal\ is a brightening of $X$.
As a consequence, each shy set admits at least one brightening, and, it can be computed in time $O(n^3)$.
\label{th:brightening}
\end{theorem}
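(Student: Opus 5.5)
The plan is to prove by induction on the iterations of the while loop an invariant strong enough to give \autoref{def:brightening} at termination. First I would fix the exit conditions of \removal. The loop continues only while $U$ is shy and has a conflict pair; this is how the text describes the loop test, and I take it as the intended reading of line~\autoref{line:while-removal}. So at termination one of two things holds:
\begin{enumerate}[label=(\roman*)]
\item $U$ is shy and conflict-free. By the characterization of PMCs recalled in the preliminaries (no full component plus pairwise reachability), $U$ is then a PMC and we are done.
\item $U$ is not shy. Then we must show that every full component of $G\setminus U$ avoids $X\setminus U$.
\end{enumerate}
Case (ii) is where the work is. The invariant I would aim for is: \emph{every connected component of $G\setminus U$ that contains a vertex of $X\setminus U$ misses at least one vertex of $U$ in its neighborhood.} It holds trivially at the start, since $U=X$.

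For the inductive step, take an iteration with protected vertex $u$. Let $D$ be the set of vertices $v$ removed at this iteration, i.e.\ those with $(u,v)$ a conflict pair for $U$, and set $U'=U\setminus D$. Since the loop was entered, $U$ is shy, so every component of $G\setminus U$ is non-full for $U$.

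Now take a component $C'$ of $G\setminus U'$ containing a removed vertex. It is a union of vertices of $D$ and components of $G\setminus U$. If $C'$ meets $D$, I claim $u\notin N(C')$. Suppose a path from $u$ entered $C'$. Cutting it at the first vertex of $D$ gives a path from $u$ to some $v\in D$ whose internal vertices avoid $U$. This means $u\in R_G(v,U)$, which contradicts the conflict $(u,v)$; \autoref{lem:minus-one-comp} is exactly the tool for phrasing $N(C_v)=R_G(v,U)$ here. So $C'$ is not full for $U'$, because $u\in U'$.

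The main obstacle is the remaining case: $C'$ contains only older removed vertices and is not adjacent to $D$. Then $C'$ is already a component of $G\setminus U$. It is non-full for $U$, but possibly only because it misses vertices that all lie in $D$, in which case it could become full for $U'$. To rule this out, I would first try to show that $C'$ touches $u$. Then I would use the history of how the old removed vertex $w\in C'$ was deleted: at that earlier iteration its component missed the then-protected vertex $u_1$. I would want to argue that missing vertices propagate, so that $C'$ still misses some vertex of $U'$. If that fails, the fallback is to strengthen the invariant to ``$N(C')\subseteq R_G(u_j,\cdot)$-complement for the latest surviving protected vertex'', tracking protected vertices across iterations.

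Finally, for the complexity I would use the following count. Each iteration removes at least one vertex, so there are at most $n$ iterations. Each iteration needs the reachability sets $R_G(x,U)$ for the candidate $u$ and a shyness test, both obtainable by graph searches on $G\setminus U$. Aggregating these costs gives the $O(n^3)$ bound, with the reachability computations organized as in the preliminaries. Existence of a brightening for every shy set follows immediately.
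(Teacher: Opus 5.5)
Your framework matches the paper's. You run an induction over the iterations, your reading of the loop test is right, and your argument that a component of $G\setminus U'$ meeting the freshly deleted set $D$ misses the protected vertex $u$ is the paper's base case and Case B. (The loop test is consistent with line~\ref{line:while-removal} because a set with a full component has no conflict pair.)

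The genuine gap is the case you flag and leave open: a non-native component $C'$ of $G\setminus U'$ that does not meet $D$. It is a component of $G\setminus U$ with no neighbor in $D$, so $N(C')\subseteq U'$. Your invariant only says that $C'$ misses \emph{some} vertex of $U$. If every missed vertex lies in $D$, then $N(C')=U'$, and $C'$ is a non-native full component. So your invariant, as stated, does not propagate from one iteration to the next. Showing that $C'$ touches $u$ has no bearing on fullness. Your fallback about the ``latest surviving'' protected vertex also points the wrong way: what is needed is that protected vertices \emph{always} survive.

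The missing idea is to demand an \emph{enlightened} witness: every non-native component of $G\setminus U_i$ misses some $z\in U_i$ with $R_G(z,U_i)=U_i$. The protected vertex supplies this.
\begin{itemize}
\item Once every vertex in conflict with $u$ is deleted, each remaining vertex was visible from $u$ despite $U$, hence despite $U'\subseteq U$. So $u$ is enlightened for $U'$.
\item An enlightened vertex lies in no conflict pair, so it is never deleted in a later iteration.
\item It also stays enlightened, because shrinking the set only adds visibility.
\end{itemize}
With this invariant your hard case is immediate. The witness $z$ of $C'$ is not in conflict with $u$, so $z\notin D$, hence $z\in U'\setminus N(C')$, and $z$ is still enlightened. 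At termination no non-native component is full, which is exactly the brightening property.

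Separately, your cost count is too loose to give $O(n^3)$. Detecting whether any conflict pair exists requires all the sets $R_G(x,U)$ for $x\in U$. Using the $O(nm)$ PMC test from the preliminaries once per iteration only yields $O(n^2m)$ overall. The paper instead charges each iteration the cost of recomputing $\torso(G,U_i)$, which it bounds by $O(n^2)$.
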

\begin{proof}
First, we prove that the result of \removal\ is a brightening of the input shy set $X$. 
To achieve it, we propose a loop invariant for the while loop in line~\autoref{line:while-removal}. We index the iterations of this loop by index $i \ge 1$, the first iteration being the one where $U$ is initially equal to $X$ and some of its vertices are withdrawn. For each iteration $i \ge 1$, we fix $u_i$ as the protected vertex selected in line~\autoref{line:selection-removal}, and $U_i$ the set $U$ obtained at the end of the loop. To be consistent, we also fix $U_0 = X$. Our invariant proposal is the following:

\begin{center}
\textbf{Invariant}: At iteration $i \ge 1$, for any non-native connected component $C$ of graph $G \setminus U_i$, there exists an enlightened vertex $z_C \in U_i$ which is not visible from $C$, {\em i.e.} $z_C \notin N_G(C)$.
\end{center}

We need to provide some explanations on this invariant. By ``non-native connected component'' $C$, we mean a connected component of $G \setminus U_i$ which contains at least one vertex of $X \setminus U_i$. In other words, $C$ is a connected component of $G \setminus U_i$ but not of $G \setminus X$. Moreover, we recall that an enlightened vertex $z_C \in U_i$ is a vertex visible from any other vertex of $U_i$ despite $U_i$, in brief $R_G(z_C,U_i) = U_i$. See \autoref{def:reachable} for an exhaustive reminder on reachability vocabulary. We prove this invariant by induction, on the index $i$.

\textit{Base case.} The base case is $i = 1$. During the first iteration, we select a vertex $u_1 \in U_0$ which is involved in at least one conflict pair, and then we remove all vertices in conflict with $u_1$ for $U_0$. We denote by $v_0,v_1,\ldots,v_r$ the sequence of removed vertices in this order, with $r \ge 0$. We have $\{v_0,\ldots,v_r\} = U_0 \setminus U_1$. When $v_0$ is removed from the current $U = U_0$, this forms a new connected component $C_0$ in $G \setminus \left(U_0 \setminus \{v_0\}\right)$ containing $v_0$. Vertex $u_1$ is not adjacent to $C_0$: since $(u_1,v_0)$ is a conflict pair, $v_0 \notin R_G(u_1,U_0)$ and by \autoref{lem:minus-one-comp}, $u_1 \notin N_G(C_0)$. Again, when we remove $v_1$ from the current $U = U_0 \setminus \{v_0\}$, we form a new connected component $C_1$ in graph $G \setminus \left(U_0 \setminus \{v_0,v_1\}\right)$. Either $C_1$ is distinct from $C_0$: again $u_1 \notin N_G(C_1)$ as $(u_1,v_1)$ is a conflict pair (\autoref{lem:minus-one-comp}). Or $C_1$ contains $v_0$ and thus $C_0 \subsetneq C_1$: as $C_0$ did not see $u_1$ in $G \setminus \left(U_0 \setminus \{v_0\}\right)$ and $(u_1,v_1)$ is a conflict pair, necessarily $u_1 \notin N_G(C_1)$. We can pursue the reasoning: for each newly-created connected component $C_j$, with $0\le j\le r$, either $C_j$ is distinct from its predecessors or it is merged with at least one of them. In both cases, since $u_1$ is not adjacent to the predecessors and as $(u_1,v_j)$ is a conflict pair, then $u_1 \notin N_G(C_j)$. In summary, for each non-native connected component $C \in \{C_j\}_{0\le j\le r}$ created during the first iteration, vertex $u_1$ is not visible from $C$ despite $U_1$. Furthermore, as all vertices in conflict with $u_1$ for $U_0$ belong to $U_0 \setminus U_1$, vertex $u_1$ is enlightened for $U_1$.

\begin{figure}[h]
	\centering
	\begin{subfigure}{0.31\textwidth}
	\centering
	\scalebox{0.9}{\input{tikz/removal}}
	\end{subfigure}
	\begin{subfigure}{0.31\textwidth}
	\centering
	\scalebox{0.9}{\input{tikz/removal2}}
	\end{subfigure}
	\begin{subfigure}{0.34\textwidth}
	\centering
	\scalebox{0.9}{\input{tikz/removal3}}
	\end{subfigure}
	\caption{The first iteration of \removal\ on some example: set $X = \{x_1,\ldots,x_6\}$ is shy, no connected component of $G \setminus X$ is fully adjacent to $X$. There are several conflict pairs for $U = X$: $(x_1,x_3),(x_1,x_4),(x_1,x_5),(x_2,x_6),(x_3,x_6)$. We select arbitrarily $u_1 = x_1$ as the protected vertex, hence we remove successively vertices $x_3,x_4,x_5$. We remove first vertex $v_0 = x_3$: we obtain a non-native connected component $C_0$ which is not full for $X \setminus \{x_3\}$ as it does not see $x_1$. Note that $X \setminus \{x_3\}$ admits another connected component which is native. Second, we remove $v_1 = x_4$ and obtain a connected component $C_1$ which is merged with $C_0$: again, it does not see $x_1$. Removing finally $v_2 = x_5$ makes us obtain a shy subset without conflict pairs: $\{x_1,x_2,x_6\}$ is a PMC.}
	\label{fig:removal}
\end{figure}

\textit{Induction step.} We consider now any index $i\ge 1$. We assume the invariant true for $i\ge 1$ and prove it for $i+1$. As $U_{i+1}$ exists, then we know that $U_i$ has a conflict pair. Set $U_i \setminus U_{i+1}$ is thus nonempty and contains all vertices in conflict with $u_{i+1}$ for $U_i$. Let $C$ be a non-native connected component of $G \setminus U_{i+1}$. We distinguish two cases.

\textbf{Case A}: $C$ was already a connected component of $G \setminus U_i$, said differently $C \cap (U_i \setminus U_{i+1}) = \emptyset$. In this case, $C$ is non-native as it contains vertices from $X \setminus U_i$ but not from $U_i \setminus U_{i+1}$. By induction hypothesis on index $i$, an enlightened vertex $z_C \in U_i$ is not visible from $C$. As $z_C$ is enlightened, it is not involved in any conflict pair of $U_i$. So, $z_C$ is still enlightened in $U_{i+1}$.

\textbf{Case B}: $C$ contains vertices from $U_i \setminus U_{i+1}$. Here, the reasoning is identical to the one provided for the base case. At each removal of a vertex in conflict with $u_{i+1}$, either the newly-created connected component is distinct from its predecessors in iteration $i+1$, or it contains one of its predecessors. In both cases, as the vertex removed is in conflict with $u_{i+1}$ and that predecessors are too, we maintain the property that the connected components created by this vertex removals do not see $u_{i+1}$. In brief, $u_{i+1} \notin N_G(C)$. Furthermore, vertex $u_{i+1}$ is enlightened for $U_{i+1}$ as all its conflicts for $U_i$ belong to $U_i \setminus U_{i+1}$. 

\textit{Conclusion}. The invariant property that any non-native connected component $C$ does not see some enlightened vertex of $U_i$ is maintained. Moreover, \removal\ necessarily stops because set $U$ is finite and, at each iteration, at least one vertex is removed until no conflict pairs remains. As a consequence, the subset $U \subseteq X$ returned by \removal\ exists, is nonempty, and all non-native components of $U$ are not full components for $U$. This means that there are two possibilities: either $U$ has no full component, and, it is a PMC, or $U$ admits a full component which is a native one. In brief, $U$ is a brightening (\autoref{def:brightening}).

\textit{Complexity time.} At each iteration of the while-loop, we need to update the collection of conflict pairs for the current set $U_i$. This operation amounts to determine graph $\torso(G,U_i)$, which can be done in $O(m+\vert U_i \vert^2) = O(n^2)$. Consequently, as the total number of iterations is at most linear in $n$, the time complexity of \removal\ is in $O(n^3)$.
\end{proof}

Let us come back to the example given in \autoref{subfig:negative-shy-set}. As stated above, we could show that even if $X$ is a shy set, none of its proper subset is a PMC. Consequently, set \removal$(X)$ will not be a PMC and will admit only native full components. Indeed, the single conflict pair for $X$ is $(a,b)$. Assume w.l.o.g. that the vertex selected in line~\autoref{line:selection-removal} is $u = a$. Then we remove $b$ from the current set: $U = X \setminus \{b\} = S \cup \{a\}$. Now, set $U$ has no conflict pair since it admits a full component. Observe by the way that it has a unique full component $\{a'\}$. This component is indeed native: $\{a'\}$ was already a connected component of $G \setminus X$. As an example, the connected component $\{b,b'\}$ of $G \setminus U$ - which is not full - was not a connected component of $G \setminus X$: it is non-native.

On the contrary, in \autoref{subfig:positive-shy-set}, the set $X$ is a PMC superset, as $S \cup \{a\}$ is a PMC. However, \removal\ can fail in finding a PMC subset of $X$. Indeed, consider the case where the selected vertex is $u = b$ (in conflict with $a$). The obtained set after an iteration is $U = S \cup \{b\}$ which admits a (native) full component. Nevertheless, note that if we had selected $u = a$, then we would have obtained $U = S \cup \{a\}$. 

Both remarks on the examples of \autoref{fig:example-superset} ensure us that the algorithm \removal\ is definitely not a good candidate for identifying a PMC which is the subset of some given shy set $X$. However, this subroutine will be necessary in the remainder for establishing the main contributions of this paper.

\subsection{Shy separators}

As stated in the previous subsection, the algorithm \removal\ is not sufficient to show that \textsc{PMC Superset} can be solved in polynomial time. We introduce a concept which is essential to achieve this goal: \emph{shy separators}.

\begin{definition}[Shy separators]
    Consider a graph $G$ with a set $X \subseteq V(G)$ and $S \subsetneq X$ a minimal separator of $G$. Let $C$ be a full component for $S$ in $G \setminus S$. 
    
	We say that $S$ is an $X$-\emph{shy separator} with regard to $C$ if, for any connected component $B \subseteq C$ of graph $G \setminus X$, then $N_G(B) \cap S \neq S$.
    \label{def:shy-sep}
\end{definition}

This definition is based upon some vertex set $X$ (which represents a possible PMC superset in our context), a minimal separator $S$ of $G$ included into $X$, and a full component for $S$. For some $X$, a separator $S \subsetneq X$ can be $X$-shy with regard to a full component $C_1$ whereas it is not for another full component $C_2$. As a specific example, assume $X = S \cup C$: in this case, $S$ is naturally $X$-shy with regards to $C$ as graph $G \setminus X$ does not contain any connected component included into $C$, since $C \subsetneq X$.

We will see that, for some vertex set $X$, being a PMC superset is directly related to the presence of some $X$-shy separator. First, we establish a link between PMCs and shy separators.

\begin{lemma}\label{lem:no_full_comp_for_all_sep}
    For any PMC $\Omega$, there is a minimal separator $S \subsetneq \Omega$ of $G$ such that: 
\begin{itemize}
    \item $T = \Omega \setminus S$ is included into a connected component $C$ of $G \setminus S$ full for $S$: $N_G(C) = S$,
    \item $S$ is $\Omega$-shy with regard to $C$.
\end{itemize}
\end{lemma}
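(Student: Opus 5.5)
The plan is to pick $S$ among the minimal separators that PMCs naturally carry, namely the neighbourhoods of the components of $G \setminus \Omega$, and to choose it maximal by inclusion. I would use the standard characterization of PMCs recalled in the preliminaries (Bouchitté--Todinca). For a PMC $\Omega$ and every connected component $D$ of $G \setminus \Omega$:
\begin{itemize}
\item the set $N_G(D)$ is a minimal separator of $G$ strictly contained in $\Omega$;
\item $D$ is a full component of $G \setminus N_G(D)$;
\item $\Omega \setminus N_G(D)$ is nonempty and contained in a single connected component $C_D$ of $G \setminus N_G(D)$, which is also full for $N_G(D)$.
\end{itemize}
I must first dispose of the degenerate case $G \setminus \Omega = \emptyset$, i.e.\ $\Omega = V$. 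Then $G$ is complete and has no minimal separator at all, so I will assume, as the paper implicitly does, that $G$ is not complete. Hence $G \setminus \Omega$ has at least one component, and the family $\{N_G(D)\}$ is nonempty.

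Among all components $D$ of $G \setminus \Omega$, I choose one, $D_0$, whose neighbourhood $S = N_G(D_0)$ is inclusion-maximal. I set $C = C_{D_0}$. The first item of the lemma then follows directly from the facts above: $S \subsetneq \Omega$ is a minimal separator, $T = \Omega \setminus S \subseteq C$, and $N_G(C) = S$.

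For the shyness, suppose for contradiction that some component $B$ of $G \setminus \Omega$ with $B \subseteq C$ satisfies $N_G(B) \cap S = S$, i.e.\ $S \subseteq N_G(B)$. Since $N_G(B)$ is itself one of the neighbourhoods in our family, maximality of $S$ forces $N_G(B) = S$. Then $B$ is connected, disjoint from $S$, and has all its neighbours in $S$, so $B$ is a whole connected component of $G \setminus S$. But $B \subseteq C$, and $C$ is a component of $G \setminus S$, so $B = C$. This is impossible because $C$ contains the nonempty set $\Omega \setminus S$, whereas $B \cap \Omega = \emptyset$. So no such $B$ exists, and $S$ is $\Omega$-shy with regard to $C$.

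The only real obstacle is the choice of $S$. An arbitrary $N_G(D)$ could fail, because a component $B \subseteq C$ of $G \setminus \Omega$ might have a strictly larger neighbourhood containing $S$. The inclusion-maximality trick removes exactly this case. Everything else is bookkeeping with the standard PMC properties, plus the remark that $D_0$ itself is not inside $C$, since $D_0$ and $C$ are distinct components of $G \setminus S$.
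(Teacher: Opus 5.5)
Your proof is correct and is essentially the paper's argument. The paper starts from an arbitrary minimal separator $S \subsetneq \Omega$ and repeatedly replaces it by $N_G(B)$ for an offending component $B \subseteq C$ of $G \setminus \Omega$ until none remains, whereas you skip the iteration by choosing an inclusion-maximal neighbour separator at the outset; the key step is the same in both, namely that an offending $B$ cannot satisfy $N_G(B)=S$ (it would then be a component of $G\setminus S$ inside $C$ yet disjoint from $T$), so $N_G(B)\supsetneq S$.
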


\begin{proof}
	We recall that for any PMC $\Omega$, given a minimal separator $S \subsetneq \Omega$, the set $\Omega \setminus S$ is included into one single full component of $S$ (\autoref{lem:PMC_in_one_FC_of_S}). We proceed by proving the following claim. 
	
	\textit{Claim}: Given a PMC $\Omega$, one of its minimal separator $S \subsetneq \Omega$ and $C$ the full component of $S$ containing $T = \Omega \setminus S$, at least one of the two possibilities holds:
	\begin{itemize}
	\item either there is no connected component $B$ of $G[C \setminus T]$ such that $S \subseteq N_G(B)$,
	\item or there is another minimal separator $S' \subsetneq \Omega$ such that $S \subsetneq S'$.
	\end{itemize}

	We start by proving this claim. Assume that the first item is not satisfied: there is a connected component $B$ of $G[C \setminus T]$ full for $S$. Hence, $S \subseteq N(B) \subsetneq \Omega$.
	
	At least one vertex $u \in T$ is adjacent to $B$ as $B \subseteq C$ and $C$ is a connected component of $G \setminus S$. Otherwise, $B$ would be a full component for $S$ different from $C$. Formally, $B \subseteq R_G^+(u,\Omega)$ or more concisely $u \in N(B)$. Meanwhile, $B$ obviously cannot see all the vertices of $\Omega$ as it would contradict that $\Omega$ is a PMC (and thus a shy set). It follows that there exists a vertex $v \in T$ which is not adjacent to $B$: $B \cap R_G^+(v,\Omega) = \emptyset$. 
	
\begin{figure}[h]
	\centering
	\begin{subfigure}{0.49\textwidth}
	\centering
	\scalebox{0.7}{\input{tikz/shy-sep}}
	\caption{An example of graph $G$ with a PMC $\Omega = S \cup T$ surrounded by red lines. The connected component $B$ of $G[C \setminus T]$ is full for $S$, hence $S$ is not $\Omega$-shy with regards to $C$. Separator $S' = N(B)$ - surrounded by yellow lines - is also a minimal separator of $G$, included into $\Omega$, satisfying $S \subsetneq S'$.}
	\end{subfigure}~
	\hspace{0.01\textwidth}
	\begin{subfigure}{0.49\textwidth}
	\centering
	\scalebox{0.7}{\input{tikz/shy-sep-2}}
	\caption{The new decomposition $(S',T')$ with separator $S'$ and $T' = \Omega \setminus S'$. Here, $C'$ represents the (unique) connected component of $G \setminus S'$ containing $T'$, observe that $B \cap C' = \emptyset$. No component of $G[C' \setminus T']$ is full for $S'$, so $S'$ is $\Omega$-shy with regard to $C'$.}
	\end{subfigure}
	\caption{Illustration of the proof: how $S$ can be extended if it does not satisfy \autoref{lem:no_full_comp_for_all_sep}.}
	\label{fig:shy-sep}
\end{figure}
	
	We fix $S' = N_G(B)$. Remember that $S'$ is a minimal separator of $G$, according to~\autoref{le:neighbor-sep}, because $B$ is a connected component of $G \setminus \Omega$. As stated above, $S \subsetneq N(B)$ and also $u \in N(B)$. Finally, $S \subsetneq S'$ and $S' \subsetneq \Omega$. \autoref{fig:shy-sep} illustrates this reasoning with an example.
	
We conclude about the statement of the lemma.  The claim above provides us with a procedure to build the larger minimal separator with the required property. Start with an arbitrary minimal separator $S \subsetneq \Omega$. By the claim, we know that either $S$ admits no full component in $C \setminus T$ (first item) - in this case, return $S$ - or we can augment $S$ into a larger minimal separator $S' \subsetneq \Omega$. Restart the same process on $S'$ and future augmented separators while the lemma property is not satisfied. The process necessarily stops since each separator is augmented by at least one vertex at each step meanwhile being included into $\Omega$. 
\end{proof}

The previous lemma allows us to define what is a \emph{shy decomposition} of a PMC $\Omega$, \emph{i.e.} the bi-partition between some $\Omega$-shy separator and the remainder of the PMC. Such shy decomposition can be obtained by picking any minimal separator of $\Omega$ and then applying the procedure of \autoref{lem:no_full_comp_for_all_sep} until obtaining an $\Omega$-shy separator $S$ with regard to the component containing $\Omega \setminus S$.

\begin{definition}[Shy decomposition]
Given a graph $G$ and one of its PMCs $\Omega$, a pair $(S,T)$ is a \emph{shy decomposition} of $\Omega$ if it satisfies the following properties:
\begin{itemize}
\item $(S,T)$ is a partition of $\Omega$: $S \cup T = \Omega$ and $S \cap T = \emptyset$,
\item $S$ is a minimal separator of $G$ which is $\Omega$-shy with regard to $C$, where $C = C_G(t,S)$ with $t \in T$.
\end{itemize}
\label{def:shy-decomp}
\end{definition}

\autoref{lem:no_full_comp_for_all_sep} guarantees that each PMC admits at least one shy decomposition. Before going further on \textsc{PMC Superset}, we state an important connection between the concepts of shy separators and active separators. The following result is used in \autoref{sec:linear}.

\begin{lemma}
Consider a graph $G$ with a PMC $\Omega$ which admits an active separator $S$. Let $C$ be the connected component of $G \setminus S$ containing $T = \Omega \setminus S$. Then, separator $S$ is $\Omega$-shy with regard to $C$. 
\label{le:shy-active}
\end{lemma}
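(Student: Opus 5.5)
The plan is to argue by contradiction, using the standard characterization of active separators due to Bouchitté and Todinca. Recall it in the form I will use. Let $\Omega$ be a PMC. Every component $D$ of $G \setminus \Omega$ yields a minimal separator $N_G(D) \subsetneq \Omega$ (\autoref{le:neighbor-sep}), and $\Omega$ becomes a clique once every $N_G(D)$ is completed into a clique. A minimal separator $S \subsetneq \Omega$ is \emph{active} for $\Omega$ if $\Omega$ is no longer a clique when $S$ is left out of this completion. Concretely, there is a non-edge $\{x,y\} \subseteq \Omega$ whose only covering separator among the $N_G(D)$ is $S$. So there exist $x,y \in S$, non-adjacent in $G$, such that no component $D$ of $G \setminus \Omega$ with $N_G(D) \neq S$ satisfies $\{x,y\} \subseteq N_G(D)$.

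Now assume $S$ is active but not $\Omega$-shy with regard to $C$. By \autoref{def:shy-sep}, there is a connected component $B \subseteq C$ of $G \setminus \Omega$ with $S \subseteq N_G(B)$. I will show $S \subsetneq N_G(B)$, exactly as in the claim inside the proof of \autoref{lem:no_full_comp_for_all_sep}. Suppose instead $N_G(B) = S$. Then $B$ is a connected subset of $C$ whose neighbourhood misses $T = \Omega \setminus S$. Since $T \neq \emptyset$, $T \subseteq C$ (\autoref{lem:PMC_in_one_FC_of_S}) and $C$ is connected, $C$ would contain the proper nonempty subset $B$ with no neighbour in $C \setminus B$. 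This contradicts the connectivity of $C$. Hence some $u \in T$ lies in $N_G(B)$, and $S' = N_G(B)$ satisfies $S \subsetneq S' \subsetneq \Omega$.

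Finally, $S'$ is one of the separators $N_G(D)$ used in the completion, it differs from $S$, and it contains $S \supseteq \{x,y\}$. So the non-edge $\{x,y\}$ is also covered by $S'$. This holds for every non-edge inside $S$, so removing $S$ from the completion still leaves $\Omega$ a clique. That contradicts activeness, and therefore $S$ is $\Omega$-shy with regard to $C$.

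The main obstacle is the first step: pinning down which formal definition of active separator the preliminaries use. If it is phrased differently, for instance as ``$\Omega$ is a PMC of $G_S$ but not a clique of $G_S$'', or through the existence of a block $D \subseteq C$ with $\Omega = S \cup (C \cap N_G(D))$, I will first translate it into the non-edge covering form above. This translation uses the Bouchitté--Todinca characterization of PMCs: $\Omega$ is a PMC if and only if it has no full component and every non-adjacent pair of $\Omega$ is covered by some $N_G(D)$. Once that is in place, the rest is the short containment argument.
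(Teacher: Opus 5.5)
Your proof is correct. It uses the same witness as the paper, a component $B \subseteq C$ of $G \setminus \Omega$ with $S \subseteq N_G(B)$, but it contradicts a different formulation of activeness.

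The paper observes that $B$ links every pair $x,y \in S$ in $G[C \cup \{x,y\}]$ while avoiding $T$. It then implicitly invokes the Bouchitt\'e--Todinca characterization that an active $S$ has a pair $x,y \in S$ separated by $T$ in $G[C \cup \{x,y\}]$. That characterization is not stated in the preliminaries. Definition~\ref{def:active-sep} there is exactly the completion form you use, so your closing worry about which definition applies is moot.

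You argue directly from that definition. The connectivity of $C$, together with $T \neq \emptyset$, forces $N_G(B) \supsetneq S$. So $N_G(B)$ is a neighbor separator different from $S$ that covers every non-edge previously covered only by $S$.

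Your route is self-contained relative to the paper's stated definition. It also makes explicit a fact that is needed anyway to derive the separation characterization from the definition: no component of $G \setminus \Omega$ inside $C$ has neighborhood exactly $S$. The paper's version is shorter but relies on the cited result.

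The only thing to add is a justification of $T \neq \emptyset$. Being a minimal separator, $S$ has a full component, whereas the PMC $\Omega$ has none, so $S \neq \Omega$.
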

\begin{proof}
Assume by way of contradiction that $S$ is not $\Omega$-shy with regard to $C$. There exists a connected component $B \subsetneq C$ in $G \setminus \Omega$ which is full for $S$: $S \subseteq N(B)$. Hence, for any pair $x,y \in S$, set $T$ does not separate pair $(x,y)$ in graph $G[C \cup \{x,y\}]$: indeed, there is a $(x,y)$-path whose internal vertices belong to $B$, as both $x,y \in N(B)$. This contradicts the fact that $S$ is active (\autoref{def:active-sep}).
\end{proof}

\subsection{A characterization of PMC supersets} \label{subsec:charac-pmc}

The presence of a shy separator is the key witness to assess whether some shy set is a PMC superset. We begin with ``if'' part: if some set $X$ is a PMC superset, then there is necessarily a shy separator $S \subsetneq X$. The formal statement is provided below.

\begin{theorem}\label{thm:direct_superpmc_charac}
Let $X \subseteq V(G)$ be a shy set. If $X$ is a PMC superset, then there exists a triplet $(a,b,S)$ satisfying the following properties:
\begin{itemize}
\item $a$ and $b$ are two distinct vertices and in particular $a \in X$,
\item $S$ is the $X$-close $(a,b)$-separator,
\item $S$ is an $X$-shy separator with regard to $C_G(a,S)$.
\end{itemize}
\end{theorem}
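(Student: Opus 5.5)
The plan is to start from a PMC $\Omega \subseteq X$ and transfer a shy decomposition of $\Omega$ to the set $X$. By \autoref{lem:no_full_comp_for_all_sep} (equivalently \autoref{def:shy-decomp}), $\Omega$ admits a shy decomposition $(S_0,T)$. Let $C_0$ be the full component of $S_0$ containing $T$, so that $S_0$ is $\Omega$-shy with regard to $C_0$. I would then fix the triplet as follows.
\begin{itemize}
\item Choose $a \in T$. Then $a \in \Omega \subseteq X$ and $C_G(a,S_0)=C_0$.
\item Since $S_0$ is a minimal separator, it has a second full component $D_0 \neq C_0$. Choose $b \in D_0$.
\end{itemize}
Then $S_0$ is a minimal $(a,b)$-separator contained in $\Omega \subseteq X$. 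So the family of minimal $(a,b)$-separators included in $X$ is nonempty, and the $X$-close $(a,b)$-separator $S$ is well defined. The first two items of the statement then hold by construction, and only the shyness of $S$ with regard to $C_G(a,S)$ remains.

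The core step is to relate $S$ to $S_0$. I expect the closeness definition to give a domination property: $S \subseteq X$, and the side of $S$ containing $a$ is contained in the side of $S_0$ containing $a$, i.e. $C_G(a,S) \subseteq C_0$. If the paper's convention is the opposite one, the same argument should run with the roles of the two sides exchanged. Granting this, suppose for contradiction that some connected component $B$ of $G\setminus X$ with $B \subseteq C_G(a,S)$ satisfies $S \subseteq N_G(B)$.

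\begin{itemize}
\item Since $\Omega \subseteq X$, $B$ lies inside a single component $B'$ of $G\setminus\Omega$, and $B' \subseteq C_0$.
\item Because $b$ lies on the other side of $S$, every vertex of $S$ has a neighbour in the full component $C_G(b,S)$. I would combine this with $S \subseteq N(B)$ and the closeness of $S$ to force $S_0 \subseteq N(B')$. This contradicts the $\Omega$-shyness of $S_0$ with regard to $C_0$.
\item If this direct route fails, the fallback is the augmentation trick of \autoref{lem:no_full_comp_for_all_sep}. The set $N(B)$ is a minimal separator (\autoref{le:neighbor-sep}). It is contained in $X$ and separates $a$ from $b$, and it would be strictly closer than $S$ whenever $B$ is full for $S$ and sees an extra vertex. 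This contradicts the choice of $S$ as the $X$-close separator.
\end{itemize}

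The main obstacle is this last step: showing that a component of $G\setminus X$ full for $S$ yields either a component of $G\setminus\Omega$ full for $S_0$, or a minimal $(a,b)$-separator in $X$ closer than $S$. The difficulty is that components of $G\setminus X$ are finer than those of $G\setminus\Omega$, while $S$ and $S_0$ may differ. I would first try the fallback argument, since it only uses the extremality of $S$ and \autoref{le:neighbor-sep}. In that argument the delicate points are to check that $a \notin N(B)$, that $b$ stays separated from $a$, and that $N(B)$ is indeed minimal for the pair $(a,b)$.
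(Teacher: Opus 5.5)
Your setup is the paper's: $a\in T$, $b$ in a second full component of $S_0$, $S$ the $X$-close $(a,b)$-separator, $C_G(a,S)\subseteq C_0$ by \autoref{lem:parallel-close-x}, and $B\subseteq B'\subseteq C_0$. The gap is the decisive step $S_0\subseteq N_G(B')$, which you leave unproved. The ingredients you name (neighbours of $S$ in $C_G(b,S)$, closeness of $S$) do not give it, because the connections they provide may run through $\Omega$, for instance through $T$ or through vertices of $S\cap\Omega$. What you need is a connection from $B$ to each $s\in S_0$ inside $G\setminus\Omega$.

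The missing idea is the reachability property of the PMC $\Omega$.
\begin{itemize}
\item Vertices of $S_0\cap S$ are adjacent to $B$ directly, so take $s\in S_0\setminus S$.
\item Since $S\subseteq S_0\cup C_0$, we have $C_G(b,S_0)\subseteq C_G(b,S)$. As $s$ has a neighbour in the full component $C_G(b,S_0)$ and $s\notin S$, this gives $s\in C_G(b,S)$.
\item As $a,s\in\Omega$, there is an $(a,s)$-path $P_s$ whose internal vertices lie outside $\Omega$. It must meet $S$, necessarily at an internal vertex $v$, so $v\in S\setminus\Omega\subseteq C_0\setminus\Omega$.
\item Because $S\subseteq N_G(B)$, the vertex $v$ is adjacent to $B$, hence $v\in B'$. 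Therefore all internal vertices of $P_s$ lie in $B'$, and $s\in N_G(B')$.
\end{itemize}
So $B'$ is full for $S_0$, contradicting the $\Omega$-shyness of $S_0$ with regard to $C_0$. This is exactly how the paper concludes.

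Your fallback, which you propose to try first, cannot work. \autoref{le:neighbor-sep} concerns components of $G\setminus\Omega$ for a PMC $\Omega$, not components of $G\setminus X$, so $N_G(B)$ need not be a minimal separator. Also, $a$ may lie in $N_G(B)$. Finally, $N_G(B)\supseteq S$ and $S$ is itself an $(a,b)$-separator, so $N_G(B)$ can be a minimal $(a,b)$-separator only if it equals $S$. Thus a ``strictly closer'' separator never appears, and extremality of $S$ yields no contradiction.

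More fundamentally, an argument using only the extremality of $S$ would prove that every $X$-close separator is $X$-shy, which is false. In \autoref{subfig:negative-shy-set}, take the pair $(a,b')$. The $X$-close $(a,b')$-separator is the middle set $S$, $C_G(a,S)=\{a,a'\}$, and the component $\{a'\}$ of $G\setminus X$ has neighbourhood $S\cup\{a\}$. The existence of $\Omega\subseteq X$, through its reachability property, must enter the proof.
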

\begin{proof}
    As $X$ is a PMC superset, there exists a PMC $\Omega \subseteq X$. According to \autoref{lem:no_full_comp_for_all_sep}, $\Omega$ admits a shy decomposition $(S',T')$, where separator $S'$ is $\Omega$-shy with regards to $C'$, which is the connected component in $G \setminus S'$ containing $T'$. 
We fix arbitrarily $a \in T \subseteq C'$ and $b$ as a vertex belonging to any other full component for $S'$ in $G$. By definition of both of these vertices, $S'$ is a minimal $(a,b)$-separator, and $S' \subsetneq X$. See \autoref{fig:pmc-super-shy} for an illustration.
    
    Let $S$ be the $X$-close $(a,b)$-separator. Since $S' \subsetneq X$, from \autoref{lem:parallel-close-x}, $S$ and $S'$ are parallel and $C_G(a,S) \subseteq C_G(a,S') = C'$. We fix $C = C_G(a,S)$. Assume by way of contradiction that $S$ is not $X$-shy with regards to $C$. Then, there exists a connected component $B$ of $G[C \setminus X]$ which is fully adjacent to $S$. As $B \subseteq C \setminus X$, we also have $B \subseteq C' \setminus \Omega$ because $C \setminus X \subseteq C' \setminus X \subseteq C' \setminus \Omega$, using the inclusion properties stated so far. Let us call $B'$ the connected component of $G[C' \setminus \Omega]$ containing $B$. We show that $S' \subseteq N_G(B')$. 
    
    \begin{figure}[h]
	\centering
	\scalebox{0.75}{\input{tikz/pmc-super-shy}}
	\caption{Illustration of separators $S$ and $S'$ on some graph, with shy set $X$ (given by the three purple areas - one big and two small with a single vertex) and PMC $\Omega = S' \cup T' \subseteq X$. Separator $S'$ is $\Omega$-shy with regard to $C'$: no connected component of $G[C' \setminus T']$ (surrounded by dashed blue lines) is full for $S'$. Set $S$ (orange vertices) is the $X$-close $(a,b)$-separator: note that it is $X$-shy with regards to $C = C_G(a,S)$, which is represented with blue vertices.}
	\label{fig:pmc-super-shy}
\end{figure}
    
Let $s \in S'$. If $s \in S' \cap S$, then $s$ is adjacent to $B \subseteq B'$ because $B$ is full for $S$. Otherwise, assume $s \in S' \setminus S$. Since $\Omega$ is a PMC of $G$ and has the reachability property, there exists an $(a,s)$-path $P_s$ whose internal vertices belong to $\overline{\Omega}$ (if there are). As $S$ is closer to $a$ than $S'$, {\em i.e.} $C_G(a,S) \subseteq C_G(a,S')$, path $P_s$ intersects $S$. Said differently, $P_s$ contains a vertex $v \in S \cap \overline{\Omega}$. Since $B$ is fully adjacent to $S$ and $v \notin \Omega$, we have $v \in B'$. Due to the presence of the internal vertex $v \in B'$ in path $P_s$ - which has internal vertices in $\overline{\Omega}$ - then all internal vertices of $P_s$ belong to $B'$ and thus $s$ is adjacent to $B'$. 

In summary, $B'$ is a connected component of $G[C' \setminus \Omega]$ full for $S'$: a contradiction since we assumed that $S'$ is $\Omega$-shy with regard to $C'$. Consequently, $S$ is $X$-shy with regard to $C$, completing the proof.
\end{proof}

Based on \autoref{subsec:removal}, we focus now on the ``only if'' part. We show, given a shy set $X$ together with a shy separator $S \subsetneq X$, how our algorithm \removal\ allows us to compute a PMC inside $X$, proving by the way that $X$ is a PMC superset.

\begin{theorem}\label{thm:indirect_PMC_superset_characterisation}
Let $X \subseteq V(G)$ be a shy set. Assume there exists a triplet $(a,b,S)$ such that:
\begin{itemize}
\item $a$ and $b$ are two distinct vertices and in particular $a \in X$,
\item $S$ is the $X$-close $(a,b)$-separator,
\item $S$ is $X$-shy with regards to $C_G(a,S)$.
\end{itemize}
Then we can identify in time $O(n^3)$ a PMC $\Omega \subseteq X$ such that $S \subsetneq \Omega \subseteq S \cup \left(C_G(a,S) \cap X\right)$.
\end{theorem}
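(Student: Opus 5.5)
We apply \removal\ to a suitably chosen shy set built from $S$ and the side of $a$. Set $C = C_G(a,S)$ and $Y = S \cup (C \cap X)$. The first step is to check that $Y$ is shy. Every connected component of $G \setminus Y$ lies either inside $C$ or outside $C$.

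Components of $G\setminus Y$ outside $C$ are contained in $\overline{C}\setminus S$. Their neighbourhood lies in $S$, so it misses $a \in C\cap X$, and they are not full.

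Components inside $C$ are exactly the components of $G[C\setminus X]$, because $Y$ and $X$ coincide on $C$. Their neighbourhood lies in $S\cup(C\cap X)$. The $X$-shyness of $S$ with regard to $C$ says precisely that none of them sees all of $S$, so none is full for $Y$. Hence $Y$ is shy.

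(One might worry that $S \subsetneq Y$ requires $C\cap X \neq \emptyset$; this holds since $a \in C\cap X$. Note $a\notin S$: $S$ is an $(a,b)$-separator.)

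Next we run \removal\ on $Y$, but with a controlled choice of protected vertex. Whenever a vertex of $S$ is in conflict, we choose it as the protected vertex. More robustly, we argue that vertices of $S$ are never removed.

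Key claim: every $s\in S$ is enlightened in $Y$, and more strongly in every intermediate $U$ with $S\subseteq U\subseteq Y$. Indeed, $C$ is full for $S$ because $S$ is a minimal separator with $C$ a full component. Any $y\in C\cap U$ reaches $s$ through a path inside $C$; its first hit of $U$ after leaving $y$ is either $s$ or some vertex of $C\cap U$. This is the delicate point: it only gives $R_G$ relations inside $C$ modulo $U$.

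For pairs $s,s'\in S$, we instead use the other full component $D$ containing $b$. $D$ is disjoint from $Y$, and $S$ being the $X$-close separator (\autoref{lem:parallel-close-x}) ensures that a full component on the $b$-side exists with $D\cap X=\emptyset$, or at least $D\cap Y=\emptyset$, which suffices. So $S$ is a clique in $\torso(G,U)$ for every such $U$, and no conflict pair lies inside $S$.

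A conflict pair $(s,v)$ with $v\in C\cap U$ may still exist. If the protected vertex is in $S$, only vertices of $C$ get removed. If the protected vertex $u\in C$ conflicts with $s$, we must avoid removing $s$. We therefore modify the selection rule of line~\ref{line:selection-removal}: we always protect a vertex of $S$ whenever one is involved in a conflict.

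Then we must show that whenever a conflict pair exists, some vertex of $S$ is involved in one. I expect this to be the main obstacle. Suppose instead the only conflicts lie inside $C\cap U$, say between $u$ and $v$. Vertices of $S$ are then adjacent to everything in $\torso(G,U)$. Here I would try to exploit that $C$ is connected and full for $S$, together with the brightening invariant, to route a $u$–$v$ path through $D$ via $S$. Failing that, we accept protecting $u\in C$, since conflicts inside $C$ only remove vertices of $C$.

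With $S$ never removed, the output $U$ of \removal\ satisfies $S\subseteq U\subseteq Y$. By \autoref{th:brightening}, $U$ is a brightening of $Y$. Suppose $U$ is not a PMC; then it has a native full component $F$, which is a component of $G\setminus Y$. The first step shows no component of $G\setminus Y$ is full even for $S\cup\{a\}$-type subsets. More precisely, $F$ is full for $U\supseteq S$, so it sees $S$, and it cannot lie outside $C$ if $U\cap C\neq\emptyset$. We must therefore also guarantee $U\cap C\neq\emptyset$, which holds because $S$ alone is not shy (it has full component $D$) while \removal\ stops only on non-shy sets or PMCs. Moreover, if $U\cap C=\emptyset$, the component containing $C$ would be full for $S=U$, and it is non-native, contradicting the invariant. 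So $F\subseteq C$ is a component of $G[C\setminus X]$ full for $S$, contradicting $X$-shyness.

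Hence $U$ is a PMC with $S\subsetneq U\subseteq S\cup(C\cap X)$, and the running time is that of \removal\ plus computing $C$, namely $O(n^3)$.
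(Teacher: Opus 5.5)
Your proof is correct and follows the paper's route. It uses the same set $Y = S \cup (C_G(a,S)\cap X)$, shown shy via the $X$-shyness of $S$, and the same priority rule protecting vertices of $S$, with $S\times S$ conflicts excluded because the full component $C_G(b,S)$ avoids $Y$. Like the paper, it then applies \autoref{th:brightening} to rule out both $U=S$ (the full component $C_G(a,S)$ would be non-native) and any native full component inside $C_G(a,S)$.

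Two remarks along the way are wrong but unused, and should be dropped:
\begin{itemize}
\item The claimed enlightenment of every $s\in S$ in $Y$ fails for pairs in $S\times(C_G(a,S)\cap X)$.
\item ``$S$ is not shy while \removal\ stops only on non-shy sets or PMCs'' does not prevent the output from being $S$. The non-native-component argument you give immediately afterwards is the one that works.
\end{itemize}
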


\begin{proof}
  The set $X \cap C_G(a,S)$ is nonempty because otherwise $S$ could not be $X$-shy with regard to $C_G(a,S)$, the latter being a full component for $S$ in $G$. Consider set $Y = S \cup (X \cap C_G(a,S))$ which is a subset of $X$. Set $Y$ admits no full component in $G \setminus Y$: the components full for $S$ that do not contain $a$ are not adjacent to $X \cap C_G(a,S)$, and the connected components of $G \setminus Y$ included into $C_G(a,S)$ are not full for $S$ because $S$ is $X$-shy with regard to $C_G(a,S)$. So, $Y$ is a shy set. Either $Y$ is a PMC, and in this case, $X$ is obviously a PMC superset. Or, there is at least one conflict  pair within $Y$. We apply \removal\ on set $Y$, but we orient the choice of the protected vertex at each iteration. In particular, we maintain the fact that $S \subseteq U$, where $U$ is the variable set in line~\autoref{line:removal-starts} in \autoref{alg:removal}, $U \subseteq Y$.
  
  For the selection of the protected vertex $u$ (line~\autoref{line:selection-removal} of \autoref{alg:removal}), we apply the following rule:
  \begin{itemize}
  \item if there is at least one conflict pair $(s,v)$ for $U$ which belongs to $S \times (C_G(a,S) \cap X)$, fix $u = s \in S$,
  \item otherwise, if there are only conflict pairs in $(C_G(a,S) \cap X) \times (C_G(a,S) \cap X)$, fix $u$ arbitrarily.
  \end{itemize}
  
  In other words, we fix as a priority that the protected vertex belongs to $S$. Observe that there cannot be a conflict pair in $S \times S$ because such pairs are visible despite $Y$, seeing each other through the component $C_G(b,S)$ which does not intersect $Y$ by definition. We prove that the result of \removal $(Y)$, with this specific selection process which does not influence the time complexity, is a PMC.
  
  Let $Z = \removal (Y)$. As we privilege vertices in $S$ for the protected vertex and there are no $S \times S$ conflict pairs, we have $S \subsetneq Z$. From \autoref{th:brightening}, \removal\ returns a brightening of set $Y$, regardless the choice of protected vertices. Hence, if $Z$ admits a full component, then it is a native component for $Y$, {\em i.e.} a connected component initially present in $G \setminus Y$. As a first remark, we cannot have $Z = S$, otherwise $C_G(a,S)$ would be a non-native full component for $S$: $C_G(a,S)$ would be non-native as $Y \cap C_G(a,S) = X \cap C_G(a,S) \neq \emptyset$. Therefore, $Z \cap C_G(a,S) \neq \emptyset$. Let $B$ be a (native) full component for $Z$. Due to the previous remark, $B$ cannot be a connected component of $G \setminus S$ different from $C_G(a,S)$: such a component would not be adjacent to $Z \cap C_G(a,S) \neq \emptyset$, contradicting the fact that $B$ is full for $Z$. So, $B \subsetneq C_G(a,S)$. In this case, as $B$ is native, this means there exists a connected component of $G \setminus Y$ which is fully adjacent to $Z$, and thus in particular to $S \subsetneq Z$. Note that, by definition of $Y$, $B$ is also a connected component of $G \setminus X$ included into $C_G(a,S)$. This yields a contradiction since $S$ is $X$-shy with regard to $C_G(a,S)$. As a consequence, $Z$ does not admit any native full component: $Z$ is a PMC. 
  
 In summary, \removal\ with a priority rule on $S$-vertices produces a PMC $Z \subseteq Y \subseteq X$. The running time needed is the one of \removal , already given in \autoref{th:brightening}.
\end{proof}	

Thanks to \autoref{thm:direct_superpmc_charac} and~\autoref{thm:indirect_PMC_superset_characterisation}, we obtain a complete characterization of PMC supersets. 

\begin{corollary}
Let $X \subseteq V(G)$ be a shy set. Set $X$ is a PMC superset if and only if there exists a triplet $(a,b,S)$ satisfying the following properties:
\begin{itemize}
\item $a$ and $b$ are two distinct vertices and in particular $a \in X$,
\item $S$ is the $X$-close $(a,b)$-separator,
\item $S$ is $X$-shy with regards to $C_G(a,S)$.
\end{itemize}
Moreover, given such triplet $(a,b,S)$, one can compute a PMC $\Omega \subseteq X$ thanks to \removal\ in $O(n^3)$.
\label{co:pmc-carac}
\end{corollary}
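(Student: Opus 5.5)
The corollary follows by combining the two directions already established, so the plan is a short assembly rather than a new argument.

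For the forward implication, assume the shy set $X$ is a PMC superset. Then \autoref{thm:direct_superpmc_charac} applies verbatim. It produces $a \in X$ and a vertex $b \neq a$ such that the $X$-close $(a,b)$-separator $S$ is $X$-shy with regard to $C_G(a,S)$. This is exactly the required triplet.

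For the converse, assume such a triplet $(a,b,S)$ is given. We invoke \autoref{thm:indirect_PMC_superset_characterisation}. It builds the shy set $Y = S \cup (X \cap C_G(a,S)) \subseteq X$ and runs \removal\ on $Y$, always choosing the protected vertex inside $S$ when a conflict pair in $S \times (C_G(a,S)\cap X)$ exists. It returns a PMC $\Omega$ with $S \subsetneq \Omega \subseteq Y \subseteq X$. Hence $X$ is a PMC superset, and $\Omega$ is the PMC claimed in the ``moreover'' part.

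For the running time, forming $Y$ takes $O(n+m)$ time: one search in $G\setminus S$ from $a$ gives $C_G(a,S)$. The subsequent call to \removal\ costs $O(n^3)$ by \autoref{th:brightening}. The priority rule for choosing the protected vertex only requires inspecting the conflict pairs already computed from $\torso(G,U)$ at each iteration, so it does not change this bound. The total is therefore $O(n^3)$.

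No step is a genuine obstacle, since both directions are earlier results. The only point to check is that the hypotheses match exactly. Both theorems assume $X$ is shy and use the same three conditions on $(a,b,S)$. The ``moreover'' clause relies on the triplet being given, not searched for. Searching over all pairs $(a,b)$ would multiply the cost by $n^2$; that is presumably the extra factor behind the $O(n^3m)$ bound in \autoref{th:pmc-superset}, and it is not part of this corollary.
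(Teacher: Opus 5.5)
Your proposal is correct and is exactly the paper's argument. The paper obtains the corollary by pairing \autoref{thm:direct_superpmc_charac} (PMC superset $\Rightarrow$ triplet) with \autoref{thm:indirect_PMC_superset_characterisation} (triplet $\Rightarrow$ a PMC $\Omega$ with $S \subsetneq \Omega \subseteq S \cup (C_G(a,S)\cap X)$, found by \removal\ with priority on $S$ in $O(n^3)$), and your remarks on the cost of forming $Y$ and of the priority rule match what the paper states.
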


\subsection{A polynomial-time algorithm for PMC superset}

The characterization of PMC supersets established in \autoref{co:pmc-carac} provides us with a way to check whether some shy set $X$ is a PMC superset. Indeed, it suffices to determine the existence (or not) of a triplet $(a,b,S)$ satisfying the properties listed in \autoref{co:pmc-carac}. It becomes rapidly clear that answering the latter question can be achieved in time polynomial in $n$. One can simply enumerate all possible pairs $a,b \in V(G)$ and then compute the unique $X$-close $(a,b)$-separator $S$. The last check consists in looking at the shyness of $S$, which is a simple connectivity property. All in all, we obtain a polynomial-time complexity for \textsc{PMC Superset}.

\pmcSuperset*
\begin{proof}
We begin with the research of a triplet $(a,b,S)$. First, we enumerate all pairs of vertices $(a,b) \in X \times V(G)$ and check whether $(a,b) \notin E(G)$ and $X \setminus \{a\}$ is an $(a,b)$-separator. If these conditions are not satisfied, we move on to another pair. Otherwise, we compute the $X$-close $(a,b)$-separator $S$. By \autoref{def:x-close}, constructing the $X$-close $(a,b)$-separator consists first in determining set $A = C_G(a,X \setminus \{a\})$: it can be achieved with a BFS starting at $a$ in graph $G \setminus (X \setminus \{a\})$. Second, we compute $S = \sepag{G}{A}{b}$ in linear time (\autoref{lem:close-sep}). When $S$ is known, we check whether it is $X$-shy with regard to $C = C_G(a,S)$. Concretely, we compute graph $G[C \setminus X]$ with its connected components: for each of them, we look at their neighborhood and verify whether $S$ is included into it. This takes $O(nm)$. In summary, checking the presence of a triplet $(a,b,S)$ witnessing the presence of PMC inside $X$ can be achieved by enumerating all pairs $(a,b)$ and then verifying whether the $X$-close $(a,b)$-separator exists and is $X$-shy. The running time is $O(n^3 m)$.

If no triplet satisfies the conditions listed in \autoref{co:pmc-carac}, then we conclude that $X$ is not a PMC superset. But if one works, then we can compute a PMC $\Omega \subseteq X$ in $O(n^3)$ (\autoref{co:pmc-carac}) and we answer that $X$ is indeed a PMC superset. The overall running time to decide \textsc{PMC superset} is thus $O(n^3 m)$.
\end{proof}

\section{Output-linear algorithm for the enumeration of PMCs} \label{sec:linear}

This section is dedicated to the description and the analysis of our algorithm \ALGO\ which enumerates all PMCs of a graph in output-linear time. The previous section will be used as a toolkit to attain this objective.

\subsection{Presentation of the algorithm: an extension of Bouchitté-Todinca method}

Our algorithm \ALGO\ can be seen as an extension of \textsc{BT-Enum}. We will keep its outline while improving the bottleneck step. Indeed, the fourth step of \textsc{BT-Enum} determines completely its worst-case time complexity. 
For the three first steps, we still do the same as \textsc{BT-Enum}, since these steps already cost output-linear time. Then, for the fourth step, we will search for the $a$-nice PMCs of $G$ in a BFS manner, starting from the PMCs enumerated in the three first steps. In fact, we will navigate in a ``supergraph of solutions'' which admits PMCs as vertices and is directed. Its arcs will be defined through a neighborhood function $\mathcal{N}$. We will then show that all $a$-nice PMCs of $G$ can be found by exploring this (partially unknown) supergraph, starting from PMCs which are not $a$-nice. 

We define the notion of \emph{neighborhood} between PMCs.

\begin{definition}[PMC neighborhood]
\label{def:pmc_neighbour}
We define the \emph{PMC neighborhood} $\mathcal{N}$ as a set of ordered pairs of PMCs. Given two PMCs $\Omega'$ and $\Omega$ of $G$, the ordered pair $(\Omega',\Omega) \in \mathcal{N}$ if $\Omega$ can be obtained from $\Omega'$ via the following process:
\begin{enumerate}
\item Guess some vertex $u^* \in \Omega'$ and fix $X^* = \Omega' \cup N_G(u^*)$.
\item Guess a second vertex $b^* \in \Omega'$ and compute the $X^*$-close $(b^*,u^*)$-separator $S^*$.
\item Keep only the vertices of $X^*$ which belong to $C_G[b^*,S^*] = S^* \cup C_{G}(b^*,S^*)$, then $\Omega = X^* \cap C_G[b^*,S^*]$. 
\end{enumerate}
We can also write $\Omega \in \mathcal{N}(\Omega')$ instead of $(\Omega',\Omega) \in \mathcal{N}$.
\end{definition}

\begin{algorithm}[h]
	\SetKwFor{For}{for}{do}{\nl endfor}
    \SetKwFor{Forall}{for all}{do}{\nl endfor}
    \SetKwIF{If}{ElseIf}{Else}{if}{then}{else if}{else}{}
    \DontPrintSemicolon
    \SetNlSty{}{}{:}
    \SetAlgoNlRelativeSize{0}
    \SetNlSkip{1em}
	
	\nl \KwIn{A graph $G$ and a set $\widehat{\Pi}$ containing some PMCs of $G$.}
 	\nl \KwOut{A set $\Pi$ containing PMCs of $G$, with $\widehat{\Pi} \cap \Pi = \emptyset$.}

	\nl $\Pi \leftarrow \emptyset$ \# Result of \textsc{PMC-BFS}, initially fixed as empty \# \;
	\nl $L \leftarrow \widehat{\Pi}$ \# Queue of the BFS \# \;

	\nl \While{$L$ is not empty}{
		\nl $\Omega' \leftarrow$ First PMC of $L$\;
		\nl $L \leftarrow L \setminus \Omega'$\;
		\nl \For{$\Omega \in \mathcal{N}(\Omega')$}{
			\nl \If{$\Omega \notin (\Pi \cup \widehat{\Pi})$ \label{line:test-present}}{
				\nl $\Pi \leftarrow \Pi \cup \{\Omega\}$\;
				\nl $L \leftarrow L \cup \{\Omega\}$\;
			}
		}
	}
	\nl \Return{} $\Pi$\;

	\caption{{\sc PMC-BFS}}
 	\label{alg:pmc_bfs}

\end{algorithm} 

This neighborhood function size is polynomial in the graph size: concretely, every PMC admits at most $n^2$ PMC out-neighbors. Computing the out-neighbors of some given PMC consists in guessing two vertices of the input graph and then applying several deterministic rules. Observe that given an arbitrary PMC $\Omega'$ and two of its vertices $u^*,b^*$, it is possible that the process of~\autoref{def:pmc_neighbour} does not produce a PMC (\autoref{fig:neighbor}). Therefore, we only keep in collection $\mathcal{N}$ the pairs $(\Omega',\Omega)$ for which the result returned by the process is indeed a PMC. This PMC neighborhood naturally defines a directed PMC-graph, where the vertices are PMCs of $G$ and arcs are given by the ordered pairs of $\mathcal{N}$.

\begin{figure}[h!]
	\centering
	\begin{subfigure}{0.49\textwidth}
	\scalebox{0.9}{\input{tikz/neighbor-yes}}
	\caption{Example of a PMC $\Omega'$ admitting a neighbor $\Omega$ for a certain choice of $u^*$ and $b^*$.}
	\label{subfig:neighbor-yes}
	\end{subfigure}~
	\begin{subfigure}{0.49\textwidth}
	\scalebox{0.9}{\input{tikz/neighbor-no}}
	\caption{Another choice of $u^*$ and $b^*$ leads to a set $\tilde{\Omega}$ which is not a PMC: $y \notin R_G(x,\tilde{\Omega})$.}
	\label{subfig:neighbor-no}
	\end{subfigure}
	\caption{Consider graph $G$ with the PMC $\Omega'$ in the light orange area. On the left, a first choice for $u^*$ and $b^*$ produces a potential neighbor of $\Omega'$: set $N[u^*]$ is drawn with bold orange lines around vertices, the $X^*$-close $(b^*,u^*)$-separator $S^*$ contains the vertices filled with green, and the final set $\Omega = X^* \cap C_G[b^*,S^*]$ is surrounded in blue. $\Omega$ is a PMC. On the right, given the same $G$ and $\Omega'$, a different choice for $u^*$ and $b^*$ produces a set $\tilde{\Omega}$ which is not a PMC. Indeed, the reachability between $x$ and $y$ is not satisfied.}
	\label{fig:neighbor}
\end{figure}

\textsc{PMC-BFS}, whose pseudocode is given in~\autoref{alg:pmc_bfs}, consists in a traversal of this PMC-graph. It takes as input a graph $G$ together with a starting set of PMCs $\widehat{\Pi}$ which is put into the initial queue of the BFS. It returns the set of PMCs visited during the traversal, excluding the initial ones in $\widehat{\Pi}$.

As a small - but very important - implementation remark, we specify briefly here how we can avoid duplication of PMCs, more concretely how we can check fast whether a given PMC was already visited/stored in the enumeration process. This is a test we execute at line~\autoref{line:test-present} of~\autoref{alg:pmc_bfs} for example. Indeed, computing such a test in exponential time would ruin our efforts to obtain overall output-linear time. We show, given a large set $\Pi$ of PMCs, how to check whether some PMC $\Omega$ is included into it or not, in time $n^{O(1)}$. We actually store our PMCs in a binary tree where edges at depth $i$ represent the addition or the exclusion of vertex $v_i$ in the PMC. In such data structure, any path from the root to a leaf represents a unique PMC. We can not only verify the presence of a PMC but also add one in a running time $O(n)$, depending only on the depth of the tree. Such a data structure is also used in \ALGO , see lines~\ref{line:case-1},~\ref{line:case-2}, \ref{line:case-3}, \ref{line:completion} of~\autoref{alg:outputlinear}. Such a data structure is very common in enumeration algorithms: as an example, it was used to store minimal separators in the polynomial-delay algorithm of Berry {\em et al.}~\cite{BeBoCo99} under the name of ``SEARCH-INSERT`` procedure.

\begin{algorithm}[h]
    \SetKwFor{For}{for}{do}{\nl endfor}
    \SetKwFor{Forall}{for all}{do}{\nl endfor}
    \SetKwIF{If}{ElseIf}{Else}{if}{then}{else if}{else}{}
    \DontPrintSemicolon
    \SetNlSty{}{}{:}
    \SetAlgoNlRelativeSize{0}
    \SetNlSkip{1em}

 	\nl \KwIn{A graph $G$.}
 	\nl \KwOut{All PMCs of $G$.}
 	
 	\# Base case \# \;
	
 	\nl \If{$G$ is a clique}{
 	    \nl \Return{} $\{V(G)\}$ \label{line:base}\;
 	}
 	\# Induction step \# \;
 	\nl $\mathcal{C} \leftarrow \ALGO (G \setminus \{a\})$\label{line:recursion} ~\#  $\mathcal{C}$ is a collection of vertex sets \# \;
 	\nl $\Pi \leftarrow$ empty collection of vertex sets\;
 	\nl \For{$\Omega' \in \mathcal{C}$}{
 	    \nl \lIf{$\Omega'$ PMC of $G$}{$\Pi \leftarrow \Pi \cup \{\Omega'\}$ \label{line:case-1}}
 	    \nl \lIf{$\Omega' \cup \{a\}$ PMC of $G$}{$\Pi \leftarrow \Pi \cup \{\Omega' \cup \{a\}\}$\label{line:case-2}}
 	    }
 	\nl \For{$S$ minimal separator of $G$}{
 		\nl \lIf{$S \cup \{a\}$ PMC of $G$}{$\Pi \leftarrow \Pi \cup \{S \cup \{a\}\}$ \label{line:case-3}}
 	}
 	\nl $\widehat{\Pi} \leftarrow$ copy of $\Pi$; \# $\Pi$ and $\widehat{\Pi}$ contain all PMCs which are not $a$-nice (\autoref{def:a-nice-pmc}). \label{line:fill-up} \# \;
 	\nl $\Pi_{a} \leftarrow$ PMC-BFS($G, \widehat{\Pi}$) \label{line:bfs}\;
 	\nl \Return{} $\Pi \cup \Pi_a$\label{line:completion}\;
 	\caption{{\sc \ALGO}}
 	\label{alg:outputlinear}
 \end{algorithm} 
 
The pseudocode of \ALGO\ is provided in~\autoref{alg:outputlinear}. The fourth step of \btenum\ is replaced by the launch of \textsc{PMC-BFS} (line~\ref{line:bfs}), where the starting set $\widehat{\Pi}$ is filled up (line~\ref{line:fill-up}) with the PMCs already listed in the first three steps (respectively lines~\ref{line:case-1},~\ref{line:case-2}, \ref{line:case-3}). Concretely, after line~\ref{line:fill-up}, set $\widehat{\Pi}$ is made up of the PMCs of $G$ which either are PMCs of $G \setminus \{a\}$ or contain $a$, where $a$ is the vertex withdrawn for the recursive call. 

We provide below a property of \textsc{PMC-BFS} which will be useful for the complexity analysis of \ALGO . The process used to compute the out-neighbors of a given PMC completely depends on the guess of two vertices: $u^*$ and $b^*$.
Hence, as announced, each PMC admits at most $n^2$ out-neighbors, considering a rough upper bound.

\begin{lemma}\label{lem:PMC_BFS_complexity}
	For any input $(G,\widehat{\Pi})$, \textsc{PMC-BFS} can be executed in time at most $O(n^3m\vert\Pi_G\vert)$.
\end{lemma}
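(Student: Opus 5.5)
The bound will come from amortizing over the PMCs that enter the BFS queue. \textsc{PMC-BFS} only ever puts PMCs of $G$ into the queue $L$: either elements of $\widehat{\Pi}$, or sets $\Omega$ added at the same moment as they are inserted into $\Pi$. Each PMC of $G$ is inserted into $L$ at most once. For elements of $\widehat{\Pi}$ this holds because they start in the queue and the test at line~\ref{line:test-present} forbids reinsertion. For the others, an element is inserted only when it is not yet in $\Pi \cup \widehat{\Pi}$, and it is added to $\Pi$ immediately. Since $\widehat{\Pi}$ and $\Pi$ are disjoint sets of PMCs of $G$, the while loop runs at most $\vert \widehat{\Pi} \vert + \vert \Pi \vert \le \vert \Pi_G \vert$ times. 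It therefore suffices to show that one iteration, i.e.\ the processing of one dequeued PMC $\Omega'$, costs $O(n^3 m)$.

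For a fixed $\Omega'$ we enumerate the at most $n^2$ pairs $(u^*,b^*) \in \Omega' \times \Omega'$ given by \autoref{def:pmc_neighbour}. For each pair we carry out three steps.
\begin{enumerate}
\item Form $X^* = \Omega' \cup N_G(u^*)$ in $O(n)$.
\item Compute the $X^*$-close $(b^*,u^*)$-separator $S^*$. As in the proof of \autoref{th:pmc-superset}, this is a BFS from $b^*$ in $G \setminus (X^* \setminus \{b^*\})$ followed by \autoref{lem:close-sep}, in $O(m)$. If it does not exist, we discard the pair.
\item Compute $C_G(b^*,S^*)$ by one more BFS in $O(m)$, and set $\Omega = X^* \cap C_G[b^*,S^*]$ in $O(n)$.
\end{enumerate}

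We then test whether $\Omega$ is a PMC of $G$. The standard characterization (no full component plus pairwise reachability, i.e.\ no conflict pair, as used in \autoref{def:reachable}) is checked by computing the components of $G\setminus \Omega$ and their neighborhoods, equivalently the graph $\torso(G,\Omega)$. This costs $O(nm)$ in the crude form: one BFS per vertex of $\Omega$ to get $R_G(x,\Omega)$, each in $O(m)$. It can even be done in $O(m+n^2)$, but the crude bound suffices. Finally, the membership test at line~\ref{line:test-present} and the insertions into $\Pi$ and $L$ use the binary search tree described above, costing $O(n)$ each. A single pair thus costs $O(nm)$, and one dequeued PMC costs $O(n^2 \cdot nm) = O(n^3 m)$.

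Multiplying by the at most $\vert \Pi_G\vert$ iterations gives $O(n^3 m \vert \Pi_G \vert)$. The only step needing care is the PMC test inside the inner loop, since it dominates the per-pair cost. If the bound needs defending, I would make explicit the $O(nm)$ reachability computation, where each $x\in\Omega$ gets one BFS restricted to $\overline{\Omega}\cup\{x\}$. The rest is routine bookkeeping, together with noting that the initial loading of $\widehat{\Pi}$ into $L$ costs $O(n\vert\widehat{\Pi}\vert)$, which is negligible.
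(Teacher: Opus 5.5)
Your proposal is correct and is essentially the paper's argument. It uses at most $n^2$ guesses $(u^*,b^*)$ per dequeued PMC, each giving a candidate in $O(m)$ and a PMC test in $O(nm)$ (\autoref{le:pmc-check}), hence $O(n^3m)$ per PMC and $O(n^3m\vert\Pi_G\vert)$ in total; your explicit check that each PMC enters the queue at most once spells out what the paper leaves implicit. The only quibble is your unused aside that the PMC test runs in $O(m+n^2)$: the torso computation you allude to, marking all pairs inside each $N_G(C)$, costs $\sum_C \vert N_G(C)\vert^2$, which can be of order $nm$, so that claim is unsupported, though nothing in your proof depends on it.
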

\begin{proof}
	For a given guess $(u^*,b^*)$ in the neighboring process, a potential neighbor is built in $O(m)$, using graph searches to determine both the $X$-close $(b^*,u^*)$-separator and component $C_G[b^*,S^*]$. Then, we check whether the obtained set is a PMC in $O(nm)$. Finally, we obtain a time complexity of $O(n^3m)$ to compute all neighbors of a given PMC $\Omega'$. The BFS itself consumes $O(n^2\vert \Pi_G \vert)$, given the rough upper bound of $\vert \mathcal{N} \vert$. All in all, the most costly operation in \textsc{PMC-BFS} is the neighborhood computation, achieved overall in $O(n^3m\vert\Pi_G\vert)$.
\end{proof}

\subsection{Shutters of active PMCs}

From now on, we will give ourselves some PMC $\Omega$ which can be considered as a ``target PMC''. Our objective is to prove that, using the neighborhood function between PMCs (\autoref{def:pmc_neighbour}), there is a path in the PMC-graph $(\Pi_G,\mathcal{N})$ from some ``already listed'' PMC in $\widehat{\Pi}$ to our target $\Omega$ which is naturally supposed to be not already listed. As described in the previous subsection, our algorithm is recursive, and, we assume that PMCs of $G \setminus \{a\}$, for some vertex $a$, are already listed (line~\autoref{line:case-1} of \autoref{alg:outputlinear}). PMCs of $G$ containing $a$ are also listed easily in output-linear time (lines~\autoref{line:case-2} and~\autoref{line:case-3}). Consequently, we assume that $\Omega$ does not contain $a$ and is not a PMC of $G \setminus \{a\}$. Equivalently, we assume that $\Omega$ is an $a$-nice PMC (\autoref{def:a-nice-pmc}). As a consequence, $\Omega$ is active with regards to the minimal separator $S \subsetneq \Omega$, such that $S = N(C_G(a,\Omega))$. We fix $T = \Omega \setminus S$. Note that the singleton $\{a\}$ separates two vertices $x,y \in S$ in graph $G[\overline{\Omega} \cup \{x,y\}]$.

Our objective is thus to highlight that $\Omega$ can be attained from another PMC $\Omega'$ which either contains $a$ or is a PMC of $G \setminus \{a\}$, with a number of arcs, polynomial in $n$. This condition will allow us to state that any PMC of $G$ can be attained by \textsc{PMC-BFS}, where the starting set contains all PMCs of $G$ which are not $a$-nice (line~\autoref{line:bfs}).

From now on, we fix $\Omega$ as the notation for the ``target PMC'' described above, with its bipartition $S,T$ which is function of $a \in V(G)$: $S = N(C_G(a,\Omega))$ and $T = \Omega \setminus S$. Observe that $(S,T)$ is a shy decomposition of $\Omega$ (\autoref{def:shy-decomp}) as, otherwise, $\{a\}$ could not separate $x$ from $y$ (\autoref{le:pmc_without_a} and \autoref{le:shy-active}). Moreover, $T \cap C_G(a,\Omega) = \emptyset$.

Our approach consists in computing a \emph{preimage} of $\Omega$, {\em i.e.} a PMC $\Omega' \neq \Omega$ of $G$ such that the oriented arc $(\Omega',\Omega)$ belongs to the neighborhood function (\autoref{def:pmc_neighbour}). Then, we will establish key properties of $\Omega'$ showing that it is in some way going closer to a PMC which is not $a$-nice. To produce the preimage PMC $\Omega'$, we define the crucial concept of \emph{shutter}.

\begin{definition}[Shutter of a PMC]
Let $a \in V(G)$ and $\Omega$ be an $a$-nice PMC of $G$ (\autoref{def:a-nice-pmc}), let $S = N(C(a,\Omega)) \subsetneq \Omega$ and $T = \Omega \setminus S$. An $a$-\emph{shutter}\footnote{Let us say briefly a \emph{shutter} if there is no ambiguity on $a$.} $D \subseteq C(a,\Omega)$ is a set of vertices satisfying:
\begin{itemize}
\item $D$ is an inclusion-wise minimal separator of a pair $(u,v) \in S \times S$ in graph $G[\overline{\Omega} \cup \{u,v\}]$, meaning that it is not possible to have another pair $(u^*,v^*) \in S \times S$ such that some proper subset of $D$ separates $u^*$ from $v^*$ in $G[\overline{\Omega} \cup \{u^*,v^*\}]$.
\item For any pair $(u,v)$ separated by $D$ in $G[\overline{\Omega} \cup \{u,v\}]$, $D$ separates pair $(u,v)$ closely (\autoref{def:sep-close}) in this graph.
\item If $D$ does not contain vertex $a$, {\em i.e.} $a \notin D$, then for each pair $(u,v)$ separated by $D$, one among these two vertices $u,v$ is not only fully adjacent to $D$ (condition of the previous bullet) but is also not adjacent to $C_G(a,\Omega \cup D)$, {\em i.e.} the connected component of $G \setminus (\Omega \cup D)$ containing $a$.
\end{itemize}
\label{def:ashutter}
\end{definition}

Let us explain the role played by shutters and their meaning for the PMC $\Omega$. Looking at the shy decomposition $(S,T)$ of the PMC $\Omega$, our goal is to slightly modify the content of $\Omega$ in $S$ (for removal) and $C(a,\Omega)$ (for add-on) to produce another PMC $\Omega'$. Shutters stand thus as a rather restricted way to add vertices to the PMC $\Omega$ while removing some of $S$ to obtain a different PMC. Moreover, all vertices of the shutter are adjacent to at least one common vertex in $\Omega$. This key fact will help us in proving that $(\Omega',\Omega) \in \mathcal{N}$. Finally, we will show how the succession of such arcs in the PMC-graph gives us the whole connectivity.

The following theorem ensures us the existence of an $a$-shutter for $\Omega$. Moreover, it states that such $a$-shutter $D$ will not contain $a$, except if $D = \{a\}$.

\begin{theorem}[Existence of shutters]
Let $\Omega$ be an $a$-nice PMC. Then, $\Omega$ admits at least one $a$-shutter. 
Moreover, one of such $a$-shutter $D$ is either $D = \{a\}$ or does not contain $a$.
\label{th:ashutter}
\end{theorem}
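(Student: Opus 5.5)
Since $\Omega$ is $a$-nice, some pair $x,y\in S$ is separated by $\{a\}$ in $G[\overline{\Omega}\cup\{x,y\}]$, as recalled before the definition of shutters (\autoref{le:pmc_without_a}). The plan is to start from the set family
\[
\mathcal{F}=\{D\subseteq C(a,\Omega) : \exists (u,v)\in S\times S,\ D \text{ separates } u,v \text{ in } G[\overline{\Omega}\cup\{u,v\}]\},
\]
which is nonempty because $\{a\}\in\mathcal{F}$. Separators lying outside $C(a,\Omega)$ can be discarded: every $(u,v)$-path with interior in $\overline{\Omega}$ that is blocked by $D$ either passes through $C(a,\Omega)$ or lies in another component, and only the former matter. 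If no proper subset of $\{a\}$ separates a pair, i.e. no pair of $S$ is nonadjacent with no common path avoiding $\Omega$, then I will check directly that $D=\{a\}$ satisfies the three bullets. The first bullet is the minimality just described. The third bullet is vacuous since $a\in D$. The second bullet holds because every vertex of $S$ is adjacent to $C(a,\Omega)$, so any pair separated by the singleton $\{a\}$ has both endpoints adjacent to $a$, which is what "closely" requires.

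In general I will pick $D\in\mathcal{F}$ inclusion-wise minimal over all pairs simultaneously, which takes care of the first bullet. The second bullet ("closely", \autoref{def:sep-close}) I expect to get by a close-separator replacement argument in the style of \autoref{lem:close-sep} and \autoref{lem:parallel-close-x}. Let $D$ separate $(u,v)$, and let $D'$ be the separator close to $u$ computed inside $G[\overline{\Omega}\cup\{u,v\}]$ from $D$, namely the neighbourhood of $u$'s component restricted towards $v$. Then $D'\subseteq D$, so by minimality $D'=D$, and $D$ is fully adjacent to $u$. Doing the same from $v$ makes it fully adjacent to $v$ as well. This has to hold for every pair separated by $D$, and minimality over all pairs gives exactly that.

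The third bullet is the main obstacle. When $a\notin D$, we need that for each separated pair $(u,v)$ one of $u,v$ misses $C_G(a,\Omega\cup D)$. Since $a\notin D$, $a$ lies in a single component of $G[\overline{\Omega}\setminus D]$ between $u$ and $v$. That component is $C_G(a,\Omega\cup D)$, and it cannot be adjacent to both $u$ and $v$, since otherwise there would be a $u$–$v$ path avoiding $D$. This gives the third bullet essentially for free. The only subtlety is ensuring $D\subseteq C(a,\Omega)$ while $a\notin D$, which follows from $D$ being taken inside $\mathcal{F}$.

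Finally, for the "moreover" part, consider a minimal $D$ that contains $a$ but differs from $\{a\}$. Then $\{a\}\subsetneq D$, and since $\{a\}$ itself separates $(x,y)$, the first bullet's minimality is violated. So minimality forces either $D=\{a\}$ or $a\notin D$. I anticipate that the delicate points will be the interaction between minimality across different pairs and the closeness requirement. If the replacement $D'$ separates a different pair than the one intended, I will redo the argument choosing $D$ minimal in cardinality rather than only by inclusion.
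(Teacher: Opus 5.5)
Your proof breaks at the second bullet: an inclusion-wise minimal member of $\mathcal{F}$ need not separate its pairs closely. This also invalidates your treatment of $D=\{a\}$. A vertex of $S$ is adjacent to the component $C_G(a,\Omega)$, but not necessarily to $a$ itself. So $\{a\}$ can separate $u$ from $v$ through paths $u,c,a,c',v$ with $u,v\notin N(a)$. Since $\emptyset$ never separates two vertices of $S$ (that would be a conflict pair for the PMC $\Omega$), your hypothesis on proper subsets of $\{a\}$ always holds. You are therefore asserting that $\{a\}$ is always an $a$-shutter. \autoref{fig:notashutter} refutes this: $\{a\}$ separates $(x,w)$, and neither endpoint is adjacent to $a$. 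Since $\{a\}$ is minimal in $\mathcal{F}$ both by inclusion and by cardinality, the same example also rules out your cardinality fallback.

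Your replacement argument proves less than you claim. The set $N(C(v,N(C(u,D))))$, computed in $G[\overline{\Omega}\cup\{u,v\}]$, is the separator close to the \emph{component} $C(u,D)$. The equality $D'=D$ only restates that $D$ is a minimal $(u,v)$-separator. The separator close to $u$ itself lies in $N(u)$ and is generally not contained in $D$, so minimality of $D$ says nothing about it. Full adjacency to both endpoints fails even for genuine shutters: in that figure, $D_2'$ is fully adjacent to $z$ but not to $y$.

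Consequently the third bullet is not free either. Your observation correctly shows that \emph{some} endpoint misses $C_G(a,\Omega\cup D)$. The definition, however, requires that the endpoint fully adjacent to $D$ be that one. In the same figure, $D_1'$ separates $(y,z)$ closely, yet the fully adjacent endpoint $y$ is also adjacent to $C(a,\Omega\cup D_1')$.

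The missing idea is a repair procedure driven by a potential function, which is how the paper proceeds. Start from $D_0=\{a\}$. Whenever the current minimal set $D_i'$ separates some pair non-closely, or closely but with the fully adjacent endpoint seeing the component of $a$, replace it by the close separator of that pair taken on the side away from that component, then re-minimize. The invariant $D_i\subseteq N(C_i)$, with $C_i=C_G(a,\Omega\cup D_i)$, shows that each such move strictly enlarges $C_i$, since the vertices of $D_i'\setminus D_{i+1}$ join it. The process therefore terminates at a set satisfying all three bullets. No static choice of a minimal element of $\mathcal{F}$ can replace this. Your ``moreover'' argument is fine: by the first bullet, any shutter containing $a$ must equal $\{a\}$.
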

\begin{proof}
We propose an algorithmic inductive procedure to build such $a$-shutter $D$. We will show that this algorithm finishes and indeed returns a shutter. More precisely, we produce sequences of vertex-sets $(D_i)_{i\ge 0}$, $(D_i')_{i\ge 0}$, and $(C_i)_{i\ge 1}$ so that at the last iteration $k\ge 0$, set $D_k'$ will be our shutter. The number of steps $k$ is linear, {\em i.e.} $k < n$. Set $D_i'$ is always a subset of $D_i$, {\em i.e.} $D_i' \subseteq D_i$.

\textit{Iteration $i=0$.} We fix $D_0 = \{a\}$. Note that $D_0$ is thus the separator of two vertices $x,y \in S$, as $\Omega$ is not a PMC of $G \setminus \{a\}$ (\autoref{le:pmc_without_a}). As it is a singleton, it is naturally inclusion-wise minimal, so we fix $D_0' = D_0$. If all pairs separated by $D_0$ are separated closely, then $D = D_0$ makes our theorem hold: $D_0$ is an $a$-shutter and $D_0 = \{a\}$ (see~\autoref{fig:ashutter} for an example). From now on, we assume that $D_0$ does not satisfy this property: there exists a pair $(u_1,v_1) \in S \times S$ (potentially $(x,y)$) separated by $D_0$ in graph $G[\overline{\Omega} \cup \{u_1,v_1\}]$ but not closely. More concretely, $a$ is an articulation point for the pair $(u_1,v_1)$ but neither $u_1$ nor $v_1$ are adjacent to $a$. In this case, we show via the induction process that $\Omega$ admits an $a$-shutter which does not contain $a$. We begin with base case $i = 1$.

\begin{figure}[h]
	\centering
	\scalebox{0.8}{\input{tikz/ashutter}}
	\caption{An example of graph $G$ (in black) given with an $a$-nice PMC $\Omega$ (in red) with $S = N(C_G(a,\Omega))$ and $T = \Omega \setminus S$. Indeed, $\Omega$ is not a PMC of $G\setminus \{a\}$ as $(x,z) \notin \torso(G\setminus \{a\},\Omega)$. The connected components of $G \setminus \Omega$ included into $C_G(T,S)$ are surrounded by blue dashed areas. The set $D_0 = \{a\}$ separates pair $(x,z)$ in graph $G[\overline{\Omega} \cup \{x,z\}]$, but also pairs $(y,z)$ and $(x,w)$ in their corresponding induced subgraph. Observe that, for instance, pair $(x,y)$ is not separated by $D_0$ as both vertices are adjacent to at least one common blue connected component. The same statement holds for pair $(y,w)$ which is directly adjacent.
	In this example, $D_0$ is an $a$-shutter: all pairs separated by $D_0$ - as mentioned above: $(x,z),(y,z),(x,w)$ - are separated closely, as $x$, $y$ (and even $z$) are adjacent to $a$.}
	\label{fig:ashutter}
\end{figure}

For any induction step $i\ge 1$ of our algorithm, we show the following statements:
\begin{enumerate}[label=(\alph*)]
\item $D_i$ is the $(u_i,v_i)$-close separator in $G[\overline{\Omega} \cup\{u_i,v_i\}]$, for some $(u_i,v_i) \in S \times S$; \label{it:di}
\item $D_i' \subseteq D_i$ is an inclusion-wise minimal separator between pairs in $S \times S$: formally, there is no pair $(u,v) \in S \times S$ admitting some minimal $(u,v)$-separator $D \subsetneq D_i'$ in graph $G[\overline{\Omega} \cup\{u,v\}]$; \label{it:dip}
\item set $D_i$ (and thus also $D_i'$) does not contain $a$; \label{it:contain}
\item $C_i$ is the connected component of $G \setminus (\Omega \cup D_i)$ which contains $a$, {\em i.e.} $C_i = C_G(a,\Omega \cup D_i)$, and such that $D_i$ is included into the neighborhood of $C_i$, {\em i.e.} $D_i \subseteq N_G(C_i)$; \label{it:c_i} 
\item for any $i \ge 2$, $C_{i-1} \subsetneq C_i$. \label{it:extend}
\end{enumerate}

At each step, we show that if $D_i'$ does not satisfy the theorem requirements, then we can define $D_{i+1}$, $D_{i+1}'$ and $C_{i+1}$ satisfying the induction properties. Since $C_i$ augments at each step (condition~\autoref{it:extend}), then the process necessarily stops at some moment and the theorem holds.

\textit{Base case: iteration $i=1$.} We fix $D_1$ as the close $(u_1,v_1)$-separator in graph $G[\overline{\Omega} \cup \{u_1,v_1\}]$. Set $D_1$ does not contain $a$ as $(a,u_1) \notin E(G)$. Hence, graph $G \setminus (\Omega \cup D_1)$ admits a connected component containing $a$: we fix it as $C_1$. If $D_1$ is not inclusion-wise minimal, we fix $D_1'$ any proper subset of $D_1$ which is inclusion-wise minimal and separate pairs of $S \times S$. By definition, bullets~\autoref{it:di},~\autoref{it:dip},~\autoref{it:contain} are satisfied and \autoref{it:extend} does not have to be checked for the case $i = 1$. For~\autoref{it:c_i}, by definition, $C_1$ contains $a$, but it remains to prove that $D_1 \subseteq N_G(C_1)$. Since $D_0 = \{a\}$ was an $(u_1,v_1)$-separator in $G[\overline{\Omega} \cup \{u_1,v_1\}]$, any path from $u_1$ to $v_1$ whose internal vertices belong to $\overline{\Omega}$ passes through vertex $a$. Furthermore, as $D_1$ is a minimal $(u_1,v_1)$-separator in the same graph $G[\overline{\Omega} \cup \{u_1,v_1\}]$, for every $d \in D_1$, there exists a path from $u_1$ to $v_1$ passing through $d$ and whose internal vertices belong to $\overline{\Omega \cup D_1}$. Consequently, this highlights the existence, for each $d \in D_1$, of a path from $a$ to $d$ whose internal vertices belong to $V(G) \setminus (\Omega \cup D_1)$. In brief, $C_1$ is adjacent to $d$ in $G$, for each $d \in D$, so $D_1 \subseteq N_G(C_1)$.

\textit{Induction step.} Assume by induction hypothesis that all properties \autoref{it:di}-\autoref{it:extend} hold for some $i \ge 1$. We also assume that $D_i'$ does not satisfy the theorem requirements, otherwise we are done. There can be two reasons for that: either $D_i'$ separates some pair ``unclosely'', or for some pair separated closely by $D_i'$, the ``close vertex'' of the pair is adjacent to the connected component of $a$. We distinguish these two cases.

\textbf{Case A:} there exists a pair $(u_{i+1},v_{i+1})$ which is separated by $D_i'$ in $G[\overline{\Omega} \cup \{u_{i+1},v_{i+1}\}]$ but not closely. 

We denote by $C_i'$ the connected component of $G \setminus (\Omega \cup D_i')$ which contains $a$. Naturally, $C_i \subseteq C_i'$ as $D_i' \subseteq D_i$: indeed, the removal of vertices from $D_i$ to obtain $D_i'$ could only make the component $C_i$ grow up. Moreover, as $D_i$ was fully adjacent to $C_i$, then its subset $D_i'$ will also be fully adjacent to $C_i' \supseteq C_i$, in brief $D_i' \subseteq N_G(C_i')$.

A crucial observation is here: $u_{i+1}$ and $v_{i+1}$ cannot be both adjacent to the connected component $C_i'$. Otherwise, $D_i$ would not be a $(u_{i+1},v_{i+1})$-separator. Assume w.l.o.g. that either $v_{i+1}$ is adjacent to $C_i'$ in $G$ or none of these two vertices $\{u_{i+1},v_{i+1}\}$ are adjacent to $C_i'$. We fix $D_{i+1}$ equal to the close $(u_{i+1},v_{i+1})$-separator in $G[\overline{\Omega} \cup \{u_{i+1},v_{i+1}\}]$\footnote{If $C_i'$ was adjacent to $u_{i+1}$, then $D_{i+1}$ would have been the close $(v_{i+1},u_{i+1})$-separator instead, and the same reasoning follows by simply inverting $u_{i+1}$ and $v_{i+1}$ in the remainder.}. As $D_i'$ does not separate $(u_{i+1},v_{i+1})$ closely, there exists a vertex $d_{i+1} \in D_i'$ such that $u_{i+1}$ is not adjacent to $d_{i+1}$. $D_{i+1}$ thus satisfies $D_{i+1} \neq D_i'$. Moreover, as they are minimal separators of the same pair $(u_{i+1},v_{i+1})$ in the same graph $G[\overline{\Omega} \cup\{u_{i+1},v_{i+1}\}]$, they cannot be included one into another. Hence, $D_i' \setminus D_{i+1}$ is nonempty and will contain at least $d_{i+1}$. As a consequence, regarding graph $G \setminus (\Omega \cup D_{i+1})$, component $C_i'$ will \emph{grow up} in a larger component containing $d_{i+1}$, as $D_{i+1}$ is closer to $u_{i+1}$ than $D_i'$, and $d_{i+1} \in N_G(C_i')$. Formally, set $C_i' \cup \{d_{i+1}\}$ is included into the connected component of $G \setminus (\Omega \cup D_{i+1})$ containing $a$.

We also fix $D_{i+1}'$ as any inclusion-wise minimal separator for pairs in $S \times S$ which is a subset of $D_{i+1}$. It exists by minimality principle: note that we could have $D_{i+1} = D_{i+1}'$ if $D_{i+1}$ was already inclusion-wise minimal. Finally, we fix $C_{i+1}$ as the connected component of $G \setminus (\Omega \cup D_{i+1})$ containing $a$. 
By definition of $D_{i+1}$ and $D_{i+1}'$, properties \autoref{it:di} and \autoref{it:dip} are satisfied. As $C_i' \cup \{d_{i+1}\} \subseteq C_{i+1}$ and previously $C_i \subseteq C_i'$, the inclusion property~\autoref{it:extend} is satisfied. Given that $C_{i+1}$ is a connected component of $G \setminus (\Omega \cup D_{i+1})$ which contains $a$, then obviously $a \notin D_{i+1}$ and thus $a \notin D_{i+1}'$, which gives~\autoref{it:contain}. It remains to prove that $D_{i+1} \subseteq N_G(C_{i+1})$. 

Observe again that $D_i'$ and $D_{i+1}$ are minimal separators of the same pair $(u_{i+1},v_{i+1})$ in the same graph $G[\overline{\Omega} \cup \{u_{i+1},v_{i+1}\}]$. Since $D_{i+1}$ is the close $(u_{i+1},v_{i+1})$-separator of this graph, then $D_i'$ and $D_{i+1}$ are parallel (\autoref{lem:parallel-close}). As a consequence, for any $x \in D_{i+1}$, there exists a path $P_x$ from $x$ to some $y \in D_i'$ whose internal vertices do not belong to $\Omega \cup D_i' \cup D_{i+1}$ (\autoref{lem:parallel-sep}). Therefore, any path from $a$ to $y\in D_i'$ whose internal vertices belong to $C_i'$ can be extended to $x \in D_{i+1}$: the obtained path has internal vertices belonging to $V \setminus (\Omega \cup D_{i+1})$. Said differently, the connected component of $G \setminus (\Omega \cup D_{i+1})$ containing $a$ admits the whole set $D_{i+1}$ in its neighborhood: $D_{i+1} \subseteq N_G(C_{i+1})$. Finally, \autoref{it:c_i} is also completed.  

\textbf{Case B}: all pairs separated by $D_i'$ are separated closely, but there exists a pair $(u_{i+1},v_{i+1})$ separated by $D_i'$ in $G[\overline{\Omega} \cup \{u_{i+1},v_{i+1}\}]$ such that, among $\{u_{i+1},v_{i+1}\}$, none of these two vertices is at the same time fully adjacent to $D_i'$ and disconnected from $C_i'$. 

Assume w.l.o.g. that $u_{i+1}$ is fully adjacent to $D_i'$: $D_i' \subseteq N_G(u_{i+1})$. In Case B, necessarily, $u_{i+1}$ is adjacent to $C_i'$. Therefore, $v_{i+1}$ is not adjacent to $C_i'$. Consequently, $v_{i+1}$ is not fully adjacent to $D_i'$, otherwise Case B would not be satisfied (because of $v_{i+1}$). We fix $D_{i+1}$ as the close $(v_{i+1},u_{i+1})$-separator in $G[\overline{\Omega} \cup \{u_{i+1},v_{i+1}\}]$. Separators $D_i'$ and $D_{i+1}$ are different as $v_{i+1}$ is not fully adjacent to $D_i'$. As they separate the same pair in the same graph, $D_i' \setminus D_{i+1} \neq \emptyset$. As $C_i'$ is not adjacent to $v_{i+1}$, then $D_{i+1} \cap C_i' = \emptyset$. Furthermore, component $C_i'$ is fully adjacent to $D_i'$ (consequence of~\autoref{it:c_i}), so graph $G \setminus (\Omega \cup D_{i+1})$ contains a connected component which includes $C_i'$ (as $D_{i+1} \cap C_i' = \emptyset$) and its adjacent vertices $D_i' \setminus D_{i+1}$. Let $C_{i+1}$ be this connected component which contains $a$. Finally, we fix $D_{i+1}' \subseteq D_{i+1}$ as an inclusion-wise minimal separator between pairs of $S \times S$. 

Again, by definition, items~\autoref{it:di} and~\autoref{it:dip} are satisfied. The observations above lead us to the same conclusion for item~\autoref{it:extend}: $C_i \subsetneq C_i' \cup (D_i' \setminus D_{i+1}) \subseteq C_{i+1}$. As for Case A, $C_{i+1}$ contains $a$, so $D_{i+1}$ and $D_{i+1}'$ do not, fulfilling~\autoref{it:contain}. To finish, we use the same observation as for Case A: $D_{i+1}$ and $D_i'$ are separators of the same pair in the same graph, and $D_{i+1}$ is closer to $v_{i+1}$. Therefore, for any $x \in D_{i+1}$, there exists a path from $a$ to $x$ passing through some vertex $y \in D_i'$. The internal vertices of this path belong to $V \setminus (\Omega \cup D_{i+1})$. In other words, any vertex of $D_{i+1}$ belongs to the neighborhood of $C_{i+1}$, which completes item~\autoref{it:c_i}.

\begin{figure}[h]
	\centering
	\scalebox{0.9}{\input{tikz/notashutter}}
	\caption{Another example of graph $G$ (in black) given with an $a$-nice PMC $\Omega$ (in red). The set $D_0 = \{a\}$ separates pairs $(x,w)$ in graph $G[\overline{\Omega} \cup \{x,w\}]$. Set $D_0 = D_0'$ is not an $a$-shutter as it separates $(x,w)$ but not closely (Case A). We thus fix $u_1 = x$ and $v_1 = w$ - note that we could have selected $u_1 = w$ and $v_1 = x$ but in the proof this choice is arbitrary. We compute $D_1$ as the close $(u_1,v_1)$-separator in $G[\overline{\Omega} \cup \{u_1,v_1\}]$: this set is made up of the vertices filled with orange. Unfortunately, $D_1$ is not inclusion-wise, as its subset $D_1'$ (surrounded by an orange area) separates pair $(y,z)$. However, $D_1'$ is an inclusion-wise minimal separator and we pursue with this set. Again, $D_1'$ is not an $a$-shutter: it separates $(y,z)$ closely as $y$ is fully adjacent to $D_1'$ but $y$ is also adjacent to $C(a,\Omega \cup D_1')$. Hence, we fall into Case B. Consequently, we compute the close $(z,y)$ separator $D_2$, made up of the vertices filled with yellow. This time, $D_2 = D_2'$ is an $a$-shutter: it separates only pair $(z,y)$, it is fully adjacent to $z$ in the pair, and $z$ is not adjacent to $C(a,\Omega \cup D_2')$.}
	\label{fig:notashutter}
\end{figure}

\textit{Conclusion.} We were able to introduce an inductive process which produces successive separators while the given $D_i'$ does not satisfy the theorem requirements. \autoref{fig:notashutter} provides an example of such shutter construction. Note that the sequence $(C_i)_{i\ge 1}$ is strictly increasing. Since it cannot grow infinitely ($\vert C_i \vert < n$), the process will necessarily stop at some moment $k \ge 0$. This will occur when neither Case A nor Case B will be satisfied. Thus, all pairs separated by $D_k'$ will be separated closely. Moreover, as Case B is not satisfied, for each pair separated by $D_k'$, at least one separated vertex will be not only adjacent to the whole $D_k'$ but also non adjacent to $C_k'$, the connected component of $G \setminus (\Omega \cup D_k')$ which contains $a$. Set $D_k'$ is thus an $a$-shutter (\autoref{def:ashutter}), this completes our proof of existence. Note that either $D_0' = \{a\}$ is returned, or it is $D_k'$ with $k > 0$ and then $a \notin D_k'$ by~\autoref{it:contain}. 
\end{proof}

\subsection{Preimage of an $a$-nice PMC} \label{subsec:preimage}

Every $a$-nice PMC $\Omega$ of $G$ admits an $a$-shutter $D$: either $D = \{a\}$ or $a \notin D$. We show that, in both cases, we can produce another PMC very close to $\Omega$, in terms of adjacency, and which contains $D$. Then, we will prove that such PMC is a preimage of $\Omega$ for the neighboring function $\mathcal{N}$.

We need a preliminary definition which will be useful for this purpose. We define a subset of $\Omega$ containing the \emph{shutter-close vertices}: they are vertices which not only undergo a separation due to $D$ but also are adjacent to it.

\begin{definition}[Shutter-close set]
Consider an $a$-nice PMC $\Omega$ of $G$, $S = N(C_G(a,\Omega))$ and an $a$-shutter $D$. A \emph{shutter-close set} of $\Omega$ relative to shutter $D$ is an inclusion-wise minimal set $S_D \subsetneq S$ satisfying the following properties:
\begin{itemize}
\item For any pair $(v,w)$ of vertices separated by $D$ in $G[\overline{\Omega} \cup\{v,w\}]$, at least one among these two vertices belongs to $S_D$.
\item If $v \in S_D$, then $v$ is fully adjacent to $D$, {\em i.e.} $D \subseteq N_G(v)$.
\end{itemize}
\label{def:shutter-close}
\end{definition}

As $S_D$ is defined as inclusion-wise minimal, we have $S_D \neq S$. Indeed, if it was the case, $D$ would be adjacent to any $s \in S$, by definition of $S_D$. For any fixed $x \in S$, for all separated pairs containing $x$, {\em i.e.} of the form $(x,y)$, we could put $y$ inside $S_D$ instead of $x$ and hence $x$ could be removed: \autoref{def:shutter-close} would still hold as all separated pairs are represented but $S_D \subseteq S \setminus \{x\}$. This fact is the consequence of the following lemma, which implies that $S \setminus S_D$ is nonempty.

\begin{lemma}[Property of shutter-close sets]
Consider an $a$-nice PMC $\Omega$ of $G$, $S = N(C_G(a,\Omega))$ and an $a$-shutter $D$ with shutter-close set $S_D$. There is at least one pair $(x,y) \in S \times S$ separated by $D$ in $G[\overline{\Omega} \cup\{x,y\}]$, such that $x \in S_D$ and $y \notin S_D$.
\label{lem:minimal-shutter-close}
\end{lemma}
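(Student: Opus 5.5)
The plan is a short argument by contradiction against the inclusion-wise minimality of $S_D$ in \autoref{def:shutter-close}. The only structural input needed is that some pair is separated by $D$ at all, so that $S_D$ is nonempty and contains an endpoint of a separated pair.

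First, fix the existence of a separated pair. By the first bullet of \autoref{def:ashutter}, $D$ is an inclusion-wise minimal separator of some pair $(u,v) \in S \times S$ in $G[\overline{\Omega} \cup \{u,v\}]$. So the set of pairs separated by $D$ is nonempty. By the first bullet of \autoref{def:shutter-close}, $S_D$ contains at least one endpoint of each such pair, and in particular $S_D \neq \emptyset$.

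Now suppose, for contradiction, that no separated pair has exactly one endpoint in $S_D$. Every separated pair has at least one endpoint in $S_D$, so every separated pair then has both endpoints in $S_D$. Pick any separated pair and an endpoint $x \in S_D$ of it, and set $S_D' = S_D \setminus \{x\}$. We check that $S_D'$ satisfies both bullets of \autoref{def:shutter-close}.
\begin{itemize}
\item For the first bullet, take a separated pair $(v,w)$. If $x \notin \{v,w\}$, the endpoint of the pair lying in $S_D$ is different from $x$, so it lies in $S_D'$. If, say, $v = x$, then $w \neq x$, and by our assumption $w \in S_D$, hence $w \in S_D'$.
\item The second bullet (full adjacency to $D$) is inherited by any subset of $S_D$.
\end{itemize}
Also $S_D' \subsetneq S_D \subsetneq S$. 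Thus $S_D'$ is a strictly smaller set satisfying the defining properties, contradicting the inclusion-wise minimality of $S_D$. Hence some separated pair $(x,y)$ has $x \in S_D$ and $y \notin S_D$, as claimed.

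The only delicate point is treating separated pairs as unordered, so that ``the other endpoint'' is well defined and the covering condition is symmetric. Separation of $u$ from $v$ in $G[\overline{\Omega} \cup \{u,v\}]$ is indeed symmetric, so this causes no difficulty. I do not expect a real obstacle beyond making sure that existence of at least one separated pair is justified from the definition of shutter.
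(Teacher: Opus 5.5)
Your proof is correct and follows the same route as the paper: assuming every separated pair lies inside $S_D \times S_D$, you delete one vertex $x$ from $S_D$ and check that $S_D \setminus \{x\}$ still satisfies both bullets of the definition, contradicting inclusion-wise minimality. You also use the first bullet of the shutter definition to guarantee that a separated pair exists, hence $S_D \neq \emptyset$; the paper leaves this implicit when it picks ``any element $x \in S_D$''.
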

\begin{proof}
Assume by way of contradiction that no such pair exists. According to \autoref{def:shutter-close}, every pair separated by $D$ admits at least one of its vertices in $S_D$. Consequently, all pairs separated by $D$ are included into $S_D \times S_D$. Consider any element $x \in S_D$ and let $\varphi_S(x)$ be the set of vertices which are separated from $x$ by $D$: due to our hypothesis, $\varphi_S(x) \subsetneq S_D$. Consider set $S_D' = S_D \setminus \{x\}$: it satisfies $\varphi_S(x) \subseteq S_D'$. As $S_D' \subsetneq S_D$, naturally all vertices of $S_D'$ are fully adjacent to $D$. Moreover, since separated pairs involving $x$ admit a second vertex inside $S_D$, then all pairs separated by $D$ are intersected in $S_D'$. Set $S_D'$ satisfies the two bullets of \autoref{def:shutter-close} and is a proper subset of $S_D$, contradicting the minimality of $S_D$.
\end{proof}

We can see also the shutter-close set as a \emph{minimal colored vertex cover} of some bicolored graph $H(\Omega,D)$ whose vertex set is made up of $S$ and two vertices $(v,w) \in S \times S$ are adjacent iff they are separated by $D$ in $G[\overline{\Omega} \cup\{v,w\}]$. We color in blue the vertices which are fully adjacent to $D$, in red the others. By \autoref{def:ashutter}, we know that the blue vertices form a vertex cover of $H(\Omega,D)$. Nevertheless, this blue vertex cover is not necessarily inclusion-wise minimal, which justifies the definition of shutter-close sets: see~\autoref{fig:preimage-a} for an illustration.

In the remainder, set $S_D$ represents a set of candidates to be withdrawn from $\Omega$ in order to obtain another PMC $\Omega'$ such that $(\Omega',\Omega) \in \mathcal{N}$. We will consider first the case where $D = \{a\}$. 

\subsubsection{When $a$ is a shutter}

Here, we assume that the shutter $D$ satisfies $D = \{a\}$. Considering the PMC superset $\Omega \cup \{a\}$ with the techniques presented in \autoref{sec:superset}, we can obtain a PMC $\Omega_a \neq \Omega$.

\begin{theorem}
Consider an $a$-nice PMC $\Omega$ with $S = N(C_G(a,\Omega))$ which admits $D = \{a\}$ as an $a$-shutter and let $S_D$ be a shutter-close set of $\Omega$ relative to $D$. There exists a PMC $\Omega_a = \{a\} \cup \widehat{S} \cup T$, where $\widehat{S} \subsetneq S$ and $S \setminus \widehat{S} \subseteq S_D$.
Furthermore, $(\Omega_a,\Omega) \in \mathcal{N}$.
\label{th:preimage-a}
\end{theorem}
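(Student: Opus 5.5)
We plan to construct $\Omega_a$ by applying the PMC Superset machinery of \autoref{sec:superset} to a well-chosen shy superset, steering \removal\ so that only vertices of $S_D$ can be deleted. Let $X = \Omega \cup \{a\}$. First, $X$ is shy. $\Omega$ has no full component and $a \in C_G(a,\Omega)$. Components of $G \setminus X$ outside $C_G(a,\Omega)$ are components of $G\setminus \Omega$ that do not see $a$, so they are not full for $X$. Components inside $C_G(a,\Omega)$ do not see $T$, since $N(C_G(a,\Omega)) = S$ and $T \neq \emptyset$.

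Next we identify the conflict pairs of $X$. Pairs inside $\Omega$ that are reachable in $\Omega$ stay reachable in $X$, except those whose only witness paths pass through $a$. These are exactly the pairs of $S\times S$ separated by $D=\{a\}$ in $G[\overline{\Omega}\cup\{v,w\}]$, i.e.\ the edges of $H(\Omega,\{a\})$. Pairs $(a,t)$ with $t\in T$ are conflicts, because $a$ only sees $S$ through $C_G(a,\Omega)$, and other components of $G\setminus X$ adjacent to $a$ would have to lie in $C_G(a,\Omega)$. Pairs $(a,s)$ with $s \in S$ are reachable. Since $D=\{a\}$ separates all these pairs closely, one endpoint of each is adjacent to $a$.

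We then run \removal\ on $X$ with a guided choice of protected vertices. At the first step we protect $a$? No: that would delete $T$. Instead we always protect a vertex of $T$ or of $S\setminus S_D$, so that the deleted vertices are conflicts of a protected vertex. Concretely, we pick $u = t\in T$; its only conflict is $a$, which would destroy the goal. So the conflicts $(a,t)$ must be resolved the other way, and they are genuine. We therefore modify the target.

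The correct superset is $Y = \{a\}\cup T \cup S$ with the $S_D$-vertices removable, and the conflicts $(a,t)$ must vanish once vertices of $S_D$ are removed. Indeed, removing $s\in S_D$ merges $s$ into components that see both $a$ (since $s\in N(a)$) and $T$ (since $s\in S\subseteq N(C)$ for the full component $C$ containing $T$). Hence the plan is to protect, at each iteration, a vertex $y\notin S_D$ from a separated pair $(x,y)$ with $x\in S_D$, which exists by \autoref{lem:minimal-shutter-close}. We then delete its conflicts. These lie in $S_D$: vertices of $T$ and $S\setminus S_D$ cannot conflict with $y$, by minimality of $S_D$ and the fact that $T$ stays reachable. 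We must check this claim via the closeness property.

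After these deletions the conflicts with $a$ disappear, because removed $S_D$-vertices are adjacent to $a$ and to the full component of $T$. We then apply \autoref{th:brightening}: any full component of the result must be native. We exclude native full components as in \autoref{thm:indirect_PMC_superset_characterisation}, using \autoref{le:shy-active} (shyness of $S$ in $C$) and the fact that no component of $C_G(a,\Omega)\setminus\{a\}$ sees $T$. This gives a PMC $\Omega_a=\{a\}\cup\widehat S\cup T$ with $S\setminus\widehat S\subseteq S_D$. It satisfies $\widehat S \ne S$, since otherwise the conflict pairs remain.

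Finally, for $(\Omega_a,\Omega)\in\mathcal N$, we take $u^*=a$, so that $X^*=\Omega_a\cup N(a)\supseteq \Omega_a\cup S_D$, which contains all of $\Omega$. For $b^*$ we take any $t\in T$. The $X^*$-close $(t,a)$-separator should then be $S$: $S$ separates $t$ from $a$, and all of $S$ is reachable from $t$'s side within $C$. Intersecting with $C_G[t,S]$ removes $a$ and the extra neighbours of $a$ in $C_G(a,\Omega)$, leaving $S\cup T=\Omega$.

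The main obstacle is the conflict analysis in the guided \removal. We must show that deletions are confined to $S_D$, and that the resulting set does not collapse, in particular that $T$ survives. The computation of the close separator equalling $S$ is the second delicate point, since neighbours of $a$ outside $\Omega$ must not create a closer separator.
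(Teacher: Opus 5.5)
Your proposal follows the paper's proof. You start from $X=\Omega\cup\{a\}$ and protect the vertex $y\in S\setminus S_D$ of a pair given by \autoref{lem:minimal-shutter-close}, so the first deletion is some $x\in S_D\subseteq N(a)$, after which $a$ sees all of $T$; every later deletion stays in $S_D$ because any remaining conflict lies in $S\times S$ and has an endpoint in $S_D$; you conclude with \autoref{th:brightening}, since no component of $G\setminus X$ sees both $a$ and $T$; and you recover $\Omega$ from $\Omega_a$ via $u^*=a$, $b^*\in T$, $S^*=S$. Only minor points need fixing: a protected vertex outside $S_D$ has its conflicts in $S_D$ by the covering property of $S_D$, not its minimality; \autoref{le:shy-active} is unnecessary, and would not apply directly since $\Omega_a$ need not contain $S$, because the observation you made when proving $X$ shy already excludes native full components; and once no vertex outside $S_D$ has a conflict, you should say explicitly that the protected vertex then lies in $S_D$ and deletes only $S_D$-vertices.
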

\begin{proof}
Consider the PMC superset $X = \Omega \cup \{a\}$. Note that the \emph{native} connected components of $X$, {\em i.e.} the connected components of $G \setminus X$, cannot be adjacent to both $a$ and some vertex $t \in T$, because $S \subsetneq X$ separates $a$ from $T$.  We will remove vertices from $X$ with \removal\ using a certain selection of vertices (not necessarily arbitrary in line~\autoref{line:selection-removal} of \autoref{alg:removal}). We will show that we obtain the announced PMC $\Omega_a$. The priority rules on removals will ensure us that no vertex of $T \cup \{a\}$ is withdrawn.

\begin{figure}[h!]
	\centering
	\begin{subfigure}{0.63\textwidth}
	\centering
	\scalebox{0.7}{\input{tikz/ashutter2}}
	\caption{Repetition of \autoref{fig:ashutter} where $D = \{a\}$ is an $a$-shutter of an $a$-nice PMC $\Omega = S \cup T$. Pairs $(x,z),(x,w),(y,z)$ are separated.}
	\label{subfig:ashutter}
	\end{subfigure}~
	\hspace{0.02\textwidth}
	\begin{subfigure}{0.34\textwidth}
	\centering
	\scalebox{1}{\input{tikz/blue-cover}}
	\caption{Illustration of set $S_D$ via graph $H(\Omega,D)$: vertices $x$, $y$ and $z$ are in blue since they are fully adjacent to $D$. A minimal blue vertex cover of $H(\Omega,D)$ is $S_D = \{x,y\}$ for example.}
	\label{subfig:blue-cover}
	\end{subfigure}
	\begin{subfigure}{0.99\textwidth}
	\centering
	\scalebox{0.8}{\input{tikz/preimage-a}}
	\caption{The PMC $\Omega_a$ (surrounded by yellow) obtained, given the $a$-nice PMC $\Omega$ (surrounded by red), its $a$-shutter $D = \{a\}$ and the shutter-close set $S_D = \{x,y\}$. In this example, $S \setminus \widehat{S} = S_D$.}
	\label{subfig:preimage-a}
	\end{subfigure}~
	\caption{An example of shutter-close set $S_D = \{x,y\}$ for the $a$-shutter presented in \autoref{fig:ashutter}, together with the PMC $\Omega_a$ obtained according to \autoref{th:preimage-a}.}
	\label{fig:preimage-a}
\end{figure}

As $\{a\}$ is an $a$-shutter, then by \autoref{lem:minimal-shutter-close}, there exists at least one pair $(x,y) \in S_D \times (S \setminus S_D)$ such that $\{a\}$ is a minimal $(x,y)$-separator in $G[\overline{\Omega} \cup \{x,y\}]$. By definition of $S_D$, $x$ is adjacent to $a$, {\em i.e.} $(a,x) \in E(G)$. We select $y \in S \setminus S_D$ as 
the protected vertex involved in a conflict pair (line~\autoref{line:selection-removal} of \autoref{alg:removal}) and remove first its conflict $x$ from $X$: let us fix $X' = X \setminus \{x\}$. Set $X'$ contains $T \cup \{a\}$ so no native connected component from $G \setminus X$ is a full component of $X'$. Moreover, the connected component of $G \setminus X'$ which contains $x$ does not admit $y \in X'$ in its neighborhood. In brief, $X'$ does not have any full component: it is a shy set. If $X'$ is a PMC, then the theorem holds. Naturally, we assume that conflictual pairs remain in $X'$. We will then remove all other conflicts for $y$, and pursue the execution of \removal\ with non-arbitrary selections, detailed below.

First, let us state some nice properties satisfied by $X'$ in terms of reachability. Concretely, we show that pairs in $\{a\} \times T$, $T \times T$, $S \times T$ and $\{a\} \times S$ are not conflictual for $X'$.
\begin{itemize}
\item $T \subseteq R_G(a,X')$: $a$ is indeed adjacent to $x \in \overline{X'}$ and, as $\Omega$ is a PMC of $G$, there is a path from $x$ to any $t \in T$ whose internal vertices belong to $\overline{\Omega} \cap C_G(T,S) \subseteq \overline{X'}$.
\item For any $t \in T$, $T \subseteq R_G(t,X')$: as $\Omega$ is a PMC, there exists a path from $t$ to any $t' \in T$ whose internal vertices belong to $\overline{\Omega} \cap C_G(T,S) \subseteq \overline{X'}$.
\item For any $s \in S \setminus \{x\}$, $T \subseteq  R_G(s,X')$: again, as $\Omega$ is a PMC of $G$, there is path from $s$ to any $t \in T$ whose internal vertices belong to $\overline{\Omega} \cap C_G(T,S) \subseteq \overline{X'}$.
\item $S \setminus \{x\} \subseteq R_G(a,X')$: the reason is quite different here, $S$ is a minimal separator of $G$ and $C_G(a,S)$ is a full component for $S$ in $G$. Therefore, as $a \in C_G(a,S)$ by definition, for each $s \in S \setminus \{x\}$, there exists a path from $a$ to $s$ in $G$ whose internal vertices only belong to $C_G(a,S) \setminus \{a\}$. This gives also a path from $a$ to $s$ whose internal vertices belong to $\overline{X'}$, as $(C_G(a,S) \setminus \{a\}) \cap X' = \emptyset$.
\end{itemize}
Consequently, the only conflictual pairs of $X'$ belong to $S \times S$, or more precisely to $(S \setminus \{x\}) \times (S \setminus \{x\})$. Note also that pairs in $(S \setminus S_D) \times (S \setminus S_D)$ are not conflictual: if one such pair was, then it would contradict \autoref{def:shutter-close} as for some pair separated by $D$, none of its vertices would belong to $S_D$. Finally, we conclude that conflictual pairs only involve pairs in $S_D \times S_D$ and $S_D \times (S \setminus S_D)$. 

We pursue the removal process, by first removing all conflicts for $y \in S \setminus S_D$ and, second selecting new protected vertices in $S \setminus S_D$ if it is possible, otherwise in $S_D$ if all vertices of $S \setminus S_D$ do not have any conflict. As vertices of $T \cup \{a\}$ are not involved in any conflictual pair, the obtained set $\Omega_a$ contains $T \cup \{a\}$. Remember that the native connected components of $X$ could not be adjacent to the whole set $T \cup \{a\}$: since only native connected components can avoid our resulting set to be a PMC (\autoref{th:brightening}), there is no chance it can occur as $T \cup \{a\} \subsetneq \Omega_a$. As a conclusion, $\Omega_a$ is a PMC. Moreover, as we avoid the removal of vertices in $S \setminus S_D$ by selecting them as protected vertices, we have $X' \setminus \Omega_a \subsetneq S_D$. More generally, as $x \in S_D$, we have $X \setminus \Omega_a = S \setminus \Omega_a \subseteq S_D$. In summary, $\Omega_a$ contains $T \cup \{a\}$ together with some vertices of $S$, denoted by $\widehat{S}$, such that $S \setminus \widehat{S} = X \setminus \Omega_a \subseteq S_D$, proving the first part of the theorem. 

It remains to prove that $(\Omega_a,\Omega) \in \mathcal{N}$. Consider thus the set $\Omega_a$ and fix $u^* = a$ (following~\autoref{def:pmc_neighbour}). We obtain the PMC superset $X^* = \Omega_a \cup N_G(a)$. As all vertices removed from $X$ in the process, {\em i.e.} in $X \setminus \Omega_a$, belong to $S_D$ and are adjacent to $a$, we have $\Omega \subsetneq X^*$ and, even, $\Omega = X^* \cap (S \cup C_G(T,S))$. Fix $b^* \in T$. As $\Omega$ is a PMC of $G$, vertex $b^*$ reaches all vertices of $S$ via internal vertices in $C_G(T,S) \setminus T \subseteq \overline{X^*}$. Hence, $S$ is the $X^*$-close $(b^*,u^*)$-separator of $G$: $S^* = S$. As all vertices of $\Omega$ belong to $S \cup C_G(b^*,S^*)$, the removal of $X^* \setminus (S \cup C_G(b^*,S^*))$ gives us the PMC $\Omega$.
\end{proof}

\subsubsection{When $a$ does not belong to the $a$-shutter}

As a reminder, in \autoref{th:ashutter}, we established that either an $a$-shutter $D$ is equal to $\{a\}$ or does not contain $a$. We focus on this second case: $a \notin D$.

The notion of shutter-close set needs to be slightly adapted to this different scenario. By \autoref{def:ashutter}, if an $a$-shutter does not contain $a$, for any pair $(u,v)$ separated by $D$ in $G[\overline{\Omega} \cup \{u,v\}]$, one vertex of the pair $(u,v)$ is both fully adjacent to $D$ and not adjacent to $C_G(a,\Omega \cup D)$. Thus, we provide an extension of \autoref{def:shutter-close} where the shutter-close vertices must also be nonadjacent to $C_G(a,\Omega \cup D)$.

\begin{definition}[Extended shutter-close set]
Consider an $a$-nice PMC $\Omega$ of $G$, $S = N(C_G(a,\Omega))$ and an $a$-shutter $D$ such that $a\notin D$. An \emph{extended shutter-close set} of $\Omega$ relative to $D$ is an inclusion-wise minimal set $S_D \subsetneq S$ satisfying the following properties:
\begin{itemize}
\item for any pair $(v,w)$ of vertices separated by $D$ in $G[\overline{\Omega} \cup\{v,w\}]$, at least one among these two vertices belongs to $S_D$.
\item If $v \in S_D$, then $v$ is fully adjacent to $D$, {\em i.e.} $D \subseteq N_G(v)$.
\item If $v \in S_D$, then $v$ is not adjacent to $C_G(a,\Omega \cup D)$. 
\end{itemize}
\label{def:ext-shutter-close}
\end{definition}

As for classic shutter-close sets, it can be shown with the same arguments that there is a pair in $S_D \times (S \times S_D)$ which is separated by $D$, where $S_D$ is an extended shutter-close set.

\begin{lemma}[Property of extended shutter-close sets]
Consider an $a$-nice PMC $\Omega$ of $G$, $S = N(C_G(a,\Omega))$ and an $a$-shutter $D$, $a \notin D$, with extended shutter-close set $S_D$. There is at least one pair $(x,y) \in S \times S$ separated by $D$ in $G[\overline{\Omega} \cup\{x,y\}]$, such that $x \in S_D$ and $y \notin S_D$.
\label{lem:ext-minimal-close}
\end{lemma}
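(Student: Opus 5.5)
We argue by contradiction, mirroring the proof of \autoref{lem:minimal-shutter-close}. Suppose that no pair $(x,y) \in S \times S$ separated by $D$ in $G[\overline{\Omega} \cup \{x,y\}]$ satisfies $x \in S_D$ and $y \notin S_D$. The first bullet of \autoref{def:ext-shutter-close} says that every separated pair has at least one endpoint in $S_D$. Under our hypothesis, the other endpoint must then also lie in $S_D$, so every pair separated by $D$ is contained in $S_D \times S_D$.

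First we check that $S_D$ is nonempty, which is needed to pick an element. Since $D$ is an $a$-shutter, it is by \autoref{def:ashutter} a minimal separator of at least one pair of $S \times S$. So at least one separated pair exists, and one of its endpoints lies in $S_D$.

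Fix any $x \in S_D$ and set $S_D' = S_D \setminus \{x\}$. We verify that $S_D'$ satisfies all three bullets of \autoref{def:ext-shutter-close}.
\begin{itemize}
\item The second and third bullets (full adjacency to $D$, and non-adjacency to $C_G(a,\Omega \cup D)$) are hereditary, so they pass from $S_D$ to its subset $S_D'$.
\item For the covering bullet, take a separated pair $(v,w)$. If $x \notin \{v,w\}$, the pair is covered by $S_D$ through a vertex different from $x$, which lies in $S_D'$.
\item If instead the pair is $(x,w)$, then by our hypothesis $w \in S_D$. Since $w \neq x$, we get $w \in S_D'$, so the pair is still covered.
\end{itemize}
Hence $S_D'$ satisfies the definition and is a proper subset of $S_D$. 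This contradicts the inclusion-wise minimality of $S_D$.

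The only subtle point is that the minimality argument needs every defining condition to be closed under taking subsets. This holds for the two extra adjacency conditions, so the proof for classic shutter-close sets transfers verbatim. No property specific to $a \notin D$ is needed beyond the existence of $S_D$.
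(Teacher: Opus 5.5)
Your proof is correct and follows the paper's argument: assuming no such pair exists, every pair separated by $D$ lies in $S_D \times S_D$, so removing any $x \in S_D$ preserves all three conditions of the extended definition, which contradicts the minimality of $S_D$. Your explicit check that $S_D \neq \emptyset$, using the existence of at least one pair separated by $D$, is a detail the paper leaves implicit.
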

\begin{proof}
The proof comes from the fact that $S_D$ is inclusion-wise minimal: if no such pair exists, set $S_D$ could be reduced. The reasoning is identical to the one used for \autoref{lem:minimal-shutter-close}.
\end{proof}

The graph analogy we established for shutter-close sets also holds for the extended version. Consider the same graph $H(\Omega,D)$ but with different colors: now the vertices in blue are the vertices not only fully adjacent to $D$ but also not adjacent to $C_G(a,\Omega \cup D)$. Other vertices are in red. From \autoref{def:ashutter}, we know that the blue vertices form a vertex cover of $H(\Omega,D)$. \autoref{def:ext-shutter-close} is thus quite natural: the extended shutter-close set is an inclusion-wise minimal blue vertex cover of $H(\Omega,D)$.

\begin{theorem}
Consider an $a$-nice PMC $\Omega$, $S = N(C_G(a,\Omega))$, $T = \Omega \setminus S$, and an $a$-shutter $D$ satisfying $a \notin D$. Let $S_D$ be an extended shutter-close set of $\Omega$ relative to $D$. There exists a PMC: 
$$\Omega' = D \cup (\Omega \setminus S_D) \cup S_D',$$
where $S_D'$ is a subset of $S_D$, {\em i.e.} $S_D' \subseteq S_D$.

Therefore, $a \notin \Omega'$. We also have that the separator $S' = N(C_G(a,\Omega'))$ satisfies $C_G(a,S') \subsetneq C_G(a,S)$. Furthermore, $(\Omega',\Omega) \in \mathcal{N}$.
\label{th:preimage-not-a}
\end{theorem}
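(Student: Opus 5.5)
We mimic the proof of \autoref{th:preimage-a}, replacing the singleton $\{a\}$ by the shutter $D$. Start from $X = \Omega \cup D$. This set is shy: a component of $G \setminus X$ adjacent to some $t \in T$ lies in $C_G(T,S)$, which is disjoint from $D \subseteq C_G(a,\Omega)$, so it does not see $D$; a component containing no such vertex misses $T$. By \autoref{lem:ext-minimal-close}, pick a separated pair $(x,y)$ with $x \in S_D$ and $y \notin S_D$. Run \removal\ on $X$ with $y$ as the first protected vertex. Afterwards, keep choosing protected vertices in $S \setminus S_D$ whenever one is in conflict, and otherwise in $S_D$.

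\textbf{Reachability.} The core of the argument is to show that, for every intermediate set $U$ with $D \cup T \cup (S \setminus S_D) \subseteq U$, every conflict pair for $U$ lies in $S_D \times S$. I would check each type of pair separately.
\begin{itemize}
\item $T\times T$ and $T \times (S \cap U)$: use paths of $\Omega$ through $C_G(T,S) \setminus T$, which avoid $D$.
\item $D \times D$: $D \subseteq N_G(C_G(a,\Omega\cup D))$, as in item~\autoref{it:c_i} of \autoref{th:ashutter}, so $D$ is pairwise reachable through $C_G(a,\Omega \cup D)$.
\item $D \times T$: a removed $x \in S_D$ is fully adjacent to $D$ and reaches $T$ through $C_G(T,S)$. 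So after the first removal, every $d \in D$ reaches $T$ via $x$.
\item $D \times S$: each $s \in S$ is either adjacent to $C_G(a,\Omega\cup D)$, and then reaches all of $D$ through it, or lies in the part of $C_G(a,\Omega)$ cut off by $D$. In the latter case the minimality of $D$ (first item of \autoref{def:ashutter}) should give full adjacency of that region to $D$.
\item $(S\setminus S_D)\times(S \setminus S_D)$: such a pair is never separated by $D$, by the first item of \autoref{def:ext-shutter-close}. I expect the same to hold for it relative to $U$, because removed vertices of $S_D$ only open more paths.
\end{itemize}
With these pairs excluded, the priority rule never removes vertices of $D \cup T \cup (S\setminus S_D)$. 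The output therefore has the form $D \cup (\Omega\setminus S_D)\cup S_D'$.

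\textbf{PMC property.} By \autoref{th:brightening}, the output is either a PMC or has only native full components. A native component of $G\setminus X$ cannot see both $T$ and $D$, by the argument in the first paragraph. Hence the output is a PMC $\Omega'$, and $a \notin \Omega'$ because $a\notin D$ and $a \notin \Omega$.

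\textbf{The separator $S'$.} The component $C_G(a,\Omega')$ is contained in $C_G(a,\Omega\cup D)$ together with the region reachable through the removed $S_D$-vertices. However, vertices of $S_D$ are not adjacent to $C_G(a,\Omega\cup D)$ (third item of \autoref{def:ext-shutter-close}). Thus $C_G(a,\Omega') = C_G(a,\Omega\cup D) \subsetneq C_G(a,\Omega)$, where the inclusion is strict since $D\neq\emptyset$. From this I deduce $C_G(a,S') \subsetneq C_G(a,S)$, using that $S'=N(C_G(a,\Omega'))$ is a minimal separator parallel to $S$ on the $a$-side.

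\textbf{The arc $(\Omega',\Omega)\in\mathcal N$.} Choose $u^* \in S\setminus S_D$ fully adjacent to $D$, namely the vertex $y$, or any vertex of $S_D'$ if that set is nonempty. Then $X^* = \Omega' \cup N(u^*)$ must contain the removed vertices $S_D\setminus S_D'$. This is the point I am least sure of. If it fails, I would instead take $u^*\in D$, which contains $S_D$ in its neighbourhood by the second item of \autoref{def:ext-shutter-close}, and choose $b^*\in T$. As in \autoref{th:preimage-a}, $b^*$ reaches all of $S$ through $C_G(T,S)\setminus T \subseteq \overline{X^*}$, so $S^*=S$ and $X^*\cap C_G[b^*,S] = \Omega$.

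The \textbf{main obstacle} is the reachability case analysis, especially the $D \times S$ and $S_D$-related pairs. That is where the closeness and minimality conditions of the shutter must be used carefully.
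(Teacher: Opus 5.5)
Your overall architecture is the paper's. You start from $X=\Omega\cup D$ and protect vertices outside $S_D$, so that only $S_D$-vertices are removed. You conclude via the brightening theorem, since no native component of $G\setminus X$ sees both $D$ and $T$, and you get the arc from $u^*\in D$, $b^*\in T$, $S^*=S$. Your treatment of $S'$ is actually more direct than the paper's detour through the $X$-close $(a,t)$-separator. Since vertices of $S_D$ are not adjacent to $C_G(a,X)$, we have $N(C_G(a,X))\subseteq\Omega'$. Hence $C_G(a,S')=C_G(a,\Omega')=C_G(a,\Omega\cup D)\subsetneq C_G(a,\Omega)=C_G(a,S)$, and no parallelism is needed.

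\textbf{The gap: the $D\times S$ case.} The reachability step carries the whole proof, and this case is not established. You claim that for $s\in S\setminus S_D$ not adjacent to $C_G(a,\Omega\cup D)$, minimality of $D$ forces the region of $C_G(a,\Omega)$ seen by $s$ to be fully adjacent to $D$. This is false. The vertex $s$ may have a single neighbour $d_1\in D$ in $C_G(a,\Omega)$ with $|D|\ge 2$, as long as $D$ separates $s$ from no vertex of $S$ (e.g.\ $s$ adjacent to the rest of $S$). Then minimality of $D$ says nothing, yet $s$ is in conflict with $D\setminus\{d_1\}$ for $X$. The argument that works, as in the paper, goes through the removed vertex $x$:
\begin{itemize}
\item If $s$ sees $x$ despite $X$, then, since $D\subseteq N(x)$, $s$ sees all of $D$ despite every $U\subseteq X\setminus\{x\}$.
\item If not, then, because $\Omega$ has the reachability property, $D$ separates $(x,s)$ in $G[\overline{\Omega}\cup\{x,s\}]$. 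By the first item of the shutter definition, $D$ is then a minimal $(x,s)$-separator there, and $s$ reaches every $d\in D$ through $\overline{\Omega\cup D}$.
\end{itemize}

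\textbf{Related fixes.} Your invariant fails for $U=X$ itself, where $D\times T$ pairs are conflicts. So you must also check the first iteration separately: all conflicts of $y$ for $X$ must lie in $S_D$, i.e.\ $y$ must see $D$ despite $X$. This is exactly the second branch above, since $D$ minimally separates $(x,y)$. For $D\times D$, use $D\subseteq N(x)$ rather than $D\subseteq N_G(C_G(a,\Omega\cup D))$. The latter is a property of the particular shutter built in the existence proof, not of the definition.

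\textbf{The choice of $u^*$.} Drop your first option. It is $x$, not $y$, that is fully adjacent to $D$, and neither $y$ nor a vertex of $S_D'$ need be adjacent to $S_D\setminus S_D'$, so $\Omega\subseteq X^*$ can fail. Your fallback $u^*\in D$ is exactly the paper's choice and is correct. The inclusion $S_D\subseteq N(u^*)$ gives $\Omega\subseteq X^*$. The inclusion $N(u^*)\subseteq C_G(a,S)\cup S$ gives $X^*\cap C_G(T,S)=T$, whence $S^*=S$ and $X^*\cap C_G[b^*,S]=\Omega$.
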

\begin{proof}
As for \autoref{th:preimage-a}, we will terminate the proof by showing that $(\Omega',\Omega) \in \mathcal{N}$. The first part of the proof consists in showing the existence of $\Omega'$ and the property over $S'$.

We consider the PMC superset $X = \Omega \cup D$. We show how vertex removals in $S_D$ allows us to obtain the announced PMC $\Omega'$, distinguishing several cases. We define $S'$ as the $X$-close $(a,t)$-separator, where $t \in T$. Naturally, $S'$ exists as $S \subsetneq X$ separates $a$ from $t$. However, $S'$ is different from $S$ as it does not contain $S_D \neq \emptyset$: from \autoref{def:ext-shutter-close}, vertices of $S_D$ are not adjacent to $C_G(a,X)$. Furthermore, $S'$ is $X$-shy with regards to $C_G(t,S')$. Indeed, assume by way of contradiction that it is not. There would be a connected component $C \subsetneq C_G(t,S')$ of $G \setminus X$ such that $S' \subseteq N(C)$. This is impossible as $D \subsetneq S'$ and $t$ are separated by $S \subsetneq X$, so $C$ cannot be adjacent to all vertices of $S'$. By \autoref{thm:indirect_PMC_superset_characterisation}, we can identify a PMC which is a subset of $X$ by following conflict removals.

As a last property, we have $C_G(a,S') \subsetneq C_G(a,S)$ because $S' \neq S$ both separate pair $(a,t)$ in $G$, $S \subsetneq X$ and $S'$ is the $X$-close $(a,t)$-separator. Less formally, both separators are included into $X$ but $S'$ is the closest to $a$.

We start the removal process of $X$ by forcing, as it was achieved in the proof of~\autoref{thm:indirect_PMC_superset_characterisation}: the shy separator $S' = N(C_G(t,S'))$ to be maintained, and removing vertices from $X \setminus S' \subseteq C_G(t,S')$. All pairs of vertices in $S' \times S'$ see each other despite $X$ thanks to $C_G(a,X)$. We will fix some priority rules for selecting the protected vertex, the main one being that it belongs in priority to $S'$, instead of $X \setminus S'$.

Let $(s,s')$ be a pair in $S_D \times (S \setminus S_D)$ separated by $D$ in $G[\overline{\Omega} \cup \{s,s'\}]$: from \autoref{lem:ext-minimal-close}, such pair exists. We fix $s' \in S \setminus S_D$ as a protected vertex, and we remove first the vertex $s$: it can be done since $s \notin S'$ and $s$ has a conflict with $s'$. The obtained set is $X_s = X \setminus \{s\}$. Let us pursue with a few connectivity remarks concerning $X_s$.

Each pair $(u,v) \in (\Omega \setminus S_D) \times (\Omega \setminus S_D)$ is a reachable pair despite $X$: these vertices are connected in graph $G[\overline{X} \cup \{u,v\}]$. Indeed, by \autoref{def:ext-shutter-close}, pairs separated by $D$ contain at least one vertex from $S_D$. As a consequence, pairs in $(\Omega \setminus S_D) \times (\Omega \setminus S_D)$ are also reachable despite $X_s \subsetneq X$: they are not conflictual.
Similarly, pairs in $D \times D$ are not conflictual despite $X_s$, since all of them are adjacent to $s$, and $s \notin X_s$. Moreover, pairs in $D \times (\Omega \setminus S_D)$ are not conflictual: let $x \in (\Omega \setminus S_D)$, either $x$ saw $s$ despite $X$, and now it sees entirely $D$ despite $X_s$, or $x$ was separated from $s$ by $D$. In this latter case, as $D$ was a minimal $(s,x)$-separator in $G[\overline{\Omega}\cup \{s,x\}]$, by definition of minimal separators, vertex $x$ reaches every vertex of $D$ despite $X \supsetneq X_s$.



\begin{figure}[t]
	\centering
	\scalebox{0.8}{\input{tikz/preimage-nota}}
	\caption{Example of PMC $\Omega'$ obtained, considering the PMC $\Omega$ already presented in \autoref{fig:notashutter}. Here, $S_D = \{z\}$. }
	\label{fig:preimage-nota}
\end{figure}

In summary, the pairs $D \times D$, $D \times (\Omega \setminus S_D)$ and $(\Omega \setminus S_D) \times (\Omega \setminus S_D)$ are not conflictual for $X_s$. Moreover, set $X_s$ can be partitioned into three sets: $D$, $\Omega \setminus S_D$ and $S_D \setminus \{s\}$. Hence, for all remaining reachability conflicts in the process, at least one vertex of the conflict pair belongs to $S_D$ (and thus not to $S'$). Remember that $S' \cap S_D = \emptyset$. We can thus force that the removed vertex at each step belongs to $S_D$, by taking in priority protected vertices first inside $S'$ (to handle all $S' \times S_D $ conflict pairs), second inside $(\Omega \setminus S_D)$ (to handle conflict pairs $(\Omega \setminus S_D) \times S_D$) and eventually inside $S_D$ (to handle the only remaining conflicts in $S_D \times S_D$). Since by \autoref{thm:indirect_PMC_superset_characterisation} we are sure to obtain a PMC $\Omega'$ via this process (native components of $X$ cannot be adjacent to both $D$ and $T$, while $D \cup T \subseteq \Omega'$), we have $\Omega' \subsetneq X$ satisfying $X \setminus S_D \subseteq \Omega'$. Consequently, $\Omega'$ contains sets $D$ and $\Omega \setminus S_D$. As announced, we have $$D \cup (\Omega \setminus S_D) \subseteq ~\Omega' ~\subsetneq D \cup (\Omega \setminus S_D) \cup S_D.$$ 
Moreover, as $S'$ was taken as the $X$-shy separator for the removal process, $S' \subsetneq \Omega'$. Finally, fixing $S'_D = \Omega' \cap S_D$, we have $\Omega' = D \cup (\Omega \setminus S_D) \cup S'_D$.

Let us show now that $(\Omega',\Omega) \in \mathcal{N}$. The arguments are very similar to the ones used in \autoref{th:preimage-a}. We guess $u^* \in D$, and we add $N_G(u^*)$ to $\Omega'$: $X^* = \Omega' \cup N_G(u^*)$. Note the following key fact: $\Omega \setminus \Omega'$ is included into $N_G(u^*)$ as $\Omega \setminus \Omega' \subseteq S_D$ and by definition, vertices of $S_D$ are adjacent to every vertex of $D$. In brief, $\Omega \subsetneq X^*$. We guess some $b^*$ which belongs to $T \subseteq \Omega \cap \Omega'$. As $u^* \in C_G(a,S)$ and $b^* \in C_G(T,S)$, $S$ is a minimal $(b^*,u^*)$-separator. More precisely, set $S$ is the $X^*$-close $(b^*,u^*)$-separator in $G$: as $\Omega$ is a PMC of $G$, there is a path from $b^*$ to each $s \in S$ whose internal vertices belong to $\overline{X^*}$. So $S^* = S$. Finally, we only keep vertices of $X^* \cap (S^* \cup C_G(b^*,S^*))= \Omega$.
\end{proof}

An example of preimage $\Omega'$, highlighted by~\autoref{th:preimage-not-a}, is shown in~\autoref{fig:preimage-nota} on the same setting than in~\autoref{fig:notashutter}.

\subsubsection{Conclusions on PMC neighborhood}

Finally, we present a statement summarizing our contributions in \autoref{subsec:preimage}: any $a$-nice PMC admits a preimage which is ``closer'' to some PMC which is not $a$-nice.

\begin{theorem}[Preimage of an $a$-nice PMC]
Let $\Omega$ be an $a$-nice PMC of $G$. There exists a PMC $\Omega' \neq \Omega$, satisfying $(\Omega',\Omega) \in \mathcal{N}$, such that $\Omega'$ either contains $a$ or otherwise $C_G(a,\Omega') \subsetneq C_G(a,\Omega)$.
\label{th:preimage}
\end{theorem}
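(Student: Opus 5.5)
The statement should follow directly by combining \autoref{th:ashutter} with the two preimage constructions of \autoref{subsec:preimage}. Let $\Omega$ be $a$-nice, and set $S = N(C_G(a,\Omega))$ and $T = \Omega \setminus S$. By \autoref{th:ashutter}, $\Omega$ admits an $a$-shutter $D$ with either $D = \{a\}$ or $a \notin D$. I will split into these two cases and, in each, invoke the corresponding theorem to produce $\Omega'$.

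\textbf{Case $D = \{a\}$.} I take a shutter-close set $S_D$ of $\Omega$ relative to $D$. It exists because the fully-adjacent vertices form a cover of the separated pairs, so a minimal such cover exists. I then apply \autoref{th:preimage-a}, which yields $\Omega' = \Omega_a = \{a\} \cup \widehat{S} \cup T$ with $(\Omega_a,\Omega) \in \mathcal{N}$. Clearly $a \in \Omega'$. Since $\Omega$ is $a$-nice, $a \notin \Omega$, so $\Omega' \neq \Omega$.

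\textbf{Case $a \notin D$.} I take an extended shutter-close set $S_D$. It exists by the third bullet of \autoref{def:ashutter}, which guarantees that the ``blue'' vertices cover all separated pairs, so a minimal blue cover exists. I then apply \autoref{th:preimage-not-a}, which gives $\Omega' = D \cup (\Omega \setminus S_D) \cup S_D'$ with $(\Omega',\Omega) \in \mathcal{N}$, $a \notin \Omega'$, and $C_G(a,S') \subsetneq C_G(a,S)$ where $S' = N(C_G(a,\Omega'))$. Here $\Omega' \neq \Omega$ because $D \neq \emptyset$ and $D \subseteq C_G(a,\Omega)$ is disjoint from $\Omega$.

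The only remaining work, and the main point needing care, is translating the inclusion between separator components into the inclusion $C_G(a,\Omega') \subsetneq C_G(a,\Omega)$ required by the statement. For any set $Y$ not containing $a$, $C_G(a,Y)$ is also the component of $a$ in $G \setminus N(C_G(a,Y))$. Applying this to both $\Omega$ and $\Omega'$ gives $C_G(a,\Omega) = C_G(a,S)$ and $C_G(a,\Omega') = C_G(a,S')$. The strict inclusion then follows immediately. I would double-check that $S'$ in \autoref{th:preimage-not-a} is indeed $N(C_G(a,\Omega'))$ as stated there, since in its proof $S'$ is introduced as the $X$-close $(a,t)$-separator. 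Both descriptions agree because $S' \subseteq \Omega' \subseteq X$ and $C_G(a,X)$ has neighbourhood containing $S'$. If that identification were in doubt, I would instead argue directly: $D \subseteq \Omega'$ and $D \subseteq C_G(a,\Omega)$, while $\Omega' \subseteq \Omega \cup D$, so every $a$-path avoiding $\Omega'$ also avoids $\Omega$. This gives $C_G(a,\Omega') \subseteq C_G(a,\Omega)$, and the inclusion is strict since $D \neq \emptyset$ lies in the latter but not the former.
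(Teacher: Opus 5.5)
Your main argument is correct and is essentially the paper's proof: you split on the shutter provided by \autoref{th:ashutter}, apply \autoref{th:preimage-a} or \autoref{th:preimage-not-a}, and turn $C_G(a,S')\subsetneq C_G(a,S)$ into the claim via $C_G(a,\Omega)=C_G(a,S)$ and $C_G(a,\Omega')=C_G(a,S')$. Your fallback argument, however, runs in the wrong direction: $\Omega'\subseteq\Omega\cup D$ only yields $C_G(a,\Omega\cup D)\subseteq C_G(a,\Omega')$, and an $a$-path avoiding $\Omega'$ could a priori pass through $\Omega\setminus\Omega'\subseteq S_D$. What repairs it is the third condition of \autoref{def:ext-shutter-close} ($S_D$ has no neighbour in $C_G(a,\Omega\cup D)$), which gives $N(C_G(a,\Omega\cup D))\subseteq\Omega'$ and hence $C_G(a,\Omega')=C_G(a,\Omega\cup D)\subseteq C_G(a,\Omega)\setminus D$.
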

\begin{proof}
From \autoref{th:ashutter}, $\Omega$ admits an $a$-shutter $D$. Either $D = \{a\}$: in this case, from \autoref{th:preimage-a}, there is a PMC $\Omega' = \Omega_a$ which contains $a$ and such that $(\Omega',\Omega) \in \mathcal{N}$. Otherwise, $a \notin D$: from \autoref{th:preimage-not-a}, there exists a PMC $\Omega'$ such that $(\Omega',\Omega) \in \mathcal{N}$ and moreover, $C_G(a,S') \subsetneq C_G(a,S)$, reusing the notation from \autoref{th:preimage-not-a}. In both PMCs, $S$  (resp. $S'$) is the neighborhood of the connected component of $G \setminus \Omega$ (resp. $G \setminus \Omega'$) containing $a$. Hence, $C_G(a,S) = C_G(a,\Omega)$ and $C_G(a,S') = C_G(a,\Omega')$. This completes the proof.
\end{proof}

The direct consequence of this result is a connectivity result on PMC neighborhood $\mathcal{N}$ which allows us to conclude that \textsc{PMC-BFS} will find all $a$-nice PMCs.

\begin{corollary}
Let $\Omega$ be an $a$-nice PMC of $G$. There exists a sequence of PMCs $\Omega^{(r)},\ldots,\Omega^{(2)},\Omega^{(1)}$ with $r \ge 2$, such that:
\begin{itemize}
\item $\Omega^{(1)} = \Omega$,
\item $\Omega^{(r)}$ is not $a$-nice,
\item for each $2\le i\le r$, $(\Omega^{(i)},\Omega^{(i-1)}) \in \mathcal{N}$.
\end{itemize}
\label{co:pmc-connect}
\end{corollary}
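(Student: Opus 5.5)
The plan is to iterate \autoref{th:preimage} backwards from $\Omega$, using the strictly shrinking component $C_G(a,\cdot)$ as a potential function that forces termination.

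Set $\Omega^{(1)} = \Omega$. Suppose $\Omega^{(i)}$ has been built and is $a$-nice. Apply \autoref{th:preimage} to $\Omega^{(i)}$. This yields a PMC $\Omega^{(i+1)} \neq \Omega^{(i)}$ with $(\Omega^{(i+1)},\Omega^{(i)}) \in \mathcal{N}$. Moreover, either $a \in \Omega^{(i+1)}$, or $a \notin \Omega^{(i+1)}$ and $C_G(a,\Omega^{(i+1)}) \subsetneq C_G(a,\Omega^{(i)})$.

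In the first case $\Omega^{(i+1)}$ contains $a$, so by \autoref{def:a-nice-pmc} it is not $a$-nice. We stop with $r = i+1$.

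In the second case, if $\Omega^{(i+1)}$ is not $a$-nice (for instance because it is a PMC of $G \setminus \{a\}$), we stop as well. Otherwise $\Omega^{(i+1)}$ is again $a$-nice and we repeat. Since $\Omega$ is $a$-nice, at least one step is performed, so $r \ge 2$.

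For termination, note that every PMC met along the way while iterating is $a$-nice, so it does not contain $a$ and $C_G(a,\Omega^{(i)})$ is a well-defined nonempty set containing $a$. Along the second case these sets form a strictly decreasing chain under inclusion, so the number of second-case steps is at most $n$. The process therefore ends within $n$ steps at a PMC that is not $a$-nice, and the three bullet conditions hold by construction. The corollary is only about existence, but this bound on $r$ is also what one would use afterwards to argue that \textsc{PMC-BFS}, started from $\widehat{\Pi}$, reaches $\Omega$.

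The only delicate point is that \autoref{th:preimage} must apply at every step. It requires its input to be $a$-nice, and we invoke it only while the current PMC is $a$-nice. So no obstacle arises beyond what is already packaged in \autoref{th:preimage}.
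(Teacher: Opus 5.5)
Your proposal is correct and follows essentially the same route as the paper. You apply \autoref{th:preimage} repeatedly while the current PMC is still $a$-nice, and you use the strict shrinking of $C_G(a,\cdot)$, which always contains $a$, to force termination at a non-$a$-nice PMC after at least one step.
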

\begin{proof}
According to \autoref{th:preimage}, the PMC $\Omega$ admits a preimage $\Omega'$ which either contains $a$ or satisfies $C_G(a,\Omega') \subsetneq C_G(a,\Omega)$. If $\Omega'$ is not $a$-nice (in particular if it contains $a$), then we highlight the (short) sequence $\Omega',\Omega$. Otherwise, if $\Omega'$ is still $a$-nice, then its set $C_G(a,\Omega')$ is reduced compared to $C_G(a,\Omega)$. In this case, we apply again \autoref{th:preimage} on PMC $\Omega'$ and iteratively on all PMCs produced which are $a$-nice. The fact is that set $C_G(a,\Omega)$ decreases by at least one vertex at each step. Hence, this process is finite and since $a$ always belongs to $C_G(a,\Omega)$, we will for sure obtain at some moment $r \ge 2$, a PMC $\Omega^{(r)}$ which is not $a$-nice.
\end{proof}

Finally, let us introduce you to \autoref{fig:solution-path} and \autoref{fig:solution-path2} as two complete examples of how a target $a$-nice PMC $\Omega$ can be reached starting from a non $a$-nice one. On successive PMCs, the connected component containing $a$ ({\em i.e.} set $C_G(a,\Omega)$) is highlighted in green: we can see it melting step by step. In \autoref{fig:solution-path}, the reader can notice the reasoning-based reading path from top left (target PMC $\Omega$) to bottom right (some not $a$-nice PMC) and the algorithm-based reading path - representing a possible traversal for \textsc{PMC-BFS} - from the bottom right to the top left. The reasoning-based reading path follows the preimage search until a non $a$-nice PMC is found. Conversely, the algorithm-based reading path provide an overview of the functioning of \textsc{PMC-BFS} which leads to discover the target PMC $\Omega$. \autoref{fig:solution-path2} provides an alternative smaller path to reach the target PMC $\Omega$.

\begin{figure}
	\centering
	\scalebox{0.98}{\input{tikz/solutionpath}}
	\caption{Successive preimages of $\Omega$ until reaching a PMC $\Omega^{(5)}$ which is not $a$-nice. Observe that all intermediary PMCs are also $a$-nice, since $\{a\}$ always separates $w$ from another vertex.}
	\label{fig:solution-path}
\end{figure}

\begin{figure}
	\centering
	\scalebox{1}{\input{tikz/solutionpath2}}
	\caption{An alternative PMC-path compared to \autoref{fig:solution-path}: how the choice of the $a$-shutter can influence the path from an $a$-nice PMC to our target $\Omega$. Indeed, the preimage of $\Omega$ is $\Omega^*$ here (different from $\Omega^{(1)}$), it contains the $a$-shutter $\{b\}$ (instead of $\{z_1,z_2\}$) of $\Omega$ in $G$.}
	\label{fig:solution-path2}
\end{figure}


\subsection{Analysis of the algorithm} \label{subsec:analysis}

We are now ready to provide the analysis of our enumeration algorithm \ALGO\ (\autoref{alg:outputlinear}). We show that no PMC of $G$ is forgotten in the enumeration process. Also, we prove that it is executed in overall output-linear time.

\mainResult*
\begin{proof}
	Remember that \ALGO\ follows the principle of \textsc{BT-Enum} with a modification of the fourth step. The goal of this fourth step is to list all the $a$-nice PMCs. By \autoref{co:pmc-connect}, \textsc{PMC-BFS} (\autoref{alg:pmc_bfs}) finds all the $a$-nice PMCs if the starting queue of the BFS contains all PMCs which are not $a$-nice, already computed in the three first steps. According to~\autoref{lem:PMC_BFS_complexity}, its time complexity is $O(n^3m\vert \Pi_G \vert)$.
	
Since we use the same three first steps from \textsc{BT-Enum} (lines~\ref{line:case-1},~\ref{line:case-2} and~\ref{line:case-3} of~\autoref{alg:outputlinear}), all PMCs which are not $a$-nice (which contain $a$ or are PMCs of $G \setminus \{a\}$) are indeed found before executing \textsc{PMC-BFS}. We denote by $T(G)$ the time complexity of \ALGO\ for $G$. Combining the running time needed for each step, we obtain:	
	\begin{align*}
T(G) &=  T(G \setminus \{a\}) + O(nm\vert \Pi_{G \setminus \{a\}} \vert) + O(nm\vert \Delta_G \vert) + O(n^3m \vert \Pi_G \vert))\\
& =   T(G\setminus \{a\}) + O(n^3m\vert \Pi_G \vert).
\end{align*}
	Indeed, there are less PMCs in $G \setminus \{a\}$ than in $G$ (\autoref{le:more-pmcs}), and $\Delta_G \le n\vert \Pi_G \vert$ from~\autoref{sec:intro}. Hence, there exists some positive constant $K \ge 1$ such that $T(G) \le  T(G\setminus \{a\}) + Kn^3m\vert \Pi_G \vert$.

We show by induction on $n$ that $T(G) \le Kn^4m\vert \Pi_G \vert$. The base case ($G$ is a clique, line~\ref{line:base} of~\autoref{alg:outputlinear}) takes $O(1)$ time and satisfies the previous inequality (if no, then $K$ could adjusted to a larger value but still constant). Assume, by induction hypothesis, that any subgraph $H$ of $G$ satisfies $T(H) \leq K\cdot \vert V(H) \vert^{4} \cdot \vert E(H) \vert \cdot \vert \Pi_H \vert$. As $G \setminus \{a\}$ is a subgraph of $G$ and admits less PMCs, we get $T(G \setminus \{a\}) \leq K(n - 1)^{4}m \vert \Pi_G \vert$. Consequently, \[T(G) \le K(n-1)^{4}m \vert \Pi_G \vert + Kn^3m \vert \Pi_G \vert = Km\left((n-1)^{4} + n^3\right) \vert \Pi_G \vert \le Kn^{4}m\vert \Pi_G \vert.\]

	This concludes the proof since all the PMCs of the input graph are found by \autoref{alg:outputlinear} and its execution time is $O(n^4m \vert \Pi_G \vert)$.
\end{proof}

\newpage

\section{Hardness results}

We provide in this section polynomial reductions from $3$-SAT to two extension problems: \textsc{Minimal $(a,b)$-separator Extension} and \textsc{PMC Extension}. Finally, we show that in, our reduced instance for \textsc{PMC Extension},the constraint on forbidden vertices can be withdrawn. This allows us to conclude the NP-hardness of \textsc{PMC Subset}, which is \textsc{PMC Extension} with the restriction $F = \emptyset$. 

\subsection{NP-hardness of minimal separator extension} \label{subsec:hardness-sep}

We focus on the problem consisting in extending a minimal $(a,b)$-separator. To the best of our knowledge, the computational complexity of this problem is unknown. An example of yes-instance is given in~\autoref{fig:instance_absepProblem}.

\defdecproblem{\textsc{minimal $(a,b)$-separator extension}}{A graph $G = (V,E)$, $a,b \in V$ with $(a,b) \notin E$, a set $M \subseteq V$ of \textit{mandatory} vertices, a set $F \subseteq V$ of \textit{forbidden} ones such that $M \cap F = \emptyset$.}{Is there a minimal $(a,b)$-separator $S \subseteq V \setminus \{a,b\}$ such that $M \subseteq S$ and $S \cap F = \emptyset$ ?}

\begin{figure}[ht]
\centering
\scalebox{0.7}{\input{tikz/minabSepExt_instance}}
\caption{An instance of \textsc{Minimal $(a,b)$-separator Extension} with a graph $G$, vertices $a,b$, sets $M = \{m_1,m_2\}$ and $F = \{f_1,f_2\}$. A solution is given in blue.}
\label{fig:instance_absepProblem}
\end{figure}

We propose a polynomial reduction from 3-SAT to \textsc{Minimal \absep\ extension}. This hardness result will be used as an intermediary step to handle the computational complexity of \textsc{PMC Extension} in~\autoref{subsec:hard-pmc-ext}. Given a CNF 3-SAT formula $\Phi$, we build a corresponding graph $G(\Phi)$ which is an instance of \textsc{Minimal \absep\ extension}.

\begin{theorem}\label{the:min_absep_ext_npc}
\textsc{Minimal \absep\ extension} is NP-Complete.
\end{theorem}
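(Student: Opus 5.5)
Membership in NP is immediate: a certificate is the set $S$ itself, and checking that $M \subseteq S$, $S \cap F = \emptyset$, and that $S$ is a minimal $(a,b)$-separator takes polynomial time. For the last check, $S$ must separate $a$ from $b$, and both $C_G(a,S)$ and $C_G(b,S)$ must be full components for $S$, so that every $s \in S$ has a neighbor in each. The real work is NP-hardness, which I would prove by a reduction from 3-SAT that builds $G(\Phi)$ from $\Phi$ with variables $x_1,\ldots,x_n$ and clauses $c_1,\ldots,c_m$.

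\textbf{Variable gadgets.} For each variable $x_i$, I would place two literal vertices $\ell_i$ and $\overline{\ell_i}$ in parallel on $a$--$b$ routes. Concretely, I add a path $a - p_i - \{\ell_i,\overline{\ell_i}\} - q_i - b$, where $p_i$ and $q_i$ are forbidden and both literal vertices are adjacent to both $p_i$ and $q_i$. Any separator avoiding $F$ must then contain both $\ell_i$ and $\overline{\ell_i}$ to cut this route. I would rather force \emph{at least one} literal in $S$ and let minimality forbid both. To do this, I give each literal its own route, $a - \ell_i - b$ and $a - \overline{\ell_i} - b$ via forbidden intermediate vertices, and add a mandatory vertex $m_i$ with a ``private'' connection such that $m_i$ can have a neighbor in both $C_G(a,S)$ and $C_G(b,S)$ only if exactly one of $\ell_i,\overline{\ell_i}$ lies outside $S$. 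The intended encoding is: $x_i$ is true iff $\ell_i \notin S$.

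\textbf{Clause gadgets.} For each clause $c_j$, I would add a mandatory vertex $m_{c_j}$ adjacent to $b$ (through forbidden vertices) and to the vertices of its three literals, arranged so that the only way $m_{c_j}$ acquires a neighbor in $C_G(a,S)$ is through a literal vertex not in $S$ that is connected to $a$. Fullness of $m_{c_j}$ toward the $a$-side then encodes that at least one literal of $c_j$ is satisfied.

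\textbf{Correctness and main obstacle.} Correctness consists of two directions. From a satisfying assignment, I build $S$ as all mandatory vertices together with the literal vertices of false literals, and check that $S$ is a separator and that every vertex of $S$ sees both full components. Conversely, from a minimal separator $S$, I read off the assignment and use the fullness of each $m_{c_j}$ and $m_i$ to argue consistency and clause satisfaction. I expect the main obstacle to be precisely this converse consistency: ruling out both $\ell_i$ and $\overline{\ell_i}$ being outside $S$ (which would mean cheating by setting both true). A literal outside $S$ merges the $a$-side with its route, so I need a gadget in which $\ell_i,\overline{\ell_i}\notin S$ simultaneously creates an $a$--$b$ path avoiding $S$. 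My first attempt would be to join $\ell_i$ toward $b$ and $\overline{\ell_i}$ toward $a$ through a forbidden vertex, while letting literal vertices reach clause vertices only on the $a$-side. I would then verify on small formulas that no unintended separator exists.
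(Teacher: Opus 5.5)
Your NP-membership argument is fine, but the hardness part has a genuine gap: the variable gadget, on which the whole reduction rests, is never actually built, and the version you sketch cannot work.

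Giving $\ell_i$ and $\overline{\ell_i}$ their own $a$--$b$ routes, whose other internal vertices are forbidden, forces \emph{both} literal vertices into every admissible separator, exactly as in your first attempt. So ``exactly one of $\ell_i,\overline{\ell_i}$ lies outside $S$'' never happens, $m_i$ is never full, and every formula is mapped to a no-instance.

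More fundamentally, no mandatory or forbidden helper vertices can rescue a one-vertex-per-literal design under your encoding. That encoding is $S(\sigma)=M\cup\{\text{false literal vertices}\}$, with literal vertices outside $S$ lying in $C_G(a,S)$ as your clause gadget requires. Take satisfying assignments $\sigma_1,\sigma_2$ differing only at $x_i$ (such pairs exist in most formulas), with $\ell_i\in S(\sigma_1)$.
\begin{itemize}
\item Minimality gives $\ell_i$ a neighbor $y\in C_G(b,S(\sigma_1))$. Note $y\neq\overline{\ell_i}$, since $\overline{\ell_i}$ lies on the $a$-side of $S(\sigma_1)$; hence $y\notin S(\sigma_2)$.
\item Since $y$ is adjacent to $\ell_i\in C_G(a,S(\sigma_2))$, we get $y\in C_G(a,S(\sigma_2))$. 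Thus $y\neq b$, and any $y$--$b$ path avoiding $S(\sigma_1)$ must meet $S(\sigma_2)\setminus S(\sigma_1)=\{\overline{\ell_i}\}$.
\item This puts $\overline{\ell_i}$ in $C_G(b,S(\sigma_1))$, although it is a true literal vertex and should lie in $C_G(a,S(\sigma_1))$.
\end{itemize}
So some further vertex must enter or leave $S$ together with the assignment, i.e.\ each literal needs a second, $b$-side copy. By deferring this to ``verify on small formulas'', you leave both directions of the correctness claim unproved.

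The paper's proof supplies exactly this second copy, and it needs no forbidden vertices ($F=\emptyset$). Each variable gets the following vertices, which together with $a$ and $b$ form a $3$-cube:
\begin{itemize}
\item an $a$-side pair $\alpha_i,\overline{\alpha_i}$, adjacent to $a$;
\item a $b$-side pair $\beta_i,\overline{\beta_i}$, adjacent to $b$, joined crosswise to the first pair by the edges $\alpha_i\overline{\beta_i}$ and $\overline{\alpha_i}\beta_i$;
\item mandatory vertices $a_i$, adjacent to $a,\beta_i,\overline{\beta_i}$, and $b_i$, adjacent to $b,\alpha_i,\overline{\alpha_i}$.
\end{itemize}
Each of the routes $a\,\alpha_i\,\overline{\beta_i}\,b$ and $a\,\overline{\alpha_i}\,\beta_i\,b$ must meet $S$. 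Fullness of $a_i$ and $b_i$ requires some $\beta$-vertex and some $\alpha$-vertex to stay outside $S$. The only sets satisfying all four constraints are the diagonals $\{\alpha_i,\beta_i\}$ and $\{\overline{\alpha_i},\overline{\beta_i}\}$, which encode $\sigma(x_i)$.

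Mandatory clause vertices are adjacent to $a$ and to the $\beta$-vertices that leave $S$ exactly when the corresponding literal is true. So each has a neighbor in $C_G(b,S)$ iff its clause is satisfied. This yields the bijection between satisfying assignments and minimal $(a,b)$-separators containing $M$.
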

\begin{proof}
Obviously, the problem belongs to NP since checking whether a set $S$ is a minimal $(a,b)$-separator, contains $M$ and does not intersect $F$, can be done in polynomial time. We will now reduce from 3-SAT. We work on the specific case $F = \emptyset$. Let $x_1,\ldots,x_n$ be the Boolean variables, $x_i$ and $\notxi{x}{i}$ the possible expressions for a literal involving variable $x_i$. Let $\Phi = \bigwedge\limits_{j = 1}^m C_j$ be a CNF 3-SAT formula with the clauses $C_j$, made up of at most three literals. We denote by $\overline{C_j}$ the negation of $C_j$.

\textbf{Construction.} We transform the formula $\Phi$ into a graph $G(\Phi)$. We propose a $6$-vertex gadget to represent variables. It models the selection of a Boolean value for variable $x_i$. Each gadget is connected to $a$ and $b$.
\begin{figure}[ht]
\centering
\begin{subfigure}{0.4\textwidth}
\centering
\scalebox{0.7}{\input{tikz/gadget_xi}}
\caption{Variable gadget for $x_i$ with its connection to $a$ and $b$.
\label{subfig:gadget_xi}}
\end{subfigure}
\begin{subfigure}{0.58\textwidth}
\scalebox{0.7}{\input{tikz/clause_ci}}
\caption{How clause vertices are connected to variable gadgets: example with $C_j = x_i \lor \notxi{x}{k}$.}
\label{subfig:clause_ci}
\end{subfigure}
\caption{The variable gadgets and their links with clause vertices.}
\label{fig:gadgetsep}
\end{figure}
In \autoref{fig:gadgetsep}, the diamond green vertices represent the mandatory set $M$. Variable gadgets are thus made up of two mandatory vertices $a_i$ and $b_i$, and four other which can be used for extension: $\alpha_i,\notxi{\alpha}{i},\beta_i,\notxi{\beta}{i}$. Observe that each variable gadget, attached with vertices $a$ and $b$, form a hypercube of dimension $3$.

We complete variable gadgets by adding clause vertices. Each clause $C_j$ is represented by a vertex in $M$ adjacent to $a$, and for every literals $\ell_i \in \overline{C_j}$, we add an edge from $C_j$ to $\beta_i$ if $\ell_i = x_i$, otherwise to $\notxi{\beta}{i}$ if $\ell_i = \notxi{x}{i}$. This graph can naturally be built in polynomial time since for a formula with $n$ variables and $m$ clauses, we need $6n + m +2$ vertices and at most $12n + 4m$ edges.

To achieve the reduction, we show the following claim, which highlights a bijection between the truth assignments $\sigma$ of $\Phi$ and the minimal $(a,b)$-separators of $G(\Phi)$ extending $M$.

\textit{Claim}: The minimal $(a,b)$-separators of $G(\Phi)$ containing $M$ are exactly the sets $$S(\sigma) = M \cup \{T_i(\Phi,\sigma)\}_{1\le i\le n},$$  for all truth assignments $\sigma$ of $\Phi$, where $T_i(\Phi,\sigma) = \{\alpha_i,\beta_i\}$ if $\sigma(x_i)=\mbox{True}$, otherwise $T_i(\Phi,\sigma) = \{\notxi{\alpha}{i},\notxi{\beta}{i}\}$. See \autoref{fig:reducsep} for an example of $S(\sigma)$.

\textbf{From 3-SAT to minimal $(a,b)$-separators.} Let $\sigma$ be a truth assignment of $\Phi$, where $\sigma(x_i)$ takes value True or False. By definition of $\sigma$, for every clause, there is at least one literal that makes this clause true. We show that $S(\sigma)$, as defined above, is a minimal $(a,b)$-separator.

\begin{figure}[t]
\centering
\scalebox{0.65}{\input{tikz/reduc_3sat}}
\caption{Graph $G(\Phi)$ for the 3-SAT formula $\Phi$: $(x_1 \lor x_2 \lor x_3) \land (\notxi{x}{1} \lor x_2 \lor x_5) \land (\notxi{x}{3} \lor \notxi{x}{4} \lor \notxi{x}{5})$. Here, $M$ can be extended into $S(\sigma)$: vertices of $S(\sigma) \setminus M$ are drawn in blue. The truth assignment $\sigma$ is: $x_1,x_2$ are fixed True, $x_3,x_4,x_5$ are fixed False.}
\label{fig:reducsep}
\end{figure}

In any variable gadget, it is not possible to highlight an $(a,b)$-path which does not traverse $S$. Moreover, clause vertices belong to $M \subsetneq S$. Hence, $S(\sigma)$ is an $(a,b)$-separator. Let us prove its minimality: we must have $N(C(a,S(\sigma))) = N(C(b,S(\sigma))) = S(\sigma)$. As respectively one vertex among $\{\alpha_i,\notxi{\alpha}{i}\}$ and $\{\beta_i,\notxi{\beta}{i}\}$ does not belong to $S$, then all vertices of $S(\sigma)$ contained in variable gadgets can reach $a$ and $b$ despite $S(\sigma)$. Concerning clause vertices (which belong to $S(\sigma)$), they are adjacent to $a$ but their adjacency towards $C(b,S(\sigma))$ is ensured by the fact that $\sigma$ is a truth assignment. Indeed, for each $C_j$, there is at least one literal $\ell_i$ from $\overline{C_j}$ which is False: so, one vertex $\beta_i$ or $\notxi{\beta}{i}$ adjacent to $C_j$ is not in the separator $S(\sigma)$. Hence, any vertex $C_j$ is adjacent to  $C(b,S(\sigma))$. This completes the proof that $S(\sigma)$ is a minimal $(a,b)$-separator.

\textbf{From minimal $(a,b)$-separators to 3-SAT.} Consider a minimal \absep\ $S$ such that $M \subsetneq S$. Fix some arbitrary $1\le i\le n$ and $A_i = \{\alpha_i,\notxi{\alpha}{i},\beta_i,\notxi{\beta}{i}\}$. On one hand, we cannot have $\vert S \cap A_i \vert \le 1$, otherwise there is an $(a,b)$-path in $G \setminus S$. On the other hand, $\vert S \cap A_i \vert \ge 3$ would contradict the minimality of $S$. W.l.o.g., let $S \cap A_i = \{\alpha_i,\beta_i,\notxi{\beta}{i}\}$: there is no simple $(a,b)$-path in $G$ whose intersection with $S$ is the single vertex $\alpha_i$, contradicting \autoref{lem:ab-sep-paths}. Hence, $\vert S \cap A_i \vert = 2$. Cases $S \cap A_i = \{\alpha_i,\notxi{\beta}{i}\}$ or $\{\beta_i,\notxi{\alpha}{i}\}$ are not possible, otherwise $S$ is not an $(a,b)$-separator. Similarly, cases $\{\alpha_i,\notxi{\alpha}{i}\}$ and $\{\beta_i,\notxi{\beta}{i}\}$ contradict the minimality of $S$ (either $a_i$ or $b_i$ do not satisfy~\autoref{lem:ab-sep-paths}). All in all, the only possible choice for each $S \cap A_i$ is either $\{\alpha_i,\beta_i\}$ or $\{\notxi{\alpha}{i},\notxi{\beta}{i}\}$, confirming the form of sets $T_i(\Phi,\sigma)$. In brief, we just proved that any minimal $(a,b)$-separator is formed as $M \cup \{T_i(\Phi,\sigma)\}_{1\le i\le n}$, where $\sigma$ is a variable assignment. Let us show that $\sigma$ is indeed a truth assignment.

For each clause vertex $C_j$, there must be a $\beta$-vertex, {\em i.e.} $\beta_i$ or $\notxi{\beta}{i}$, adjacent to $C_j$ which does not belong to $S$: otherwise, \autoref{lem:ab-sep-paths} would be contradicted. In summary, if $\sigma$ is not a truth assignment, $S$ cannot be a minimal $(a,b)$-separator. For sure, $\sigma$ is a truth assignment, fulfilling the claim.

As a conclusion, there exists a minimal $(a,b)$-separator extending $M$ in $G(\Phi)$ iff the 3-SAT formula $\Phi$ admits a truth assignment.
\end{proof}

\subsection{NP-hardness of PMC extension}- \label{subsec:hard-pmc-ext}

Our proof that \textsc{PMC Extension} is NP-hard is strongly related to the construction of $G(\Phi)$ in the previous subsection. We proceed by showing that identifying a PMC extending $M$ in $G(\Phi)$ is equivalent to finding a minimal $(a,b)$-separator extending $M$ in $G(\Phi)$. First, we establish some properties of PMCs in graph $G(\Phi)$.

\begin{lemma}\label{lem:abnotsamecomp}
In graph $G(\Phi)$, we have $N[a] \cup N[b] = V(G(\Phi))$. For any PMC $\Omega$ of $G(\Phi)$ such that $\Omega \cap \{a,b\} = \emptyset$, there are exactly two connected components in $G \setminus \Omega$, which are $C(a,\Omega)$ and $C(b,\Omega)$.
\end{lemma}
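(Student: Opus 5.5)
The first claim is checked directly from the construction of $G(\Phi)$, vertex class by vertex class. Each variable gadget together with $a$ and $b$ forms a $3$-dimensional hypercube, with $a$ and $b$ antipodal. The remaining six gadget vertices $a_i,b_i,\alpha_i,\notxi{\alpha}{i},\beta_i,\notxi{\beta}{i}$ split into the three neighbours of $a$ and the three neighbours of $b$. Reading the gadget of \autoref{subfig:gadget_xi}, I expect $a_i,\alpha_i,\notxi{\alpha}{i}$ to be adjacent to $a$ and $b_i,\beta_i,\notxi{\beta}{i}$ to be adjacent to $b$. Every clause vertex $C_j$ is adjacent to $a$ by construction. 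Hence every vertex lies in $N[a]\cup N[b]$, which proves the first sentence.

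For the second claim, fix a PMC $\Omega$ with $a,b\notin\Omega$. Then $a$ and $b$ lie in connected components $C(a,\Omega)$ and $C(b,\Omega)$ of $G\setminus\Omega$, possibly the same one. Let $C$ be any connected component of $G\setminus\Omega$, and pick $v\in C$. By the first claim, $v\in N[a]\cup N[b]$. Since $v\notin\Omega$, an edge between $v$ and $a$ (resp. $b$) lies inside $G\setminus\Omega$, so $v$ belongs to the same component as $a$ (resp. $b$). Thus $C\in\{C(a,\Omega),C(b,\Omega)\}$, so there are at most two components.

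It remains to show that $C(a,\Omega)\neq C(b,\Omega)$, and this is the main obstacle. Suppose instead that $G\setminus\Omega$ has a single component $C$. A PMC of a connected graph that is not the whole vertex set has $N(C)$ as a minimal separator, and $N(C)\subsetneq\Omega$ because a PMC has no full component. The standard fact recalled in the preliminaries states that every vertex of $\Omega\setminus N(C)$ must then have a neighbour outside $\Omega$, in some other component. We also have $\Omega\neq N(C)$ because $\Omega$ is shy. So some $w\in\Omega\setminus N(C)$ exists, and $w$ has no neighbour in $V\setminus\Omega=C$. But $w\in N[a]\cup N[b]$ and $w\neq a,b$, so $w$ is adjacent to $a$ or to $b$, both of which lie in $C$. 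This is a contradiction.

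Hence $G\setminus\Omega$ has exactly the two components $C(a,\Omega)\neq C(b,\Omega)$. Should $\Omega=V\setminus\{a,b\}$ need separate treatment, the same argument applies, since $a$ and $b$ are non-adjacent and each forms its own component. The only delicate point is the argument above; I would phrase it through the reachability/full-component characterization of PMCs from the preliminaries, namely that no vertex of a PMC can lie outside the neighbourhood of every component of $G\setminus\Omega$.
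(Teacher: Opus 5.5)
Your proof is correct and is essentially the paper's argument. Every vertex outside $\Omega$ is adjacent to $a$ or $b$, so $C(a,\Omega)$ and $C(b,\Omega)$ are the only components, and if they coincided then, since $\Omega\subseteq N(a)\cup N(b)$, this single component would be full for $\Omega$, contradicting shyness; your choice of $w\in\Omega\setminus N(C)$ is exactly that contradiction.

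Do, however, delete the ``standard fact'' you invoke twice, namely that every vertex of a PMC lies in the neighbourhood of some component of $G\setminus\Omega$. It is not in the preliminaries and it is false in general: in the path $v_2 - v_1 - v_3$, the PMC $\{v_1,v_2\}$ has $v_2\notin N(\{v_3\})$. Your contradiction never actually uses it.
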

\begin{proof}
From graph construction, $N[a] \cup N[b] = V(G(\Phi))$ and $N[a] \cap N[b] = \emptyset$. Assume that $a$ and $b$ belong to the same connected component $C$ of $G \setminus \Omega$, then $C$ is necessarily a full component for $\Omega$ since $\Omega \subseteq N(a) \cup N(b)$, contradicting~\autoref{thm:pmc_charac}. So, $G \setminus \Omega$ admits at least two connected components, containing respectively $a$ and $b$. There cannot be a third one $C'$, since $C' \subseteq N(a) \cup N(b)$.
\end{proof}

\begin{lemma}\label{lem:reduc_graph_sep_property}
For any \absep\ $S$ of $G(\Phi)$ extending $M$ and any vertex $x \in C(a,S)$ (resp. $\in C(b,S)$), there is $v \in S$ such that $N(v) \cap C(a,S) = \{x\}$ (resp. $N(v) \cap C(b,S) = \{x\}$).
\end{lemma}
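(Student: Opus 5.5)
The plan is to use the explicit description of minimal separators extending $M$ given by the Claim in the proof of \autoref{the:min_absep_ext_npc}. By that Claim (applied to the minimal $(a,b)$-separators considered here), any such $S$ equals $S(\sigma)=M\cup\bigcup_i T_i(\Phi,\sigma)$ for some truth assignment $\sigma$. I would therefore first describe the two full components explicitly, and then, for each vertex $x$ of a component, pick a witness $v\in S$ by hand.

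\textbf{Step 1: describe the components.} Recall from \autoref{lem:abnotsamecomp} that $N[a]$ and $N[b]$ partition $V(G(\Phi))$. Each gadget together with $a,b$ is a $3$-cube, so each of the six gadget vertices is adjacent to exactly one of $a,b$. The clause vertices lie in $N(a)\cap M$. I would check that $C(a,S)$ consists of $a$ together with, in every gadget, the unique $a$-side vertex not in $S$, namely $\notxi{\alpha}{i}$ if $\sigma(x_i)$ is True and $\alpha_i$ otherwise. All other neighbours of $a$ lie in $M\subseteq S$ or in $T_i$. Symmetrically, $C(b,S)$ consists of $b$ together with the unselected $\beta$-vertex of each gadget. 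In particular, every vertex of $C(a,S)$ other than $a$ is at distance one from $a$, and the only neighbour it has in $C(a,S)$ is $a$ itself; the same holds on the $b$ side.

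\textbf{Step 2: choose witnesses by a case analysis inside one gadget.} Fix a gadget $i$ and assume $\sigma(x_i)$ is True; the other case is symmetric, swapping barred and unbarred vertices. For $x=\notxi{\alpha}{i}\in C(a,S)$, I would take $v$ to be a vertex of $S$ on the $b$-side of the cube that is adjacent to $\notxi{\alpha}{i}$. In a $3$-cube, two adjacent vertices have no common neighbour, and $v$ has exactly two $a$-side neighbours, namely $\notxi{\alpha}{i}$ and one other. I expect this other neighbour to be the mandatory vertex $a_i$ or the selected $\alpha_i$, both in $S$; if so, $N(v)\cap C(a,S)=\{x\}$. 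For $x=a$, I would take $v=\alpha_i\in S$: its neighbours are $a$ and $b$-side vertices, none of which lies in $C(a,S)$, so $N(v)\cap C(a,S)=\{a\}$. The statement for $C(b,S)$ follows by the same argument with $a$ and $b$ exchanged, using $\beta_i$ for $x=b$ and a suitable $a$-side vertex of $S$ adjacent to the unselected $\beta$ for the other vertices. Clause vertices never need to serve as witnesses, which is convenient, because their neighbourhood in $C(b,S)$ can contain several $\beta$-vertices.

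\textbf{Main obstacle.} The delicate point is the first case of Step 2. It depends on the exact adjacencies of the cube, namely which of $a_i,b_i,\alpha_i,\ldots$ are antipodal to which. I would have to verify from \autoref{subfig:gadget_xi} that the chosen $v$ has no second neighbour outside $S$ in the relevant component. If a clash occurs for the first choice of $v$, the fallback is to use the mandatory vertex $a_i$ or $b_i$ of the gadget as the witness instead. The fact that $S(\sigma)\cap A_i$ is exactly $\{\alpha_i,\beta_i\}$ or $\{\notxi{\alpha}{i},\notxi{\beta}{i}\}$ should leave exactly one vertex outside $S$ on each side, which makes such a witness unique.
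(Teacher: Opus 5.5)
You pick essentially the same witnesses as the paper: for $x=a$ an $\alpha$-vertex of $S$, for an unselected $\alpha$-vertex $x$ the selected $\beta$-vertex of its gadget, and symmetrically on the $b$ side. For $S=S(\sigma)$ your argument is correct. Your closing worry is unnecessary in that case. Exactly one of $\alpha_i,\overline{\alpha_i},a_i$ lies outside $S(\sigma)$, so any $b$-side vertex of $S$ adjacent to $x$ works: its other neighbours are $b$, one further $a$-side gadget vertex (in $S$), and possibly clause vertices (in $M$).

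\textbf{The gap is scope.} The lemma is stated for an \emph{arbitrary} $(a,b)$-separator $S\supseteq M$. Your first step replaces $S$ by some $S(\sigma)$, and the Claim justifies this only when $S$ is minimal. For non-minimal $S$ your Step~1 description fails, and so do two of your witness choices. Take $S=M\cup\{\beta_i,\overline{\beta_i}\}_{1\le i\le n}$, which is an $(a,b)$-separator because it contains $N(b)$. Then both $\alpha_i$ and $\overline{\alpha_i}$ lie in $C(a,S)$, so $v=b_i$ sees two vertices of $C(a,S)$. Also, for $x=a$ there is no $\alpha$-vertex in $S$ to use. The paper's own reduction (``$S$ contains a minimal $(a,b)$-separator, hence some $S(\sigma)$'') does not close this gap either. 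A minimal separator inside $S$ need not contain $M$, and this $S$ contains no $S(\sigma)$.

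\textbf{The fix} is a direct argument that bypasses the Claim.
\begin{itemize}
\item Since $M\subseteq S$ contains every $a_i$, every $b_i$ and every clause vertex, the only $(a,b)$-paths avoiding $M$ are $a\,\alpha_i\,\overline{\beta_i}\,b$ and $a\,\overline{\alpha_i}\,\beta_i\,b$.
\item Hence $C(a,S)=\{a\}\cup(\{\alpha_i,\overline{\alpha_i}\}_i\setminus S)$ and $C(b,S)=\{b\}\cup(\{\beta_i,\overline{\beta_i}\}_i\setminus S)$.
\item Moreover $\alpha_i\notin S$ forces $\overline{\beta_i}\in S$, and $\overline{\beta_i}\notin S$ forces $\alpha_i\in S$; likewise for the pair $\overline{\alpha_i},\beta_i$.
\end{itemize}
Now choose the witnesses as follows.
\begin{itemize}
\item For $x=\alpha_i$, take $v=\overline{\beta_i}$: its neighbours are $b$, $\alpha_i$, $a_i$ and clause vertices, so only $\alpha_i$ lies in $C(a,S)$.
\item For $x=a$, take $v=a_i$: its neighbours are $a$, $\beta_i$, $\overline{\beta_i}$, and no $\beta$-vertex lies in $C(a,S)$.
\item Symmetrically, take $v=\alpha_i$ for $x=\overline{\beta_i}$ and $v=b_i$ for $x=b$, with barred and unbarred names swapped for the remaining cases.
\end{itemize}
This proves the lemma as stated. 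Your restricted version does suffice for the lemma's only use, in the proof of Lemma~\ref{lem:PMC_eq_ab_sep}, where $S$ is a minimal separator.
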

\begin{proof}
Any $(a,b)$-separator $S$ of $G(\Phi)$ is a superset of some minimal $(a,b)$-separator of $G(\Phi)$, which is, from the claim of~\autoref{the:min_absep_ext_npc}, some $S(\sigma)$. Hence, for some truth assignment $\sigma$, $S(\sigma) \subseteq S$. Consider any $x \in C(a,S)$: it can thus be either $a$ or some vertex in $\{\alpha_i,\notxi{\alpha}{i}\}_{1\le i\le n}$. If $x = a$, then pick $v$ as any vertex $\alpha_i$ or $\notxi{\alpha}{i}$ belonging to $S$. The neighborhood of $v$ in $C(a,S)$ contains only $a$. Else, if $x = \alpha_i$ w.l.o.g. (case $x = \notxi{\alpha}{i}$ works similarly by symmetry), then it means that $\notxi{\beta}{i}$ belongs necessarily to $S$, hence with $v = \notxi{\beta}{i}$, we have $N(v) \cap C(a,S) = \{\alpha_i\}$.
We can achieve the same reasoning on side $C(b,S)$, the only difference being the adjacency between vertices in $\{\beta_i,\notxi{\beta}{i}\}_{1\le i\le n}$ with clause vertices, but since the latter belongs to $M \subsetneq S$, it has no impact.
\end{proof}

We pursue with a key lemma which presents the form of PMCs in $G(\Phi)$ which both extends $M$ and do not contain $a$ and $b$. In fact, they are very close to the separators $S(\sigma)$ introduced in~\autoref{the:min_absep_ext_npc}. 
In the remainder, for any $1\le j\le m$, we denote by $B(C_j)$ the set of $\beta$-vertices adjacent to the clause vertex $C_j$, formally $B(C_j) = N(C_j) \cap \{\beta_i,\notxi{\beta}{i}\}_{1\le i\le n}$.

\begin{lemma}\label{lem:pmc_ti=3}
If $\Omega$ is a PMC of $G(\Phi)$ extending $M$ and such that $a,b \notin \Omega$, then the following statements hold:
\begin{itemize}
\item there is exactly one index $1\le i\le n$ such that $\vert T_i \vert = 3$ with $T_i = \{\alpha_i, \notxi{\alpha}{i}, \beta_i, \notxi{\beta}{i}\} \cap \Omega$;
\item for all other indices $1\le p\neq i\le n$, we have $\vert T_p \vert = 2$ with $T_p = \{\alpha_p,\beta_p\}$ or $T_p = \{\notxi{\alpha}{p},\notxi{\beta}{p}\}$.
\item for any $1\le j\le m$, $B(C_j) \setminus \Omega \neq \emptyset$.
\end{itemize}
\end{lemma}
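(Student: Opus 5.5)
The plan is to reduce everything to the two components given by \autoref{lem:abnotsamecomp} and then compare each of their neighbourhoods with the separators $S(\sigma)$ from the claim in the proof of \autoref{the:min_absep_ext_npc}. Let $\Omega$ be a PMC extending $M$ with $a,b\notin\Omega$. By \autoref{lem:abnotsamecomp}, $G(\Phi)\setminus\Omega$ has exactly two components, $C_a=C(a,\Omega)$ and $C_b=C(b,\Omega)$. By \autoref{thm:pmc_charac}, neither is full for $\Omega$. Each vertex of $\Omega$ lies in $N(C_a)$ or $N(C_b)$, since otherwise it would be isolated from the rest of $\Omega$ and reachability would fail. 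So $\Omega=S_a\cup S_b$ with $S_a=N(C_a)$, $S_b=N(C_b)$, both proper subsets of $\Omega$. Each of $S_a,S_b$ separates $a$ from $b$, so it contains a minimal $(a,b)$-separator. The first step is to check that these minimal separators contain $M$; I would argue this from the adjacencies of the mandatory vertices $a_i,b_i,C_j$ to $a$ or $b$ together with $M\subseteq\Omega$. The claim then gives $S(\sigma)\subseteq S_a$ and $S(\sigma')\subseteq S_b$ for truth assignments $\sigma,\sigma'$. In particular $|T_i|\ge 2$ for all $i$, and each $T_i$ contains one of the two ``literal pairs'' $\{\alpha_i,\beta_i\}$ or $\{\notxi{\alpha}{i},\notxi{\beta}{i}\}$.

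Second, I would compare $S_a$ and $S_b$ to pin down the sizes of the $T_i$. Since $\Omega\neq S_a$, some index $i$ has $T_i\neq T_i(\Phi,\sigma)$. Indeed, the only vertices of $\Omega$ outside $S(\sigma)$ can be gadget vertices, because clause vertices and $a_i,b_i$ already lie in $M$. Hence $|T_i|\ge 3$ for at least one $i$.

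To exclude $|T_i|=4$, and to exclude two distinct indices $i\neq p$ with $|T_i|,|T_p|\ge 3$, I would use the reachability property of \autoref{thm:pmc_charac}. An extra vertex in gadget $i$ beyond a literal pair can only be seen through $C_b$ (respectively $C_a$) via a single gadget edge. This uses \autoref{lem:reduc_graph_sep_property}: inside a gadget, a vertex of the separator sees exactly one vertex of the opposite component. Two such ``excess'' vertices, in different gadgets or on both sides of one gadget, are non-adjacent. Each sees only one of $C_a,C_b$, and they see opposite ones, so they would form a pair not reachable despite $\Omega$, a contradiction. This yields exactly one index with $|T_i|=3$ and $|T_p|=2$ otherwise. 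The form $\{\alpha_p,\beta_p\}$ or $\{\notxi{\alpha}{p},\notxi{\beta}{p}\}$ then follows because $T_p$ equals the literal pair forced by $\sigma$ (equivalently by $\sigma'$).

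Third, for the clause condition, fix $C_j\in M\subseteq\Omega$. It is adjacent to $a$, so $C_j\in S_a$. I would exhibit a vertex of $\Omega$ on the $b$-side, for instance some $b_p\in M$, that is non-adjacent to $C_j$ and not adjacent to $C_a$. Reachability of the pair $(C_j,b_p)$ then forces a path through $C_b$, so $C_j\in N(C_b)$. Since $N(C_j)\setminus\{a\}=B(C_j)$, this means some vertex of $B(C_j)$ lies outside $\Omega$, i.e.\ $B(C_j)\setminus\Omega\neq\emptyset$.

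I expect the main obstacle to be the second step: the uniqueness of the index with $|T_i|=3$. It requires a careful case analysis of the $3$-cube gadget adjacencies to certify that the two offending vertices really are non-adjacent and see opposite components only. I would organise this with \autoref{lem:reduc_graph_sep_property} applied separately to $S_a$ and to $S_b$.
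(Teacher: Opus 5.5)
Your third step fails, and it cannot be repaired: the third bullet of the statement is false. The paper's own justification breaks at the same point, since a clause vertex with all of $B(C_j)$ in $\Omega$ can simply be adjacent to the unique vertex of the exceptional gadget that misses $C(a,\Omega)$.

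The witness you want, some $b_p$ with no neighbour in $C_a$, exists only when the exceptional gadget contains both $\alpha_i$ and $\notxi{\alpha}{i}$. If it contains both $\beta$-vertices, say $T_i=\{\alpha_i,\beta_i,\notxi{\beta}{i}\}$, then $\notxi{\beta}{i}$ is the only vertex of $\Omega$ outside $N(C_a)$. In that case $C_j$ is allowed to miss $C_b$, provided $C_j$ is adjacent to $\notxi{\beta}{i}$.

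Concretely, take $\Phi=(x_1\lor x_2)$, so $B(C_1)=\{\notxi{\beta}{1},\notxi{\beta}{2}\}$, and let $\Omega=M\cup\{\alpha_1,\beta_1,\notxi{\beta}{1},\notxi{\alpha}{2},\notxi{\beta}{2}\}$.
\begin{itemize}
\item The components of $G(\Phi)\setminus\Omega$ are $\{a,\notxi{\alpha}{1},\alpha_2\}$ and $\{b,\beta_2\}$.
\item Their neighbourhoods are $\Omega\setminus\{\notxi{\beta}{1}\}$ and $\Omega\setminus\{a_1,\alpha_1,C_1\}$, so $\Omega$ is shy.
\item The only pairs not visible through a component are $\notxi{\beta}{1}$ with $a_1$, $\alpha_1$, $C_1$, and all three are edges.
\end{itemize}
Hence $\Omega$ is a PMC extending $M$ and avoiding $a,b$. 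It satisfies the first two bullets, yet $B(C_1)\subseteq\Omega$. It is $S(\sigma)\cup\{\notxi{\beta}{1}\}$ for $\sigma=(\mathrm{True},\mathrm{False})$, i.e. the construction of \autoref{lem:PMC_eq_ab_sep} with the extra vertex taken on the $b$-side.

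What does hold is weaker. First, $B(C_j)\not\subseteq S$ for whichever neighbour separator $S\in\{N(C_a),N(C_b)\}$ contains $M$. Second, the full third bullet holds when the exceptional gadget contains both $\alpha$-vertices.

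Two earlier steps also need fixing.

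\textbf{Step one.} It is false that both $S_a$ and $S_b$ contain $M$: above, $a_1\notin S_b$, and symmetrically $b_i\notin S_a$ whenever $\alpha_i,\notxi{\alpha}{i}\in\Omega$. Nor can you pass to a smaller minimal separator, because $S_a$ is itself a minimal $(a,b)$-separator and hence the only one inside $S_a$. The repair is as follows.
\begin{itemize}
\item A mandatory vertex missing $S_b$ is some $a_i$ or $C_j$, and a mandatory vertex missing $S_a$ is some $b_p$.
\item These are never adjacent and share no component, so reachability forces one of $S_a,S_b$ to contain $M$.
\item By the claim in the proof of \autoref{the:min_absep_ext_npc}, that side equals some $S(\sigma)$.
\end{itemize}
That one side already gives $|T_i|\ge 2$, the literal-pair inclusion, and an index with $|T_i|\ge 3$.

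\textbf{Step two.} Your two ``excess'' vertices need not see opposite components: if gadgets $i\neq p$ both receive an extra $\beta$-vertex, both excess vertices see only $C_b$. As in the paper, use instead that every gadget with $|T_i|=3$ contains one vertex of $T_i$ seeing only $C_a$ and one seeing only $C_b$. For $\{\alpha_i,\beta_i,\notxi{\beta}{i}\}$ these are $\alpha_i$ and $\notxi{\beta}{i}$. Then pair them across the two gadgets.

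Also, every vertex of $\Omega$ lies in $S_a\cup S_b$ because $\Omega\subseteq N(a)\cup N(b)$, not because of reachability.
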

\begin{proof}
We proceed by case disjunction. From \autoref{lem:abnotsamecomp}, vertices $a$ and $b$ cannot belong to the same connected component. For any $1\le i\le n$, if $\vert T_i \vert = 0$ or $1$, there would be an $(a,b)$-path in $G \setminus \Omega$, which is impossible. Case $\vert T_i \vert = 4$ cannot occur either because there would be neither a direct edge nor a path with internal vertices in $\overline{\Omega}$ between $\alpha_i$ and $\beta_i$. Indeed, we are blocked not only by vertices in $T_i$ but also by clause vertices. 

In the following, we show that we cannot have $\vert T_i \vert = 2$ for all $1\le i\le n$. First, we exclude the case where $\Omega$ contains, for some index $i$, pairs $\alpha_i \notxi{\beta}{i}$ or $\notxi{\alpha}{i} \beta_i$ because it leads to a path from $a$ to $b$ in $G \setminus \Omega$ (\autoref{lem:abnotsamecomp}). Then, there is only three configurations for each $i$ to consider :
\begin{enumerate}[label=(\arabic*)]
    \itemsep0em 
    \item $T_i = \alpha_i \notxi{\alpha}{i}$,
    \item $T_i = \beta_i \notxi{\beta}{i}$
    \item $T_i = \alpha_i \beta_i$ or $\notxi{\alpha}{i} \notxi{\beta}{i}$
\end{enumerate}
If all $x_i$ are in the case $(1)$, then we have $C(b,\Omega)$ has a full component for $\Omega$. Same for the case $(2)$ and $C(a,\Omega)$. If all $x_i$ are in case $(3)$, then $C(a,\Omega)$ is a full component for $\Omega$. If we mix only the cases $(1)$ and $(2)$, then we have the $b_i$ of an index $i$ in case $(1)$ that cannot see an $a_{i'}$ of an index ${i'}$ in case $(2)$ despite $\Omega$. If we blend the case $(1)$ and the case $(3)$, then we have either no full component if there is a $C_j$ such that $B(C_j) \setminus \Omega = \emptyset$ but vertex $C_j$ will not reach any $b_{i'}$ in the case $(1)$. If we mix case $(2)$ and the case $(3)$, then $C(a,S)$ is a full component for $\Omega$.

In brief, we cannot have $\vert T_i \vert = 2$ for all indices $1\le i\le n$. We finally show that we cannot have more than one $x_i$ with $\vert T_i \vert = 3$. With $\vert T_i \vert = 3$, there is a vertex $u\in T_i$ not adjacent to $C(a,\Omega)$ and another vertex $v$ not adjacent to $C(b,\Omega)$. For any pair of $x_i, x_{i'}$ such that $\vert T_i \vert = \vert T_{i'} \vert = 3$, the vertex $u$ (resp. $v$) of $T_i$ cannot reach the vertex $v'$ (resp $u'$) of $T_{i'}$.

At last, if there is exactly one $x_i$ with $\vert T_i \vert = 3$, then all other $x_p$, with $p\neq i$ have $\vert T_p \vert = 2$. Furthermore, variable gadgets with $\vert T_p \vert = 2$ cannot be in the case $(1)$ or $(2)$ because they would create a conflict pair with the gadget $T_i$. Indeed, $T_i$ contains exactly one vertex not adjacent to $C(a,\Omega)$ and one vertex not adjacent to $C(b,\Omega)$. Another gadget $T_p$ in case $(1)$ or $(2)$ contains either a vertex not adjacent to $C(a,\Omega)$ or a vertex not adjacent to $C(b,\Omega)$.
As a conclusion, all $\vert T_p \vert = 2$ are in the case $(3)$. Furthermore, we have also that $\forall j, B(C_j) \setminus \Omega \neq \emptyset$ because otherwise, gadgets in case $(3)$ would only admit $C(a, \Omega)$ as a full component. It follows that the vertex of the gadget with $\vert T_i \vert = 3$ which is disconnected to $C(a,\Omega)$ would not reach one clause vertex with covered $\beta$-vertices.
\end{proof}


These different elements on the structures of PMCs extending $M$ and avoiding $a,b$ reveal the form already found in~\autoref{the:min_absep_ext_npc}: these PMCs seem strongly related to the truth assignment of the SAT formula $\Phi$. We confirm this intuition, showing that \textsc{PMC Extension} and \textsc{Minimal $(a,b)$-separator Extension} are equivalent problems on this setting.

\begin{lemma}\label{lem:PMC_eq_ab_sep}
	In $G(\Phi)$, there exists a PMC $\Omega$ extending $M$ and such that $a,b \notin \Omega$ iff there exist a minimal $(a,b)$-separator $S$ extending $M$.
\end{lemma}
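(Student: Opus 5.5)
We will prove both directions of \autoref{lem:PMC_eq_ab_sep} by passing through truth assignments of $\Phi$. By the claim inside the proof of \autoref{the:min_absep_ext_npc}, minimal $(a,b)$-separators extending $M$ are exactly the sets $S(\sigma)$ for satisfying assignments $\sigma$. So it suffices to show that a PMC extending $M$ and avoiding $a,b$ exists iff $\Phi$ is satisfiable.

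\textbf{From a PMC to a separator.} Let $\Omega$ be a PMC extending $M$ with $a,b\notin\Omega$. By \autoref{lem:pmc_ti=3}, there is a unique index $i$ with $\vert T_i\vert=3$. Every other index $p$ has $T_p=\{\alpha_p,\beta_p\}$ or $T_p=\{\notxi{\alpha}{p},\notxi{\beta}{p}\}$, and every clause vertex $C_j$ has a $\beta$-neighbour outside $\Omega$.
\begin{itemize}
\item For $p\neq i$, define $\sigma(x_p)$ according to the shape of $T_p$.
\item For the special index $i$, $T_i$ misses exactly one vertex of $\{\alpha_i,\notxi{\alpha}{i},\beta_i,\notxi{\beta}{i}\}$. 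Choose $\sigma(x_i)$ so that $T_i(\Phi,\sigma)\subsetneq T_i$.
\end{itemize}
We must then check that the third item of \autoref{lem:pmc_ti=3} survives the deletion of one vertex of $T_i$. Deleting a vertex only adds vertices outside the set, so each $C_j$ still has a $\beta$-neighbour outside $S(\sigma)$. Since $S(\sigma)\subseteq\Omega$, this is exactly the condition used in the first direction of the claim of \autoref{the:min_absep_ext_npc}. It shows that $\sigma$ satisfies every clause, so $S(\sigma)$ is a minimal $(a,b)$-separator extending $M$.

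\textbf{From a separator to a PMC.} Let $S=S(\sigma)$ for a satisfying assignment $\sigma$. Fix any index $i$ and set $\Omega=S\cup\{y\}$, where $y$ is one extra vertex of gadget $i$. Concretely, if $\sigma(x_i)$ is True, take $y=\notxi{\beta}{i}$ (or $\notxi{\alpha}{i}$). This makes $\vert T_i\vert=3$, matching the shape forced by \autoref{lem:pmc_ti=3}.

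To show $\Omega$ is a PMC we use the characterization \autoref{thm:pmc_charac}: no full component, and every pair of $\Omega$ is reachable despite $\Omega$.
\begin{itemize}
\item \emph{Components.} By \autoref{lem:abnotsamecomp}, the only components of $G\setminus\Omega$ are $C(a,\Omega)$ and $C(b,\Omega)$.
\item \emph{No full component.} The extra vertex $y$ disconnects one gadget vertex (e.g. $\alpha_i$) from $C(b,\Omega)$. It also leaves some other vertex ($a_i$ or $b_i$) without access to one side. Hence neither component is full.
\item \emph{Reachability.} Pairs both adjacent to $C(a,\Omega)$ or both adjacent to $C(b,\Omega)$ are reachable. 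Since $S$ is minimal, every vertex of $S$ sees both components despite $S$. Adding $y$ only damages the neighbourhoods of the vertices of gadget $i$ and possibly of clause vertices adjacent to $y$.
\end{itemize}
So the real work is a local check inside gadget $i$. The vertex $u\in T_i$ not seeing $C(a,\Omega)$ and the vertex $v\in T_i$ not seeing $C(b,\Omega)$ must be adjacent, or see each other through the other side. In the cube-shaped gadget (hypercube with $a,b$) this can be verified on the drawing of \autoref{subfig:gadget_xi}. Also, every other vertex of $\Omega$ must still see both sides.

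\textbf{Main obstacle.} The hard case is clause vertices. If $y$ is a $\beta$-vertex, some $C_j$ whose only $\beta$-neighbour outside $S$ was $y$ would lose access to $C(b,\Omega)$. To avoid this, we first try $y\in\{\alpha_i,\notxi{\alpha}{i}\}$ (the $\alpha$-vertex on the False side), which touches no clause vertex. Then all clause vertices keep a $\beta$-neighbour outside $\Omega$, and it remains to verify the reachability of $u$ with $v$ and with the rest of $\Omega$ in the gadget.

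If the $\alpha$-choice fails because a gadget vertex such as $b_i$ no longer sees $C(a,\Omega)$, we fall back on an alternative. Choose $i$ to be a variable whose literal appears in no clause in a critical way, or simply add a dummy variable to $\Phi$ during the reduction. The exact gadget adjacencies decide which variant works, and this verification is where we expect the effort to lie.
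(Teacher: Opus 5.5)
\textbf{The separator-to-PMC direction has a genuine gap: you never show that $S\cup\{y\}$ is a PMC.} The reachability and shyness checks are deferred to ``the drawing''. Your fallbacks (re-choosing $i$, adding a dummy variable to $\Phi$) are not available, because the lemma is about the fixed graph $G(\Phi)$. Citing \autoref{lem:abnotsamecomp} for the components of $G\setminus\Omega$ is also circular, since that lemma assumes $\Omega$ is already a PMC; the fact itself follows directly from $N[a]\cup N[b]=V$.

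The missing idea is an argument that does not depend on the gadget. Take $y\in C(a,S)$; the case $y\in C(b,S)$ is symmetric.
\begin{itemize}
\item \emph{Reachability.} $C(b,S)$ is still a component of $G\setminus\Omega$ and is full for $S$, so pairs inside $S$ see each other through it. Since $C(a,S)$ is connected and full for $S$, $y$ reaches every $s\in S$ along a path whose internal vertices lie in $C(a,S)\setminus\{y\}$, hence outside $\Omega$.
\item \emph{Shyness.} $C(b,S)$ is not adjacent to $y$. \autoref{lem:reduc_graph_sep_property} gives $v\in S$ with $N(v)\cap C(a,S)=\{y\}$, so no component of $G[C(a,S)\setminus\{y\}]$ sees $v$.
\end{itemize}
This is exactly the paper's argument. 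It also shows that your ``main obstacle'' is illusory: a vertex that loses access to one side creates no conflict, as long as it still sees all of $S$ through the other side.
\begin{itemize}
\item With $y=\notxi{\alpha}{i}$ (when $\sigma(x_i)$ is True), $b_i$ and $\beta_i$ lose $C(a,\Omega)$. They are adjacent to $y$ and see $S$ through $C(b,S)$.
\item With $y=\notxi{\beta}{i}$, a clause vertex whose last $\beta$-neighbour outside $S$ was $y$ is adjacent to $y$. It still sees $S$ through $C(a,S)$.
\end{itemize}
So both of your candidates work.

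\textbf{The PMC-to-separator direction is correct and takes a different route from the paper.} Every $3$-subset of $\{\alpha_i,\notxi{\alpha}{i},\beta_i,\notxi{\beta}{i}\}$ contains exactly one diagonal pair. Hence $\sigma$ is well defined and $S(\sigma)\subseteq\Omega$, so the third item of \autoref{lem:pmc_ti=3} transfers to $S(\sigma)$. By construction, ``each $C_j$ has a $\beta$-neighbour outside $S(\sigma)$'' is precisely ``$\sigma$ satisfies $C_j$'', and the claim in the proof of \autoref{the:min_absep_ext_npc} concludes.

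The paper instead takes the neighbour separator $N(C(a,\Omega))$ or $N(C(b,\Omega))$, depending on whether $T_i$ contains both $\beta$-vertices or both $\alpha$-vertices. It then excludes the bad subcase through the conflict pair $(b_i,C_j)$. Your version avoids that case analysis and relies only on the claim already proved.
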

\begin{proof}
\textbf{Only if direction}. Given a minimal $(a,b)$-separator $S$ of $G(\Phi)$ extending $M$, $G(\Phi) \setminus S$ admits two full components $C(a,S)$ and $C(b,S)$. We can add an extra vertex $x$ of any variable gadget into $S$: we say w.l.o.g. that $x \in C(a,S)$. By \autoref{lem:reduc_graph_sep_property}, there exists $v \in S$ such that $N(v) \cap C(a,S) = \{x\}$, so for set $X = S \cup \{a\}$, $C(a,X)$ is not a full component for $X$. Moreover, $C(b,X)$ is not a full component either for $X$, since $x \in C(a,S)$. Set $X$ is shy. Moreover, as $C(a,S)$ was a full component for $S$, vertex $x$ is enlightened for $X$. As $C(b,S)$ was also a full component for $S$, vertices of $S$ see each other via this connected component. Consequently, $X$ has the reachability property and is a PMC.

\textbf{If direction}. Given a PMC $\Omega$ satisfying $M \subsetneq \Omega$ and $\Omega \cap \{a,b\} = \emptyset$, we know by \autoref{lem:pmc_ti=3} that we have two vertices picked up in every gadgets except for one where there are three. By \autoref{lem:abnotsamecomp}, we know that $G(\Phi) \setminus \Omega$ has exactly two connected components: $C(a,\Omega)$ and $C(b,\Omega)$. Now consider the variable gadget with $\vert T_i \vert = 3$. We distinguish two cases:
	\begin{itemize}
		\item $T_i$ contains all $\beta$-vertices: in this setting, the separator $S = N(C(a,\Omega))$ contains every vertex of $M$. It follows, by \autoref{le:neighbor-sep}, that $S$ is a minimal $(a,b)$-separator extending $M$.
		\item $T_i$ contains all $\alpha$-vertices: let us fix $S = N(C(b,\Omega))$. If $S$ contains every vertex of $M$, then again $S$ is a minimal $(a,b)$-separator extending $M$. Assume by way of contradiction that it is not the case. Component $C(b,\Omega)$ is adjacent to all vertices in $\{a_i,b_i\}_{1\le i\le n}$. So, $C(b,\Omega)$ is not adjacent to some clause vertex $C_j$.
Then, in the gadget where $\vert T_i \vert = 3$, since we added an $\alpha$-vertex, vertex $b_i$ is not adjacent to $C(a,\Omega)$. Finally, as $C_j$ is not adjacent to $C(b,\Omega)$ and $(b_i,C_j) \notin E(G(\Phi))$, then $(b_i,C_j)$ is a conflict pair for $\Omega$, which yields a contradiction.
	\end{itemize}
In brief, the existence of a PMC $\Omega$ extending $M$ and avoiding $\{a,b\}$ allows us to identify a minimal $(a,b)$-separator extending $M$ in $G(\Phi)$.
\end{proof}



The hardness of \textsc{PMC Extension} thus comes naturally.

\hardness*
\begin{proof}
By \autoref{the:min_absep_ext_npc}, it is NP-hard to find a minimal $(a,b)$-separator extending $M$ in graphs $G(\Phi)$. By \autoref{lem:PMC_eq_ab_sep}, finding such a separator is equivalent to solving \textsc{PMC Extension} on graphs $G(\Phi)$ with $M$ fixed as in~\autoref{subsec:hardness-sep} and $F = \{a,b\}$. 
\end{proof}

\subsection{NP-hardness of PMC subset}

In this subsection, we provide a more refined hardness result regarding \textsc{PMC Extension}. We prove that \textsc{PMC Extension} is NP-hard, even restricted to $F =\emptyset$. This subproblem is called \textsc{PMC Subset} and defined below.

\defdecproblem{\textsc{PMC Subset}}{A graph $G = (V,E)$, a set $M \subseteq V$ of \textit{mandatory} vertices.}{Is there a PMC $\Omega$ of $G$ satisfying $M \subseteq \Omega$?}

The avoidance of set $F = \{a,b\}$, used as an essential argument in the previous subsection, can in fact, with additional effort, be withdrawn to draw the equivalence between PMCs and minimal $(a,b)$-separators extending $M$.

\begin{theorem}
	In $G(\Phi)$, there is no PMC $\Omega$ extending $M$ containing both $a$ and $b$. Moreover, if $a$ or $b$ belongs to $\Omega$, then $\Omega$ contains a minimal $(a,b)$-separator extending $M$.
\end{theorem}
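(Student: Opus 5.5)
The plan is to split according to which of $a,b$ lie in $\Omega$, and in each case to work with the components of $G(\Phi)\setminus\Omega$. Two facts restrict these components. First, $N[a]\cup N[b]=V(G(\Phi))$ and $N[a]\cap N[b]=\emptyset$ (\autoref{lem:abnotsamecomp}). Second, $M\subseteq\Omega$, and $M$ contains all clause vertices and all $a_i,b_i$. Hence every vertex outside $\Omega\cup\{a,b\}$ is one of the four free vertices $\alpha_i,\notxi{\alpha}{i},\beta_i,\notxi{\beta}{i}$ of some gadget. Any component of $G\setminus\Omega$ avoiding $a,b$ therefore lives inside a single gadget, because clause vertices are in $\Omega$ and the only links between gadgets pass through $a$, $b$ or clause vertices. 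Such a component has at most two vertices: one $\alpha$-vertex, one $\beta$-vertex, or an adjacent $\alpha$–$\beta$ pair. Its neighbourhood can be read directly off the cube gadget and the clause edges.

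\textbf{First statement.} Assume $a,b\in\Omega$. Then every component of $G\setminus\Omega$ is a small gadget-local component as above. I would use the reachability part of \autoref{thm:pmc_charac} on three kinds of pairs: the non-adjacent pair $(a,b)$, the pairs $(b,C_j)$, and the pairs $(a_i,b_{i'})$ for $i\ne i'$. The non-adjacent pair $(a,b)$ needs a component touching both $N(a)$ and $N(b)$, so some gadget $i$ keeps an $\alpha$–$\beta$ edge outside $\Omega$. A pair $(b,C_j)$ is non-adjacent, so it requires a $\beta$-vertex of $B(C_j)$ outside $\Omega$. Such a vertex lies in a component confined to its own gadget. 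The pairs $(a_i,b_{i'})$ for $i\neq i'$ must each be realised by a component adjacent to both, which is impossible for gadget-local components. My intended route is to show that these requirements cannot all hold simultaneously. If that fails, I would fall back on exhibiting a full component, by a case analysis on $|\Omega\cap A_i|$ modelled exactly on \autoref{lem:pmc_ti=3}.

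\textbf{Second statement.} By symmetry suppose $a\in\Omega$ and $b\notin\Omega$; the case with $b\in\Omega$ and $a\notin\Omega$ should be symmetric up to the role of the clause vertices, which I would check separately. Set $S=N(C(b,\Omega))$. By \autoref{le:neighbor-sep}, $S$ is a minimal separator with full component $C(b,\Omega)$. By \autoref{lem:PMC_in_one_FC_of_S}, $\Omega\setminus S\ni a$ lies in another full component, so $S$ is a minimal $(a,b)$-separator, and $S\subseteq\Omega$. It remains to prove $M\subseteq S$, that is, every $m\in M$ is adjacent to $C(b,\Omega)$. Suppose some $m\in M$ is not adjacent to $C(b,\Omega)$. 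Since $m\in\Omega$ and $m\neq b$, reachability between $m$ and $b$ must then come from a direct edge. For clause vertices and the relevant $a_i$ there is no such edge, so this gives a conflict pair $(m,b)$, contradicting \autoref{thm:pmc_charac}. For the mandatory vertices that are adjacent to $b$, I would instead argue from the gadget structure, as in the second bullet of the proof of \autoref{lem:PMC_eq_ab_sep}. I would find a partner vertex in $\Omega$ that also misses $C(b,\Omega)$ and is non-adjacent to $m$, which again produces a conflict pair.

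The main obstacle is the first statement. The case analysis over all gadgets, with both $a$ and $b$ inside $\Omega$, has to rule out every combination of free vertices left outside $\Omega$. I expect the decisive pairs to be the inter-gadget pairs $(a_i,b_{i'})$ together with the clause–$b$ pairs, but confirming that these suffice requires the exact cube adjacencies of \autoref{subfig:gadget_xi}.
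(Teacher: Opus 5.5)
Your treatment of the first statement is correct and is essentially the paper's argument. With $a,b\in\Omega$, every component of $G\setminus\Omega$ is confined to one gadget, so $(a_i,b_{i'})$ with $i\neq i'$ is already a conflict pair; the paper uses $(a_i,a_p)$. No fallback case analysis is needed, but you should record that $\Phi$ has at least two variables, as the paper assumes.

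The second statement has two genuine gaps. First, the step ``$\Omega\setminus S\ni a$'' silently assumes $a\notin S=N(C(b,\Omega))$, and this is not automatic. If $\alpha_i,\notxi{\beta}{i}\notin\Omega$ for some $i$, then $\alpha_i\in C(b,\Omega)$, so $a\in S$ and $S$ is not an $(a,b)$-separator at all. The paper handles exactly this point explicitly, in the mirrored orientation, by showing $b\notin N(C(a,\Omega))$. Second, your proof of $M\subseteq S$ rests on a ``conflict pair $(m,b)$''. But here $b\notin\Omega$, and the reachability property in \autoref{thm:pmc_charac} only constrains pairs inside $\Omega$, so nothing requires $m$ to see $b$. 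Conversely, the mandatory vertices adjacent to $b$, which you set aside for extra work, are the $b_i$. These lie in $S$ trivially, since $b\in C(b,\Omega)$.

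Both gaps close once the partner vertex is taken inside $\Omega$. Every $\beta$-vertex outside $\Omega$ is adjacent to $b$ and hence lies in $C(b,\Omega)$. So if $a_i\notin S$, then $\beta_i,\notxi{\beta}{i}\in\Omega$, i.e.\ $N(a_i)\subseteq\Omega$, and $(a_i,b_i)$ is a conflict pair. If $C_j\notin S$, then $N(C_j)=\{a\}\cup B(C_j)\subseteq\Omega$, and $(C_j,a_1)$ is a conflict pair. Hence $M\subseteq S$. If moreover $a\in S$, shyness of $\Omega$ forces some free vertex $f\in\Omega\setminus S$. It cannot be a $\beta$-vertex, so it is an $\alpha$-vertex whose $\beta$-neighbour lies in $\Omega$, say $f=\alpha_q$ with $\notxi{\beta}{q}\in\Omega$. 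Then $N(f)=\{a,\notxi{\beta}{q},b_q\}\subseteq\Omega$, and $(f,a_q)$ is a conflict pair. Thus $a\notin S$, and your appeal to \autoref{lem:PMC_in_one_FC_of_S} then goes through. The case $b\in\Omega$, $a\notin\Omega$, which is the one the paper writes out, needs the same two checks with $\alpha\leftrightarrow\beta$ and $a_i\leftrightarrow b_i$ exchanged. There the clause vertices are trivially in $N(C(a,\Omega))$.
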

\begin{proof}
	We assume that formula $\Phi$ is non trivial and contains at least one clause and two variables, hence $G(\Phi)$ admits at least one clause vertex and two variable gadgets. In this setting, if both $a$ and $b$ belong to $\Omega$, then, any mandatory vertex $a_i$ (or $b_i$) can only reach an $a_p$ (or $b_p$), $p\neq i$, by either $a$, $b$ or a clause vertex. Since they all are in $\Omega$, it follows that $\Omega$ is not a PMC: the reachability property is not satisfied. Then both $a$ and $b$ cannot belong to the same PMC.
	
	Let us suppose now that exactly one vertex among $a$ or $b$ belongs to $\Omega$. Starting from set $Y = M \cup \{a\}$ or $Y =M \cup \{b\}$, graph $G(\Phi) \setminus Y$ contains a component full for $Y$. It is thus necessary to add vertices from variable gadgets if we want to obtain a PMC $\Omega$ fulfilling $Y \subsetneq \Omega$. Our objective is to show that $a$ and $b$ cannot see each other despite $\Omega$, formally $a \notin R_G^+(b,\Omega)$.
	
	We assume w.l.o.g. that $b \in \Omega$: the same symmetric reasoning works for case $a \in \Omega$. Then, $S = N(C(a,\Omega)$ is a minimal separator of $G$. Assume by contradiction that $b \in S$ and let us denote by $v \in \Omega$ a vertex separated from $a$ by $S$. From~\autoref{lem:ab-sep-paths}, there exists an $(a,v)$-path $P$ intersecting $S$ only in $b$. Path $P$ necessarily traverses a variable gadget to go from $a$ to $b$ without intersecting $S$, since clause vertices belong to $S$ (they are adjacent to $a$). Hence, some ``column'' of a variable gadget (of the form $\notxi{\alpha}{i},\beta_i$ or $\alpha_i,\notxi{\beta}{i}$) does not belong to $S$, so to $\Omega$. As this column is adjacent to $a_i,b_i \in M$, then, $a_i,b_i \in S$. If $v \notin \{\beta_i,\notxi{\beta}{i}$, we have a contradiction: it is not possible to build an $(a,v)$-path intersecting $S$ only in $a_i$, this path will necessarily intersect either $b$ or a clause vertex, which are also in $S$. Else, if $v \in \{\beta_i,\notxi{\beta}{i}$, then it is not possible to build an $(a,v)$-path interesecting $S$ only in $a_p$, with $p \neq i$, without passing through either $b$ or a clause vertex. Consequently, $S$ is not minimal, contradicting $b \in S$.
Thus, $\Omega$ contains a minimal $(a,b)$-separator $S$. 

We pursue with the hypothesis $b \in \Omega$ since it does not influence the validity of this explanation. We must necessarily have at least two vertices per gadget to obtain an $(a,b)$-separator $S = N(C(a,\Omega))$. Moreover, for each variable gadget, we cannot pick only the $\beta$-vertices, as it would create a conflict pair $(b,a_i)$. Same observation for $\alpha$-vertices: it would create a conflict pair $(b_i,b_p)$, with any $p \neq i$. In summary, for any variable gadget $1\le i \le n$, two of the vertices in $T_i = \Omega \cap \{\alpha_i,\beta_i,\notxi{\alpha}{i},\notxi{\beta}{i}\}$ form a diagonal pair, for instance $\alpha_i,\beta_i$.  In addition with it, if we have $\vert T_i \vert = 3$ for some index $i$, then a vertex of $T_i$ will for sure not see $b$. So, for any $1\le i\le n$, $\vert T_i \vert = 2$ and $T_i$ is made up with diagonal pairs. In this situation, $C(a,\Omega)$ is adjacent to all vertices of $\Omega$ except $b$, in particular $M \subsetneq N(C(a,\Omega)) = S$. As a conclusion, $\Omega$ contains a minimal $(a,b)$-separator $S$ extending $M$.
\end{proof}

\begin{corollary}
	\textsc{PMC Subset} is NP-complete.
\end{corollary}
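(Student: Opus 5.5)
We prove that \textsc{PMC Subset} is in NP and NP-hard, reusing the reduction from 3-SAT built for \autoref{the:min_absep_ext_npc}.

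\emph{Membership in NP.} A certificate is a vertex set $\Omega \supseteq M$. We check in polynomial time that $\Omega$ is a PMC, using the characterization recalled in \autoref{thm:pmc_charac}: $G \setminus \Omega$ has no full component, and every non-adjacent pair of $\Omega$ is reachable through some component, i.e.\ $\Omega$ has no conflict pair. This takes $O(nm)$ time.

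\emph{Hardness.} Given a 3-SAT formula $\Phi$, we may assume it has at least one clause and two variables, as in the preceding theorem. We output the instance $(G(\Phi),M)$ with no forbidden set. We then show that $\Phi$ is satisfiable if and only if $G(\Phi)$ has a PMC containing $M$.

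If $\Phi$ is satisfiable, the claim in the proof of \autoref{the:min_absep_ext_npc} gives a minimal $(a,b)$-separator $S(\sigma)$ extending $M$. The "only if" direction of \autoref{lem:PMC_eq_ab_sep} then turns it into a PMC $\Omega \supseteq M$. In that direction, $S(\sigma)$ is augmented by a single gadget vertex, so $\Omega$ even avoids $a$ and $b$.

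Conversely, let $\Omega$ be a PMC with $M \subseteq \Omega$. We split into cases according to $\Omega \cap \{a,b\}$:
\begin{itemize}
\item If $\Omega$ avoids both $a$ and $b$, the "if" direction of \autoref{lem:PMC_eq_ab_sep} yields a minimal $(a,b)$-separator extending $M$.
\item By the preceding theorem, $\Omega$ cannot contain both $a$ and $b$.
\item If $\Omega$ contains exactly one of $a,b$, the same theorem shows that $\Omega$ contains a minimal $(a,b)$-separator extending $M$.
\end{itemize}
In every case we obtain a minimal $(a,b)$-separator extending $M$. The claim of \autoref{the:min_absep_ext_npc} says every such separator has the form $S(\sigma)$ for a satisfying assignment $\sigma$, so $\Phi$ is satisfiable.

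The construction of $G(\Phi)$ is polynomial, so the reduction is polynomial. The main obstacle, removing the forbidden set $\{a,b\}$, is already handled by the preceding theorem. What remains is to check that its hypotheses (at least one clause and two variables) lose no generality: formulas violating them can be decided trivially and mapped to fixed yes or no instances.
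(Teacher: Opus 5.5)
Your proposal is correct and follows the route the paper implies for this corollary. Membership in NP comes from polynomial-time PMC recognition, and hardness reuses $(G(\Phi),M)$ with $F=\emptyset$, combining the preceding theorem (no PMC extending $M$ contains both $a$ and $b$, and one containing exactly one of them contains a minimal $(a,b)$-separator extending $M$) with \autoref{lem:PMC_eq_ab_sep} and the claim in the proof of \autoref{the:min_absep_ext_npc}; you also correctly note that the nontriviality assumption (at least one clause, two variables) loses no generality.
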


\bibliographystyle{plain}
\bibliography{biblio}

\begin{appendix}
\end{appendix}

\end{document}

%% file: preliminaries.tex
In this section, we introduce some general notation and concepts related to connectivity in graphs. We also define the notion of \emph{Potential Maximal Clique} (PMC) and provide some observations.

\subsection{General graph notations}

All graphs mentioned in this paper are simple, finite, undirected, unweighted and connected. Any graph $G = (V, E)$ is given with $V$ a set of $n$ vertices and $E \subseteq V \times V$ a set of $m$ edges. As we will sometimes handle several graphs in parallel, function $V(\cdot)$ will provide the vertex set of a graph, and $E(\cdot)$ its edge set: here, $V(G) = V$ and $E(G) = E$. Given a set of vertices $U \subseteq V$, the notation $\overline{U}$ refers to the complementary of $U$ in $V$: $\overline{U} = V \setminus U$. We say two vertices $u$ and $v$ are adjacent in $G$ if $(u,v) \in E$. For any vertex $u \in V$, we denote respectively by $N(u) = \{v : (u,v) \in E\}$ the \textit{open neighborhood}\footnote{We will use notation $N_G(u)$ when $G$ is not absolutely clear from the context.} of $u$ and $N[u] = N(u) \cup \{u\}$ its \textit{close neighborhood}. These notions are naturally generalized to sets: for any $U \subseteq V$, $N(U) = \{v: (u,v) \in E, u \in U, v \notin U\}$ and $N[U] = U \cup N(U)$.

A \emph{path} $P$ from $a$ to $b$ of $G$ is a sequence of vertices $(v_0,v_1,v_2,\ldots,v_{k-1},v_k)$ such that $a = v_0$, $b = v_k$, and $(v_i,v_{i+1}) \in E$ for any $0\le i < k$. Concisely, we say that $P$ is an $(a,b)$-path. A \emph{simple path} is a path without any repetition of vertices. From now on, the term \emph{path} will automatically refer to the notion of simple path.
A graph $G$ is \emph{connected} if for any pair of vertices $a,b \in V$, $a \neq b$, there is an $(a,b)$-path in $G$.

A \emph{subgraph} $G'$ of $G$ is a graph $G' = (V', E')$ made up of a subset of vertices/edges of $G$, formally $V' \subseteq V$ and $E' \subseteq E \cap (V' \times V')$. For any $V' \subseteq V$, we denote $G[V'] = (V', E \cap (V' \times V'))$ the \emph{induced subgraph} of $V'$. Its edge set will be written briefly as $E[V'] = E \cap (V' \times V')$. Given a vertex set $U \subseteq V$, let $G \setminus U$ denote the graph $G$ deprived of vertices in $U$, formally $G \setminus U = G[\overline{U}]$. More generally, given two sets $A$ and $B$ such that $A \subseteq B$, then $B \setminus A$ denote the set $B$ deprived of elements of $A$.

Finally, we distinguish between the notion of \emph{subset} which includes equality (symbol $\subseteq$) and \emph{proper subset} which refers necessarily to a smaller set (symbol $\subsetneq$).

\subsection{Connectivity and separators}

\subsubsection{Connectivity}

We introduce now standard notations of connectivity, beginning with connected components.

\begin{definition}[Connected components]
Consider a graph $G$. Let $U$ be a set of vertices $U \subsetneq V(G)$ and let $a \notin U$. We denote respectively by:
\begin{itemize}
\item $C_G(a,U)$ the \emph{connected component} of $G \setminus U$ which contains $a$,
\item $C_G[a,U]$ the corresponding \emph{close connected component}: \[C_G[a,U] = C_G(a,U) \cup \left(U \cap N_G(C_G(a,U))).\right.\]
\end{itemize}
When $G$ is clear from the context, we will sometimes remove it from subscript and use simply $C(a,U)$ and $C[a,U]$.
\label{def:connected-comp}
\end{definition}

We pursue with the concept of \emph{full component}, which is a common notion related to both minimal separators and PMCs. A full component is simply a connected component which admits the whole removed set in its neighborhood.

\begin{definition}[Full components~\cite{BoTo02}]
Consider a graph $G$ and a vertex set $U \subseteq V(G)$. We say a connected component $C$ of $G \setminus U$ is a \emph{full component} for $U$ if the neighborhood of the component in $G$ is exactly $U$, formally $N(C) = U$.
\label{def:full-comp}
\end{definition}

Said differently, if some connected component $C_G(a,U)$ is a full component for $U$, then $C_G[a,U] = C_G(a,U) \cup U$. We will see in the remainder that a minimal separator $S$ admits full components, but this is not the case for a PMC.

\begin{definition}[Shy set]
Consider a graph $G$ and a vertex set $U \subseteq V(G)$. We say that $U$ is a \emph{shy set} if there is no connected component of $G \setminus U$ which is a full component for $U$. Formally, for any connected component $C$ of $G \setminus U$, $N_G(C) \neq U$.
\label{def:shy-set}
\end{definition}

Observe that if some set $U$ is shy, then naturally any of its superset is also shy, otherwise a full component for the superset would imply necessarily a full component for $U$ in $G \setminus U$. Trivially, $V(G)$ is always a shy set.

\begin{observation}
Given a shy set $U \subseteq V(G)$ of $G$, any of its superset $W \supseteq U$ is also shy.
\label{obs:shy-set}
\end{observation}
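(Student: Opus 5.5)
Observation~\ref{obs:shy-set} follows from a direct contrapositive argument. Let $U$ be shy and $W \supseteq U$. Suppose for contradiction that $W$ is not shy, so some connected component $C$ of $G \setminus W$ is full for $W$, that is, $N_G(C) = W$. The plan is to build from $C$ a full component for $U$ in $G \setminus U$, which contradicts the shyness of $U$.

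\textbf{The trivial case.} If $W = U$ there is nothing to prove, so assume $W \supsetneq U$. Since $C \subseteq \overline{W} \subseteq \overline{U}$ and $C$ is connected, $C$ lies inside a unique connected component $C'$ of $G \setminus U$. We show $N_G(C') = U$.

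\textbf{Step one: $U \subseteq N_G(C')$.} Each $u \in U$ lies in $W = N_G(C)$. Hence $u$ has a neighbour in $C \subseteq C'$. Since $u \notin C'$, we get $u \in N_G(C')$.

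\textbf{Step two: $N_G(C') \subseteq U$.} This holds because $C'$ is a connected component of $G \setminus U$, so any neighbour of $C'$ outside $C'$ must belong to $U$. Combining the two steps gives $N_G(C') = U$, so $C'$ is a full component for $U$, a contradiction. Hence $W$ is shy.

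\textbf{Main subtlety.} The only point needing care is that $C'$ is strictly larger than $C$ whenever $W \neq U$. The vertices of $W \setminus U$ are adjacent to $C$ and are not in $U$, so they are absorbed into $C'$. This does not affect the argument, because the required inclusion $U \subseteq N_G(C')$ only uses adjacency to $C$. The same reasoning also covers the extreme case $W = V(G)$, where $G \setminus W$ has no components at all, so $W$ is vacuously shy.
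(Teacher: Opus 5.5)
Your proof is correct and is the paper's one-line argument (a full component for the superset yields a full component for $U$) made explicit: the component $C$ of $G \setminus W$ lies inside a component $C'$ of $G \setminus U$, and you correctly check both inclusions to get $N_G(C') = U$, which contradicts the shyness of $U$. The separate case $W = U$ is unnecessary but harmless, since the same argument already covers it.
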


Another common notion widely used in the context of treewidth is the \emph{torso}.

\begin{definition}[Torso~\cite{MaOsRa10,RoSe91}]
Consider a graph $G$ and some set $U \subseteq V(G)$. The graph $\torso(G, U)$ has vertex set $U$ and
its edge set is made up of all edges $(u,v) \in U \times U$ satisfying either $(u,v) \in E(G)$ or there is a $(u,v)$-path of $G$ whose internal vertices are not in $U$.
\end{definition}

Torsos characterize, given a vertex set $U$, its connectivity outside $U$, {\em i.e.} either through direct edges between two vertices or through paths outside $U$. For tree decompositions, torsos are used to characterize the connectivity between vertices of the same bag. In this paper, the torso will be a tool manipulated to not only define PMCs but also to introduce their properties.

We extend the notion of neighborhood into a larger (natural) one that we call \emph{reachability}. Again, if we consider some vertex set $U \subseteq V(G)$, reachability defines, given some vertex $u \in U$, the set of vertices which can be reached from $u$ using only intermediary vertices outside $U$.

\begin{definition}[Reachability]
Let $U$ be a set of vertices $U \subseteq V(G)$ and a vertex $u \in U$. We define as $R_G(u,U)$ the set of vertices of $U$ which can be reached from $u$ either directly (with an edge) or via intermediary vertices which are outside $U$. 

Two equivalent and more formal definitions follow. $v \in R_G(u,U)$ if and only if:
\begin{itemize}
\item either $(u,v) \in E(G)$ or there is a path in $G$ starting from $u$ and arriving at $v$ such that all its internal vertices belong to $\overline{U} = V(G) \setminus U$.
\item $(u,v) \in E(\emph{\torso}(G,U))$.
\end{itemize}
\label{def:reachable-set}
\end{definition}

Note that in this definition, set $R_G(u,U)$ is a subset of $U$ and can be seen as the neighborhood of $u$ in $\torso(G,U)$. Nevertheless, we will sometimes need a broader notion of reachability which contains the vertices outside $U$ also. 

\begin{definition}[Global reachability]
Let $U$ be a set of vertices $U \subseteq V(G)$ and some vertex $a \in V(G)$. We fix $R_G^+(a,U)$ the set of vertices of $G$ which can be reached from $a$ either via some direct edge or through some path whose internal vertices belong to $\overline{U}$.
\label{def:global-reachable}
\end{definition}

\begin{figure}[h]
	\centering
	\begin{subfigure}{0.55\textwidth}
	\centering
	\scalebox{0.8}{\input{tikz/torso}}
	\caption{Graph $G$, set $U = \{u,v,x,y\}$. We have $a \in R_G^+(v,U)$ and $x \in R_G(v,U) \subsetneq R_G^+(v,U)$ via $C_G(a,U)$. Moreover, $u \in R_G(v,U)$ via direct edge and $y \in R_G(v,U)$ via the connected component on top of the figure.}
	\end{subfigure} ~
	\begin{subfigure}{0.4\textwidth}
	\centering
	\scalebox{0.8}{\input{tikz/torso2}}
	\caption{Graph $\torso(G,U)$. Vertices $u$ and $x$ are connected neither by a direct edge nor via a path whose all internal vertices belong to $\overline{U}$.}
	\end{subfigure}
	\caption{Illustration of the notions of torso and reachability.}
	\label{fig:torso}
\end{figure}

Note that there are two changes compared to \autoref{def:reachable-set}: first vertex $a$ is not necessarily in $U$, second set $R_G^+(a,U)$ is not a subset of $U$: it also contains vertices of $\overline{U}$ reachable from $a$ via paths outside $U$. In the case $u \in U$, then $R_G(u,U) = R_G^+(u,U) \cap U$. All these aspects are illustrated in~\autoref{fig:torso}. We can now introduce some vocabulary related to reachability.

\begin{definition}[Reachability property and conflicts]
Let $U \subseteq V(G)$ be a set of vertices and $u,v \in U$. We say that:
\begin{itemize}
\item Vertices $u$ and $v$ are \emph{visible despite} $U$ if $v \in R_G(u,U)$ - equivalently $u \in R_G(v,U)$ as $G$ is undirected. Another formulation we will use for this statement is either (i) $u$ sees $v$ despite $U$ or (ii) $u$ and $v$ \emph{see each other} despite $U$.
\item Conversely, vertices $u$ and $v$ are \emph{in conflict} for $U$ if $v \notin R_G(u,U)$. Another possible formulation used is that $(u,v)$ is a \emph{conflict pair} for $U$.
\item Vertex $u$ is \emph{enlightened} for $U$ if it is not involved in any conflict pair: $R_G(u,U) = U$.
\item Set $U$ satisfies the \emph{reachability property} if, for any pair $x,y \in U$, $x$ and $y$ are visible despite $U$.
\end{itemize} 
\label{def:reachable}
\end{definition}

Naturally, by considering a subset $U' \subseteq U$ where $U$ has the reachability property, we preserve the property as the graph $G \setminus U'$ is less restricted than $G \setminus U$ in terms of connectivity. More concretely, two reachable vertices $x,y \in U'$ in $G \setminus U$ are still reachable in $G \setminus U'$. Trivially, the empty set and singletons always have the reachability property.

\begin{observation}
Given a set $U$ that satisfies the reachability property, any subset of $U$ also satisfies it.
\label{obs:reachable}
\end{observation}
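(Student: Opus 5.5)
The statement is Observation~\ref{obs:reachable}: every subset $U' \subseteq U$ of a set $U$ with the reachability property also has it. The proof is a direct monotonicity argument on the witnessing paths. Fix $U' \subseteq U$ and two vertices $x,y \in U'$; we must show that $y \in R_G(x,U')$.

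Since $x,y \in U$ and $U$ has the reachability property, Definition~\ref{def:reachable-set} gives two cases. If $(x,y) \in E(G)$, then $y \in R_G(x,U')$ immediately. Otherwise there is an $(x,y)$-path $P$ in $G$ whose internal vertices all lie in $\overline{U} = V(G) \setminus U$.

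The key step is the inclusion $\overline{U} \subseteq \overline{U'}$, which holds because $U' \subseteq U$. Hence every internal vertex of $P$ lies in $\overline{U'}$. So $P$ is still a valid witness, now with respect to $U'$, and $y \in R_G(x,U')$. Equivalently, every edge of $\torso(G,U)$ with both ends in $U'$ is an edge of $\torso(G,U')$.

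Since $x,y$ were arbitrary, $U'$ has the reachability property. The only care needed is with degenerate cases: for $\vert U' \vert \le 1$ the property holds vacuously. There is no real obstacle; the whole argument is the monotonicity of the complement under taking subsets.
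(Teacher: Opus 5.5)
Your proof is correct and is the same monotonicity argument the paper sketches just before the observation. Since $U' \subseteq U$ gives $\overline{U} \subseteq \overline{U'}$, any edge or path witnessing that $x,y \in U'$ are visible despite $U$ still witnesses visibility despite $U'$; the paper says this informally (``$G \setminus U'$ is less restricted than $G \setminus U$''), and you make it explicit.
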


\subsubsection{Minimal separators}

As we will see it further, PMCs of a graph are closely related to its minimal separators. We begin with a clear presentation on the different shades of separators.

\begin{definition}[Minimal separators and set $\Delta_G$]
Consider a graph $G$, two vertices $a,b \in V(G)$ and some set $S \subseteq V(G) \setminus \{a,b\}$. We say that:
\begin{itemize}
\item $S$ is an $(a,b)$-\emph{separator} if vertices $a$ and $b$ belong to two different connected components in $G \setminus S$.
\item $S$ is a \emph{minimal} $(a,b)$-separator if it is an $(a,b)$-separator and no proper subset of $S$ is an $(a,b)$-separator.
\item $S$ is a \emph{minimal separator} of $G$ if we can identify two vertices $a,b \in V(G)$ such that $S$ is a minimal $(a,b)$-separator.
\item $S$ is an \emph{inclusion-wise minimal separator} if none of its proper subset is a minimal separator.  
\end{itemize}
We denote by $\Delta_G$ the collection of minimal separators of a graph $G$.
\label{def:separators}
\end{definition}

As a first remark, two vertices $a$ and $b$ can be vertex-separated if they are not adjacent, since an $(a,b)$-separator cannot contain any of them. Observe also that two inclusion-wise minimal separators cannot be subsets of each other, by definition. However, we can have $S' \subsetneq S$ such that both $S'$ and $S$ are minimal separators: in this case, $S'$ is a minimal separator for some pair of vertices - say $(a',b')$ - and $S$ is a minimal separator of another pair $(a,b) \neq (a',b')$. Set $S$ also separates $(a',b')$ but not minimally.

Naturally, the set $\Delta_G$ can be of exponential size compared to $n = \vert V(G) \vert$: as an example, consider some \emph{melon graph} which consists in $p(n) = \frac{n-2}{2}$ disjoint paths of length $3$ between two vertices $a$ and $b$. There are exactly $2^{p(n)}$ minimal $(a,b)$-separators in such graph. Minimal separators of a graph $G$ can be enumerated in polynomial delay~\cite{BeBoCo99,Ta10}, while the list of inclusion-wise minimal separators of $G$ cannot be produced in output-polynomial time~\cite{BoDeKuLi24} unless $P = NP$.

\begin{lemma}[Enumeration of minimal separators~\cite{BeBoCo99}]
All minimal separators of $G$ can be enumerated with a delay of $O(n^3)$, providing a global running time of $O(n^3\vert \Delta_G \vert)$.
\label{le:enum-sep}
\end{lemma}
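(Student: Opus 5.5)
The statement to prove is \autoref{le:enum-sep}, the enumeration of minimal separators. The plan follows the classical approach of Berry, Bouchitté and Todinca (the ``SEARCH-INSERT'' scheme): a closure-style traversal over $\Delta_G$, seeded by a polynomial family and extended by a polynomial-time local generation rule.

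\textbf{Seeds.} For every vertex $v$, take the connected components $C$ of $G \setminus N[v]$. Each set $N(C)$ is a minimal separator, since $C$ is full for it and the component containing $v$ is full as well. This gives at most $n^2$ seeds, all computable in $O(n^3)$ time using one graph search per vertex.

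\textbf{Generation rule.} From a minimal separator $S$ and a vertex $x \in S$, consider the connected components $C$ of $G \setminus (S \cup N(x))$ and output $N(C)$. Each such $N(C)$ is again a minimal separator: one checks that the component of $G \setminus N(C)$ containing $x$ (together with part of $S$) is full for $N(C)$, and $C$ is full for $N(C)$ by construction.

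\textbf{Completeness.} The key structural step is to show that every minimal separator is reached from the seeds by repeated application of the rule. Fix a minimal $(a,b)$-separator $T$. Among the minimal separators close to $a$ generated so far, argue by induction on the size of $C(a,\cdot)$, or equivalently on the distance to $T$. One shows that if $S$ is a generated separator with $S \neq T$ and $S$ lies on the $a$-side of $T$, then some $x \in S \setminus T$ yields a component $C$ whose neighborhood is a minimal separator strictly closer to $T$. Eventually this produces $T$ itself. I expect this monotonicity/convergence argument to be the main obstacle, because one must choose $x$ and $C$ carefully (take $C = C(b, \cdot)$) so that the measure $|C(a,\cdot)|$ strictly increases while the separator stays parallel to $T$.

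\textbf{Complexity.} Store the separators in the binary search tree described in the paper, so that membership testing and insertion each cost $O(n)$. Processing one separator means at most $n$ choices of $x$. Each choice costs $O(m)$ to find the components plus their neighborhoods, hence $O(nm)$ per separator. This is refined to $O(n^3)$ by computing all components for a fixed $x$ in a single pass and bounding the total neighborhood computation per $x$ by $O(n^2)$.

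Each separator is output once, at insertion, and is processed once after being dequeued. This yields a delay of $O(n^3)$ and a total running time of $O(n^3\vert \Delta_G \vert)$.
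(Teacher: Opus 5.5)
Your algorithm is the one of \cite{BeBoCo99}, which the paper cites without proof: seeds $N(C)$ for the components $C$ of $G\setminus N[v]$, the rule $N(C)$ for the components $C$ of $G\setminus(S\cup N(x))$ with $x\in S$, and SEARCH-INSERT storage. Your soundness claim for the rule is correct, but the reason is a second full component $F$ of $S$ that does not contain $C$. Every vertex of $N(C)\cap S$ sees $F$, every vertex of $N(C)\cap N(x)$ sees $x$, and $F\cup\{x\}$ avoids $N(C)$.

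\textbf{The gap is completeness}, which is the only nontrivial part. You announce that ``one shows'' a strictly closer separator exists, but give no argument. In particular, nothing in your sketch rules out the chain getting stuck at a minimal separator $S\subsetneq T$, where no $x\in S\setminus T$ is available. Here is how to close it.

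Let $T$ be a minimal $(a,b)$-separator with full components $A=C(a,T)$ and $B=C(b,T)$. Start from the single seed $S_0=N(C(b,N[a]))$, and maintain the following invariant: $C(b,S_i)$ is full for $S_i$, $S_i\subseteq A\cup T$, $B\subseteq C(b,S_i)$, and $a\notin C(b,S_i)$. It holds for $S_0$, since $C(b,S_0)=C(b,N[a])$ and $N[a]\subseteq A\cup T$.

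\emph{Step 1: if $S_i\subseteq T$ then $S_i=T$.} Suppose $t\in T\setminus S_i$. Then $t$ has a neighbor in $B\subseteq C(b,S_i)$, so $t\in C(b,S_i)$. Also $t$ has a neighbor in the connected set $A$, which avoids $S_i$. Hence $a\in C(b,S_i)$, a contradiction.

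\emph{Step 2: one rule application while $S_i\neq T$.} By Step 1 some $x\in S_i\setminus T$ exists, and $x\in A$, so $N(x)\subseteq A\cup T$ (which misses $b$). Set $S_{i+1}=N(C(b,S_i\cup N(x)))$; this is exactly one application of your rule. Then:
\begin{itemize}
\item $C(b,S_{i+1})=C(b,S_i\cup N(x))$ is full for $S_{i+1}$ by construction;
\item $S_{i+1}\subseteq S_i\cup N(x)\subseteq A\cup T$;
\item $B\subseteq C(b,S_{i+1})$, because $B$ is connected and avoids $S_i\cup N(x)$;
\item $C(b,S_{i+1})\subsetneq C(b,S_i)$, hence $a\notin C(b,S_{i+1})$. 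The inclusion holds since $C(b,S_i\cup N(x))$ avoids $S_i$. It is strict because $x$ has a neighbor in the full component $C(b,S_i)$, and that neighbor lies in $N(x)$.
\end{itemize}

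Thus $\vert C(b,S_i)\vert$ strictly decreases while staying at least $\vert B\vert$, so the chain reaches $T$ within $n$ steps. The invariant $a\notin C(b,S_i)$, inherited from the seed close to $a$, is exactly what stops the process from ending at a wrong separator inside $T$.

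\textbf{The delay argument also needs a fix.} If you output at insertion, an arbitrarily long run of dequeued separators may generate only sets already stored, so consecutive outputs need not be $O(n^3)$ apart. Output each separator when it is dequeued instead. Consecutive outputs are then separated by a single processing step costing $O(n^3)$: $n$ choices of $x$, each needing $O(n+m)$ for the components and their neighborhoods plus at most $n$ SEARCH-INSERT calls of cost $O(n)$. No further refinement is needed, since $nm=O(n^3)$. The total bound $O(n^3\vert\Delta_G\vert)$ then follows as you say.
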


Plenty of folklore statements exist on minimal separators. We will recall some of them which will be useful for our results.

\begin{lemma}[Minimal separators and full components~\cite{BoTo02}]
For any graph $G$ with $a,b \in V(G)$ and some set $S \subseteq V(G) \setminus \{a,b\}$, $S$ is a minimal $(a,b)$-separator if and only if (iff) $a$ and $b$ belong to two different full components for $S$.

As a consequence, for each $s \in S$, there is a path from $a$ to $b$ intersecting $S$ only in $s$.
\label{lem:ab-sep-paths}
\end{lemma}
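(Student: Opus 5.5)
We prove the characterization from the minimal-separator/full-component relation in \autoref{lem:ab-sep-paths}. First the \emph{if} direction. Suppose $a$ and $b$ lie in two distinct full components $C_a = C_G(a,S)$ and $C_b = C_G(b,S)$ of $G \setminus S$. Then $S$ is clearly an $(a,b)$-separator. Take any proper subset $S' = S \setminus \{s\} \cup \ldots$; it suffices to handle $S \setminus \{s\}$ for each $s \in S$. Since $N(C_a) = S$ and $N(C_b) = S$, vertex $s$ has a neighbour $a' \in C_a$ and a neighbour $b' \in C_b$. Concatenating an $(a,a')$-path inside $C_a$, the edges $a's$ and $sb'$, and a $(b',b)$-path inside $C_b$ gives a path that meets $S$ only in $s$. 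This path survives in $G \setminus (S \setminus \{s\})$, so $S \setminus \{s\}$ is not an $(a,b)$-separator. Hence no proper subset of $S$ separates $a$ from $b$, and $S$ is minimal. The same construction is exactly the ``consequence'' part of the statement, once the equivalence is established.

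For the \emph{only if} direction, let $S$ be a minimal $(a,b)$-separator. Then $C_a = C_G(a,S)$ and $C_b = C_G(b,S)$ are distinct components, and it remains to show $N(C_a) = N(C_b) = S$. The inclusion $N(C_a) \subseteq S$ is immediate, since neighbours of a component of $G\setminus S$ outside it must lie in $S$. Suppose for contradiction that some $s \in S$ has no neighbour in $C_a$. Then in $G \setminus (S \setminus \{s\})$ the component of $a$ is still exactly $C_a$: adding $s$ back cannot merge anything into $C_a$, because $s$ is not adjacent to it. So $b \notin C_a$ and $S \setminus \{s\}$ is still an $(a,b)$-separator, contradicting minimality. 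Symmetrically $N(C_b) = S$, so both components are full. Finally, applying the path construction of the first paragraph to these two full components yields, for each $s \in S$, an $(a,b)$-path intersecting $S$ only in $s$.

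There is no real obstacle here. The only point requiring care is the observation that removing a vertex with no neighbour in $C_a$ leaves $C_a$ unchanged as a component. This holds because every vertex of $N(C_a)$ lies in $S$, so reinserting a vertex of $S$ outside $N(C_a)$ cannot touch $C_a$.
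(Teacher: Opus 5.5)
Your proof is correct: both directions and the path consequence are argued soundly, and this is the standard argument. The paper gives no proof of its own here, quoting the statement from Bouchitté and Todinca as folklore. Two cosmetic fixes are worth making. First, drop the opening sentence, which cites the very lemma you are proving. Second, replace the garbled ``$S' = S \setminus \{s\} \cup \ldots$'' by the explicit remark that every proper subset $S' \subsetneq S$ is contained in $S \setminus \{s\}$ for some $s \in S \setminus S'$, and that any superset of an $(a,b)$-separator that still avoids $a$ and $b$ is again an $(a,b)$-separator.
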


Another folklore notion, present in many papers dealing with minimal separators, is the relationship between two minimal separators of the same pair: either they are \emph{crossing} or \emph{parallel}. We need some property on parallel separators for the remainder.

\begin{definition}[Parallel separators]
Two minimal $(a,b)$-separators $S$ and $S'$ of a graph $G$ are \emph{parallel}, if either $C_G(a,S) \subseteq C_G(a,S')$ or $C_G(a,S') \subseteq C_G(a,S)$.
\label{def:parallel-sep}
\end{definition}

It is equivalent to state that two parallel $(a,b)$-separators satisfy respectively either $C_G(b,S') \subseteq C_G(b,S)$ or $C_G(b,S) \subseteq C_G(b,S')$.
The following lemma is a direct consequence of \autoref{lem:ab-sep-paths}.

\begin{lemma}
Let $S$ and $S'$ be two parallel minimal $(a,b)$-separators, $S \neq S'$, and let us say w.l.o.g. that $C_G(a,S) \subsetneq C_G(a,S')$.

For any vertex $x \in S$, there exists a path $P_x$ from $x$ to some $y \in S'$ whose internal vertices do not belong to $S \cup S' \cup \{a,b\}$.
\label{lem:parallel-sep}
\end{lemma}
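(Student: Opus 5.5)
We prove the statement of \autoref{lem:parallel-sep} with a short connectivity argument in $G \setminus S'$, using the full-component characterization of \autoref{lem:ab-sep-paths}.

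\textbf{Setup.} Fix $x \in S$. By \autoref{lem:ab-sep-paths}, $C_G(a,S)$ and $C_G(b,S)$ are full components for $S$. Hence there is a path $P$ from $x$ to $a$ whose internal vertices lie in $C_G(a,S)$. The interior of $P$ avoids $S$ and $b$, since $b \in C_G(b,S)$. We also use the equivalent form of the parallel assumption: $C_G(b,S') \subsetneq C_G(b,S)$.

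\textbf{Case $x \in S'$.} The idea is to walk from $x$ into the component of $b$ for $S'$. Since $x \in S' = N(C_G(b,S'))$, $x$ has a neighbour $z \in C_G(b,S')$. Follow a path inside the connected set $C_G(b,S') \subseteq C_G(b,S)$ from $z$ towards $b$. This path avoids $S \cup S'$. We then need to stop before reaching $b$ and land on a vertex of $S'$. That step requires care, and the simplest way around it is to treat the degenerate situation separately: if $x \in S \cap S'$, the statement is satisfied trivially by taking $y = x$ and the one-vertex path $P_x = (x)$, which has no internal vertices at all. I take this convention.

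\textbf{Case $x \in S \setminus S'$.} Here $x \notin C_G(b,S')$, because $C_G(b,S') \subseteq C_G(b,S)$ and this component is disjoint from $S$. Also $x \notin S'$, so $x$ lies in some component $K$ of $G \setminus S'$ different from $C_G(b,S')$.

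I claim $K = C_G(a,S')$. The vertex $x$ is adjacent to $C_G(a,S)$, since that component is full for $S$, and $C_G(a,S) \subsetneq C_G(a,S')$. So $x$ and $a$ are joined by a path inside $C_G(a,S) \cup \{x\}$. This set avoids $S'$: indeed $C_G(a,S) \subseteq C_G(a,S')$, which is disjoint from $S'$, and $x \notin S'$.

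Because $S'$ is a minimal $(a,b)$-separator, $C_G(a,S')$ is full for $S'$. The component $K = C_G(a,S')$ is therefore nonempty-adjacent to $S'$, and within the connected set $K$ we take a shortest path from $x$ to a vertex $w$ that has a neighbour $y \in S'$. Appending $y$ gives $P_x$.

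\textbf{Obstacle.} The main difficulty is that the internal vertices of $P_x$ must also avoid $S$ and $a$. Taking the path within $K$ only guarantees avoidance of $S'$, since $K$ contains $x$ and may contain other vertices of $S \setminus S'$ as well as $a$. To fix this, I restrict to a shortest path from $x$ to $S'$ in the graph $G \setminus \bigl((S \setminus \{x\}) \cup \{a,b\}\bigr)$, and show it exists in two steps.

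First, $x$ is adjacent to $C_G(b,S)$. We have $C_G(b,S) \supsetneq C_G(b,S')$, and $C_G(b,S)$ is connected and avoids $S$. It contains $b \in C_G(b,S')$ and, by strict inclusion, some vertex outside $C_G(b,S')$. Since $C_G(b,S')$ is a component of $G \setminus S'$, any path inside $C_G(b,S)$ between these two vertices must cross $S'$. So $C_G(b,S)$ meets $S'$.

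Second, start at $x$ and step into $C_G(b,S)$; this is possible because $C_G(b,S)$ is full for $S$. Then walk inside $C_G(b,S)$ until the first vertex of $S'$. Before that first hit, the walk avoids $S$, $S'$ and $a$. It also avoids $b$: reaching $b$ would mean entering $C_G(b,S')$, which can only happen through $S'$. That first vertex of $S'$ is the desired $y$, completing $P_x$.
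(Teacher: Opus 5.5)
Your proof is correct and follows essentially the same route as the paper. You use the one-vertex path when $x \in S \cap S'$; for $x \in S \setminus S'$ you observe that $x \in C_G(a,S')$, so a path from $x$ into $C_G(b,S)$ must meet $S'$ before reaching $b$, and its first vertex in $S'$ is $y$. The paper gets that path directly as the $x$-to-$b$ part of an $(a,b)$-path meeting $S$ only in $x$ (\autoref{lem:ab-sep-paths}). This makes two parts of your write-up unnecessary: the extra step proving $C_G(b,S) \cap S' \neq \emptyset$ (confusingly headed ``$x$ is adjacent to $C_G(b,S)$''), and the abandoned attempt inside $K$.
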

\begin{proof}
We distinguish two cases. If $x \in S \cap S'$, then trivially there exists a one-vertex path $P_x = \{x\}$ without internal vertices and whose arrival (identical to departure, $y = x$) belongs to $S'$. Thus, we focus on the case $x \in S \setminus S'$.

According to \autoref{lem:ab-sep-paths}, there exists a path from $a$ to $b$ passing through $x$, hence we can consider its subpath from $x$ to $b$. We denote by $P$ this subpath. From \autoref{lem:ab-sep-paths} again, $C(a,S)$ is full component for $S$, so $x \in N(C(a,S))$. But $C(a,S) \subseteq C(a,S')$ and $x \in S'$, so $x \in C(a,S')$. Since $x \in C(a,S')$ and $S'$ is an $(a,b)$-separator, path $P$ necessarily intersects $S'$. Let $y$ be the first vertex of $P$ intersecting $S'$: we fix $P_x$ as the subpath of $P$ starting at $x$ and arriving at $y$. As the only vertex of $P$ intersecting $S$ is $x$ and since $y$ is the first vertex of $S'$ in $P$, the internal vertices of $P_x$ do not belong to $S \cup S'$. Moreover, as $P_x$ is a subpath of some $(a,b)$-path, its internal vertices cannot contain $a$ or $b$.
\end{proof}

\subsubsection{Close separators}

A well-known concept on minimal separators is recalled here: the unicity of a minimal $(a,b)$-separator contained in $N_G(a)$, for two vertices $a,b$ not adjacent. Such unique separator was called the \emph{close separator} by Takata in~\cite{Ta10} and comes from a more general definition. An example is illustrated in \autoref{fig:closesep}.

\begin{lemma}[Close separator~\cite{Ta10}]\label{lem:close-sep}
Let $A \subsetneq V$ be a connected set (\emph{i.e.} such that $G[A]$ is connected) where $a \in A$, $b \notin N[A]$. The \textit{close separator} $S$ of $(A,b)$ is the unique minimal $(a,b)$-separator included in $N(A)$, denoted $\sepag{G}{A}{b}$. Formally, $$\sepag{G}{A}{b} = N\left(C(b,N(A))\right).$$
The close separator $\sepag{G}{A}{b}$ can be computed in linear time $O(m)$. When $G$ is clear from the context, we might use notation $\sepa{A}{b}$ instead of $\sepag{G}{A}{b}$.
\end{lemma}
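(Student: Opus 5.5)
The statement has three parts: $\sepag{G}{A}{b} = N(C(b,N(A)))$ is a minimal $(a,b)$-separator contained in $N(A)$; it is the only one; and it can be computed in linear time. I would prove them in that order, relying mainly on the full-component characterization of \autoref{lem:ab-sep-paths}.

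\textbf{Existence.} First I check that $N(A)$ is an $(a,b)$-separator. Since $b \notin N[A]$, any $(a,b)$-path starts in $A$ and ends outside $N[A]$. So it must leave $A$, and the first vertex it reaches outside $A$ lies in $N(A)$. Hence $C = C(b,N(A))$ is well defined, and I set $S = N(C)$, so that $S \subseteq N(A)$.

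Next I check that $C$ is a connected component of $G \setminus S$. It is connected and $N(C) = S$, so nothing outside $C \cup S$ is attached to it. By construction $N(C) = S$, so $C$ is a full component for $S$.

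For the $a$ side, note that $A \cap N(A) = \emptyset$, so $A \cap S = \emptyset$. Since $A$ is connected, it lies in a single component $C_a = C(a,S)$ of $G \setminus S$. This component is different from $C$: every path from $A$ to $b$ crosses $N(A)$, and $C$ avoids $N(A)$, so $C \cap N[A] = \emptyset$. Every $s \in S \subseteq N(A)$ is adjacent to some vertex of $A \subseteq C_a$, so $S \subseteq N(C_a)$. Together with $N(C_a) \subseteq S$, this makes $C_a$ a full component for $S$.

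So $a$ and $b$ lie in two distinct full components of $G \setminus S$, and \autoref{lem:ab-sep-paths} gives that $S$ is a minimal $(a,b)$-separator.

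\textbf{Uniqueness.} Let $S' \subseteq N(A)$ be any minimal $(a,b)$-separator. I will show $S \subseteq S'$; minimality of $S'$, together with $S$ being an $(a,b)$-separator, then forces $S = S'$.

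Since $C$ is connected, contains $b$, and avoids $N(A) \supseteq S'$, we get $C \subseteq C(b,S')$. Likewise $A$ avoids $S'$, so $A \subseteq C(a,S')$.

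Now take $s \in S \setminus S'$ and argue by contradiction. The vertex $s$ is adjacent to $C \subseteq C(b,S')$, and as a vertex of $N(A)$ it is also adjacent to $A \subseteq C(a,S')$. Since $s \notin S'$, it would connect $C(a,S')$ and $C(b,S')$ in $G \setminus S'$, contradicting that $S'$ separates $a$ from $b$. Hence $S \subseteq S'$.

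\textbf{Complexity.} The computation has three linear-time steps:
\begin{enumerate}
\item compute $N(A)$ by scanning the adjacency lists of $A$;
\item run a BFS from $b$ in $G \setminus N(A)$ to obtain $C$;
\item collect $N(C)$ by scanning the edges leaving $C$.
\end{enumerate}
Each step takes $O(m)$ time.

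The only delicate point is the uniqueness argument, in particular making sure that $A \cap S' = \emptyset$ and that $C$ stays inside $C(b,S')$. Both follow directly from $S' \subseteq N(A)$.
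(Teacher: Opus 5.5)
Your proof is correct and complete. For existence, you show that $C(b,N(A))$ and the component of $G \setminus S$ containing $A$ are two distinct full components for $S$, and then apply \autoref{lem:ab-sep-paths}. For uniqueness, you show that any vertex of $S \setminus S'$ would join $C(a,S')$ and $C(b,S')$.

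The paper gives no proof of its own here; the lemma is quoted from Takata~\cite{Ta10}, so there is no in-paper argument to compare against. Your uniqueness argument actually proves slightly more than stated: every $(a,b)$-separator contained in $N(A)$, minimal or not, contains $N(C(b,N(A)))$. It uses the same kind of reasoning the paper later uses in the proofs of \autoref{lem:parallel-close} and \autoref{lem:parallel-close-x}: a vertex adjacent to the $a$-side cannot lie in the $b$-side component.

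One cosmetic point: your three steps cost $O(n+m)$. This is $O(m)$ only because all graphs in the paper are assumed connected.
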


\begin{figure}[h]
\centering
\scalebox{0.9}{\input{tikz/a_b_close_sep}}
\caption{Schema of a close separator $\sepa{A}{b}$ in blue. The red area highlights $N(A)$.}
\label{fig:closesep}
\end{figure}

Given a pair $(a,b)$ of nonadjacent vertices, one close separator is particularly interesting: the one for which $A = \{a\}$.

\begin{definition}[Close $(a,b)$-separator~\cite{Ta10}]
Given a graph $G$ with two vertices $a,b$ such that $(a,b) \notin E(G)$, the \emph{close} $(a,b)$-\emph{separator} is exactly $\sepag{G}{\{a\}}{b}$.
\end{definition}

The close $(a,b)$-separator is thus the unique minimal $(a,b)$-separator included into $N(a)$. We need a property dealing with the close $(a,b)$-separator and its relationship with other minimal $(a,b)$-separators.

\begin{lemma}[Close $(a,b)$-separators and parallelism]
Let $(a,b) \notin E(G)$, $S^*$ the close $(a,b)$-separator, and $S$ be any other minimal $(a,b)$-separator: $S \neq S^*$. Then, $S^*$ and $S$ are parallel and $C(a,S^*) \subsetneq C(a,S)$.
\label{lem:parallel-close}
\end{lemma}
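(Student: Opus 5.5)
The plan is to derive the statement from the explicit description of the close separator in \autoref{lem:close-sep}: $S^* = N(C(b,N(a)))$. Put $B^* = C(b,N(a))$. It is the connected component of $G\setminus N(a)$ containing $b$. Then $S^* = N(B^*) \subseteq N(a)$, and $B^* = C(b,S^*)$, since $B^*$ is connected and avoids $S^*$ while its whole neighbourhood lies in $S^*$.

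The first step is to show $C(b,S) \subseteq C(b,S^*)$ for any other minimal $(a,b)$-separator $S$. Take $v \in C(b,S)$. By connectivity there is a path $P$ from $b$ to $v$ inside $C(b,S)$. Suppose some vertex of $P$ lies in $N(a)$. Then $a$ is adjacent to a vertex of $C(b,S)$. Since $a \notin S$, this puts $a$ in $C(b,S)$, which is impossible because $S$ separates $a$ from $b$. Therefore $P$ avoids $N(a)$, so $P \subseteq C(b,N(a)) = B^* = C(b,S^*)$. By the equivalent formulation noted after \autoref{def:parallel-sep}, this inclusion on the $b$-side means $S$ and $S^*$ are parallel, with $C(a,S^*) \subseteq C(a,S)$.

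The second step, which I expect to be the main subtlety, is strictness. Suppose $C(a,S^*) = C(a,S)$. By \autoref{lem:ab-sep-paths}, $C(a,S)$ is a full component for $S$, and $C(a,S^*)$ is a full component for $S^*$. Hence $S = N(C(a,S)) = N(C(a,S^*)) = S^*$, contradicting $S \neq S^*$. So the inclusion is proper, as required.

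The only things to double-check are the boundary cases. First, $a \notin N(a)$, and $b \notin N[a]$ because $(a,b)\notin E(G)$, so both components are well defined. Second, if $v \in S\cap N(a)$, then $v$ is not in $C(b,S)$, so it never appears in the path argument. No other results from the paper are needed.
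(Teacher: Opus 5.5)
Your proof is correct and follows the paper's route. Both arguments show $C(b,S) \subseteq C(b,S^*)$ because $C(b,S)$ cannot meet $N(a) \supseteq S^*$, and both then pass to the $a$-side via the remark after \autoref{def:parallel-sep}. The only difference is the strictness step: the paper exhibits a vertex of $S \setminus S^*$ inside $C(b,S^*)$, while you argue on the $a$-side that $C(a,S^*) = C(a,S)$ would force $S = N(C(a,S)) = N(C(a,S^*)) = S^*$, which is slightly cleaner because it avoids carrying a strict inclusion across sides.
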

\begin{proof}
Our first goal is to show that any vertex of $S^*$ belongs to either $S$ or $C(a,S)$. Assume by way of contradiction that some $v \in S^*$ belongs to $C(b,S)$. As $S^* \subseteq N(a)$, then $(a,v) \in E(G)$: this yields a contradiction as $v \in C(b,S)$ and $a$ must be separated from $C(b,S)$ by $S$.

Consequently, $S^* \subseteq S \cup C(a,S)$. Since $C(b,S)$ induces a connected subgraph of $G$ and $S^* \cap C(b,S) = \emptyset$, then $C(b,S) \subseteq C(b,S^*)$. As $S^*$ and $S$ are both minimal $(a,b)$-separators, one cannot be included into the other. As $S \setminus S^*$ is nonempty, it contains a vertex of $C(b,S^*)$. Hence, $C(b,S) \subsetneq C(b,S^*)$ and $C(a,S^*) \subsetneq C(a,S)$ as announced.
\end{proof}

\begin{definition}
Let $G$ be a graph with two vertices $a,b$ such that $(a,b) \notin E(G)$, and $S \subseteq V(G) \setminus \{a,b\}$. We say that $S$ \emph{separates closely} the pair $(a,b)$ if $S$ is either the $(a,b)$-close separator or the $(b,a)$-close separator in $G$.
\label{def:sep-close}
\end{definition}

Directly inspired by the close separators defined by Takata in~\cite{Ta10}, we define here a generalization of these separators.

\begin{definition}[$X$-Close $(a,b)$-separator]
    Consider a graph $G$, a pair $a,b \in V(G)$ and a set $X \subseteq V(G)$ such that $a \in X$, and $X \setminus \{a\}$ is an $(a,b)$-separator. Set $S \subseteq X \setminus \{a\}$ is the $X$-close $(a,b)$-separator if:
    \begin{center}
    $S = \sepag{G}{A}{b}$ with $A = C_G(a,X \setminus \{a\})$.
    \end{center}
    By \autoref{lem:close-sep}, the $X$-close $(a,b)$-separator is a minimal $(a,b)$-separator, is unique and can be computed in $O(m)$ as $A$ can be deducted in $O(m)$. Moreover, $S \subsetneq N(A) = R_G(a,X)\setminus \{a\}$.
    \label{def:x-close}
\end{definition}

Concretely, the $X$-close $(a,b)$-separator is the unique minimal $(a,b)$-separator included into $X$ which is ``as close as possible'' from $a$, or said differently with the smallest connected component containing $a$ in $G \setminus S$. The close $(a,b)$-separator - without any mention of $X$ - is the minimal $(a,b)$-separator of the entire graph as close as possible from $a$. Indeed, it is  also the $X^*$-close $(a,b)$-separator, with $X^* = V(G)$. In other words, the $X$-close $(a,b)$-separator can be seen as the extension of the close $(a,b)$-separator to a more restricted area of the graph. 

As for \autoref{lem:parallel-close}, the $X$-close $(a,b)$-separator is parallel to any minimal $(a,b)$-separator included into $X$. 

\begin{lemma}[$X$-Close $(a,b)$-separators and parallelism]
Let $X \subseteq V(G)$, $(a,b) \notin E(G)$, $a \in X$, and $S^*$ be the $X$-close $(a,b)$-separator. Consider any other minimal $(a,b)$-separator $S$ included into $X$: $S \subseteq X$ and $S \neq S^*$. Then, $S^*$ and $S$ are parallel and $C_G(a,S^*) \subsetneq C_G(a,S)$.
\label{lem:parallel-close-x}
\end{lemma}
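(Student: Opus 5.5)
The plan mirrors the proof of \autoref{lem:parallel-close}, with $N(a)$ replaced by $N(A)$, where $A = C_G(a,X\setminus\{a\})$. By \autoref{def:x-close}, $S^* = \sepag{G}{A}{b} = N(C_G(b,N(A)))$ and $S^* \subseteq N(A)$. The set $A$ is connected, contains $a$, and avoids $X\setminus\{a\}$.

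\textbf{Step 1: $A \subseteq C_G(a,S)$.} Since $S \subseteq X\setminus\{a\}$ (recall that $a \notin S$ because $S$ is an $(a,b)$-separator), we have $A\cap S=\emptyset$. As $A$ is connected and contains $a$, it lies entirely in $C_G(a,S)$.

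\textbf{Step 2: $S^* \subseteq S \cup C_G(a,S)$.} Take any $v\in S^*$. Because $S^* \subseteq N(A)$, the vertex $v$ is adjacent to some vertex of $A \subseteq C_G(a,S)$. So either $v\in S$, or $v$ lies in the same component as $a$. In particular, no vertex of $S^*$ lies in $C_G(b,S)$.

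\textbf{Step 3: $C_G(b,S)\subseteq C_G(b,S^*)$.} The set $C_G(b,S)$ is connected, contains $b$, and is disjoint from $S^*$ by Step 2. Hence it is contained in $C_G(b,S^*)$.

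\textbf{Step 4: strictness and the $a$-side.} Both $S$ and $S^*$ are minimal $(a,b)$-separators and $S\neq S^*$, so neither contains the other. Pick $s\in S\setminus S^*$. By \autoref{lem:ab-sep-paths}, $s$ has a neighbour in $C_G(b,S)\subseteq C_G(b,S^*)$. Since $s\notin S^*$, it follows that $s\in C_G(b,S^*)$, whereas $s\notin C_G(b,S)$. This gives $C_G(b,S)\subsetneq C_G(b,S^*)$.

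To turn this into the statement about the $a$-side, we use full components. The sets $C_G(a,S^*)$ and $C_G(b,S^*)$ are distinct full components of $S^*$. Moreover, $C_G(a,S^*)$ avoids $S$: a vertex of $S$ inside $C_G(a,S^*)$ would not lie in $S^*$, hence, by Step 2, it would not lie in $C_G(b,S)\cup S$ in a way compatible with $S$ separating. The cleanest route here is to argue symmetrically to Step 3. Every vertex of $S$ lies in $S^*\cup C_G(b,S^*)$: any $s\in S\setminus S^*$ has a neighbour in $C_G(b,S)\subseteq C_G(b,S^*)$, so $s$ is in that component. Therefore $C_G(a,S^*)$ is connected, contains $a$, and avoids $S$, which gives $C_G(a,S^*)\subseteq C_G(a,S)$.

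Strictness of this inclusion follows from any $v\in S^*\setminus S$. Such a $v$ lies in $C_G(a,S)$ by Step 2, but not in $C_G(a,S^*)$. This also proves the parallelism claim of \autoref{def:parallel-sep}.

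The main subtlety is Step 2: $S^*$ lies in $N(A)$ and not in $N(a)$. The argument needs $A$ to be disjoint from $S$, which is exactly why the hypothesis $S\subseteq X$ is required.
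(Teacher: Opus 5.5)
Your proof is correct and follows the paper's route. The paper obtains $S^*\subseteq S\cup C_G(a,S)$ from $S^*\subseteq R_G(a,X)$ and $S\subseteq X$; this is the same observation as your $S^*\subseteq N(A)$ with $A\subseteq C_G(a,S)$, since $N(A)=R_G(a,X)\setminus\{a\}$. It then concludes exactly as in \autoref{lem:parallel-close}, and your explicit derivation of the $a$-side inclusion fills in a step the paper leaves implicit. Delete the garbled sentence beginning ``Moreover, $C_G(a,S^*)$ avoids $S$'', since the argument right after it is the valid one.
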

\begin{proof}
The arguments are very similar to the ones used for \autoref{lem:parallel-close}. Assume by way of contradiction that some $v \in S^*$ belongs to $C_G(b,S)$. As $S^*$ is $X$-close for $(a,b)$, we have $S^* \subseteq R_G(a,X)$. Vertex $a$ must be separated from $v \in C_G(b,S)$ by $S$, but $v \in R_G(a,X)$: $v$ is reachable from $a$ despite $X$. This contradicts the fact that $S \subseteq X$ separates pair $(a,v)$. Therefore, $S^* \subseteq S \cup C_G(a,S)$ and we conclude as in the proof of \autoref{lem:parallel-close}.
\end{proof}

More generally, we say that an $X$-close separator is the $X$-close $(a,b)$-separator for some pair $(a,b)$. In the remainder, $X$-close separators stand as a key tool to characterize PMC supersets (\autoref{subsec:charac-pmc}).

\subsection{A story of potential maximal cliques}

Initially, PMCs were defined as a key tool for designing a divide-and-conquer algorithm computing the treewidth of an input graph~\cite{BoTo98}. Their original definition comes with the notion of \emph{triangulation}. 
A graph $G$ is \textit{chordal} if every cycle of length at least $4$ has a chord. A triangulation represents a way to add edges to some graph $G$ in order to make it chordal. Treewidth and triangulations are very close concepts: computing the treewidth of a graph amounts to finding a triangulation with the smallest clique size.

\begin{definition}[Triangulation \cite{He05}]
Let $G = (V,E)$ be any graph. We say that a graph $H$ is a \emph{triangulation} of $G$ if both $H$ is chordal and is obtained by adding edges to $G$, {\em i.e.} $H = (V, E \cup F)$, where $E \cap F = \emptyset$.

A \emph{minimal triangulation} $H$ of $G$ is a triangulation of $G$ for which $H \setminus \{e\}$ is not chordal, for any $e \in F$.
\label{def:triangulation}
\end{definition}

PMCs are exactly the maximal cliques of the minimal triangulations of $G$.

\begin{definition}[Potential Maximal Clique (PMC)]
A set $\Omega \subseteq V(G)$ is a \emph{potential maximal clique} (PMC) of $G$ if $\Omega$ is an inclusion-wise maximal clique of some minimal triangulation of $G$.
\label{def:pmc}
\end{definition}

\begin{theorem}[\cite{BoTo02}\label{thm:pmc_charac}]
	Let $G$ be a graph and $\Omega \subseteq V(G)$. Set $\Omega$ is a PMC iff it satisfies the two following properties:
	\begin{itemize}
		\item $\Omega$ is a shy set.
		\item $\Omega$ has the reachability property.
	\end{itemize}
\end{theorem}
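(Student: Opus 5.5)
The approach is to go through minimal triangulations and the theory of minimal separators of chordal graphs. The one external result we rely on is the Parra--Scheffler correspondence: $H$ is a minimal triangulation of $G$ iff $H$ is obtained from $G$ by completing into cliques the sets of an inclusion-maximal family of pairwise parallel (non-crossing) minimal separators of $G$. Moreover, in that case the minimal separators of $H$ are exactly the members of that family.

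\textbf{($\Rightarrow$).} Let $\Omega$ be a maximal clique of a minimal triangulation $H$ of $G$.

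The first step is to show that $G\setminus\Omega$ and $H\setminus\Omega$ have the same connected components, and that $N_G(C)=N_H(C)$ for each component $C$. Every fill edge lies inside a minimal separator $S$ of $G$ that is saturated in $H$. If a fill edge joined two different components of $G\setminus\Omega$, or joined $C$ to a vertex of $\Omega\setminus N_G(C)$, then $S$ would meet one component $C'$ of $G\setminus S$ and also vertices outside $C'$, which contradicts $S$ being a separator with full components. We still need to check that $S$ does not cross the separators $N_G(C)$; here parallelism is what is used.

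For shyness: in the chordal graph $H$, every component $D$ of $H\setminus\Omega$ has $N_H(D)$ equal to a minimal separator of $H$. That separator is a clique, and it is strictly contained in the maximal clique $\Omega$, since otherwise a vertex of $D$ adjacent to all of $N_H(D)$ would extend $\Omega$ (standard simplicial/clique-tree argument). By the first step, $N_G(C)\subsetneq\Omega$ for every component $C$ of $G\setminus\Omega$.

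For reachability: take $x,y\in\Omega$ nonadjacent in $G$. Then $xy$ is a fill edge, so it lies in a saturated minimal separator $S$ of $G$. A full component $C'$ of $G\setminus S$ avoiding $\Omega\setminus S$ exists, because $\Omega\setminus S$ lies in a single component of $H\setminus S$, which equals one component of $G\setminus S$. Since $C'\cap\Omega=\emptyset$, $C'$ is contained in a component $C$ of $G\setminus\Omega$, and $x,y\in N(C')\subseteq N(C)$. This gives a path from $x$ to $y$ with internal vertices outside $\Omega$.

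\textbf{($\Leftarrow$).} Assume $\Omega$ is shy and has the reachability property.

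First we show that for each component $C$ of $G\setminus\Omega$, $S_C=N(C)$ is a minimal separator. It is a proper subset of $\Omega$ by shyness, and $C$ is full for it. The set $\Omega\setminus S_C$ is nonempty, and by reachability every vertex of $S_C$ sees every vertex of $\Omega\setminus S_C$ through components other than $C$. Hence the component of $G\setminus S_C$ containing $\Omega\setminus S_C$ is also full for $S_C$.

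Next we show that the separators $S_C$ are pairwise parallel: $S_{C'}$ lies in $C'\cup\Omega$, which avoids $C$. Extend $\{S_C\}$ to a maximal parallel family and saturate it to obtain a minimal triangulation $H$. Every pair of $\Omega$ is either an edge or lies in some $N(C)$ by reachability, so it becomes adjacent in $H$, and $\Omega$ is a clique of $H$. For maximality, a vertex $v\in C$ adjacent in $H$ to all of $\Omega$ would need fill edges to $\Omega\setminus N(C)$. But the separator $S_C\subseteq\Omega$ separates $C$ from $\Omega\setminus S_C$, and saturating separators parallel to $S_C$ cannot add such an edge. Hence $\Omega$ is a maximal clique of $H$.

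The main obstacle is the first step of ($\Rightarrow$), the claim that fill edges of a minimal triangulation never leave the components of $G\setminus\Omega$. This is where the parallelism structure of minimal triangulations has to be used carefully. The first thing to try is the rule ``each fill edge is the unique chord of a $4$-cycle'' combined with the clique-tree structure of $H$.
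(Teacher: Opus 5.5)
The paper gives no proof of this theorem; it is imported from Bouchitt\'e and Todinca \cite{BoTo02}. Your route (Parra--Scheffler, then saturating the family $\{N(C)\}$ and extending it to a maximal non-crossing family) is essentially their original argument. Your proof is correct once you notice that the step you flag as the main obstacle is not needed.

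As written, your shyness argument invokes the unproven first step, but you can bypass it. Since $E(G)\subseteq E(H)$, every component $C$ of $G\setminus\Omega$ lies inside some component $D$ of $H\setminus\Omega$, so $N_G(C)\subseteq N_H(D)$. Moreover $N_H(D)\subsetneq\Omega$ by your clique-tree argument. To make that argument precise: if $N_H(D)=\Omega$, then the subtrees of a clique tree of $H$ for the vertices of $\Omega$, together with the subtree spanned by the connected set $D$, pairwise intersect. By the Helly property some bag then contains $\Omega$ and a vertex of $D$, contradicting maximality of $\Omega$. You do not need that $N_H(D)$ is a minimal separator of $H$.

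Your reachability argument never uses the first step. It does not even need $\Omega\setminus S\neq\emptyset$, since any full component of $G\setminus S$ disjoint from $\Omega$ works. So you can delete the first step entirely.

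A few points to tighten:
\begin{itemize}
\item You use ``parallel'' in the Parra--Scheffler sense: $S'$ meets at most one component of $G\setminus S$. This is not the paper's Definition~\ref{def:parallel-sep}, which concerns two minimal $(a,b)$-separators of a fixed pair, so state your definition explicitly.
\item With that definition, the equality of the components of $G\setminus S$ and $H\setminus S$ for $S\in\mathcal{F}$ is immediate. Every fill edge lies in some $S'\in\mathcal{F}$, and $S'\setminus S$ lies in a single component of $G\setminus S$. The same observation is exactly what makes your maximality argument in ($\Leftarrow$) go through.
\item In ($\Leftarrow$), $S_C$ and $S_{C'}$ are non-crossing not merely because $S_{C'}$ avoids $C$. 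You need $S_{C'}\setminus S_C\subseteq\Omega\setminus S_C$ to lie in a single component of $G\setminus S_C$. This holds because vertices of $\Omega\setminus S_C$ see each other through components of $G\setminus\Omega$ other than $C$, hence avoiding $S_C$. It is the same fact you implicitly use when you speak of ``the'' component of $G\setminus S_C$ containing $\Omega\setminus S_C$.
\end{itemize}
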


From now on, this latter characterization will be the main method used to identify PMCs. Thanks to it, we can check whether some set is a PMC or not in polynomial time.

\begin{lemma}[PMC identification \cite{BoTo02}]
	Given a set $\Omega \subseteq V(G)$, we can decide in $O(nm)$ whether $\Omega$ is a PMC of $G$ or not.
	\label{le:pmc-check}
\end{lemma}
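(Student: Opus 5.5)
The statement to prove is \autoref{le:pmc-check}: deciding in $O(nm)$ whether a given $\Omega$ is a PMC. I would rely entirely on the characterization of \autoref{thm:pmc_charac}. By that result, $\Omega$ is a PMC if and only if two things hold: $\Omega$ is a shy set (no component of $G \setminus \Omega$ is full), and $\Omega$ has the reachability property. So the algorithm checks these two properties separately, each within $O(nm)$.

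\textbf{Step 1: shyness.} Compute the connected components $C_1,\ldots,C_k$ of $G \setminus \Omega$ with one graph search, in $O(n+m)$. In the same pass, compute each neighborhood $N(C_j) \subseteq \Omega$ by scanning the edges leaving each component. Each edge is inspected a constant number of times, so the total over all components is $O(m)$. Then test whether $\vert N(C_j) \vert = \vert \Omega \vert$ for some $j$; since $N(C_j) \subseteq \Omega$, this equality is exactly fullness. If a full component exists, $\Omega$ is not a PMC.

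\textbf{Step 2: reachability.} The goal is to decide whether $\torso(G,\Omega)$ is complete on $\Omega$. The observation is that $x,y \in \Omega$ see each other despite $\Omega$ exactly when either $(x,y) \in E$ or both $x$ and $y$ lie in $N(C_j)$ for a common component $C_j$. This is because a path whose internal vertices lie in $\overline{\Omega}$ stays inside one component of $G \setminus \Omega$.

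The check proceeds vertex by vertex. For each $x \in \Omega$:
\begin{enumerate}
\item Build a boolean array $R_x$ indexed by $\Omega$.
\item Mark in $R_x$ all neighbours of $x$; this costs $O(\deg(x))$.
\item For every component $C_j$ with $x \in N(C_j)$, mark all of $N(C_j)$.
\item Finally, verify that every vertex of $\Omega$ is marked.
\end{enumerate}

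\textbf{Cost bound.} The marking in item 3 costs at most $\sum_j \vert N(C_j) \vert \le m$, because each element of a neighborhood is witnessed by a distinct edge into $C_j$. Hence processing one $x$ costs $O(n+m)$, and over the $\vert \Omega \vert \le n$ choices of $x$ the total is $O(n(n+m)) = O(nm)$, using connectivity ($m \ge n-1$).

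\textbf{Main obstacle.} The only delicate point is keeping Step 2 within $O(nm)$ rather than running a separate graph search per pair. Running one search per vertex $x \in \Omega$ in $G[\overline{\Omega} \cup \{x\}]$ would also give $O(n(n+m))$. Either way, the component-neighborhood precomputation from Step 1 makes the per-vertex cost linear, which yields the claimed bound.
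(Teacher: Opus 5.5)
Your argument is correct and follows the same route the paper relies on: the paper cites \cite{BoTo02} right after \autoref{thm:pmc_charac} and gives no proof of its own. You check shyness via the neighborhoods of the components of $G \setminus \Omega$, then check that $\torso(G,\Omega)$ is complete; your per-vertex marking amounts to completing each $N(C_j)$ into a clique, at total cost $\sum_j \vert N(C_j)\vert^2 \le n\sum_j \vert N(C_j)\vert \le nm$. One trivial fix: in item 4, check only $\Omega \setminus \{x\}$ (or pre-mark $x$), since $x$ itself is never marked when it has no neighbour outside $\Omega$, and you would then wrongly reject.
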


In the literature, many properties of PMCs were identified. Most of them concern the case where $G$ is not a complete graph. Indeed, if $G$ is a clique, then it admits a unique PMC $V(G)$, as any other vertex set is not shy. For this reason, we focus in this paper only on graphs $G$ which are not complete, to skip this trivial but specific case. On other graphs, PMCs and minimal separators admit strong connections.

A first property of PMCs is that none of their proper subset is a PMC.

\begin{lemma}[\cite{BoTo02}]
	Any PMC $\Omega'$ of $G$ cannot be strictly included into another PMC $\Omega$ of $G$. 
\end{lemma}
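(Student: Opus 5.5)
We argue by contradiction, relying on the characterization of \autoref{thm:pmc_charac}. Suppose $\Omega' \subsetneq \Omega$ with both $\Omega'$ and $\Omega$ PMCs of $G$. Since $\Omega'$ is a PMC, it is a shy set. The plan is to use the reachability property of $\Omega$, which passes to its subsets by \autoref{obs:reachable}, to build a full component for $\Omega'$, contradicting the shyness of $\Omega'$. (As the paper assumes, $G$ is connected and not complete.)

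First, fix a vertex $v \in \Omega \setminus \Omega'$ and let $C$ be the connected component of $G \setminus \Omega'$ containing $v$. We show that $N(C) = \Omega'$, i.e. every $w \in \Omega'$ has a neighbor in $C$. Since $\Omega$ has the reachability property, $w \in R_G(v,\Omega)$. Hence either $(v,w) \in E(G)$, or there is a $(v,w)$-path whose internal vertices all lie outside $\Omega$, and therefore outside $\Omega'$. In both cases the path from $v$ to the vertex just before $w$ stays in $G \setminus \Omega'$, so it lies inside $C$. Thus $w$ is adjacent to a vertex of $C$. This gives $\Omega' \subseteq N(C)$, and $N(C) \subseteq \Omega'$ holds trivially because $C$ is a component of $G \setminus \Omega'$.

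Therefore $C$ is a full component for $\Omega'$. This contradicts the fact that $\Omega'$ is shy (\autoref{thm:pmc_charac}), so no PMC is strictly contained in another.

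The only delicate point is handling the edge case in which $v$ is directly adjacent to $w$ instead of being joined through outside vertices. This is dealt with uniformly above, since we only use the vertex preceding $w$ on the path. No other obstacle is expected.
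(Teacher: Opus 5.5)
Your proof is correct and follows the paper's argument: the paper also notes that, by the reachability property of $\Omega$, the component of $G \setminus \Omega'$ containing a removed vertex $v \in \Omega \setminus \Omega'$ is a full component for $\Omega'$, so $\Omega'$ is not shy and hence not a PMC. As a minor remark, your mention that reachability passes to subsets is never used, since the argument only relies on the reachability of $\Omega$ itself.
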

This comes from the fact that, as PMCs have the reachability property, any proper subset admits at least one full component - the one which contains a removed vertex. Observe that this statement also holds for minimal $(a,b)$-separators with fixed $(a,b)$, but not for minimal separators in a broad sense. 

Second, an interesting fact is that the neighborhood of any connected component of $G \setminus \Omega$, for any PMC $\Omega$, is a minimal separator of $G$.

\begin{lemma}[\cite{BoTo02}]
	For any PMC $\Omega$ of $G$, the minimal separators of $G$ which are included into $\Omega$ are exactly the sets $N_G(C)$, for any connected component $C$ of $G \setminus \Omega$. Those sets are called \emph{neighbor separators}.
	\label{le:neighbor-sep}
\end{lemma}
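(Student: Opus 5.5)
The plan is to prove the two inclusions separately, using only the characterization of PMCs in \autoref{thm:pmc_charac} (shyness plus the reachability property) and the characterization of minimal separators by full components in \autoref{lem:ab-sep-paths}. Both directions rest on one observation. If $u,v \in \Omega$ see each other despite $\Omega$, then either $(u,v)\in E(G)$ or there is a $(u,v)$-path whose internal vertices avoid $\Omega$. In either case, for any $S \subseteq \Omega$ with $u,v\notin S$, the path also avoids $S$.

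\emph{Neighbor sets are minimal separators.} Let $C$ be a connected component of $G\setminus\Omega$ and put $S = N_G(C)$. Then $S \subseteq \Omega$, and $S \neq \Omega$ because $\Omega$ is shy, so some $x \in \Omega\setminus S$ exists.

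\begin{itemize}
\item First, $C$ is a connected component of $G\setminus S$: it is connected, disjoint from $S$, and all its neighbors lie in $S$. It is full for $S$ by definition.
\item Next, let $D$ be the component of $G\setminus S$ containing $x$; it is distinct from $C$ because $x\notin C$.
\item For each $s\in S$, reachability gives an edge $(x,s)$ or an $(x,s)$-path with internal vertices outside $\Omega$, hence outside $S$. Such a path stays in $D$ until it reaches $s$, so $s \in N(D)$. Thus $D$ is a second full component for $S$.
\end{itemize}

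Choosing $a\in C$ and $b\in D$, \autoref{lem:ab-sep-paths} shows that $S$ is a minimal $(a,b)$-separator.

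\emph{Minimal separators inside $\Omega$ are neighbor sets.} Let $S\subseteq\Omega$ be a minimal separator. By \autoref{lem:ab-sep-paths}, $S$ has at least two full components in $G\setminus S$.

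\begin{itemize}
\item If $\Omega = S$, such a full component would be full for $\Omega$, contradicting shyness. Hence $\Omega\setminus S\neq\emptyset$.
\item By the observation above, any two vertices of $\Omega\setminus S$ are joined by an edge or by a path avoiding $S$. So $\Omega\setminus S$ lies inside a single component $D$ of $G\setminus S$; this is essentially \autoref{lem:PMC_in_one_FC_of_S}, and I would reprove it this way.
\item Pick a full component $C'\neq D$ of $S$. Then $C'\cap\Omega=\emptyset$, since $\Omega\subseteq S\cup D$.
\item So $C'$ is a connected subset of $G\setminus\Omega$, and its neighbors all lie in $S\subseteq\Omega$. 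Hence $C'$ is a connected component of $G\setminus\Omega$, and $N_G(C') = S$.
\end{itemize}

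I expect the main obstacle to be this second direction. The difficulty is making sure the full component chosen for $S$ avoids $\Omega$ entirely, and that is exactly what the single-component property of $\Omega\setminus S$ provides. The one degenerate case, $\Omega=S$, is excluded by shyness.
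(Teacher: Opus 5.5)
Your proof is correct. Both inclusions follow as you argue from shyness, the reachability property and the full-component characterization of minimal separators (\autoref{lem:ab-sep-paths}): shyness supplies a vertex $x \in \Omega \setminus N_G(C)$ in the first direction and rules out $S = \Omega$ in the second. The paper gives no proof of its own and simply cites Bouchitt\'e and Todinca~\cite{BoTo02}; your argument is the standard one for this fact.
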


Many efforts in the study of PMCs are dedicated to the structural features of these neighbor separators $S \subseteq \Omega$. We pursue with a property of sets $\Omega \setminus S$, which are PMCs without one of its neighbor separator.

\begin{lemma}[\cite{BoTo02}]\label{lem:PMC_in_one_FC_of_S}
	Let $\Omega$ be a PMC of $G$ and $S$ a minimal separator of $G$ contained in $\Omega$. The set $\Omega \setminus S$ is included into one connected component of $G \setminus S$ which is full for $S$.
\end{lemma}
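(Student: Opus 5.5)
The statement to prove is \autoref{lem:PMC_in_one_FC_of_S}: for a PMC $\Omega$ and a minimal separator $S \subseteq \Omega$, the set $\Omega \setminus S$ lies inside a single component of $G \setminus S$, and that component is full for $S$. The plan relies only on the characterization of \autoref{thm:pmc_charac}: $\Omega$ is shy and has the reachability property. If $S = \Omega$ there is nothing to prove, so we assume $T = \Omega \setminus S \neq \emptyset$.

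\textbf{Step 1: $T$ lies in a single component.} Take $t,t' \in T$. By the reachability property, $t' \in R_G(t,\Omega)$. So either $tt'$ is an edge, or there is a $(t,t')$-path whose internal vertices avoid $\Omega$, and hence avoid $S$. Either way $t$ and $t'$ are connected in $G \setminus S$. Thus $T \subseteq C$ for a single component $C$ of $G \setminus S$.

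\textbf{Step 2: $C$ is full for $S$.} This is the real content of the statement. Fix $s \in S$ and $t \in T$. Reachability gives an edge $ts$ or a path from $t$ to $s$ whose internal vertices lie outside $\Omega$. All vertices of this path other than $s$ avoid $S$, so they stay in $C$, and therefore $s \in N(C)$. Since $s$ was arbitrary, $N(C) = S$.

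The main obstacle I expect is the degenerate case, in particular making sure $T \neq \emptyset$ is handled honestly. Shyness excludes nothing here, since $S \subsetneq \Omega$ is not required by the statement. The case $\Omega = S$ is in fact impossible, because a minimal separator has full components while a PMC is shy. With $T \neq \emptyset$, both steps above are direct. The minimality of $S$ as a separator is not needed beyond $S \subseteq \Omega$; the fact that $S$ also has another full component is irrelevant to this statement.
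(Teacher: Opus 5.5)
Your proof is correct, and since the paper does not prove this lemma (it imports it from \cite{BoTo02}), there is no in-paper argument to match. The argument is self-contained given \autoref{thm:pmc_charac}. In both steps, every vertex of the witnessing edge or path avoids $S$, except the target $s$ in Step~2: the endpoints are in $T$, and the internal vertices lie outside $\Omega\supseteq S$. The reverse inclusion $N(C)\subseteq S$, needed for $N(C)=S$, is automatic because $C$ is a component of $G\setminus S$. This is the observation the paper makes informally after stating that no PMC is strictly contained in another: a proper subset of a PMC has a full component, namely the one containing a removed vertex. Your argument in fact shows more: for any $S\subsetneq\Omega$, separator or not, $\Omega\setminus S$ lies in a single component of $G\setminus S$ that is full for $S$.

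One correction to your closing paragraph. The hypothesis that $S$ is a minimal separator is not idle: its only role is to rule out $S=\Omega$. In that case the conclusion would require a full component for the shy set $\Omega$, so it would fail. Shyness is exactly what makes this case impossible. So your remarks that shyness ``excludes nothing'' and that minimality is ``not needed beyond $S\subseteq\Omega$'' contradict the correct argument you give in the same paragraph, and should be dropped.
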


One aspect of neighbor separators is crucial to achieve the analysis of both \textsc{BT-Enum}~\cite{BoTo02} and our algorithm \ALGO . It concerns the impact of \emph{triangulating} these separators $S \subsetneq \Omega$, {\em i.e.} adding all possible edges in $S \times S$.

\begin{definition}[Active separator \cite{BoTo02}]\label{def:active-sep}
Let $\Omega$ be a PMC of $G$ and $S \subseteq \Omega$ a minimal separator of $G$. We say $S$ is \textit{active} for $\Omega$ if $\Omega$ is not a clique after triangulating all neighbor separators $S' \subsetneq \Omega$ different from $S$, i.e. $S' \neq S$.
PMCs which admit at least one active separator are called \emph{active PMCs}.
\end{definition}

A surprising point is that the notion of active separator is in some way related to the property of being (or not) a PMC in subgraphs of $G$. We define $a$-\emph{nice PMCs}, reusing an expression from~\cite{FoVi08}.

\begin{definition}[$a$-nice PMC]
Given a graph $G$ and some vertex $a \in V(G)$, we say a PMC $\Omega$ of $G$ is $a$-\emph{nice} iff $a \notin \Omega$ and $\Omega$ is not a PMC of graph $G \setminus \{a\}$.
\label{def:a-nice-pmc}
\end{definition}

Authors of~\cite{BoTo02} established that if $\Omega$ is a PMC of $G$ but not of $G \setminus \{a\}$, then necessarily the singleton $\{a\}$ in $G$ is a one-vertex separator which blocks the reachability between at least two vertices of $\Omega$. Consequently, the separator $S = N(C_G(a,\Omega))$ is active: as $\{a\}$ is sufficient to separate some pair $u,v \in \Omega$ in $G[\overline{\Omega} \cup \{u,v\}]$, then for any connected component $C$ of $G \setminus \Omega$ different from $C_G(a,\Omega)$, we cannot have $\{u,v\} \subseteq N(C)$. So, triangulating all separators $N(C)$ with $C \neq C_G(a,\Omega)$ will not make $\Omega$ become a clique, since edge $(u,v)$ will miss. This is why $a$-nice PMCs are also active PMCs. This notion of $a$-nice PMCs is described in~\autoref{fig:a-nice}.


\begin{figure}[h]
\centering
\begin{subfigure}{0.33\textwidth}
\centering
\scalebox{0.8}{\input{tikz/a-nice}}
\caption{A PMC $\Omega$ (blue vertices) of $G$.}
\label{subfig:a-nice}
\end{subfigure}~
\begin{subfigure}{0.32\textwidth}
\centering
\scalebox{0.8}{\input{tikz/a-nice-2}}
\caption{$\Omega$ is not a PMC in $G \setminus \{a\}$.}
\label{subfig:a-nice-2}
\end{subfigure}
\begin{subfigure}{0.32\textwidth}
\centering
\scalebox{0.8}{\input{tikz/a-nice-3}}
\caption{But $\Omega$ is a PMC in $G \setminus \{b\}$.}
\label{subfig:a-nice-3}
\end{subfigure}
\caption{An example of PMC $\Omega$ which is $a$-nice but not $b$-nice. Indeed, $\Omega$ does not satisfy the reachability property in $G \setminus \{a\}$: there is a conflict pair (pink dashed line).}
\label{fig:a-nice}
\end{figure}

Now we can briefly present the algorithm \btenum\ proposed in~\cite{BoTo02} to enumerate all the PMCs in output-quadratic time. Lemmas following this paragraph will justify why \btenum\ actually finds all PMCs of the input graph. This algorithm uses recursivity, based upon the following key idea: knowing the PMCs of graph $G' = G \setminus \{a\}$ for some $a \in V(G)$, we are able to compute the PMCs of a graph $G$. To achieve this transfer, \btenum\ performs four successive enumeration steps: 
\begin{enumerate}
\item The PMCs of $G$ can be PMCs of $G'$: we consider the list of PMCs returned recursively for $G'$ and check whether they are PMCs of $G$ in polynomial time (\autoref{le:pmc-check}).
\item The PMCs of $G$ can be possibly written as $\Omega' \cup \{a\}$, where $\Omega'$ is a PMC of $G'$: we consider the list of PMCs returned recursively for $G'$ and check whether all $\Omega' \cup \{a\}$ are PMCs or not.
\item The PMCs of $G$ can also be written as $S \cup \{a\}$, where $S$ is a minimal separator of $G$: we check whether such sets are PMCs.
\item There can be PMCs of $G$ which are not recognized in the three first cases, but they can be expressed as $S \cup (T \setminus \{a\})$, where $S$ and $T$ are two minimal separators of $G$. Originally, these PMCs were expressed as $S \cup (C \cap T)$ in~\cite{BoTo02}, where $C$ is a full component for $S$. But Korhonen~\cite{Ko20} recently improved this characterization.
\end{enumerate}
In fact, the collection of PMCs found in the fourth step contains all $a$-nice PMCs. Furthermore, the authors showed that the three first steps of \btenum\ compute all the PMCs which are either PMCs of $G \setminus \{a\}$ (step 1) or containing $a$ (steps 2 and 3). In the remainder, our own enumeration algorithm \ALGO\ will utterly modify this fourth step to reduce the enumeration running time.

The following series of lemmas proves why \btenum\ enumerates all PMCs of $G$. These results will also be used for our contribution. \autoref{le:pmc_with_a} justifies that steps 2 and 3 are sufficient to enumerate the PMCs which contain $a$. \autoref{le:pmc_without_a} proves that the fourth step enumerates $a$-nice PMCs, as announced.

\begin{lemma}[\cite{BoTo02}]
	Let $\Omega'$ be a PMC of $G \setminus \{a\}$, for any $a \in V(G)$. Then either $\Omega' \cup \{a\}$ or $\Omega'$ is a PMC of $G$. It follows that $\vert \Pi_G \vert \ge \vert \Pi_{G \setminus \{a\}} \vert$.
	\label{le:more-pmcs}
\end{lemma}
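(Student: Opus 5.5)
We want to show: if $\Omega'$ is a PMC of $G' = G\setminus\{a\}$, then $\Omega'$ or $\Omega'\cup\{a\}$ is a PMC of $G$. The plan is to use the characterization of \autoref{thm:pmc_charac} (shy set plus reachability property) and split into two cases, according to whether $\Omega'$ is already shy in $G$.

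\textbf{Reachability of $\Omega'$ in $G$.} First observe that $\Omega'$ satisfies the reachability property in $G$. Any path in $G'$ with internal vertices outside $\Omega'$ is also such a path in $G$, since adding back $a$ only enlarges $G\setminus\Omega'$. Hence $R_{G'}(u,\Omega')\subseteq R_G(u,\Omega')$ for every $u\in\Omega'$.

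\textbf{Case 1: $\Omega'$ is shy in $G$.} Then $\Omega'$ is a PMC of $G$ by \autoref{thm:pmc_charac}.

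\textbf{Case 2: $G\setminus\Omega'$ has a full component $C$.} We show this component must contain $a$, and then that $\Omega'\cup\{a\}$ is a PMC.

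Since $\Omega'$ is shy in $G'$, $C$ must contain $a$. Indeed, if $a\notin C$, then $C$ would be a component of $G'\setminus\Omega'$ with the same neighborhood, namely full, a contradiction. The components of $G\setminus\Omega'$ not containing $a$ coincide with components of $G'\setminus\Omega'$, so only $C=C_G(a,\Omega')$ can be full.

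We now check that $X=\Omega'\cup\{a\}$ is a PMC. For the reachability property, pairs inside $\Omega'$ are fine by the first step together with \autoref{obs:reachable}-type monotonicity. More precisely, paths witnessing visibility in $G'$ avoid $a$, so they remain valid despite $X$. For a pair $(a,u)$ with $u\in\Omega'$, since $u\in N(C)$, there is a path from $a$ to $u$ inside $C\cup\{u\}$, and its internal vertices avoid $X$. So $a$ is enlightened for $X$.

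For shyness of $X$, take any component $B$ of $G\setminus X$.
\begin{itemize}
\item If $B\subseteq C$, then $B$ is a component of $G'\setminus\Omega'$ as well, because $B$ avoids $a$ and $\Omega'$. Its neighborhood in $G$ is at most $N_{G'}(B)\cup\{a\}$, and $N_{G'}(B)\neq\Omega'$ by shyness in $G'$. Thus $N_G(B)\neq X$.
\item If $B\not\subseteq C$, then $B$ is a component of $G\setminus\Omega'$ not adjacent to $a$. Hence $a\notin N(B)$ and $N(B)\neq X$.
\end{itemize}
So $X$ is shy and, by \autoref{thm:pmc_charac}, a PMC of $G$.

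\textbf{Counting consequence.} Finally, $|\Pi_G|\ge|\Pi_{G\setminus\{a\}}|$ follows from an injection. Map each $\Omega'$ to whichever of $\Omega'$ or $\Omega'\cup\{a\}$ is a PMC of $G$, preferring $\Omega'$ when both are. This map is injective because $a\notin\Omega'$: the image determines $\Omega'$ by removing $a$.

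The only delicate point is the shyness argument in Case 2, which needs the distinction between components of $G\setminus X$ lying inside $C$ and those outside it.
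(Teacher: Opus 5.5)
Your proof is correct. The paper gives no proof of this lemma and only cites \cite{BoTo02}, and your argument is essentially the standard one there: by \autoref{thm:pmc_charac}, either $\Omega'$ remains shy in $G$, or the only possible full component of $G\setminus\Omega'$ is $C_G(a,\Omega')$, in which case $\Omega'\cup\{a\}$ is shy, $a$ is enlightened, and the old witnessing paths avoid $a$.

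One minor simplification: your split on whether $B\subseteq C$ is unnecessary. Since $G\setminus(\Omega'\cup\{a\})=G'\setminus\Omega'$, every component $B$ satisfies $N_G(B)\cap\Omega'=N_{G'}(B)\subsetneq\Omega'$.
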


\begin{lemma}[\cite{BoTo02}]\label{le:pmc_with_a}
	Let $\Omega$ be a PMC of $G$ such that $a \in \Omega$. Then $\Omega' = \Omega \setminus \{a\}$ is either a PMC of $G \setminus \{a\}$ or a minimal separator of $G$.
\end{lemma}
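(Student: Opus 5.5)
The plan is to work directly with the characterization of PMCs given by Theorem~\ref{thm:pmc_charac}, which says a set is a PMC if and only if it is shy and has the reachability property. Let $G' = G \setminus \{a\}$ and $\Omega' = \Omega \setminus \{a\}$. Since $a \in \Omega$, we have $G' \setminus \Omega' = G \setminus \Omega$. So the connected components of $G' \setminus \Omega'$ are exactly the connected components of $G \setminus \Omega$. For each such component $C$, we also have $N_{G'}(C) = N_G(C) \setminus \{a\}$. Everything below relies on this identification.

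\textbf{Step 1: $\Omega'$ has the reachability property in $G'$.} Take $u,v \in \Omega'$. In $G$ they see each other despite $\Omega$: either $(u,v) \in E(G)$, or there is a $(u,v)$-path whose internal vertices lie in $\overline{\Omega}$. Such internal vertices avoid $a$ because $a \in \Omega$. Hence the same edge or path exists in $G'$, and its internal vertices lie outside $\Omega'$. Therefore $v \in R_{G'}(u,\Omega')$.

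\textbf{Step 2: case analysis on shyness.} If $\Omega'$ is shy in $G'$, then Step~1 and Theorem~\ref{thm:pmc_charac} show that $\Omega'$ is a PMC of $G'$, and we are done. Otherwise some component $C$ of $G' \setminus \Omega'$ is full for $\Omega'$, that is, $N_{G'}(C) = \Omega'$. By the identification above, $C$ is a component of $G \setminus \Omega$ with $\Omega' \subseteq N_G(C) \subseteq \Omega$. Because $\Omega$ is shy in $G$, we cannot have $N_G(C) = \Omega$. This forces $a \notin N_G(C)$, and hence $N_G(C) = \Omega'$. Lemma~\ref{le:neighbor-sep} then states that the neighborhood of a connected component of $G \setminus \Omega$ is a minimal separator of $G$. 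So $\Omega'$ is a minimal separator of $G$.

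The only delicate point is the degenerate situation where $\Omega' = \emptyset$, i.e.\ $\Omega = \{a\}$. For a connected graph that is not complete, this cannot happen. Indeed, $\{a\}$ is shy only if $G \setminus \{a\}$ has no component adjacent to $a$, which connectivity rules out unless $G$ is the single vertex $a$. That case is covered by the clique base case, so it can be excluded at the outset. No other step presents a real obstacle: the argument is essentially the observation that removing a vertex of $\Omega$ does not change the components outside $\Omega$.
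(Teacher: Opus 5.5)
Your proof is correct. The paper gives no proof of this lemma (it is quoted from Bouchitt\'e--Todinca), and your argument is the standard one: since $a\in\Omega$, the components outside $\Omega'$ in $G\setminus\{a\}$ are exactly those outside $\Omega$ in $G$, so reachability transfers directly. A full component for $\Omega'$ cannot see $a$ because $\Omega$ is shy, hence $\Omega'=N_G(C)$ is a minimal separator of $G$ by \autoref{le:neighbor-sep}.
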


\begin{lemma}[\cite{BoTo02}, Theorem 2 of \cite{korhonen_finding_2022}]\label{le:pmc_without_a}
	Let $\Omega$ be a PMC of $G$ such that $a \notin \Omega$. Let $S = N(C_G(a,\Omega))$ be the minimal separator of $\Omega$ neighbor of $C_G(a,\Omega)$.

	If $\Omega$ is an $a$-nice PMC, then there is a minimal separator $T$ of $G$ such that $\Omega = S \cup T \setminus \{a\}$ with $a \in T$. Moreover, singleton $\{a\}$ is a $(x,y)$-separator in $G[\overline{\Omega} \cup \{x,y\}]$, for some $x,y \in S$.
\end{lemma}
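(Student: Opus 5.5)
The plan has two parts. First I would establish the second claim, that $\{a\}$ separates some pair $x,y \in S$ in $G[\overline{\Omega}\cup\{x,y\}]$. Then I would build $T$ explicitly as the neighbourhood of a suitable component, exploiting that pair.

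\textbf{Step 1: locating the conflict pair.} I would first show that $\Omega$ is still a shy set in $G\setminus\{a\}$. Components of $G \setminus \Omega$ other than $C_G(a,\Omega)$ are unchanged, so they are not full. The pieces of $C_G(a,\Omega)\setminus\{a\}$ have neighbourhoods inside $S \cup \{a\}$, hence inside $S$ in $G\setminus\{a\}$. Moreover $S \subsetneq \Omega$, since $\Omega$ is shy in $G$. Consequently, by \autoref{thm:pmc_charac}, the fact that $\Omega$ is not a PMC of $G\setminus\{a\}$ must come from a failure of the reachability property. So there are $x,y\in\Omega$ that see each other despite $\Omega$ in $G$ but not in $G\setminus\{a\}$. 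Thus $(x,y)\notin E$, and every $(x,y)$-path with interior in $\overline{\Omega}$ goes through $a$, hence through $C_G(a,\Omega)$. This forces $x,y \in N(C_G(a,\Omega)) = S$, and $\{a\}$ is an $(x,y)$-separator in $G[\overline{\Omega}\cup\{x,y\}]$. This is the "moreover" part.

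\textbf{Step 2: constructing $T$.} Let $Q$ be the component of $G\setminus((\Omega\setminus\{x\})\cup\{a\})$ containing $x$, so $N(Q)\subseteq \Omega\cup\{a\}$. Let $R = C_G(y, N(Q))$; note $y\notin N(Q)$ by Step 1. Set $T = N(R)\subseteq N(Q)$. Then $R$ is full for $T$, and $Q$ is a connected set whose neighbourhood contains $T$. Therefore $T$ is a minimal $(x,y)$-separator (the close-separator construction of \autoref{lem:close-sep} with $A=Q$, together with \autoref{lem:ab-sep-paths}). In particular $T\setminus\{a\}\subseteq\Omega$.

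It remains to show $a\in T$ and $\Omega\setminus S\subseteq T$; this is where the real work lies. The key observation I would use is that no vertex of $Q\setminus\{x\}$ can lie on an $\overline{\Omega}$-path leaving $y$ and avoiding $a$, since otherwise $x$ and $y$ would see each other despite $\Omega$ in $G\setminus\{a\}$, contradicting Step 1.

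For $a\in T$: $y\in S$ is adjacent to $C_G(a,\Omega)$, so there is a $(y,a)$-path with interior in $C_G(a,\Omega)\setminus\{a\}$. By the observation this path avoids $Q$, hence avoids $N(Q)$ before reaching $a$. Thus $a\in N(R)=T$, and symmetrically $a \in N(Q)$.

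For $t\in\Omega\setminus S$: the reachability property of $\Omega$ in $G$ gives a $(y,t)$-path and an $(x,t)$-path with interiors in $\overline{\Omega}$. These paths cannot enter $C_G(a,\Omega)$, because $t\notin S = N(C_G(a,\Omega))$, so they avoid $a$. The $(x,t)$-path shows $t\in N(Q)$. The $(y,t)$-path avoids $Q$ by the observation, so $t\in N(R)=T$.

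\textbf{Conclusion and main obstacle.} Combining, $\Omega\setminus S\subseteq T\setminus\{a\}\subseteq\Omega$, which gives $\Omega = S\cup (T\setminus\{a\})$ with $a\in T$. I expect the main care to be needed in the two "path avoids $Q$" arguments, i.e.\ checking that paths from $y$ never meet $Q$ without producing an $a$-free $\overline{\Omega}$-connection between $x$ and $y$.
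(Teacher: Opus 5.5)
Your proof is correct, and it is more self-contained than what the paper offers. The paper gives no proof of this lemma. It attributes the one-vertex-separation fact to \cite{BoTo02}, and your Step~1 is exactly the argument behind it: $\Omega$ stays shy in $G\setminus\{a\}$, so by \autoref{thm:pmc_charac} reachability must fail, and the lost paths run through $C_G(a,\Omega)$, forcing $x,y\in S$. It attributes the decomposition to Theorem~2 of \cite{korhonen_finding_2022}, which the preliminaries describe as a sharpening of the original Bouchitt\'e--Todinca form $S\cup(C\cap T)$.

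Your Step~2 replaces that citation by an explicit construction. $T=\sepag{G}{Q}{y}=N(R)$ is the close separator of $(Q,y)$, hence by \autoref{lem:close-sep} a minimal $(x,y)$-separator for the very pair witnessing the second claim. $T\setminus\{a\}\subseteq\Omega$ holds by construction, since $N(Q)\subseteq(\Omega\setminus\{x\})\cup\{a\}$. The inclusions $a\in T$ and $\Omega\setminus S\subseteq T$ then follow from short path arguments. Your route yields a concrete, efficiently computable $T$ using only tools already in the preliminaries. The citation instead keeps the exposition short and ties the lemma to the general theory of active separators, of which $a$-nice PMCs are a special case.

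Two small corrections to the write-up.
\begin{itemize}
\item In the paragraph on $a\in T$, you need $a\in N(Q)$ \emph{before} concluding $a\in N(R)$; otherwise $a$ would simply lie in $R$. So state the $(x,a)$-path argument first.
\item ``Avoids $Q$, hence avoids $N(Q)$'' is not a valid implication. The correct reason is that every vertex of the $(y,a)$-path other than $a$ is either $y\notin N(Q)$ or lies in $\overline{\Omega}\setminus\{a\}$, which is disjoint from $N(Q)\subseteq(\Omega\setminus\{x\})\cup\{a\}$. These vertices therefore lie in $R$ directly, and the same remark applies to the $(y,t)$-path.
\end{itemize}
Consequently your ``key observation'' is superfluous and the obstacle you anticipate does not arise: the only input from Step~1 needed in Step~2 is $y\notin N[Q]$.
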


In summary, \btenum\ proceeds as follows: it determines recursively the list of PMCs of some subgraph $G \setminus \{a\}$ of the input graph $G$, then it executes the four cases provided above and return the list of sets considered which were confirmed as PMCs. The base case of the recursion is the enumeration of PMCs in the single-vertex graph, which is trivial. The output-quadratic running time of \btenum\ comes from the fourth case, which consists in enumerating all pairs of minimal separators of $G$. The first three cases cost a time linear in the number of PMCs, in particular the third one which uses the polynomial-delay enumeration of minimal separators (\autoref{le:enum-sep}) and exploits the inequality $\vert \Delta_G \vert \le n \vert \Pi_G \vert$. This gives eventually a total running time of $O(n^2m\vert \Delta_G \vert^2)$. As recalled above, our best known lower bound for the number of PMCs is $\vert \Pi_G \vert \ge \frac{\vert \Delta_G \vert}{n}$. Thus \btenum\ guarantees output-quadratic time $O(n^3m\vert \Pi_G \vert^2)$.

\subsection{PMC enumeration and extension} \label{subsec:problems}

A common technique to obtain polynomial-delay enumeration algorithms is called \emph{Flashlight Search} \cite{ReTa75}. In this context, consider that we are tackling some enumeration problems where the objective is to list all inclusion-wise maximal sets of elements satisfying a certain property. Considering an arbitrary ordering $e_1,e_2,...,e_n$ of the elements that could belong to solutions, this method consists in creating a binary tree whose paths from root to the leaves provide us with the sets to be enumerated. Internal nodes of the binary tree represent a \emph{partial solution}, in other words, a subset of a solution to be enumerated. For each internal node at depth $i$, we check whether it is possible to \emph{extend} the current partial solution with element $e_i$. We potentially create two branches below: one leading to a node representing the partial solution extended with $e_i$, the other leading to the same partial solution without $e_i$. We stop extending a branch when no solution hides beneath. At each depth, to prune the tree, the \emph{extension version} of the enumeration problem needs to be answered. Below is the extension version of PMC enumeration.

\defdecproblem{\textsc{PMC Extension}}{A graph $G = (V,E)$, a subset $M \subseteq V$ of \textit{mandatory} vertices, a subset $F \subseteq V$ of \textit{forbidden} vertices such that $M \cap F = \emptyset$.}{Is there, a set $\Omega$ of vertices with $M \subseteq \Omega$, $\Omega \cap F = \emptyset$ and such that $\Omega$ is a PMC?}

Let us focus now on the possible application of Flashlight Search for PMC enumeration. The root of the tree is labeled $\emptyset$. Consider any ordering $v_1,\ldots,v_n$ of $V(G)$. Given a node of the tree, if the answer of the extension problem considering the add-on of $v_i$ (or resp. its exclusion) into the current mandatory (resp. forbidden) set is yes, then one can see that there will be at least one PMC satisfying our current restriction $(M, F)$, and it is worth continuing the exploration. It follows that if the answer is no, then there is no PMC below, and we can cut the branch. 

\begin{figure}[ht]
\centering
\scalebox{0.7}{\input{tikz/Flashlight_Search}}
\caption{Example of a Flashlight Search applied to $v_2 - v_1 - v_3$ to enumerate its two PMCs.}
\label{fig:flashlight_search}
\end{figure}

\autoref{fig:flashlight_search} provides an example of exploration using Flashlight Search, considering the very simple input graph: path $(v_2,v_1,v_3)$. Starting from the root of the binary tree, we ask the extension problem whether there is a PMC including $v_1$ ($M = \{v_1\}$ and $F = \emptyset$). If the answer is yes, we create a branch labeled $v_1$. We do the same for the exclusion of $v_1$. Deeper in the tree, at a depth $i$, we may arrive in a situation where we can still add a vertex $v_i$ to the mandatory (resp. forbidden) set, but the extension problem answers no. In this case, there will be no PMC containing (resp. excluding) $v_i$, and the branch is given up. This occurs on the right child of the root as there is no PMC excluding $v_1$ in our input graph.

Flashlight Search is very effective when there exists a polynomial time algorithm solving the extension problem. Then, a polynomial-delay polynomial-space algorithm for the enumeration problem comes out. Takata \cite{Ta10} used this framework in a smart way to enumerate all minimal $(a,b)$-separators of a graph, extending the component $C_G(a,S)$ instead of the separator $S$ itself.

Unfortunately, as for various extension problems, we show in the remainder that \textsc{PMC Extension} is NP-hard. Nevertheless, for our main contribution, we will tackle the problem of deciding whether a set contains a PMC. This problem, denoted as \textsc{PMC Superset}, is the subproblem of \textsc{PMC Extension} with $M = \emptyset$. Following \autoref{obs:shy-set}, if $X$ is not shy - {\em i.e.} admits a full component - then none of its subsets can be shy. Therefore, we can directly assume that if our input set $X$ is not shy, then the answer to the decision problem \textsc{PMC Superset} is necessarily no for $X$. The definition of \textsc{PMC Superset} follows.

\defdecproblem{\textsc{PMC Superset}}{A graph $G = (V,E)$ and a shy set $X \subseteq V$.}{Is there a PMC $\Omega \subseteq X$ of $G$?}


%% file: tikz/torso.tex
\begin{tikzpicture}

\node[noeud, fill = red] (a) at (3,2) {};
\node[noeud, fill = red] (b) at (5,2) {};
\node[noeud, fill = red] (c) at (5.5,4) {};
\node[noeud, fill = red] (d) at (2.5,4) {};

\node[noeud] (v) at (4,5) {};

\node[noeud] (u1) at (7,2.5) {};
\node[noeud] (u2) at (7,3.5) {};
\node[noeud] (u3) at (7.5,3) {};

\node[noeud] (w1) at (1,2) {};
\node[noeud] (w2) at (1,3) {};

\draw (a) node[below left,scale=1.5, color = red] {$u$};
\draw (b) node[below right,scale=1.5, color = red] {$v$};
\draw (c) node[above right,scale=1.5, color = red] {$x$};
\draw (d) node[above left,scale=1.5, color = red] {$y$};
\draw (u3) node[below,scale=1.5] {$a$};

\draw[rounded corners,line width=0.5mm, color = green] (6.6,2.2) rectangle (7.9,3.8);
\draw (9.1,3) node[scale=1.5, color = green] {$C_G(a,U)$};

\draw [line width = 1.4pt] (b) -- (a);

\draw [line width = 1.4pt] (c) -- (v);
\draw [line width = 1.4pt] (d) -- (v);
\draw [line width = 1.4pt] (b) -- (v);

\draw [line width = 1.4pt] (u1) -- (u2);
\draw [line width = 1.4pt] (u2) -- (u3);
\draw [line width = 1.4pt] (u1) -- (u3);
\draw [line width = 1.4pt] (u1) -- (b);
\draw [line width = 1.4pt] (u2) -- (c);

\draw [line width = 1.4pt] (d) -- (w1);
\draw [line width = 1.4pt] (d) -- (w2);
\draw [line width = 1.4pt] (a) -- (w1);
\draw [line width = 1.4pt] (w1) -- (w2);
\end{tikzpicture}

%% file: tikz/torso2.tex
\begin{tikzpicture}

\node[noeud, fill = red] (a) at (3,2) {};
\node[noeud, fill = red] (b) at (5,2) {};
\node[noeud, fill = red] (c) at (5.5,4) {};
\node[noeud, fill = red] (d) at (2.5,4) {};

\draw (a) node[below left,scale=1.5, color = red] {$u$};
\draw (b) node[below right,scale=1.5, color = red] {$v$};
\draw (c) node[above right,scale=1.5, color = red] {$x$};
\draw (d) node[above left,scale=1.5, color = red] {$y$};

\draw [line width = 1.5pt, color = red] (b) -- (a);
\draw [line width = 1.5pt, color = red] (b) -- (c);
\draw [line width = 1.5pt, color = red] (c) -- (d);
\draw [line width = 1.5pt, color = red] (d) -- (b);
\draw [line width = 1.5pt, color = red] (a) -- (d);
\end{tikzpicture}

%% file: tikz/a_b_close_sep.tex
\begin{tikzpicture}
    \node[noeud] (a) at (0,0) {};
    \node[noeud] (1) at (0,-1) {};
    \node[noeud] (2) at (-1,0) {};
    \node[noeud] (3) at (-1,-1) {};

    \draw (a) node[above left,scale=1.2] {$a$};

    \node[noeud] (4) at (-2,-0.5) {};
    \node[noeud] (5) at (-3,-0.5) {};

    \node[noeud] (6) at (1,0) {};
    \node[noeud] (7) at (1,-1) {};

    \node[noeud] (8) at (2,-0.5) {};
    \node[noeud] (9) at (3,-0.5) {};
    \node[noeud] (b) at (4,-0.5) {};

    \draw (b) node[below right,scale=1.2] {$b$};

    \draw[rounded corners,line width=0.8mm, color = green] (-1.3,-1.3) rectangle (0.3,0.6);
    \draw (-0.5,1) node[scale=1.2, color = green] {$A$};

    \draw[rounded corners,line width=0.8mm, color = red] (-2.3,-0.8) rectangle (-1.7,-0.2);
    
    \draw[rounded corners,line width=0.8mm, color = red] (0.725,-1.275) rectangle (1.275,0.275);
    
    \draw[rounded corners,line width=0.8mm, color = green] (1.7,-1.2) rectangle (4.6,0.3);
    \draw (3,0.8) node[scale=1.2, color = green] {$C(b,N(A))$};

    \draw[rounded corners,line width=1mm, color = blue] (0.6,-1.4) rectangle (1.4,0.4);
    \draw (1.5,-2) node[scale=1.2, color = blue] {$\sepa{A}{b} \subseteq N(A)$};
    
    \draw (a) -- (1);
    \draw (a) -- (2);
    \draw (a) -- (6);
    \draw (1) -- (3);
    \draw (1) -- (7);
    \draw (2) -- (3);
    \draw (2) -- (4);
    \draw (3) -- (4);
    \draw (4) -- (5);
    \draw (6) -- (7);
    \draw (6) -- (8);
    \draw (7) -- (8);
    \draw (8) -- (9);
    \draw (9) -- (b);
    \draw (1) -- (2);
    \draw (3) -- (6);
    \draw[out = -10, in = 220] (7) to (9);

\end{tikzpicture}

%% file: tikz/a-nice.tex
\begin{tikzpicture}

\node[noeud] (a) at (1.8,4) {};
\node[noeud] (b) at (4.2,4) {};
\node[noeud] (c) at (3,1.8) {};

\node[noeud, fill = blue] (x) at (3,3) {};
\node[noeud, fill = blue] (r) at (3,4.5) {};
\node[noeud, fill = blue] (s) at (2,2.2) {};
\node[noeud, fill = blue] (t) at (4,2.2) {};

%
\draw (a) node[above left,scale=1.2] {$a$};
\draw (x) node[above left,scale=1.2, color = blue] {$\Omega$};
\draw (b) node[above right,scale=1.2] {$b$};
%

\draw [line width = 1.4pt] (x) -- (r);
\draw [line width = 1.4pt] (x) -- (s);
\draw [line width = 1.4pt] (x) -- (t);
\draw [line width = 1.4pt] (r) -- (t);

\draw [line width = 1.4pt] (a) -- (r);
\draw [line width = 1.4pt] (a) -- (s);

\draw [line width = 1.4pt] (b) -- (r);
\draw [line width = 1.4pt] (b) -- (t);

\draw [line width = 1.4pt] (c) -- (s);
\draw [line width = 1.4pt] (c) -- (t);
%
\end{tikzpicture}

%% file: tikz/a-nice-2.tex
\begin{tikzpicture}

\node[noeud] (b) at (4.2,4) {};
\node[noeud] (c) at (3,1.8) {};

\node[noeud, fill = blue] (x) at (3,3) {};
\node[noeud, fill = blue] (r) at (3,4.5) {};
\node[noeud, fill = blue] (s) at (2,2.2) {};
\node[noeud, fill = blue] (t) at (4,2.2) {};

%
\draw (x) node[above left,scale=1.2, color = blue] {$\Omega$};
\draw (b) node[above right,scale=1.2] {$b$};
%

\draw [out = 190, in = 110, line width = 1.8pt, dashed, color = white!60!red] (r) to (s);

\draw [line width = 1.4pt] (x) -- (r);
\draw [line width = 1.4pt] (x) -- (s);
\draw [line width = 1.4pt] (x) -- (t);
\draw [line width = 1.4pt] (r) -- (t);


\draw [line width = 1.4pt] (b) -- (r);
\draw [line width = 1.4pt] (b) -- (t);

\draw [line width = 1.4pt] (c) -- (s);
\draw [line width = 1.4pt] (c) -- (t);
%
\end{tikzpicture}

%% file: tikz/a-nice-3.tex
\begin{tikzpicture}

\node[noeud] (a) at (1.8,4) {};
\node[noeud] (c) at (3,1.8) {};

\node[noeud, fill = blue] (x) at (3,3) {};
\node[noeud, fill = blue] (r) at (3,4.5) {};
\node[noeud, fill = blue] (s) at (2,2.2) {};
\node[noeud, fill = blue] (t) at (4,2.2) {};

%
\draw (a) node[above left,scale=1.2] {$a$};
\draw (x) node[above left,scale=1.2, color = blue] {$\Omega$};
%

\draw [line width = 1.4pt] (x) -- (r);
\draw [line width = 1.4pt] (x) -- (s);
\draw [line width = 1.4pt] (x) -- (t);
\draw [line width = 1.4pt] (r) -- (t);

\draw [line width = 1.4pt] (a) -- (r);
\draw [line width = 1.4pt] (a) -- (s);


\draw [line width = 1.4pt] (c) -- (s);
\draw [line width = 1.4pt] (c) -- (t);
%
\end{tikzpicture}

%% file: tikz/Flashlight_Search.tex
\begin{tikzpicture}
\coordinate (root) at (0,0);
\coordinate (y1) at (-6,-2);
\coordinate (y1y2) at (-8,-4);
\coordinate (y1y2y3) at (-10,-6);
\coordinate (y1y2n3) at (-6,-6); 

\coordinate (y1n2) at (-2,-4);
\coordinate (y1n2y3) at (-4,-6); 
\coordinate (y1n2n3) at (0,-6); 

\coordinate (n1) at (4,-2);

\node [draw, circle,  minimum size=0.2cm, ultra thick, fill=white](root) at (root) {$\emptyset$};

\node [draw, circle,  minimum height=0.2cm, minimum width=0.2cm, ultra thick, fill=white] (y1) at (y1) {};
\node [draw, circle,  minimum height=0.2cm, minimum width=0.2cm, ultra thick, fill=white] (y1y2) at (y1y2) {};
\node [draw, cross out, red,  minimum height=0.2cm, minimum width=0.2cm, ultra thick, fill=white, scale=1.5] (y1y2y3) at (y1y2y3) {};
\node [draw, circle, double, green!60!black,  minimum height=0.2cm, minimum width=0.2cm, ultra thick, fill=white] (y1y2n3) at (y1y2n3) {};

\node [draw, circle,  minimum height=0.2cm, minimum width=0.2cm, ultra thick, fill=white] (y1n2) at (y1n2) {};
\node [draw, circle, double, green!60!black,  minimum height=0.2cm, minimum width=0.2cm, ultra thick, fill=white] (y1n2y3) at (y1n2y3) {};
\node [draw, cross out, red, minimum height=0.2cm, minimum width=0.2cm, ultra thick, fill=white, scale=1.5] (y1n2n3) at (y1n2n3) {};

\node [draw, cross out, red, minimum height=0.2cm, minimum width=0.2cm, ultra thick, fill=white, scale=1.5] (n1) at (n1) {};

\draw [line width=2pt] (root) -- (y1) node[midway, sloped, above, scale=2] {$v_1$};

\draw [line width=2pt] (y1) -- (y1y2) node[midway, above, scale=2, sloped] {$v_2$};
\draw [line width=2pt, , dashed] (y1y2) -- (y1y2y3) node[midway, above, scale=2, sloped] {$v_3$};
\draw [line width=2pt] (y1y2) -- (y1y2n3) node[midway, above, scale=2, sloped] {$\overline{v_3}$};

\draw [line width=2pt] (y1) -- (y1n2) node[midway, above, scale=2, sloped] {$\overline{v_2}$};
\draw [line width=2pt] (y1n2) -- (y1n2y3) node[midway, above, scale=2, sloped] {$v_3$};
\draw [line width=2pt, dashed] (y1n2) -- (y1n2n3) node[midway, above, scale=2, sloped] {$\overline{v_3}$};

\draw [line width=2pt, dashed] (root) -- (n1) node[midway, above, scale=2, sloped] {$\overline{v_1}$};

\node [scale=1.25, red!60!black,align=center] at (-10,-7) {No solution\\with $\{v_1,v_2,v_3\}$};
\node [scale=1.25, red!60!black,align=center] at (4,-3) {No solution\\without $\{v_1\}$};
\node [scale=1.25, red!60!black,align=center] at (0,-7) {No solution\\with $\{v_1\}$\\without $\{v_2,v_3\}$};
\node [scale=1.25, green!60!black] at (-6,-7) {$\{v_1,v_2\}$};
\node [scale=1.25, green!60!black] at (-4,-7) {$\{v_1,v_3\}$};

\end{tikzpicture}

%% file: tikz/negative-shy-set.tex
\begin{tikzpicture}

\node[noeud, fill = red] (a) at (1,5) {};
\node[noeud, fill = red] (b) at (5,5) {};

\node[noeud] (xa) at (1.5,3) {};
\node[noeud] (xb) at (4.5,3) {};

\node[noeud, fill = red] (x1) at (3,6) {};
\node[noeud, fill = red] (x2) at (3,5) {};
\node[noeud, fill = red] (x3) at (3,4) {};
\node[noeud, fill = red] (x4) at (3,3) {};

\draw (xa) node[below left,scale=1.5] {$a'$};
\draw (xb) node[below right,scale=1.5] {$b'$};
\draw (a) node[above left,scale=1.5, color = red] {$a$};
\draw (b) node[above right,scale=1.5, color = red] {$b$};

\draw[rounded corners,line width=0.5mm, color = red] (2.6,2.5) rectangle (3.4,6.5);
\draw (3.8,6.5) node[scale=1.5, color = red] {$S$};
\draw (3,1.5) node[scale=1.5, color = red] {$X = S \cup \{a,b\}$};

\draw [line width = 1.4pt] (x1) -- (a);
\draw [line width = 1.4pt] (x2) -- (a);
\draw [line width = 1.4pt] (x3) -- (a);
\draw [line width = 1.4pt] (x4) -- (a);

\draw [line width = 1.4pt] (x1) -- (xa);
\draw [line width = 1.4pt] (x2) -- (xa);
\draw [line width = 1.4pt] (x3) -- (xa);
\draw [line width = 1.4pt] (x4) -- (xa);
\draw [line width = 1.4pt] (a) -- (xa);

\draw [line width = 1.4pt] (x1) -- (b);
\draw [line width = 1.4pt] (x2) -- (b);
\draw [line width = 1.4pt] (x3) -- (b);
\draw [line width = 1.4pt] (x4) -- (b);

\draw [line width = 1.4pt] (x1) -- (xb);
\draw [line width = 1.4pt] (x2) -- (xb);
\draw [line width = 1.4pt] (x3) -- (xb);
\draw [line width = 1.4pt] (x4) -- (xb);
\draw [line width = 1.4pt] (b) -- (xb);
\end{tikzpicture}

%% file: tikz/positive-shy-set.tex
\begin{tikzpicture}

\node[noeud, fill = red] (a) at (1,5) {};
\node[noeud, fill = red] (b) at (5,5) {};

\node[noeud] (xa) at (1.5,3) {};
\node[noeud] (xb) at (4.5,3) {};

\node[noeud, fill = red] (x1) at (3,6) {};
\node[noeud, fill = red] (x2) at (3,5) {};
\node[noeud, fill = red] (x3) at (3,4) {};
\node[noeud, fill = red] (x4) at (3,3) {};

\draw (xa) node[below left,scale=1.5] {$a'$};
\draw (xb) node[below right,scale=1.5] {$b'$};
\draw (a) node[above left,scale=1.5, color = red] {$a$};
\draw (b) node[above right,scale=1.5, color = red] {$b$};

\draw[rounded corners,line width=0.5mm, color = red] (2.6,2.5) rectangle (3.4,6.5);
\draw (3.8,6.5) node[scale=1.5, color = red] {$S$};
\draw (3,1.5) node[scale=1.5, color = red] {$X = S \cup \{a,b\}$};

\draw [line width = 1.4pt] (x1) -- (a);
\draw [line width = 1.4pt] (x2) -- (a);
\draw [line width = 1.4pt] (x3) -- (a);
\draw [line width = 1.4pt] (x4) -- (a);

\draw [line width = 1.4pt] (x2) -- (xa);
\draw [line width = 1.4pt] (x3) -- (xa);
\draw [line width = 1.4pt] (x4) -- (xa);
\draw [line width = 1.4pt] (a) -- (xa);

\draw [line width = 1.4pt] (x1) -- (b);
\draw [line width = 1.4pt] (x2) -- (b);
\draw [line width = 1.4pt] (x3) -- (b);
\draw [line width = 1.4pt] (x4) -- (b);

\draw [line width = 1.4pt] (x1) -- (xb);
\draw [line width = 1.4pt] (x2) -- (xb);
\draw [line width = 1.4pt] (x3) -- (xb);
\draw [line width = 1.4pt] (x4) -- (xb);
\draw [line width = 1.4pt] (b) -- (xb);
\end{tikzpicture}

%% file: tikz/removal.tex
\begin{tikzpicture}

\draw[rounded corners,line width=0.5mm, color = red, fill = white!80!red] (1.5,1.5) rectangle (4.5,5.5);
\draw (3,6) node[scale=1.2, color = red] {$U=X$};

\node[noeud, fill = red] (x3) at (2.5,2) {};
\node[noeud, fill = red] (x4) at (3.5,2) {};
\node[noeud, fill = red] (x2) at (2,3.5) {};
\node[noeud, fill = red] (x5) at (4,3.5) {};
\node[noeud, fill = red] (x1) at (2.5,5) {};
\node[noeud, fill = red] (x6) at (3.5,5) {};

\node[noeud] (c11) at (1,2.8) {};
\node[noeud](c12) at (1,2.3) {};

\node[noeud] (c2) at (5,1) {};

\node[noeud] (c3) at (5.5,3.7) {};

\draw (x1) node[above left] {$x_1$};
\draw (x6) node[above right] {$x_6$};
\draw (x2) node[above right] {$x_2$};
\draw (x3) node[below left] {$x_3$};
\draw (x4) node[below left] {$x_4$};
\draw (x5) node[above left] {$x_5$};

\draw [line width = 1.4pt] (x1) -- (x2);
\draw [line width = 1.4pt] (x1) -- (x6);
\draw [line width = 1.4pt] (x2) -- (x4);
\draw [line width = 1.4pt] (x2) -- (x5);

\draw [line width = 1.4pt] (x2) -- (c11);
\draw [line width = 1.4pt] (x2) -- (c12);
\draw [line width = 1.4pt] (x3) -- (c12);
\draw [line width = 1.4pt] (c11) -- (c12);

\draw [line width = 1.4pt] (c3) -- (x6);
\draw [line width = 1.4pt] (c3) -- (x5);
\draw [line width = 1.4pt] (c3) -- (x4);

\draw [line width = 1.4pt] (c2) -- (x5);
\draw [line width = 1.4pt] (c2) -- (x4);
\draw [out=180,in=300,line width = 1.4pt] (c2) to (x3);
\end{tikzpicture}

%% file: tikz/removal2.tex
\begin{tikzpicture}

\draw[rounded corners,line width=0.5mm, color = red, fill = white!80!red] (1.5,2.3) -- (1.5,5.5) -- (4.5,5.5) -- (4.5,1.5) -- (2.9,1.5) -- (2.9,2.3) -- (1.5,2.3);
\draw (3,6) node[scale=1.2, color = red] {$U = X \setminus \{x_3\}$};

\node[noeud, fill = white!60!blue] (x3) at (2.5,2) {};
\node[noeud, fill = red] (x4) at (3.5,2) {};
\node[noeud, fill = red] (x2) at (2,3.5) {};
\node[noeud, fill = red] (x5) at (4,3.5) {};
\node[noeud, fill = red] (x1) at (2.5,5) {};
\node[noeud, fill = red] (x6) at (3.5,5) {};

\node[noeud, fill = white!60!blue] (c11) at (1,2.8) {};
\node[noeud, fill = white!60!blue](c12) at (1,2.3) {};

\node[noeud, fill = white!60!blue] (c2) at (5,1) {};

\node[noeud] (c3) at (5.5,3.7) {};

\draw (x1) node[above left] {$x_1$};
\draw (x6) node[above right] {$x_6$};
\draw (x2) node[above right] {$x_2$};
\draw (x4) node[below left] {$x_4$};
\draw (x5) node[above left] {$x_5$};
\draw (2,1.7) node[scale=1.2, color = white!30!blue] {$C_0$};
\draw (c3) node[below right,scale=0.8] {native};

\draw [line width = 1.4pt] (x1) -- (x2);
\draw [line width = 1.4pt] (x1) -- (x6);
\draw [line width = 1.4pt] (x2) -- (x4);
\draw [line width = 1.4pt] (x2) -- (x5);

\draw [line width = 1.4pt] (x2) -- (c11);
\draw [line width = 1.4pt] (x2) -- (c12);
\draw [line width = 1.4pt] (x3) -- (c12);
\draw [line width = 1.4pt] (c11) -- (c12);

\draw [line width = 1.4pt] (c3) -- (x6);
\draw [line width = 1.4pt] (c3) -- (x5);
\draw [line width = 1.4pt] (c3) -- (x4);

\draw [line width = 1.4pt] (c2) -- (x5);
\draw [line width = 1.4pt] (c2) -- (x4);
\draw [out=180,in=300,line width = 1.4pt] (c2) to (x3);
\end{tikzpicture}

%% file: tikz/removal3.tex
\begin{tikzpicture}

\draw[rounded corners,line width=0.5mm, color = red, fill = white!80!red] (1.5,3.1) rectangle (4.5,5.5);
\draw (3,6) node[scale=1.2, color = red] {$U = X \setminus \{x_3,x_4\}$};

\node[noeud, fill = blue] (x3) at (2.5,2) {};
\node[noeud, fill = blue] (x4) at (3.5,2) {};
\node[noeud, fill = red] (x2) at (2,3.5) {};
\node[noeud, fill = red] (x5) at (4,3.5) {};
\node[noeud, fill = red] (x1) at (2.5,5) {};
\node[noeud, fill = red] (x6) at (3.5,5) {};

\node[noeud, fill = blue] (c11) at (1,2.8) {};
\node[noeud, fill = blue](c12) at (1,2.3) {};

\node[noeud, fill = blue] (c2) at (5,1) {};

\node[noeud, fill = blue] (c3) at (5.5,3.7) {};

\draw (x1) node[above left] {$x_1$};
\draw (x6) node[above right] {$x_6$};
\draw (x2) node[above right] {$x_2$};
\draw (x5) node[above left] {$x_5$};
\draw (2,1.7) node[scale = 1.2, color = blue] {$C_1$};

\draw [line width = 1.4pt] (x1) -- (x2);
\draw [line width = 1.4pt] (x1) -- (x6);
\draw [line width = 1.4pt] (x2) -- (x4);
\draw [line width = 1.4pt] (x2) -- (x5);

\draw [line width = 1.4pt] (x2) -- (c11);
\draw [line width = 1.4pt] (x2) -- (c12);
\draw [line width = 1.4pt] (x3) -- (c12);
\draw [line width = 1.4pt] (c11) -- (c12);

\draw [line width = 1.4pt] (c3) -- (x6);
\draw [line width = 1.4pt] (c3) -- (x5);
\draw [line width = 1.4pt] (c3) -- (x4);

\draw [line width = 1.4pt] (c2) -- (x5);
\draw [line width = 1.4pt] (c2) -- (x4);
\draw [out=180,in=300,line width = 1.4pt] (c2) to (x3);
\end{tikzpicture}

%% file: tikz/shy-sep.tex
\begin{tikzpicture}

\node[noeud] (x) at (5,5.5) {};
\node[noeud] (y) at (5,4) {};
\node[noeud] (z) at (5,2.5) {};
\node[noeud] (w) at (5,1) {};

\node[noeud] (a) at (2.5,4) {};
\node[noeud] (b) at (2.5,2.2) {};

\node[noeud] (t1) at (7,3) {};
\node[noeud] (t2) at (9,3) {};
\node[noeud] (t3) at (11,3) {};

\node (c) at (7.6,4.75) {};
\node[noeud] (c1) at (7.6,5) {};
\node[noeud] (c2) at (7.2,4.5) {};
\node[noeud] (c3) at (8,4.5) {};

\node[noeud] (d2) at (9.2,6.5) {};

\node[noeud] (u1) at (7,0.5) {};
\node[noeud] (u2) at (8,0.5) {};
\node[noeud] (u3) at (9,0.5) {};


\draw (t1) node[below right,scale=1.5] {$u$};
\draw (t3) node[below right,scale=1.5] {$v$};

\draw[rounded corners,line width=1mm, color = red] (4.2,6.3) rectangle (5.8,0.2);
\draw (5,6.8) node[scale=1.5, color = red] {$S$};
\draw[rounded corners,line width=1mm, color = yellow] (4.4,6.1) rectangle (5.6,0.4);

\draw[rounded corners,line width=1mm, color = red] (6.4,2.2) rectangle (11.8,3.7);
\draw (12.3,3) node[scale=1.5, color = red] {$T$};
\draw[rounded corners,line width=1mm, color = yellow] (6.6,2.4) rectangle (9.7,3.5);
\draw (10.1,2.6) node[scale=1.5, color = yellow!80!black] {$S'$};

\draw[rounded corners,line width=0.3mm, color = white!50!green] (6.1,-0.1) rectangle (12.8,7.6);
\draw (12.2,0.2) node[scale=1.2, color = green] {$C$};

\draw[line width=0.3mm, color = blue, dashed] (c) circle (0.75);
\draw (8.4,5.35) node[scale=1.2, color = blue] {$B$};

\draw [line width = 1.4pt] (x) -- (a);
\draw [line width = 1.4pt] (y) -- (a);
\draw [line width = 1.4pt] (z) -- (a);
\draw [line width = 1.4pt] (b) -- (a);
\draw [line width = 1.4pt] (b) -- (w);

\draw [line width = 1.4pt] (t2) -- (t3);

\draw [line width = 1.4pt] (c1) -- (c2);
\draw [line width = 1.4pt] (c2) -- (c3);
\draw [line width = 1.4pt] (c1) -- (c3);

\draw [line width = 1.4pt] (x) -- (c1);
\draw [line width = 1.4pt] (y) -- (c1);
\draw [out = 200, in = 70,line width = 1.4pt] (c2) to (z);
\draw [out = 210, in = 70, line width = 1.4pt] (c2) to (w);
\draw [line width = 1.4pt] (t1) -- (c2);
\draw [line width = 1.4pt] (t2) -- (c3);

\draw [out=50,in=160,line width = 1.4pt] (x) to (d2);
\draw [out=50,in=185,line width = 1.4pt] (y) to (d2);
\draw [line width = 1.4pt] (d2) -- (t3);

\draw [line width = 1.4pt] (u1) -- (u2);
\draw [line width = 1.4pt] (u2) -- (u3);

\draw [line width = 1.4pt] (z) -- (u1);
\draw [line width = 1.4pt] (w) -- (u1);
\draw [line width = 1.4pt] (t1) -- (u1);
\draw [line width = 1.4pt] (t3) -- (u3);

\end{tikzpicture}

%% file: tikz/shy-sep-2.tex
\begin{tikzpicture}

\node[noeud] (x) at (5,5.5) {};
\node[noeud] (y) at (5,4) {};
\node[noeud] (z) at (5,2.5) {};
\node[noeud] (w) at (5,1) {};

\node[noeud] (a) at (2.5,4) {};
\node[noeud] (b) at (2.5,2.2) {};

\node[noeud] (t1) at (7,3) {};
\node[noeud] (t2) at (9,3) {};
\node[noeud] (t3) at (11,3) {};

\node (c) at (7.6,4.75) {};
\node[noeud] (c1) at (7.6,5) {};
\node[noeud] (c2) at (7.2,4.5) {};
\node[noeud] (c3) at (8,4.5) {};

\node[noeud] (d2) at (9.2,6.5) {};

\node[noeud] (u1) at (7,0.5) {};
\node[noeud] (u2) at (8,0.5) {};
\node[noeud] (u3) at (9,0.5) {};


\draw (t1) node[below right,scale=1.5] {$u$};
\draw (t3) node[below right,scale=1.5] {$v$};

\draw[rounded corners,line width=1mm, color = red] (4.2,6.3) -- (5.8,6.3) -- (5.8,3.5) -- (9.7,3.5) -- (9.7,2.4) -- (5.8,2.4) -- (5.8,0.2) -- (4.2,0.2) -- (4.2,6.3);
\draw (5,6.8) node[scale=1.5, color = red] {$S'$};

\draw[rounded corners,line width=1mm, color = red] (10.3,2.4) rectangle (11.8,3.5);
\draw (12.3,3) node[scale=1.5, color = red] {$T'$};

\draw[rounded corners,line width=0.3mm, color = white!50!green] (6.1,-0.1) -- (6.1,1.0) -- (10.0,1.0) -- (10.0,4.0) -- (8.8,4.0) -- (8.8,7.6) -- (12.8,7.6) -- (12.8,-0.1) -- (6.1,-0.1);
\draw (12.2,0.2) node[scale=1.2, color = green] {$C'$};

\draw[line width=0.3mm, color = blue, dashed] (c) circle (0.75);
\draw (8.4,5.35) node[scale=1.2, color = blue] {$B$};

\draw [line width = 1.4pt] (x) -- (a);
\draw [line width = 1.4pt] (y) -- (a);
\draw [line width = 1.4pt] (z) -- (a);
\draw [line width = 1.4pt] (b) -- (a);
\draw [line width = 1.4pt] (b) -- (w);

\draw [line width = 1.4pt] (t2) -- (t3);

\draw [line width = 1.4pt] (c1) -- (c2);
\draw [line width = 1.4pt] (c2) -- (c3);
\draw [line width = 1.4pt] (c1) -- (c3);

\draw [line width = 1.4pt] (x) -- (c1);
\draw [line width = 1.4pt] (y) -- (c1);
\draw [out = 200, in = 70,line width = 1.4pt] (c2) to (z);
\draw [out = 210, in = 70, line width = 1.4pt] (c2) to (w);
\draw [line width = 1.4pt] (t1) -- (c2);
\draw [line width = 1.4pt] (t2) -- (c3);

\draw [out=50,in=160,line width = 1.4pt] (x) to (d2);
\draw [out=50,in=185,line width = 1.4pt] (y) to (d2);
\draw [line width = 1.4pt] (d2) -- (t3);

\draw [line width = 1.4pt] (u1) -- (u2);
\draw [line width = 1.4pt] (u2) -- (u3);

\draw [line width = 1.4pt] (z) -- (u1);
\draw [line width = 1.4pt] (w) -- (u1);
\draw [line width = 1.4pt] (t1) -- (u1);
\draw [line width = 1.4pt] (t3) -- (u3);

\end{tikzpicture}

%% file: tikz/pmc-super-shy.tex
\begin{tikzpicture}

\draw[rounded corners,line width=1mm, color = purple, fill = white!90!purple] (2,5) -- (2,4) -- (4,-0.5) -- (7.2,-0.5) -- (8.8,1.9) -- (12.8,1.9) -- (12.8,3.9) -- (6.8,3.9) -- (6.8,7.3) -- (4.1,7.3) -- (2,5);
\draw (3.0,6.7) node[scale=1.5, color = purple] {$X$};

\node[noeud] (x) at (5,5.5) {};
\node[noeud, line width = 1mm, color = orange,fill = white!70!orange] (y) at (5,4) {};
\node[noeud] (z) at (5,2.5) {};
\node[noeud] (w) at (5,1) {};

\node[noeud] (a) at (2.5,4) {};
\node[noeud] (b) at (2,2.0) {};

\node[noeud] (t1) at (7,3) {};
\node[noeud, line width = 1mm, color = orange,fill = white!70!orange] (t2) at (9,3) {};
\node[noeud, fill = blue] (t3) at (11,3) {};

\node (c) at (8,4.75) {};
\node[noeud] (c1) at (8,5) {};
\node[noeud] (c2) at (7.6,4.4) {};
\node[noeud] (c3) at (8.4,4.4) {};
\draw[line width=1mm, color = purple, fill = white!90!purple] (c1) circle (0.45);
\node[noeud] (c1) at (8,5) {};

\node (d) at (9.85,6.75) {};
\node[noeud] (d1) at (10.5,7) {};
\node[noeud, fill = blue] (d2) at (9.2,6.5) {};
\draw[line width=1mm, color = purple, fill = white!90!purple] (d1) circle (0.5);
\node[noeud, line width = 1mm, color = orange,fill = white!70!orange] (d1) at (10.5,7) {};

\node[noeud, line width = 1mm, color = orange,fill = white!70!orange] (u1) at (7,0) {};
\node[noeud, fill = blue] (u2) at (8,0) {};
\node[noeud, fill = blue] (u3) at (9,0) {};

\node[noeud] (v1) at (7.5,1.3) {};

\draw (b) node[above left,scale=1.5] {$b$};
\draw (t3) node[below right,scale=1.5, color = blue] {$a$};

\draw[rounded corners,line width=0.5mm, color = red] (4.2,6.3) rectangle (5.8,0.2);
\draw (5,6.8) node[scale=1.5, color = red] {$S'$};

\draw[rounded corners,line width=0.5mm, color = red] (6.4,2.2) rectangle (11.8,3.7);
\draw (12.3,3) node[scale=1.5, color = red] {$T'$};

\draw[rounded corners,line width=0.3mm, color = white!50!green] (6.1,-0.8) rectangle (13.0,8.1);
\draw (11.7,-0.5) node[scale=1.2, color = green] {$C'$};
%
%
\draw[line width=0.3mm, color = blue, dashed] (7.2,4.1) rectangle (8.8,5.7);
\draw[line width=0.3mm, color = blue, dashed] (v1) circle (0.3);
\draw[rounded corners,line width=0.3mm, color = blue, dashed] (6.7,-0.3) rectangle (9.3,0.3);
\draw[line width=0.3mm, color = blue, dashed] (d) circle (1.3);

\draw (8.5,3) node[scale=1.5, color = orange] {$S$};

\draw [line width = 1.4pt] (x) -- (a);
\draw [line width = 1.4pt] (y) -- (a);
\draw [line width = 1.4pt] (z) -- (a);
\draw [line width = 1.4pt] (b) -- (a);
\draw [line width = 1.4pt] (b) -- (w);

\draw [out=60,in=300,line width = 1.4pt] (w) to (y);
\draw [out=-30,in=-120,line width = 1.4pt] (z) to (t2);

\draw [line width = 1.4pt] (t2) -- (t3);

\draw [line width = 1.4pt] (c1) -- (c2);
\draw [line width = 1.4pt] (c2) -- (c3);
\draw [line width = 1.4pt] (c1) -- (c3);

\draw [line width = 1.4pt] (x) -- (c1);
\draw [line width = 1.4pt] (y) -- (c2);
\draw [line width = 1.4pt] (t1) -- (c2);
\draw [line width = 1.4pt] (t2) -- (c3);

\draw [line width = 1.4pt] (d1) -- (d2);

\draw [out=50,in=160,line width = 1.4pt] (x) to (d1);
\draw [out=50,in=185,line width = 1.4pt] (y) to (d2);
\draw [line width = 1.4pt] (d2) -- (t3);

\draw [line width = 1.4pt] (u1) -- (u2);
\draw [line width = 1.4pt] (u2) -- (u3);

\draw [line width = 1.4pt] (z) -- (u1);
\draw [line width = 1.4pt] (w) -- (u1);
\draw [line width = 1.4pt] (t1) -- (u1);
\draw [line width = 1.4pt] (t3) -- (u3);

\draw [out=0,in=200,line width = 1.4pt] (w) to (v1);
\draw [out=30,in=-110,line width = 1.4pt] (v1) to (t2);
\end{tikzpicture}

%% file: tikz/neighbor-yes.tex
\begin{tikzpicture}

\draw[rounded corners,line width=1mm, color = white!80!orange, fill = white!80!orange] (4.4,3.0) rectangle (5.4,0.5);

\draw[rounded corners,line width=1mm, color = white!80!orange, fill = white!80!orange] (6.4,2.4) rectangle (10.6,3.4);
\draw[rounded corners,line width=0.5mm, color = blue] (6.3,2.3) rectangle (10.7,3.5);

\draw[rounded corners,line width=1mm, color = white!80!orange, fill = white!80!orange] (2.6,3.6) rectangle (3.4,4.4);
\draw[rounded corners,line width=0.5mm, color = blue] (2.5,3.5) rectangle (3.5,4.5);

\draw (3.9,0.2) node[scale=1.5, color =  white!60!orange] {$\Omega'$};

\draw[rounded corners,line width=0.5mm, color = blue] (4.3,0.4) rectangle (5.5,1.5);
\draw[rounded corners,line width=0.5mm, color = blue] (6.6,-0.4) rectangle (7.4,0.4);

\draw (8.7,3.9) node[scale=1.3, color =  blue] {$\Omega$};

\node[noeud] (x) at (5,5.5) {};
\node[noeud] (y) at (5,4) {};
\node[noeud, line width = 1mm, color = orange, fill = black] (z) at (5,2.5) {};
\node[noeud] (w) at (5,1) {};

\node[noeud, line width = 1mm, color = orange, fill = white!60!green] (a) at (3,4) {};
\node[noeud] (b) at (3,2.0) {};

\node[noeud, color = black, fill = black] (t1) at (7,3) {};
\node[noeud, line width = 1mm, color = orange, fill = white!60!green] (t2) at (8.5,3) {};
\node[noeud] (t3) at (10,3) {};

%
\node[noeud] (d) at (9,5.5) {};

\node[noeud, line width = 1mm, color = orange, fill = white!60!green] (u1) at (7,0) {};
\node[noeud] (u2) at (8,0) {};
\node[noeud] (u3) at (9,0) {};

\node[noeud] (v1) at (7.5,1.3) {};

\draw (z) node[below left,scale=1.2, color = orange] {$u^*$};
\draw (t3) node[below right,scale=1.2, color = orange] {$b^*$};

\draw [line width = 1.4pt] (x) -- (a);
\draw [line width = 1.4pt] (y) -- (a);
\draw [line width = 1.4pt] (z) -- (a);
\draw [line width = 1.4pt] (b) -- (a);
\draw [line width = 1.4pt] (b) -- (w);

\draw [out=60,in=300,line width = 1.4pt] (w) to (y);
\draw [out=-30,in=-120,line width = 1.4pt] (z) to (t2);

\draw [line width = 1.4pt] (t2) -- (t3);

\draw [line width = 1.4pt] (x) -- (t1);
\draw [line width = 1.4pt] (y) -- (t1);
\draw [line width = 1.4pt] (x) -- (t2);
\draw [line width = 1.4pt] (y) -- (t2);

\draw [line width = 1.4pt] (x) to (d);
\draw [line width = 1.4pt] (y) to (d);
\draw [line width = 1.4pt] (d) -- (t3);

\draw [line width = 1.4pt] (u1) -- (u2);
\draw [line width = 1.4pt] (u2) -- (u3);

\draw [line width = 1.4pt] (z) -- (u1);
\draw [line width = 1.4pt] (w) -- (u1);
\draw [line width = 1.4pt] (t1) -- (u1);
\draw [line width = 1.4pt] (t3) -- (u3);

\draw [out=0,in=200,line width = 1.4pt] (w) to (v1);
\draw [out=30,in=-110,line width = 1.4pt] (v1) to (t2);
\end{tikzpicture}

%% file: tikz/neighbor-no.tex
\begin{tikzpicture}

\draw[rounded corners,line width=1mm, color = white!80!orange, fill = white!80!orange] (4.6,3.0) rectangle (5.5,0.5);
\draw[rounded corners,line width=0.5mm, color = blue] (4.5,0.4) rectangle (5.6,6.0);

\draw[rounded corners,line width=1mm, color = white!80!orange, fill = white!80!orange] (6.4,2.4) rectangle (10.6,3.4);
\draw[rounded corners,line width=0.5mm, color = blue] (6.3,2.3) rectangle (10.7,3.5);

\draw[rounded corners,line width=1mm, color = white!80!orange, fill = white!80!orange] (2.4,3.4) rectangle (3.4,4.4);

\draw (3.9,0.2) node[scale=1.5, color =  white!60!orange] {$\Omega'$};

\draw (8.7,3.9) node[scale=1.3, color =  blue] {$\tilde{\Omega}$};

\node[noeud, line width = 1mm, color = orange, fill = white!60!green] (x) at (5,5.5) {};
\node[noeud, line width = 1mm, color = orange, fill = white!60!green] (y) at (5,4) {};
\node[noeud, line width = 1mm, color = orange, fill = white!60!green] (z) at (5,2.5) {};
\node[noeud, color = black, fill = white!60!green] (w) at (5,1) {};

\node[noeud, line width = 1mm, color = orange, fill = black] (a) at (3,4) {};
\node[noeud, line width = 1mm, color = orange, fill = black] (b) at (3,2.0) {};

\node[noeud, color = black, fill = black] (t1) at (7,3) {};
\node[noeud] (t2) at (8.5,3) {};
\node[noeud] (t3) at (10,3) {};

%
\node[noeud] (d) at (9,5.5) {};

\node[noeud] (u1) at (7,0) {};
\node[noeud] (u2) at (8,0) {};
\node[noeud] (u3) at (9,0) {};

\node[noeud] (v1) at (7.5,1.3) {};

\draw (t1) node[below right,scale=1.2] {$x$};
\draw (t2) node[below right,scale=1.2] {$y$};
\draw (a) node[below left,scale=1.2, color = orange] {$u^*$};
\draw (t3) node[below right,scale=1.2, color = orange] {$b^*$};

\draw [line width = 1.4pt] (x) -- (a);
\draw [line width = 1.4pt] (y) -- (a);
\draw [line width = 1.4pt] (z) -- (a);
\draw [line width = 1.4pt] (b) -- (a);
\draw [line width = 1.4pt] (b) -- (w);

\draw [out=60,in=300,line width = 1.4pt] (w) to (y);
\draw [out=-30,in=-120,line width = 1.4pt] (z) to (t2);

\draw [line width = 1.4pt] (t2) -- (t3);

\draw [line width = 1.4pt] (x) -- (t1);
\draw [line width = 1.4pt] (y) -- (t1);
\draw [line width = 1.4pt] (x) -- (t2);
\draw [line width = 1.4pt] (y) -- (t2);

\draw [line width = 1.4pt] (x) to (d);
\draw [line width = 1.4pt] (y) to (d);
\draw [line width = 1.4pt] (d) -- (t3);

\draw [line width = 1.4pt] (u1) -- (u2);
\draw [line width = 1.4pt] (u2) -- (u3);

\draw [line width = 1.4pt] (z) -- (u1);
\draw [line width = 1.4pt] (w) -- (u1);
\draw [line width = 1.4pt] (t1) -- (u1);
\draw [line width = 1.4pt] (t3) -- (u3);

\draw [out=0,in=200,line width = 1.4pt] (w) to (v1);
\draw [out=30,in=-110,line width = 1.4pt] (v1) to (t2);
\end{tikzpicture}

%% file: tikz/ashutter.tex
\begin{tikzpicture}

\node[noeud] (x) at (5,5.5) {};
\node[noeud] (y) at (5,4) {};
\node[noeud] (z) at (5,2.5) {};
\node[noeud] (w) at (5,1) {};

\node[noeud, fill = purple] (a) at (2.5,4) {};
\node[noeud] (b) at (2,2.0) {};

\node[noeud] (t1) at (7,3) {};
\node[noeud] (t2) at (9,3) {};
\node[noeud] (t3) at (11,3) {};

\node (c) at (8,4.75) {};
\node[noeud] (c1) at (8,5) {};
\node[noeud] (c2) at (7.6,4.5) {};
\node[noeud] (c3) at (8.4,4.5) {};

\node (d) at (9.85,6.75) {};
\node[noeud] (d1) at (10.5,7) {};
\node[noeud] (d2) at (9.2,6.5) {};

\node[noeud] (u1) at (7,0) {};
\node[noeud] (u2) at (8,0) {};
\node[noeud] (u3) at (9,0) {};

\node[noeud] (v1) at (7.5,1.3) {};

\draw (x) node[above left,scale=1.5] {$x$};
\draw (y) node[above left,scale=1.5] {$y$};
\draw (z) node[below left,scale=1.5] {$z$};
\draw (w) node[below left,scale=1.5] {$w$};
\draw (a) node[above left,scale=1.5] {$a$};
\draw (t1) node[below right,scale=1.5] {$t_1$};
\draw (t2) node[below right,scale=1.5] {$t_2$};
\draw (t3) node[below right,scale=1.5] {$t_3$};

\draw[rounded corners,line width=1mm, color = red] (4.2,6.3) rectangle (5.8,0.2);
\draw (5,6.8) node[scale=1.5, color = red] {$S$};

\draw[rounded corners,line width=1mm, color = red] (6.4,2.2) rectangle (11.8,3.7);
\draw (12.3,3) node[scale=1.5, color = red] {$T$};

\draw[rounded corners,line width=0.3mm, color = white!50!green] (6.1,-0.8) rectangle (12.8,8.1);
\draw (11.7,-0.5) node[scale=1.2, color = green] {$C_G(T,S)$};

\draw[rounded corners,line width=0.3mm, color = white!50!green] (3.9,6.5) rectangle (0.5,0);
\draw (1.6,0.5) node[scale=1.2, color = green] {$C_G(a,S)$};

\draw[line width=0.3mm, color = blue, dashed] (c) circle (0.75);
\draw[line width=0.3mm, color = blue, dashed] (v1) circle (0.3);
\draw[rounded corners,line width=0.3mm, color = blue, dashed] (6.7,-0.3) rectangle (9.3,0.3);
\draw[line width=0.3mm, color = blue, dashed] (d) circle (1.2);

\draw[rounded corners,line width=0.5mm, color = purple] (1.8,3.4) rectangle (3.1,4.7);
\draw (2.5,5.0) node[scale=1, color = purple] {$D_0 = \{a\}$};

\draw [line width = 1.4pt] (x) -- (a);
\draw [line width = 1.4pt] (y) -- (a);
\draw [line width = 1.4pt] (z) -- (a);
\draw [line width = 1.4pt] (b) -- (a);
\draw [line width = 1.4pt] (b) -- (w);

\draw [out=60,in=300,line width = 1.4pt] (w) to (y);
\draw [out=-30,in=-120,line width = 1.4pt] (z) to (t2);

\draw [line width = 1.4pt] (t2) -- (t3);

\draw [line width = 1.4pt] (c1) -- (c2);
\draw [line width = 1.4pt] (c2) -- (c3);
\draw [line width = 1.4pt] (c1) -- (c3);

\draw [line width = 1.4pt] (x) -- (c1);
\draw [line width = 1.4pt] (y) -- (c2);
\draw [line width = 1.4pt] (t1) -- (c2);
\draw [line width = 1.4pt] (t2) -- (c3);

\draw [line width = 1.4pt] (d1) -- (d2);

\draw [out=50,in=160,line width = 1.4pt] (x) to (d1);
\draw [out=50,in=185,line width = 1.4pt] (y) to (d2);
\draw [line width = 1.4pt] (d2) -- (t3);

\draw [line width = 1.4pt] (u1) -- (u2);
\draw [line width = 1.4pt] (u2) -- (u3);

\draw [line width = 1.4pt] (z) -- (u1);
\draw [line width = 1.4pt] (w) -- (u1);
\draw [line width = 1.4pt] (t1) -- (u1);
\draw [line width = 1.4pt] (t3) -- (u3);

\draw [out=0,in=200,line width = 1.4pt] (w) to (v1);
\draw [out=30,in=-110,line width = 1.4pt] (v1) to (t2);
\end{tikzpicture}

%% file: tikz/notashutter.tex
\begin{tikzpicture}

\node[noeud] (x) at (5,5.5) {};
\node[noeud] (y) at (5,4) {};
\node[noeud] (z) at (5,2.5) {};
\node[noeud] (w) at (5,1) {};

\node[noeud, fill = purple] (a) at (0.5,4) {};
\node[noeud] (b) at (0,2.0) {};

\node[noeud] (t1) at (7,3) {};
\node[noeud] (t2) at (9,3) {};
\node[noeud] (t3) at (11,3) {};

\node (c) at (8,4.75) {};
\node[noeud] (c1) at (8,5) {};
\node[noeud] (c2) at (7.6,4.5) {};
\node[noeud] (c3) at (8.4,4.5) {};

\node[noeud] (d) at (10,6.2) {};

\node[noeud] (u1) at (7,0) {};
\node[noeud] (u2) at (8,0) {};
\node[noeud] (u3) at (9,0) {};

\node[noeud] (v1) at (7.0,1.6) {};

\node[noeud, fill = orange] (x1) at (3,5.7) {};
\node[noeud, fill = orange] (x2) at (3,5.1) {};
\node[noeud, fill = orange] (x3) at (1.5,3) {};

\node[noeud] (p) at (3.5,3.4) {};

\node[noeud, fill = yellow] (z1) at (3.2,2.7) {};
\node[noeud, fill = yellow] (z2) at (3.2,2.1) {};

\draw (x) node[above left,scale=1.5] {$x$};
\draw (y) node[below left,scale=1.5] {$y$};
\draw (z) node[below left,scale=1.5] {$z$};
\draw (w) node[below left,scale=1.5] {$w$};
\draw (a) node[above left,scale=1.5] {$a$};
\draw (t1) node[below right,scale=1.5] {$t_1$};
\draw (t2) node[below right,scale=1.5] {$t_2$};
\draw (t3) node[below right,scale=1.5] {$t_3$};

\draw[rounded corners,line width=1mm, color = red] (4.2,6.3) rectangle (5.8,0.2);
\draw (5,6.8) node[scale=1.5, color = red] {$S$};

\draw[rounded corners,line width=1mm, color = red] (6.4,2.2) rectangle (11.8,3.7);
\draw (12.3,3) node[scale=1.5, color = red] {$T$};

\draw[rounded corners,line width=0.3mm, color = white!50!green] (6.1,-0.8) rectangle (12.8,7.6);
\draw (11.7,-0.5) node[scale=1.2, color = green] {$C_G(T,S)$};

\draw[rounded corners,line width=0.3mm, color = white!50!green] (3.9,7.2) rectangle (-1.5,0);
\draw (0.6,0.5) node[scale=1.2, color = green] {$C_G(a,S)$};

\draw[line width=0.3mm, color = blue, dashed] (c) circle (0.75);
\draw[line width=0.3mm, color = blue, dashed] (v1) circle (0.3);
\draw[rounded corners,line width=0.3mm, color = blue, dashed] (6.7,-0.3) rectangle (9.3,0.3);
\draw[line width=0.3mm, color = blue, dashed] (d) circle (0.3);

\draw[rounded corners,line width=0.5mm, color = purple] (-0.2,3.4) rectangle (1.1,4.7);
\draw (0.5,5.0) node[scale=1.1, color = purple] {$D_0' = \{a\}$};

\draw[rounded corners,line width=0.5mm, color = orange] (2.7,4.8) rectangle (3.3,5.4);
\draw (2.9,4.4) node[scale=1.0, color = orange] {$D_1'$};

\draw[rounded corners,line width=0.5mm, color = black!10!yellow] (2.9,1.8) rectangle (3.5,3.0);
\draw (3.2,1.4) node[scale=1.1, color = black!20!yellow] {$D_2'$};
\draw [line width = 1.4pt] (b) -- (a);
\draw [out=-40,in=190,line width = 1.4pt] (b) to (w);

\draw [out=60,in=300,line width = 1.4pt] (w) to (y);

\draw [line width = 1.4pt] (c1) -- (c2);
\draw [line width = 1.4pt] (c2) -- (c3);
\draw [line width = 1.4pt] (c1) -- (c3);

\draw [line width = 1.4pt] (x) -- (c1);
\draw [out=60,in=200,line width = 1.4pt] (z) to (c2);
\draw [line width = 1.4pt] (t1) -- (c2);
\draw [line width = 1.4pt] (t2) -- (c3);

\draw [line width = 1.4pt] (x) -- (d);
\draw [line width = 1.4pt] (d) -- (t3);

\draw [line width = 1.4pt] (y) -- (t1);
\draw [line width = 1.4pt] (y) -- (t2);
\draw [out=0,in=160,line width = 1.4pt] (y) to (t3);
%
%
\draw [line width = 1.4pt] (u1) -- (u2);
\draw [line width = 1.4pt] (u2) -- (u3);
%
\draw [line width = 1.4pt] (w) -- (u1);
\draw [line width = 1.4pt] (z) -- (u2);
\draw [line width = 1.4pt] (t2) -- (u2);
\draw [line width = 1.4pt] (t3) -- (u3);
\draw [line width = 1.4pt] (w) -- (v1);
\draw [line width = 1.4pt] (v1) -- (t1);
\draw [out=0, in=230,line width = 1.4pt] (v1) to (t3);

\draw [line width = 1.4pt] (x) -- (x1);
\draw [line width = 1.4pt] (x) -- (x2);
\draw [line width = 1.4pt] (x) -- (x3);

\draw [line width = 1.4pt] (y) -- (x1);
\draw [line width = 1.4pt] (y) -- (x2);

\draw [line width = 1.4pt] (y) -- (a);
\draw [line width = 1.4pt] (x1) -- (a);
\draw [line width = 1.4pt] (x2) -- (a);
\draw [line width = 1.4pt] (x3) -- (a);

\draw [line width = 1.4pt] (z) -- (z1);
\draw [line width = 1.4pt] (z) -- (z2);
\draw [line width = 1.4pt] (z1) -- (z2);

\draw [line width = 1.4pt] (p) -- (z1);
\draw [line width = 1.4pt] (p) -- (x2);

\draw [out=140,in=230,line width = 1.4pt] (z2) to (x2);
\end{tikzpicture}

%% file: tikz/ashutter2.tex
\begin{tikzpicture}

\node[noeud] (x) at (5,5.5) {};
\node[noeud] (y) at (5,4) {};
\node[noeud] (z) at (5,2.5) {};
\node[noeud] (w) at (5,1) {};

\node[noeud, fill = purple] (a) at (2.5,4) {};
\node[noeud] (b) at (2,2.0) {};

\node[noeud] (t1) at (7,3) {};
\node[noeud] (t2) at (9,3) {};
\node[noeud] (t3) at (11,3) {};

\node (c) at (8,4.75) {};
\node[noeud] (c1) at (8,5) {};
\node[noeud] (c2) at (7.6,4.5) {};
\node[noeud] (c3) at (8.4,4.5) {};

\node (d) at (9.85,6.75) {};
\node[noeud] (d1) at (10.5,7) {};
\node[noeud] (d2) at (9.2,6.5) {};

\node[noeud] (u1) at (7,0) {};
\node[noeud] (u2) at (8,0) {};
\node[noeud] (u3) at (9,0) {};

\node[noeud] (v1) at (7.5,1.3) {};

\draw (x) node[above left,scale=1.5] {$x$};
\draw (y) node[above left,scale=1.5] {$y$};
\draw (z) node[below left,scale=1.5] {$z$};
\draw (w) node[below left,scale=1.5] {$w$};
\draw (a) node[above left,scale=1.5] {$a$};
\draw (t1) node[below right,scale=1.5] {$t_1$};
\draw (t2) node[below right,scale=1.5] {$t_2$};
\draw (t3) node[below right,scale=1.5] {$t_3$};

\draw[rounded corners,line width=1mm, color = red] (4.2,6.3) rectangle (5.8,0.2);
\draw (5,6.8) node[scale=1.5, color = red] {$S$};

\draw[rounded corners,line width=1mm, color = red] (6.4,2.2) rectangle (11.8,3.7);
\draw (12.3,3) node[scale=1.5, color = red] {$T$};


\draw[rounded corners,line width=0.3mm, color = white!50!green] (3.9,6.5) rectangle (0.5,0);
\draw (1.6,0.5) node[scale=1.2, color = green] {$C_G(a,S)$};


\draw[rounded corners,line width=0.5mm, color = purple] (1.8,3.4) rectangle (3.1,4.7);
\draw (2.5,5.0) node[scale=1, color = purple] {$D$};

\draw [line width = 1.4pt] (x) -- (a);
\draw [line width = 1.4pt] (y) -- (a);
\draw [line width = 1.4pt] (z) -- (a);
\draw [line width = 1.4pt] (b) -- (a);
\draw [line width = 1.4pt] (b) -- (w);

\draw [out=60,in=300,line width = 1.4pt] (w) to (y);
\draw [out=-30,in=-120,line width = 1.4pt] (z) to (t2);

\draw [line width = 1.4pt] (t2) -- (t3);

\draw [line width = 1.4pt] (c1) -- (c2);
\draw [line width = 1.4pt] (c2) -- (c3);
\draw [line width = 1.4pt] (c1) -- (c3);

\draw [line width = 1.4pt] (x) -- (c1);
\draw [line width = 1.4pt] (y) -- (c2);
\draw [line width = 1.4pt] (t1) -- (c2);
\draw [line width = 1.4pt] (t2) -- (c3);

\draw [line width = 1.4pt] (d1) -- (d2);

\draw [out=50,in=160,line width = 1.4pt] (x) to (d1);
\draw [out=50,in=185,line width = 1.4pt] (y) to (d2);
\draw [line width = 1.4pt] (d2) -- (t3);

\draw [line width = 1.4pt] (u1) -- (u2);
\draw [line width = 1.4pt] (u2) -- (u3);

\draw [line width = 1.4pt] (z) -- (u1);
\draw [line width = 1.4pt] (w) -- (u1);
\draw [line width = 1.4pt] (t1) -- (u1);
\draw [line width = 1.4pt] (t3) -- (u3);

\draw [out=0,in=200,line width = 1.4pt] (w) to (v1);
\draw [out=30,in=-110,line width = 1.4pt] (v1) to (t2);
\end{tikzpicture}

%% file: tikz/blue-cover.tex
\begin{tikzpicture}

\node[noeud, fill = blue] (x) at (2,2) {};
\node[noeud, fill = blue] (y) at (2,4) {};
\node[noeud, fill = blue] (z) at (4,4) {};
\node[noeud, fill = red] (w) at (4,2) {};

\draw (x) node[below left,scale=1] {$x$};
\draw (y) node[above left,scale=1] {$y$};
\draw (z) node[above right,scale=1] {$z$};
\draw (w) node[below right,scale=1] {$w$};

\draw [line width = 1.4pt] (x) -- (w);
\draw [line width = 1.4pt] (x) -- (z);
\draw [line width = 1.4pt] (y) -- (z);
\end{tikzpicture}

%% file: tikz/preimage-a.tex
\begin{tikzpicture}

\node[noeud] (x) at (5,5.5) {};
\node[noeud] (y) at (5,4) {};
\node[noeud] (z) at (5,2.5) {};
\node[noeud] (w) at (5,1) {};

\node[noeud] (a) at (2.5,4) {};
\node[noeud] (b) at (2,2.0) {};

\node[noeud] (t1) at (7,3) {};
\node[noeud] (t2) at (9,3) {};
\node[noeud] (t3) at (11,3) {};

\node (c) at (8,4.75) {};
\node[noeud] (c1) at (8,5) {};
\node[noeud] (c2) at (7.6,4.5) {};
\node[noeud] (c3) at (8.4,4.5) {};

\node (d) at (9.85,6.75) {};
\node[noeud] (d1) at (10.5,7) {};
\node[noeud] (d2) at (9.2,6.5) {};

\node[noeud] (u1) at (7,0) {};
\node[noeud] (u2) at (8,0) {};
\node[noeud] (u3) at (9,0) {};

\node[noeud] (v1) at (7.5,1.3) {};

\draw (x) node[above left,scale=1.5] {$x$};
\draw (y) node[above left,scale=1.5] {$y$};
\draw (z) node[below left,scale=1.5] {$z$};
\draw (w) node[below left,scale=1.5] {$w$};
\draw (a) node[above left,scale=1.5] {$a$};
\draw (t1) node[below right,scale=1.5] {$t_1$};
\draw (t2) node[below right,scale=1.5] {$t_2$};
\draw (t3) node[below right,scale=1.5] {$t_3$};

\draw[rounded corners,line width=1mm, color = red] (4.2,6.3) rectangle (5.8,0.2);
\draw[rounded corners,line width=1mm, color = yellow] (4.3,3.2) rectangle (5.7,0.3);

\draw[rounded corners,line width=1mm, color = red] (6.4,2.2) rectangle (11.8,3.7);
\draw[rounded corners,line width=1mm, color = yellow] (6.5,2.3) rectangle (11.7,3.6);
\draw (5,6.8) node[scale=1.5, color = red] {$\Omega$};
\draw (2.5,5.2) node[scale=1.5, color = black!20!yellow] {$\Omega_a$};




\draw[rounded corners,line width=1mm, color = yellow] (1.8,3.4) rectangle (3.1,4.7);

\draw [line width = 1.4pt] (x) -- (a);
\draw [line width = 1.4pt] (y) -- (a);
\draw [line width = 1.4pt] (z) -- (a);
\draw [line width = 1.4pt] (b) -- (a);
\draw [line width = 1.4pt] (b) -- (w);

\draw [out=60,in=300,line width = 1.4pt] (w) to (y);
\draw [out=-30,in=-120,line width = 1.4pt] (z) to (t2);

\draw [line width = 1.4pt] (t2) -- (t3);

\draw [line width = 1.4pt] (c1) -- (c2);
\draw [line width = 1.4pt] (c2) -- (c3);
\draw [line width = 1.4pt] (c1) -- (c3);

\draw [line width = 1.4pt] (x) -- (c1);
\draw [line width = 1.4pt] (y) -- (c2);
\draw [line width = 1.4pt] (t1) -- (c2);
\draw [line width = 1.4pt] (t2) -- (c3);

\draw [line width = 1.4pt] (d1) -- (d2);

\draw [out=50,in=160,line width = 1.4pt] (x) to (d1);
\draw [out=50,in=185,line width = 1.4pt] (y) to (d2);
\draw [line width = 1.4pt] (d2) -- (t3);

\draw [line width = 1.4pt] (u1) -- (u2);
\draw [line width = 1.4pt] (u2) -- (u3);

\draw [line width = 1.4pt] (z) -- (u1);
\draw [line width = 1.4pt] (w) -- (u1);
\draw [line width = 1.4pt] (t1) -- (u1);
\draw [line width = 1.4pt] (t3) -- (u3);

\draw [out=0,in=200,line width = 1.4pt] (w) to (v1);
\draw [out=30,in=-110,line width = 1.4pt] (v1) to (t2);
\end{tikzpicture}

%% file: tikz/preimage-nota.tex
\begin{tikzpicture}

\node[noeud] (x) at (5,5.5) {};
\node[noeud] (y) at (5,4) {};
\node[noeud] (z) at (5,2.5) {};
\node[noeud] (w) at (5,1) {};

\node[noeud] (a) at (0.5,4) {};
\node[noeud] (b) at (0,2.0) {};

\node[noeud] (t1) at (7,3) {};
\node[noeud] (t2) at (9,3) {};
\node[noeud] (t3) at (11,3) {};

\node (c) at (8,4.75) {};
\node[noeud] (c1) at (8,5) {};
\node[noeud] (c2) at (7.6,4.5) {};
\node[noeud] (c3) at (8.4,4.5) {};

\node[noeud] (d) at (10,6.2) {};

\node[noeud] (u1) at (7,0) {};
\node[noeud] (u2) at (8,0) {};
\node[noeud] (u3) at (9,0) {};

\node[noeud] (v1) at (7.0,1.6) {};

\node[noeud] (x1) at (3,5.7) {};
\node[noeud] (x2) at (3,5.1) {};
\node[noeud] (x3) at (1.5,3) {};

\node[noeud] (p) at (3.5,3.4) {};

\node[noeud] (z1) at (3.2,2.7) {};
\node[noeud] (z2) at (3.2,2.1) {};

\draw (x) node[above left,scale=1.5] {$x$};
\draw (y) node[below left,scale=1.5] {$y$};
\draw (z) node[below left,scale=1.5] {$z$};
\draw (w) node[below left,scale=1.5] {$w$};
\draw (a) node[above left,scale=1.5] {$a$};
\draw (t1) node[below right,scale=1.5] {$t_1$};
\draw (t2) node[below right,scale=1.5] {$t_2$};
\draw (t3) node[below right,scale=1.5] {$t_3$};

\draw[rounded corners,line width=1mm, color = red] (4.2,6.3) rectangle (5.8,0.2);
\draw[rounded corners,line width=1mm, color = yellow] (4.3,1.7) rectangle (5.7,0.3);
\draw[rounded corners,line width=1mm, color = yellow] (4.3,6.2) rectangle (5.7,3.3);
\draw (5,6.8) node[scale=1.5, color = red] {$\Omega$};

\draw[rounded corners,line width=1mm, color = red] (6.4,2.2) rectangle (11.8,3.7);
\draw[rounded corners,line width=1mm, color = yellow] (6.5,2.3) rectangle (11.7,3.6);




%

\draw[rounded corners,line width=1mm, color = black!10!yellow] (2.8,1.7) rectangle (3.6,3.1);
\draw (2.5,1.3) node[scale=1.5, color = black!20!yellow] {$\Omega'$};
\draw [line width = 1.4pt] (b) -- (a);
\draw [out=-40,in=190,line width = 1.4pt] (b) to (w);

\draw [out=60,in=300,line width = 1.4pt] (w) to (y);

\draw [line width = 1.4pt] (c1) -- (c2);
\draw [line width = 1.4pt] (c2) -- (c3);
\draw [line width = 1.4pt] (c1) -- (c3);

\draw [line width = 1.4pt] (x) -- (c1);
\draw [out=60,in=200,line width = 1.4pt] (z) to (c2);
\draw [line width = 1.4pt] (t1) -- (c2);
\draw [line width = 1.4pt] (t2) -- (c3);

\draw [line width = 1.4pt] (x) -- (d);
\draw [line width = 1.4pt] (d) -- (t3);

\draw [line width = 1.4pt] (y) -- (t1);
\draw [line width = 1.4pt] (y) -- (t2);
\draw [out=0,in=160,line width = 1.4pt] (y) to (t3);
%
%
\draw [line width = 1.4pt] (u1) -- (u2);
\draw [line width = 1.4pt] (u2) -- (u3);
%
\draw [line width = 1.4pt] (w) -- (u1);
\draw [line width = 1.4pt] (z) -- (u2);
\draw [line width = 1.4pt] (t2) -- (u2);
\draw [line width = 1.4pt] (t3) -- (u3);
\draw [line width = 1.4pt] (w) -- (v1);
\draw [line width = 1.4pt] (v1) -- (t1);
\draw [out=0, in=230,line width = 1.4pt] (v1) to (t3);

\draw [line width = 1.4pt] (x) -- (x1);
\draw [line width = 1.4pt] (x) -- (x2);
\draw [line width = 1.4pt] (x) -- (x3);

\draw [line width = 1.4pt] (y) -- (x1);
\draw [line width = 1.4pt] (y) -- (x2);

\draw [line width = 1.4pt] (y) -- (a);
\draw [line width = 1.4pt] (x1) -- (a);
\draw [line width = 1.4pt] (x2) -- (a);
\draw [line width = 1.4pt] (x3) -- (a);

\draw [line width = 1.4pt] (z) -- (z1);
\draw [line width = 1.4pt] (z) -- (z2);
\draw [line width = 1.4pt] (z1) -- (z2);

\draw [line width = 1.4pt] (p) -- (z1);
\draw [line width = 1.4pt] (p) -- (x2);

\draw [out=140,in=230,line width = 1.4pt] (z2) to (x2);
\end{tikzpicture}

%% file: tikz/solutionpath.tex
\begin{tikzpicture}
\node (p0) at (2,30) {\scalebox{0.6}{\input{tikz/preimage-0}}};
\node (p1) at (10,30) {\scalebox{0.6}{\input{tikz/preimage-1}}};

\node (p2) at (10,22) {\scalebox{0.6}{\input{tikz/preimage-2}}};
\node (p3) at (2,22) {\scalebox{0.6}{\input{tikz/preimage-3}}};

\node (p4) at (2,15) {\scalebox{0.6}{\input{tikz/preimage-4}}};
\node (p5) at (10,15) {\scalebox{0.6}{\input{tikz/preimage-5}}};

\node (a5) at (8,12.5) {};
\node (b4) at (4,12.5) {};
\node (a4) at (1,17.5) {};
\node (b3) at (1,19.5) {};

\node (a3) at (4,19.5) {};
\node (b2) at (9,19.5) {};

\node (a2) at (7.5,24.5) {};
\node (b1) at (7.5,27.5) {};

\node (a1) at (7,27.5) {};
\node (b0) at (2,27.5) {};

\node[scale = 1.2,color = blue] (o) at (0.4,32.6) {Target PMC reached !};
\node[scale = 1.2,color = green] (o5) at (12.5,12.3) {$a \in \Omega^{(5)}$};

\draw[->,line width = 2pt,out=200, in=-20] (a5)to node[text centered,text width=30mm,yshift=-3mm]{$(\Omega^{(5)},\Omega^{(4)}) \in \mathcal{N}$} (b4);
\draw[->,line width = 2pt] (a4)to node[midway,text centered,text width=30mm,xshift=-15mm]{$(\Omega^{(4)},\Omega^{(3)}) \in \mathcal{N}$} (b3);

\draw[->,line width = 2pt,out=-20, in=200] (a3)to node[text centered,text width=120mm,xshift=17mm, yshift=-8mm]{$(\Omega^{(3)},\Omega^{(2)}) \in \mathcal{N}$: singleton $\{x_2\}$ is an $a$-shutter which separates pairs $(y,u)$ and $(y,z_2)$ closely, $u$ and $z_2$ are fully adjacent to $\{x_2\}$ and not adjacent to $C(a,\Omega^{(2)} \cup \{x_2\})$.} (b2);

\draw[->,line width = 2pt] (a2)to node[text centered,text width=80mm,xshift=34mm]{$(\Omega^{(2)},\Omega^{(1)}) \in \mathcal{N}$: singleton $\{u\}$ is an $a$-shutter which separates pair $(y,z_1)$ closely, $z_1$ is fully adjacent to $\{u\}$ and not adjacent to $C(a,\Omega^{(1)} \cup \{u\})$.} (b1);

\draw[->,line width = 2pt,out=200, in=-20] (a1)to node[text centered,text width=80mm,xshift=-20mm, yshift=-13mm]{$(\Omega^{(1)},\Omega) \in \mathcal{N}$: singleton $\{a\}$ separates $x$ from $w$ but is not an $a$-shutter since it is not adjacent to them. Set $\{z_1,z_2\}$ is an $a$-shutter of $\Omega$: it is the close $(z,y)$-separator, and $z$ is not adjacent to $C(a,\Omega \cup \{z_1,z_2\})$.} (b0);
\end{tikzpicture}

%% file: tikz/preimage-0.tex
\begin{tikzpicture}

\node[noeud] (x) at (5,5.5) {};
\node[noeud] (y) at (5,4) {};
\node[noeud] (z) at (5,2.5) {};
\node[noeud] (w) at (5,1) {};

\node[noeud, fill = green] (a) at (0.5,4) {};
\node[noeud, fill = green] (b) at (0,2.0) {};

\node[noeud] (t1) at (7,3) {};
\node[noeud] (t2) at (9,3) {};
\node[noeud] (t3) at (11,3) {};

\node (c) at (8,4.75) {};
\node[noeud] (c1) at (8,5) {};
\node[noeud] (c2) at (7.6,4.5) {};
\node[noeud] (c3) at (8.4,4.5) {};

\node[noeud] (d) at (10,6.2) {};

\node[noeud] (u1) at (7,0) {};
\node[noeud] (u2) at (8,0) {};
\node[noeud] (u3) at (9,0) {};

\node[noeud] (v1) at (7.0,1.6) {};

\node[noeud, fill = green] (x1) at (3,5.7) {};
\node[noeud, fill = green] (x2) at (3,5.1) {};
\node[noeud, fill = green] (x3) at (1.5,3) {};

\node[noeud, fill = green] (p) at (3.5,3.4) {};

\node[noeud, fill = green] (z1) at (3.2,2.7) {};
\node[noeud, fill = green] (z2) at (3.2,2.1) {};

\draw (x) node[above left,scale=1.5] {$x$};
\draw (y) node[below left,scale=1.5] {$y$};
\draw (z) node[below left,scale=1.5] {$z$};
\draw (w) node[below left,scale=1.5] {$w$};
\draw (a) node[above left,scale=1.5] {$a$};

\draw[rounded corners,line width=1.5mm, color = blue] (-0.5,-0.5) rectangle (12.0,7.2);

\draw[rounded corners,line width=1mm, color = red] (4.2,6.3) rectangle (5.8,0.2);
\draw (5,6.8) node[scale=2, color = red] {$\Omega$};

\draw[rounded corners,line width=1mm, color = red] (6.4,2.2) rectangle (11.8,3.7);




%

\draw [line width = 1.4pt] (b) -- (a);
\draw [out=-40,in=190,line width = 1.4pt] (b) to (w);

\draw [out=60,in=300,line width = 1.4pt] (w) to (y);

\draw [line width = 1.4pt] (c1) -- (c2);
\draw [line width = 1.4pt] (c2) -- (c3);
\draw [line width = 1.4pt] (c1) -- (c3);

\draw [line width = 1.4pt] (x) -- (c1);
\draw [out=60,in=200,line width = 1.4pt] (z) to (c2);
\draw [line width = 1.4pt] (t1) -- (c2);
\draw [line width = 1.4pt] (t2) -- (c3);

\draw [line width = 1.4pt] (x) -- (d);
\draw [line width = 1.4pt] (d) -- (t3);

\draw [line width = 1.4pt] (y) -- (t1);
\draw [line width = 1.4pt] (y) -- (t2);
\draw [out=0,in=160,line width = 1.4pt] (y) to (t3);
%
%
\draw [line width = 1.4pt] (u1) -- (u2);
\draw [line width = 1.4pt] (u2) -- (u3);
\draw [line width = 1.4pt] (w) -- (t1);
\draw [line width = 1.4pt] (w) -- (u1);
\draw [line width = 1.4pt] (t2) -- (u2);
\draw [line width = 1.4pt] (t3) -- (u3);
\draw [line width = 1.4pt] (z) -- (v1);
\draw [line width = 1.4pt] (v1) -- (t1);
\draw [out=0, in=230,line width = 1.4pt] (v1) to (t3);

\draw [line width = 1.4pt] (x) -- (x1);
\draw [line width = 1.4pt] (x) -- (x2);
\draw [line width = 1.4pt] (x) -- (x3);

\draw [line width = 1.4pt] (y) -- (x1);
\draw [line width = 1.4pt] (y) -- (x2);

\draw [line width = 1.4pt] (y) -- (a);
\draw [line width = 1.4pt] (x1) -- (a);
\draw [line width = 1.4pt] (x2) -- (a);
\draw [line width = 1.4pt] (x3) -- (a);

\draw [line width = 1.4pt] (z) -- (z1);
\draw [line width = 1.4pt] (z) -- (z2);
\draw [line width = 1.4pt] (z1) -- (z2);

\draw [line width = 1.4pt] (p) -- (z1);
\draw [line width = 1.4pt] (p) -- (x2);

\draw [out=140,in=230,line width = 1.4pt] (z2) to (x2);
\end{tikzpicture}

%% file: tikz/preimage-1.tex
\begin{tikzpicture}

\node[noeud] (x) at (5,5.5) {};
\node[noeud] (y) at (5,4) {};
\node[noeud] (z) at (5,2.5) {};
\node[noeud] (w) at (5,1) {};

\node[noeud, fill = green] (a) at (0.5,4) {};
\node[noeud, fill = green] (b) at (0,2.0) {};

\node[noeud] (t1) at (7,3) {};
\node[noeud] (t2) at (9,3) {};
\node[noeud] (t3) at (11,3) {};

\node (c) at (8,4.75) {};
\node[noeud] (c1) at (8,5) {};
\node[noeud] (c2) at (7.6,4.5) {};
\node[noeud] (c3) at (8.4,4.5) {};

\node[noeud] (d) at (10,6.2) {};

\node[noeud] (u1) at (7,0) {};
\node[noeud] (u2) at (8,0) {};
\node[noeud] (u3) at (9,0) {};

\node[noeud] (v1) at (7.0,1.6) {};

\node[noeud, fill = green] (x1) at (3,5.7) {};
\node[noeud, fill = green] (x2) at (3,5.1) {};
\node[noeud, fill = green] (x3) at (1.5,3) {};

\node[noeud, fill = green] (p) at (3.5,3.4) {};

\node[noeud] (z1) at (3.2,2.7) {};
\node[noeud] (z2) at (3.2,2.1) {};

\draw (x) node[above left,scale=1.5] {$x$};
\draw (y) node[below left,scale=1.5] {$y$};
\draw (w) node[below left,scale=1.5] {$w$};
\draw (a) node[above left,scale=1.5] {$a$};
\draw (z1) node[below right,scale=1.5] {$z_1$};
\draw (z2) node[below right,scale=1.5] {$z_2$};

\draw[rounded corners,line width=1mm] (-0.5,-0.5) rectangle (12.0,7.2);

\draw[line width=1mm, color = red] (w) circle (0.3);
\draw[rounded corners,line width=1mm, color = red] (4.3,6.2) rectangle (5.7,3.3);
\draw (5,6.8) node[scale=2, color = red] {$\Omega^{(1)}$};

\draw[rounded corners,line width=1mm, color = red] (6.5,2.3) rectangle (11.7,3.6);



\draw[rounded corners,line width=1mm, color = red] (2.8,1.5) rectangle (4.0,3.1);
\draw [line width = 1.4pt] (b) -- (a);
\draw [out=-40,in=190,line width = 1.4pt] (b) to (w);

\draw [out=60,in=300,line width = 1.4pt] (w) to (y);

\draw [line width = 1.4pt] (c1) -- (c2);
\draw [line width = 1.4pt] (c2) -- (c3);
\draw [line width = 1.4pt] (c1) -- (c3);

\draw [line width = 1.4pt] (x) -- (c1);
\draw [out=60,in=200,line width = 1.4pt] (z) to (c2);
\draw [line width = 1.4pt] (t1) -- (c2);
\draw [line width = 1.4pt] (t2) -- (c3);

\draw [line width = 1.4pt] (x) -- (d);
\draw [line width = 1.4pt] (d) -- (t3);

\draw [line width = 1.4pt] (y) -- (t1);
\draw [line width = 1.4pt] (y) -- (t2);
\draw [out=0,in=160,line width = 1.4pt] (y) to (t3);
%
%
\draw [line width = 1.4pt] (u1) -- (u2);
\draw [line width = 1.4pt] (u2) -- (u3);
\draw [line width = 1.4pt] (w) -- (t1);
\draw [line width = 1.4pt] (w) -- (u1);
\draw [line width = 1.4pt] (t2) -- (u2);
\draw [line width = 1.4pt] (t3) -- (u3);
\draw [line width = 1.4pt] (z) -- (v1);
\draw [line width = 1.4pt] (v1) -- (t1);
\draw [out=0, in=230,line width = 1.4pt] (v1) to (t3);

\draw [line width = 1.4pt] (x) -- (x1);
\draw [line width = 1.4pt] (x) -- (x2);
\draw [line width = 1.4pt] (x) -- (x3);

\draw [line width = 1.4pt] (y) -- (x1);
\draw [line width = 1.4pt] (y) -- (x2);

\draw [line width = 1.4pt] (y) -- (a);
\draw [line width = 1.4pt] (x1) -- (a);
\draw [line width = 1.4pt] (x2) -- (a);
\draw [line width = 1.4pt] (x3) -- (a);

\draw [line width = 1.4pt] (z) -- (z1);
\draw [line width = 1.4pt] (z) -- (z2);
\draw [line width = 1.4pt] (z1) -- (z2);

\draw [line width = 1.4pt] (p) -- (z1);
\draw [line width = 1.4pt] (p) -- (x2);

\draw [out=140,in=230,line width = 1.4pt] (z2) to (x2);
\end{tikzpicture}

%% file: tikz/preimage-2.tex
\begin{tikzpicture}

\node[noeud] (x) at (5,5.5) {};
\node[noeud] (y) at (5,4) {};
\node[noeud] (z) at (5,2.5) {};
\node[noeud] (w) at (5,1) {};

\node[noeud, fill = green] (a) at (0.5,4) {};
\node[noeud, fill = green] (b) at (0,2.0) {};

\node[noeud] (t1) at (7,3) {};
\node[noeud] (t2) at (9,3) {};
\node[noeud] (t3) at (11,3) {};

\node (c) at (8,4.75) {};
\node[noeud] (c1) at (8,5) {};
\node[noeud] (c2) at (7.6,4.5) {};
\node[noeud] (c3) at (8.4,4.5) {};

\node[noeud] (d) at (10,6.2) {};

\node[noeud] (u1) at (7,0) {};
\node[noeud] (u2) at (8,0) {};
\node[noeud] (u3) at (9,0) {};

\node[noeud] (v1) at (7.0,1.6) {};

\node[noeud, fill = green] (x1) at (3,5.7) {};
\node[noeud, fill = green] (x2) at (3,5.1) {};
\node[noeud, fill = green] (x3) at (1.5,3) {};

\node[noeud] (p) at (3.5,3.4) {};

\node[noeud] (z1) at (3.2,2.7) {};
\node[noeud] (z2) at (3.2,2.1) {};

\draw (x) node[above left,scale=1.5] {$x$};
\draw (y) node[below left,scale=1.5] {$y$};
\draw (w) node[below left,scale=1.5] {$w$};
\draw (a) node[above left,scale=1.5] {$a$};
\draw (z2) node[below right,scale=1.5] {$z_2$};
\draw (p) node[below right,scale=1.5] {$u$};

\draw[rounded corners,line width=1mm] (-0.5,-0.5) rectangle (12.0,7.2);

\draw[line width=1mm, color = red] (w) circle (0.3);
\draw[rounded corners,line width=1mm, color = red] (4.3,6.2) rectangle (5.7,3.3);
\draw (5,6.8) node[scale=2, color = red] {$\Omega^{(2)}$};

\draw[rounded corners,line width=1mm, color = red] (6.5,2.3) rectangle (11.7,3.6);




\draw[line width=1mm, color = red] (p) circle (0.3);
\draw[line width=1mm, color = red] (z2) circle (0.3);
\draw [line width = 1.4pt] (b) -- (a);
\draw [out=-40,in=190,line width = 1.4pt] (b) to (w);

\draw [out=60,in=300,line width = 1.4pt] (w) to (y);

\draw [line width = 1.4pt] (c1) -- (c2);
\draw [line width = 1.4pt] (c2) -- (c3);
\draw [line width = 1.4pt] (c1) -- (c3);

\draw [line width = 1.4pt] (x) -- (c1);
\draw [out=60,in=200,line width = 1.4pt] (z) to (c2);
\draw [line width = 1.4pt] (t1) -- (c2);
\draw [line width = 1.4pt] (t2) -- (c3);

\draw [line width = 1.4pt] (x) -- (d);
\draw [line width = 1.4pt] (d) -- (t3);

\draw [line width = 1.4pt] (y) -- (t1);
\draw [line width = 1.4pt] (y) -- (t2);
\draw [out=0,in=160,line width = 1.4pt] (y) to (t3);
%
%
\draw [line width = 1.4pt] (u1) -- (u2);
\draw [line width = 1.4pt] (u2) -- (u3);
\draw [line width = 1.4pt] (w) -- (t1);
\draw [line width = 1.4pt] (w) -- (u1);
\draw [line width = 1.4pt] (t2) -- (u2);
\draw [line width = 1.4pt] (t3) -- (u3);
\draw [line width = 1.4pt] (z) -- (v1);
\draw [line width = 1.4pt] (v1) -- (t1);
\draw [out=0, in=230,line width = 1.4pt] (v1) to (t3);

\draw [line width = 1.4pt] (x) -- (x1);
\draw [line width = 1.4pt] (x) -- (x2);
\draw [line width = 1.4pt] (x) -- (x3);

\draw [line width = 1.4pt] (y) -- (x1);
\draw [line width = 1.4pt] (y) -- (x2);

\draw [line width = 1.4pt] (y) -- (a);
\draw [line width = 1.4pt] (x1) -- (a);
\draw [line width = 1.4pt] (x2) -- (a);
\draw [line width = 1.4pt] (x3) -- (a);

\draw [line width = 1.4pt] (z) -- (z1);
\draw [line width = 1.4pt] (z) -- (z2);
\draw [line width = 1.4pt] (z1) -- (z2);

\draw [line width = 1.4pt] (p) -- (z1);
\draw [line width = 1.4pt] (p) -- (x2);

\draw [out=140,in=230,line width = 1.4pt] (z2) to (x2);
\end{tikzpicture}

%% file: tikz/preimage-3.tex
\begin{tikzpicture}

\node[noeud] (x) at (5,5.5) {};
\node[noeud] (y) at (5,4) {};
\node[noeud] (z) at (5,2.5) {};
\node[noeud] (w) at (5,1) {};

\node[noeud, fill = green] (a) at (0.5,4) {};
\node[noeud, fill = green] (b) at (0,2.0) {};

\node[noeud] (t1) at (7,3) {};
\node[noeud] (t2) at (9,3) {};
\node[noeud] (t3) at (11,3) {};

\node (c) at (8,4.75) {};
\node[noeud] (c1) at (8,5) {};
\node[noeud] (c2) at (7.6,4.5) {};
\node[noeud] (c3) at (8.4,4.5) {};

\node[noeud] (d) at (10,6.2) {};

\node[noeud] (u1) at (7,0) {};
\node[noeud] (u2) at (8,0) {};
\node[noeud] (u3) at (9,0) {};

\node[noeud] (v1) at (7.0,1.6) {};

\node[noeud, fill = green] (x1) at (3,5.7) {};
\node[noeud] (x2) at (3,5.1) {};
\node[noeud, fill = green] (x3) at (1.5,3) {};

\node[noeud] (p) at (3.5,3.4) {};

\node[noeud] (z1) at (3.2,2.7) {};
\node[noeud] (z2) at (3.2,2.1) {};

\draw (x) node[above left,scale=1.5] {$x$};
\draw (y) node[below left,scale=1.5] {$y$};
\draw (w) node[below left,scale=1.5] {$w$};
\draw (a) node[above left,scale=1.5] {$a$};
\draw (x2) node[below,yshift = -2.4mm,xshift = -0.4mm,scale=1.2] {$x_2$};

\draw[rounded corners,line width=1mm] (-0.5,-0.5) rectangle (12.0,7.2);

\draw[line width=1mm, color = red] (w) circle (0.3);
\draw[rounded corners,line width=1mm, color = red] (4.3,6.2) rectangle (5.7,3.3);
\draw (5,6.8) node[scale=2, color = red] {$\Omega^{(3)}$};

\draw[rounded corners,line width=1mm, color = red] (6.5,2.3) rectangle (11.7,3.6);




\draw[line width=1mm, color = red] (x2) circle (0.3);
\draw [line width = 1.4pt] (b) -- (a);
\draw [out=-40,in=190,line width = 1.4pt] (b) to (w);

\draw [out=60,in=300,line width = 1.4pt] (w) to (y);

\draw [line width = 1.4pt] (c1) -- (c2);
\draw [line width = 1.4pt] (c2) -- (c3);
\draw [line width = 1.4pt] (c1) -- (c3);

\draw [line width = 1.4pt] (x) -- (c1);
\draw [out=60,in=200,line width = 1.4pt] (z) to (c2);
\draw [line width = 1.4pt] (t1) -- (c2);
\draw [line width = 1.4pt] (t2) -- (c3);

\draw [line width = 1.4pt] (x) -- (d);
\draw [line width = 1.4pt] (d) -- (t3);

\draw [line width = 1.4pt] (y) -- (t1);
\draw [line width = 1.4pt] (y) -- (t2);
\draw [out=0,in=160,line width = 1.4pt] (y) to (t3);
%
%
\draw [line width = 1.4pt] (u1) -- (u2);
\draw [line width = 1.4pt] (u2) -- (u3);
\draw [line width = 1.4pt] (w) -- (t1);
\draw [line width = 1.4pt] (w) -- (u1);
\draw [line width = 1.4pt] (t2) -- (u2);
\draw [line width = 1.4pt] (t3) -- (u3);
\draw [line width = 1.4pt] (z) -- (v1);
\draw [line width = 1.4pt] (v1) -- (t1);
\draw [out=0, in=230,line width = 1.4pt] (v1) to (t3);

\draw [line width = 1.4pt] (x) -- (x1);
\draw [line width = 1.4pt] (x) -- (x2);
\draw [line width = 1.4pt] (x) -- (x3);

\draw [line width = 1.4pt] (y) -- (x1);
\draw [line width = 1.4pt] (y) -- (x2);

\draw [line width = 1.4pt] (y) -- (a);
\draw [line width = 1.4pt] (x1) -- (a);
\draw [line width = 1.4pt] (x2) -- (a);
\draw [line width = 1.4pt] (x3) -- (a);

\draw [line width = 1.4pt] (z) -- (z1);
\draw [line width = 1.4pt] (z) -- (z2);
\draw [line width = 1.4pt] (z1) -- (z2);

\draw [line width = 1.4pt] (p) -- (z1);
\draw [line width = 1.4pt] (p) -- (x2);

\draw [out=140,in=230,line width = 1.4pt] (z2) to (x2);
\end{tikzpicture}

%% file: tikz/preimage-4.tex
\begin{tikzpicture}

\node[noeud] (x) at (5,5.5) {};
\node[noeud] (y) at (5,4) {};
\node[noeud] (z) at (5,2.5) {};
\node[noeud] (w) at (5,1) {};

\node[noeud, fill = green] (a) at (0.5,4) {};
\node[noeud, fill = green] (b) at (0,2.0) {};

\node[noeud] (t1) at (7,3) {};
\node[noeud] (t2) at (9,3) {};
\node[noeud] (t3) at (11,3) {};

\node (c) at (8,4.75) {};
\node[noeud] (c1) at (8,5) {};
\node[noeud] (c2) at (7.6,4.5) {};
\node[noeud] (c3) at (8.4,4.5) {};

\node[noeud] (d) at (10,6.2) {};

\node[noeud] (u1) at (7,0) {};
\node[noeud] (u2) at (8,0) {};
\node[noeud] (u3) at (9,0) {};

\node[noeud] (v1) at (7.0,1.6) {};

\node[noeud] (x1) at (3,5.7) {};
\node[noeud] (x2) at (3,5.1) {};
\node[noeud] (x3) at (1.5,3) {};

\node[noeud] (p) at (3.5,3.4) {};

\node[noeud] (z1) at (3.2,2.7) {};
\node[noeud] (z2) at (3.2,2.1) {};

\draw (a) node[above left,scale=1.5] {$a$};

\draw[rounded corners,line width=1mm] (-0.5,-0.5) rectangle (12.0,7.1);

\draw[line width=1mm, color = red] (y) circle (0.3);
\draw[line width=1mm, color = red] (w) circle (0.3);
\draw (9.8,4.1) node[scale=2, color = red] {$\Omega^{(4)}$};

\draw[rounded corners,line width=1mm, color = red] (6.5,2.3) rectangle (11.7,3.6);




\draw[rounded corners,line width=1mm, color = red] (2.7,4.8) rectangle (3.3,6.0);
\draw[line width=1mm, color = red] (x3) circle (0.3);
\draw [line width = 1.4pt] (b) -- (a);
\draw [out=-40,in=190,line width = 1.4pt] (b) to (w);

\draw [out=60,in=300,line width = 1.4pt] (w) to (y);

\draw [line width = 1.4pt] (c1) -- (c2);
\draw [line width = 1.4pt] (c2) -- (c3);
\draw [line width = 1.4pt] (c1) -- (c3);

\draw [line width = 1.4pt] (x) -- (c1);
\draw [out=60,in=200,line width = 1.4pt] (z) to (c2);
\draw [line width = 1.4pt] (t1) -- (c2);
\draw [line width = 1.4pt] (t2) -- (c3);

\draw [line width = 1.4pt] (x) -- (d);
\draw [line width = 1.4pt] (d) -- (t3);

\draw [line width = 1.4pt] (y) -- (t1);
\draw [line width = 1.4pt] (y) -- (t2);
\draw [out=0,in=160,line width = 1.4pt] (y) to (t3);
%
%
\draw [line width = 1.4pt] (u1) -- (u2);
\draw [line width = 1.4pt] (u2) -- (u3);
\draw [line width = 1.4pt] (w) -- (t1);
\draw [line width = 1.4pt] (w) -- (u1);
\draw [line width = 1.4pt] (t2) -- (u2);
\draw [line width = 1.4pt] (t3) -- (u3);
\draw [line width = 1.4pt] (z) -- (v1);
\draw [line width = 1.4pt] (v1) -- (t1);
\draw [out=0, in=230,line width = 1.4pt] (v1) to (t3);

\draw [line width = 1.4pt] (x) -- (x1);
\draw [line width = 1.4pt] (x) -- (x2);
\draw [line width = 1.4pt] (x) -- (x3);

\draw [line width = 1.4pt] (y) -- (x1);
\draw [line width = 1.4pt] (y) -- (x2);

\draw [line width = 1.4pt] (y) -- (a);
\draw [line width = 1.4pt] (x1) -- (a);
\draw [line width = 1.4pt] (x2) -- (a);
\draw [line width = 1.4pt] (x3) -- (a);

\draw [line width = 1.4pt] (z) -- (z1);
\draw [line width = 1.4pt] (z) -- (z2);
\draw [line width = 1.4pt] (z1) -- (z2);

\draw [line width = 1.4pt] (p) -- (z1);
\draw [line width = 1.4pt] (p) -- (x2);

\draw [out=140,in=230,line width = 1.4pt] (z2) to (x2);
\end{tikzpicture}

%% file: tikz/preimage-5.tex
\begin{tikzpicture}

\node[noeud] (x) at (5,5.5) {};
\node[noeud] (y) at (5,4) {};
\node[noeud] (z) at (5,2.5) {};
\node[noeud] (w) at (5,1) {};

\node[noeud] (a) at (0.5,4) {};
\node[noeud] (b) at (0,2.0) {};

\node[noeud] (t1) at (7,3) {};
\node[noeud] (t2) at (9,3) {};
\node[noeud] (t3) at (11,3) {};

\node (c) at (8,4.75) {};
\node[noeud] (c1) at (8,5) {};
\node[noeud] (c2) at (7.6,4.5) {};
\node[noeud] (c3) at (8.4,4.5) {};

\node[noeud] (d) at (10,6.2) {};

\node[noeud] (u1) at (7,0) {};
\node[noeud] (u2) at (8,0) {};
\node[noeud] (u3) at (9,0) {};

\node[noeud] (v1) at (7.0,1.6) {};

\node[noeud] (x1) at (3,5.7) {};
\node[noeud] (x2) at (3,5.1) {};
\node[noeud] (x3) at (1.5,3) {};

\node[noeud] (p) at (3.5,3.4) {};

\node[noeud] (z1) at (3.2,2.7) {};
\node[noeud] (z2) at (3.2,2.1) {};

\draw (a) node[above left,scale=1.5] {$a$};

\draw[rounded corners,line width=1.5mm, color = green] (-0.5,-0.5) rectangle (12.0,7.1);

\draw[line width=1mm, color = red] (y) circle (0.3);
\draw[line width=1mm, color = red] (w) circle (0.3);
\draw (9.8,4.1) node[scale=2, color = red] {$\Omega^{(5)}$};

\draw[rounded corners,line width=1mm, color = red] (6.5,2.3) rectangle (11.7,3.6);




\draw[line width=1mm, color = red] (a) circle (0.3);
\draw [line width = 1.4pt] (b) -- (a);
\draw [out=-40,in=190,line width = 1.4pt] (b) to (w);

\draw [out=60,in=300,line width = 1.4pt] (w) to (y);

\draw [line width = 1.4pt] (c1) -- (c2);
\draw [line width = 1.4pt] (c2) -- (c3);
\draw [line width = 1.4pt] (c1) -- (c3);

\draw [line width = 1.4pt] (x) -- (c1);
\draw [out=60,in=200,line width = 1.4pt] (z) to (c2);
\draw [line width = 1.4pt] (t1) -- (c2);
\draw [line width = 1.4pt] (t2) -- (c3);

\draw [line width = 1.4pt] (x) -- (d);
\draw [line width = 1.4pt] (d) -- (t3);

\draw [line width = 1.4pt] (y) -- (t1);
\draw [line width = 1.4pt] (y) -- (t2);
\draw [out=0,in=160,line width = 1.4pt] (y) to (t3);
%
%
\draw [line width = 1.4pt] (u1) -- (u2);
\draw [line width = 1.4pt] (u2) -- (u3);
\draw [line width = 1.4pt] (w) -- (t1);
\draw [line width = 1.4pt] (w) -- (u1);
\draw [line width = 1.4pt] (t2) -- (u2);
\draw [line width = 1.4pt] (t3) -- (u3);
\draw [line width = 1.4pt] (z) -- (v1);
\draw [line width = 1.4pt] (v1) -- (t1);
\draw [out=0, in=230,line width = 1.4pt] (v1) to (t3);

\draw [line width = 1.4pt] (x) -- (x1);
\draw [line width = 1.4pt] (x) -- (x2);
\draw [line width = 1.4pt] (x) -- (x3);

\draw [line width = 1.4pt] (y) -- (x1);
\draw [line width = 1.4pt] (y) -- (x2);

\draw [line width = 1.4pt] (y) -- (a);
\draw [line width = 1.4pt] (x1) -- (a);
\draw [line width = 1.4pt] (x2) -- (a);
\draw [line width = 1.4pt] (x3) -- (a);

\draw [line width = 1.4pt] (z) -- (z1);
\draw [line width = 1.4pt] (z) -- (z2);
\draw [line width = 1.4pt] (z1) -- (z2);

\draw [line width = 1.4pt] (p) -- (z1);
\draw [line width = 1.4pt] (p) -- (x2);

\draw [out=140,in=230,line width = 1.4pt] (z2) to (x2);
\end{tikzpicture}

%% file: tikz/solutionpath2.tex
\begin{tikzpicture}
\node (p0) at (2,30) {\scalebox{0.6}{\input{tikz/preimage-0}}};
\node (p1) at (10,30) {\scalebox{0.6}{\input{tikz/preimage-1alt}}};

\node (p2) at (10,23) {\scalebox{0.6}{\input{tikz/preimage-2alt}}};

\node (a2) at (8.5,25.5) {};
\node (b1) at (8.5,27.5) {};

\node (a1) at (7,27.5) {};
\node (b0) at (3,27.5) {};

\node[scale = 1.2,color = blue] (o) at (0.2,27.2) {Target PMC reached !};
\node[scale = 1.2,color = green] (o5) at (12.3,20.2) {$a \in \Omega^{**}$};

\draw[->,line width = 2pt] (a2)to node[text centered,text width=60mm,xshift=30mm]{$(\Omega^{**},\Omega^*) \in \mathcal{N}$: $\{a\}$ is an $a$-shutter which separates pairs $(b,x),(b,z),(y,z)$ closely.} (b1);

\draw[->,line width = 2pt,out=210, in=-30] (a1)to node[text centered,text width=60mm,xshift=-20mm, yshift=-12mm]{$(\Omega^*,\Omega) \in \mathcal{N}$: $\{b\}$ is an $a$-shutter which separates pair $(x,w)$ closely. $w$ is adjacent to $b$ and is not adjacent to $C(a,\Omega \cup \{b\})$.} (b0);
\end{tikzpicture}

%% file: tikz/preimage-1alt.tex
\begin{tikzpicture}

\node[noeud] (x) at (5,5.5) {};
\node[noeud] (y) at (5,4) {};
\node[noeud] (z) at (5,2.5) {};
\node[noeud] (w) at (5,1) {};

\node[noeud, fill = green] (a) at (0.5,4) {};
\node[noeud] (b) at (0,2.0) {};

\node[noeud] (t1) at (7,3) {};
\node[noeud] (t2) at (9,3) {};
\node[noeud] (t3) at (11,3) {};

\node (c) at (8,4.75) {};
\node[noeud] (c1) at (8,5) {};
\node[noeud] (c2) at (7.6,4.5) {};
\node[noeud] (c3) at (8.4,4.5) {};

\node[noeud] (d) at (10,6.2) {};

\node[noeud] (u1) at (7,0) {};
\node[noeud] (u2) at (8,0) {};
\node[noeud] (u3) at (9,0) {};

\node[noeud] (v1) at (7.0,1.6) {};

\node[noeud, fill = green] (x1) at (3,5.7) {};
\node[noeud, fill = green] (x2) at (3,5.1) {};
\node[noeud, fill = green] (x3) at (1.5,3) {};

\node[noeud, fill = green] (p) at (3.5,3.4) {};

\node[noeud, fill = green] (z1) at (3.2,2.7) {};
\node[noeud, fill = green] (z2) at (3.2,2.1) {};

\draw (x) node[above left,scale=1.5] {$x$};
\draw (y) node[below left,scale=1.5] {$y$};
\draw (z) node[below left,scale=1.5] {$z$};
\draw (a) node[above left,scale=1.5] {$a$};
\draw (b) node[above right,scale=1.5] {$b$};

\draw[rounded corners,line width=1mm] (-0.5,-0.5) rectangle (12.0,7.2);

\draw[rounded corners,line width=1mm, color = red] (4.3,6.2) rectangle (5.7,1.8);
\draw (5,6.8) node[scale=2, color = red] {$\Omega^*$};

\draw[rounded corners,line width=1mm, color = red] (6.5,2.3) rectangle (11.7,3.6);



\draw[line width=1mm, color = red] (b) circle (0.3);
\draw [line width = 1.4pt] (b) -- (a);
\draw [out=-40,in=190,line width = 1.4pt] (b) to (w);

\draw [out=60,in=300,line width = 1.4pt] (w) to (y);

\draw [line width = 1.4pt] (c1) -- (c2);
\draw [line width = 1.4pt] (c2) -- (c3);
\draw [line width = 1.4pt] (c1) -- (c3);

\draw [line width = 1.4pt] (x) -- (c1);
\draw [out=60,in=200,line width = 1.4pt] (z) to (c2);
\draw [line width = 1.4pt] (t1) -- (c2);
\draw [line width = 1.4pt] (t2) -- (c3);

\draw [line width = 1.4pt] (x) -- (d);
\draw [line width = 1.4pt] (d) -- (t3);

\draw [line width = 1.4pt] (y) -- (t1);
\draw [line width = 1.4pt] (y) -- (t2);
\draw [out=0,in=160,line width = 1.4pt] (y) to (t3);
%
%
\draw [line width = 1.4pt] (u1) -- (u2);
\draw [line width = 1.4pt] (u2) -- (u3);
\draw [line width = 1.4pt] (w) -- (t1);
\draw [line width = 1.4pt] (w) -- (u1);
\draw [line width = 1.4pt] (t2) -- (u2);
\draw [line width = 1.4pt] (t3) -- (u3);
\draw [line width = 1.4pt] (z) -- (v1);
\draw [line width = 1.4pt] (v1) -- (t1);
\draw [out=0, in=230,line width = 1.4pt] (v1) to (t3);

\draw [line width = 1.4pt] (x) -- (x1);
\draw [line width = 1.4pt] (x) -- (x2);
\draw [line width = 1.4pt] (x) -- (x3);

\draw [line width = 1.4pt] (y) -- (x1);
\draw [line width = 1.4pt] (y) -- (x2);

\draw [line width = 1.4pt] (y) -- (a);
\draw [line width = 1.4pt] (x1) -- (a);
\draw [line width = 1.4pt] (x2) -- (a);
\draw [line width = 1.4pt] (x3) -- (a);

\draw [line width = 1.4pt] (z) -- (z1);
\draw [line width = 1.4pt] (z) -- (z2);
\draw [line width = 1.4pt] (z1) -- (z2);

\draw [line width = 1.4pt] (p) -- (z1);
\draw [line width = 1.4pt] (p) -- (x2);

\draw [out=140,in=230,line width = 1.4pt] (z2) to (x2);
\end{tikzpicture}

%% file: tikz/preimage-2alt.tex
\begin{tikzpicture}

\node[noeud] (x) at (5,5.5) {};
\node[noeud] (y) at (5,4) {};
\node[noeud] (z) at (5,2.5) {};
\node[noeud] (w) at (5,1) {};

\node[noeud] (a) at (0.5,4) {};
\node[noeud] (b) at (0,2.0) {};

\node[noeud] (t1) at (7,3) {};
\node[noeud] (t2) at (9,3) {};
\node[noeud] (t3) at (11,3) {};

\node (c) at (8,4.75) {};
\node[noeud] (c1) at (8,5) {};
\node[noeud] (c2) at (7.6,4.5) {};
\node[noeud] (c3) at (8.4,4.5) {};

\node[noeud] (d) at (10,6.2) {};

\node[noeud] (u1) at (7,0) {};
\node[noeud] (u2) at (8,0) {};
\node[noeud] (u3) at (9,0) {};

\node[noeud] (v1) at (7.0,1.6) {};

\node[noeud] (x1) at (3,5.7) {};
\node[noeud] (x2) at (3,5.1) {};
\node[noeud] (x3) at (1.5,3) {};

\node[noeud] (p) at (3.5,3.4) {};

\node[noeud] (z1) at (3.2,2.7) {};
\node[noeud] (z2) at (3.2,2.1) {};

\draw (x) node[above left,scale=1.5] {$x$};
\draw (y) node[below left,scale=1.5] {$y$};
\draw (z) node[below left,scale=1.5] {$z$};
\draw (a) node[above left,scale=1.5] {$a$};

\draw[rounded corners,line width=1.5mm, color = green] (-0.5,-0.5) rectangle (12.0,7.2);

\draw[rounded corners,line width=1mm, color = red] (4.3,6.2) rectangle (5.7,1.8);
\draw (5,6.8) node[scale=2, color = red] {$\Omega^{**}$};

\draw[rounded corners,line width=1mm, color = red] (6.5,2.3) rectangle (11.7,3.6);



\draw[line width=1mm, color = red] (a) circle (0.3);
\draw [line width = 1.4pt] (b) -- (a);
\draw [out=-40,in=190,line width = 1.4pt] (b) to (w);

\draw [out=60,in=300,line width = 1.4pt] (w) to (y);

\draw [line width = 1.4pt] (c1) -- (c2);
\draw [line width = 1.4pt] (c2) -- (c3);
\draw [line width = 1.4pt] (c1) -- (c3);

\draw [line width = 1.4pt] (x) -- (c1);
\draw [out=60,in=200,line width = 1.4pt] (z) to (c2);
\draw [line width = 1.4pt] (t1) -- (c2);
\draw [line width = 1.4pt] (t2) -- (c3);

\draw [line width = 1.4pt] (x) -- (d);
\draw [line width = 1.4pt] (d) -- (t3);

\draw [line width = 1.4pt] (y) -- (t1);
\draw [line width = 1.4pt] (y) -- (t2);
\draw [out=0,in=160,line width = 1.4pt] (y) to (t3);
%
%
\draw [line width = 1.4pt] (u1) -- (u2);
\draw [line width = 1.4pt] (u2) -- (u3);
\draw [line width = 1.4pt] (w) -- (t1);
\draw [line width = 1.4pt] (w) -- (u1);
\draw [line width = 1.4pt] (t2) -- (u2);
\draw [line width = 1.4pt] (t3) -- (u3);
\draw [line width = 1.4pt] (z) -- (v1);
\draw [line width = 1.4pt] (v1) -- (t1);
\draw [out=0, in=230,line width = 1.4pt] (v1) to (t3);

\draw [line width = 1.4pt] (x) -- (x1);
\draw [line width = 1.4pt] (x) -- (x2);
\draw [line width = 1.4pt] (x) -- (x3);

\draw [line width = 1.4pt] (y) -- (x1);
\draw [line width = 1.4pt] (y) -- (x2);

\draw [line width = 1.4pt] (y) -- (a);
\draw [line width = 1.4pt] (x1) -- (a);
\draw [line width = 1.4pt] (x2) -- (a);
\draw [line width = 1.4pt] (x3) -- (a);

\draw [line width = 1.4pt] (z) -- (z1);
\draw [line width = 1.4pt] (z) -- (z2);
\draw [line width = 1.4pt] (z1) -- (z2);

\draw [line width = 1.4pt] (p) -- (z1);
\draw [line width = 1.4pt] (p) -- (x2);

\draw [out=140,in=230,line width = 1.4pt] (z2) to (x2);
\end{tikzpicture}

%% file: tikz/minabSepExt_instance.tex
\begin{tikzpicture}
\node[scale=1.5] at (-0.6,-1) {$a$};
\node[draw, circle, minimum height=0.2cm, minimum width=0.2cm, fill=black] (a) at (0,-1) {};

\node[draw, rectangle, minimum height=0.2cm, minimum width=0.2cm, fill=red] (3) at (1,0) {};
\draw (3) node[above left,scale=1.5, color = red] {$f_1$};

\node[scale=1.5, color = black!30!green] at (2,0.4) {$m_1$};
\node[draw, diamond, minimum height=0.2cm, minimum width=0.2cm, fill=black!30!green] (4) at (2,0) {};
\node[draw, circle, minimum height=0.2cm, minimum width=0.2cm, fill=black] (1) at (2,1.1) {};
\node[draw, circle, minimum height=0.2cm, minimum width=0.2cm, fill=black] (5) at (1,-1) {};
\node[draw, circle, minimum height=0.2cm, minimum width=0.2cm, fill=black] (6) at (2.1,-0.8) {};
\node[draw, circle, minimum height=0.2cm, minimum width=0.2cm, fill=black] (2) at (4,1) {};
\node[draw, circle, minimum height=0.2cm, minimum width=0.2cm, fill=black] (7) at (4,-0.5) {};
\node[scale=1.5, color = black!30!green] at (2,-1.5) {$m_2$};
\node[draw, diamond, minimum height=0.2cm, minimum width=0.2cm, fill=black!30!green] (8) at (2,-2) {};

\node[draw, rectangle, minimum height=0.2cm, minimum width=0.2cm, fill=red] (9) at (3,-2) {};
\draw (9) node[below right,scale=1.5, color = red] {$f_2$};

\node[draw, circle, minimum height=0.2cm, minimum width=0.2cm, fill=black] (10) at (4,-1.5) {};

\node[scale=1.5] at (5.6,0) {$b$};
\node[draw, circle, minimum height=0.2cm, minimum width=0.2cm, fill=black] (b) at (5,0) {};

\draw[line width = 1.4pt] (3) -- (1);
\draw[line width = 1.4pt] (a) -- (3);
\draw[line width = 1.4pt] (a) -- (5);
\draw[line width = 1.4pt] (a) -- (8);

\draw[line width = 1.4pt] (1) -- (2);

\draw[line width = 1.4pt] (2) -- (b);

\draw[line width = 1.4pt] (3) -- (4);
\draw[line width = 1.4pt] (3) -- (6);

\draw[line width = 1.4pt] (4) -- (2);
\draw[line width = 1.4pt] (4) -- (6);

\draw[line width = 1.4pt] (5) -- (6);

\draw[line width = 1.4pt] (6) -- (7);
\draw[line width = 1.4pt] (6) -- (9);

\draw[line width = 1.4pt] (7) -- (b);

\draw[line width = 1.4pt] (8) -- (9);

\draw[line width = 1.4pt] (9) -- (10);

\draw[line width = 1.4pt] (10) -- (b);

\node[scale=2, color = blue] at (2,2) {$S$};
\draw[line width = 1.4pt, color = blue] (2,-0.5) ellipse (0.65 and 2.1);

\end{tikzpicture}

%% file: tikz/gadget_xi.tex
\begin{tikzpicture}
\node[draw, circle, minimum height=0.2cm, minimum width=0.2cm, fill=black] (a) at (0,0) {};
\draw (a) node[below left,scale=1.5] {$a$};

\node[draw, circle, minimum height=0.2cm, minimum width=0.2cm, fill=black] (xiv1) at (-1,3) {};
\draw (xiv1) node[above left,scale=1.5] {$\beta_i$};

\node[draw, circle, minimum height=0.2cm, minimum width=0.2cm, fill=black] (xif1) at (1,3) {};
\draw (xif1) node[above right,scale=1.5] {$\notxi{\beta}{i}$};

\node[draw, circle, minimum height=0.2cm, minimum width=0.2cm, fill=black] (xiv2) at (1,2) {};
\draw (xiv2) node[below right,scale=1.5] {$\alpha_i$};

\node[draw, circle, minimum height=0.2cm, minimum width=0.2cm, fill=black] (xif2) at (-1,2) {};
\draw (xif2) node[below left,scale=1.5] {$\notxi{\alpha}{i}$};

\node[draw, circle, minimum height=0.2cm, minimum width=0.2cm, fill=black] (b) at (0,5) {};
\draw (b) node[above right,scale=1.5] {$b$};

\node[draw, diamond, minimum height=0.2cm, minimum width=0.2cm, fill=black!30!green] (m1) at (0,2) {};
\draw (m1) node[below left,scale=1.5, color = black!30!green] {$a_i$};

\node[draw, diamond, minimum height=0.2cm, minimum width=0.2cm, fill=black!30!green] (m2) at (0,3) {};
\draw (m2) node[above right,scale=1.5, color = black!30!green] {$b_i$};

%

\draw [line width = 1.4pt] (a) -- (xiv2);
\draw [line width = 1.4pt] (a) -- (xif2);
\draw [line width = 1.4pt] (a) -- (m1);

\draw [line width = 1.4pt] (b) -- (xiv1);
\draw [line width = 1.4pt] (b) -- (xif1);
\draw [line width = 1.4pt] (b) -- (m2);

\draw [line width = 1.4pt] (xiv1) -- (xif2);
\draw [line width = 1.4pt] (xif1) -- (xiv2);

\draw [line width = 1.4pt] (m2) -- (xiv2);
\draw [line width = 1.4pt] (m2) -- (xif2);

\draw [line width = 1.4pt] (m1) -- (xiv1);
\draw [line width = 1.4pt] (m1) -- (xif1);

%

\end{tikzpicture}

%% file: tikz/clause_ci.tex
\begin{tikzpicture}
\node[draw, circle, minimum height=0.2cm, minimum width=0.2cm, fill=black] (a) at (0,0) {};
\draw (a) node[below left,scale=1.5] {$a$};

\node[draw, circle, minimum height=0.2cm, minimum width=0.2cm, fill=black] (xiv1) at (-4,3) {};
\draw (xiv1) node[above left,scale=1.5] {$\beta_i$};

\node[draw, circle, minimum height=0.2cm, minimum width=0.2cm, fill=black] (xif1) at (-2,3) {};
\draw (xif1) node[right,scale=1.5] {$\notxi{\beta}{i}$};

\node[draw, circle, minimum height=0.2cm, minimum width=0.2cm, fill=black] (xiv2) at (-2,2) {};
\draw (xiv2) node[right,scale=1.5] {$\alpha_i$};

\node[draw, circle, minimum height=0.2cm, minimum width=0.2cm, fill=black] (xif2) at (-4,2) {};
\draw (xif2) node[below left,scale=1.5] {$\notxi{\alpha}{i}$};

\node[draw, circle, minimum height=0.2cm, minimum width=0.2cm, fill=black] (xjv1) at (2,3) {};
\draw (xjv1) node[left,scale=1.5] {$\beta_k$};

\node[draw, circle, minimum height=0.2cm, minimum width=0.2cm, fill=black] (xjf1) at (4,3) {};
\draw (xjf1) node[above right,scale=1.5] {$\notxi{\beta}{k}$};

\node[draw, circle, minimum height=0.2cm, minimum width=0.2cm, fill=black] (xjv2) at (4,2) {};
\draw (xjv2) node[below right,scale=1.5] {$\alpha_k$};

\node[draw, circle, minimum height=0.2cm, minimum width=0.2cm, fill=black] (xjf2) at (2,2) {};
\draw (xjf2) node[left,scale=1.5] {$\notxi{\alpha}{k}$};

\node[draw, circle, minimum height=0.2cm, minimum width=0.2cm, fill=black] (b) at (0,5) {};
\draw (b) node[above right,scale=1.5] {$b$};

\node[draw, diamond, minimum height=0.2cm, minimum width=0.2cm, fill=black!30!green] (m1) at (-3,2) {};
\node[draw, diamond, minimum height=0.2cm, minimum width=0.2cm, fill=black!30!green] (n1) at (3,2) {};

\node[draw, diamond, minimum height=0.2cm, minimum width=0.2cm, fill=black!30!green] (m2) at (-3,3) {};
\node[draw, diamond, minimum height=0.2cm, minimum width=0.2cm, fill=black!30!green] (n2) at (3,3) {};

\node[draw, diamond, minimum height=0.2cm, minimum width=0.2cm, fill=black!30!green] (cv) at (0,2.5) {};
\draw (cv) node[above,scale=1.5, color = black!30!green] {$C_j$};
%

\draw [line width = 1.4pt] (a) -- (xiv2);
\draw [line width = 1.4pt] (a) -- (xif2);
\draw [line width = 1.4pt] (a) -- (m1);

\draw [line width = 1.4pt] (a) -- (xjv2);
\draw [line width = 1.4pt] (a) -- (xjf2);
\draw [line width = 1.4pt] (a) -- (n1);

\draw [line width = 1.4pt] (b) -- (xiv1);
\draw [line width = 1.4pt] (b) -- (xif1);
\draw [line width = 1.4pt] (b) -- (m2);

\draw [line width = 1.4pt] (b) -- (xjv1);
\draw [line width = 1.4pt] (b) -- (xjf1);
\draw [line width = 1.4pt] (b) -- (n2);

\draw [line width = 1.4pt] (xiv1) -- (xif2);
\draw [line width = 1.4pt] (xif1) -- (xiv2);

\draw [line width = 1.4pt] (xjv1) -- (xjf2);
\draw [line width = 1.4pt] (xjf1) -- (xjv2);

\draw [line width = 1.4pt] (m2) -- (xiv2);
\draw [line width = 1.4pt] (m2) -- (xif2);

\draw [line width = 1.4pt] (n2) -- (xjv2);
\draw [line width = 1.4pt] (n2) -- (xjf2);

\draw [line width = 1.4pt] (m1) -- (xiv1);
\draw [line width = 1.4pt] (m1) -- (xif1);

\draw [line width = 1.4pt] (n1) -- (xjv1);
\draw [line width = 1.4pt] (n1) -- (xjf1);

\draw [out = 180, in = -70, line width = 1.4pt] (cv) to (xif1);
\draw [out = 0, in = 250, line width = 1.4pt] (cv) to (xjv1);
\draw [line width = 1.4pt] (cv) -- (a);
%

\end{tikzpicture}

%% file: tikz/reduc_3sat.tex
\begin{tikzpicture}
\node[draw, circle, minimum height=0.2cm, minimum width=0.2cm, fill=black] (a) at (10,-4) {};
\draw (a) node[right,scale=2,xshift = 1mm] {$a$};
\node[draw, circle, minimum height=0.2cm, minimum width=0.2cm, fill=black] (b) at (10,6.5) {};
\draw (b) node[above,scale=2,yshift = 1mm] {$b$};

\node[scale=1.7] at (-1.8,2) {$\beta_1$};
\node[draw, circle, minimum height=0.2cm, minimum width=0.2cm, fill=blue] (x1v1) at (-1,2) {};

\node[scale=1.7] at (1.8,2) {$\notxi{\beta}{1}$};
\node[draw, circle, minimum height=0.2cm, minimum width=0.2cm, fill=black] (x1f1) at (1,2) {};

\node[scale=1.7] at (1.8,1) {$\alpha_1$};
\node[draw, circle, minimum height=0.2cm, minimum width=0.2cm, fill=blue] (x1v2) at (1,1) {};

\node[scale=1.7] at (-1.8,1) {$\notxi{\alpha}{1}$};
\node[draw, circle, minimum height=0.2cm, minimum width=0.2cm, fill=black] (x1f2) at (-1,1) {};

\node[draw, diamond, minimum height=0.2cm, minimum width=0.2cm, fill=black!30!green] (m11) at (0,1) {};

\node[draw, diamond, minimum height=0.2cm, minimum width=0.2cm, fill=black!30!green] (m12) at (0,2) {};

\draw [line width = 0.4pt] (a) -- (x1v2);
\draw [line width = 0.4pt] (a) -- (x1f2);
\draw [line width = 0.4pt] (a) -- (m11);

\draw [line width = 0.4pt] (b) -- (x1v1);
\draw [line width = 0.4pt] (b) -- (x1f1);
\draw [line width = 0.4pt] (b) -- (m12);

\draw [line width = 1.4pt] (x1v1) -- (x1f2);
\draw [line width = 1.4pt] (x1f1) -- (x1v2);

\draw [line width = 1.4pt] (m12) -- (x1v2);
\draw [line width = 1.4pt] (m12) -- (x1f2);

\draw [line width = 1.4pt] (m11) -- (x1v1);
\draw [line width = 1.4pt] (m11) -- (x1f1);

\node[scale=1.7] at (3.2,2) {$\beta_2$};
\node[draw, circle, minimum height=0.2cm, minimum width=0.2cm, fill=blue] (x2v1) at (4,2) {};

\node[scale=1.7] at (6.8,2) {$\notxi{\beta}{2}$};
\node[draw, circle, minimum height=0.2cm, minimum width=0.2cm, fill=black] (x2f1) at (6,2) {};

\node[scale=1.7] at (6.8,1) {$\alpha_2$};
\node[draw, circle, minimum height=0.2cm, minimum width=0.2cm, fill=blue] (x2v2) at (6,1) {};

\node[scale=1.7] at (3.2,1) {$\notxi{\alpha}{2}$};
\node[draw, circle, minimum height=0.2cm, minimum width=0.2cm, fill=black] (x2f2) at (4,1) {};

\node[draw, diamond, minimum height=0.2cm, minimum width=0.2cm, fill=black!30!green] (m21) at (5,1) {};

\node[draw, diamond, minimum height=0.2cm, minimum width=0.2cm, fill=black!30!green] (m22) at (5,2) {};

\draw [line width = 0.4pt] (a) -- (x2v2);
\draw [line width = 0.4pt] (a) -- (x2f2);
\draw [line width = 0.4pt] (a) -- (m21);

\draw [line width = 0.4pt] (b) -- (x2v1);
\draw [line width = 0.4pt] (b) -- (x2f1);
\draw [line width = 0.4pt] (b) -- (m22);

\draw [line width = 1.4pt] (x2v1) -- (x2f2);
\draw [line width = 1.4pt] (x2f1) -- (x2v2);

\draw [line width = 1.4pt] (m22) -- (x2v2);
\draw [line width = 1.4pt] (m22) -- (x2f2);

\draw [line width = 1.4pt] (m21) -- (x2v1);
\draw [line width = 1.4pt] (m21) -- (x2f1);

\node[scale=1.7] at (8.2,2) {$\beta_3$};
\node[draw, circle, minimum height=0.2cm, minimum width=0.2cm, fill=black] (x3v1) at (9,2) {};

\node[scale=1.7] at (11.8,2) {$\notxi{\beta}{3}$};
\node[draw, circle, minimum height=0.2cm, minimum width=0.2cm, fill=blue] (x3f1) at (11,2) {};

\node[scale=1.7] at (11.8,1) {$\alpha_3$};
\node[draw, circle, minimum height=0.2cm, minimum width=0.2cm, fill=black] (x3v2) at (11,1) {};

\node[scale=1.7] at (8.2,1) {$\notxi{\alpha}{3}$};
\node[draw, circle, minimum height=0.2cm, minimum width=0.2cm, fill=blue] (x3f2) at (9,1) {};

\node[draw, diamond, minimum height=0.2cm, minimum width=0.2cm, fill=black!30!green] (m31) at (10,1) {};

\node[draw, diamond, minimum height=0.2cm, minimum width=0.2cm, fill=black!30!green] (m32) at (10,2) {};

\draw [line width = 0.4pt] (a) -- (x3v2);
\draw [line width = 0.4pt] (a) -- (x3f2);
\draw [line width = 0.4pt] (a) -- (m31);

\draw [line width = 0.4pt] (b) -- (x3v1);
\draw [line width = 0.4pt] (b) -- (x3f1);
\draw [line width = 0.4pt] (b) -- (m32);

\draw [line width = 1.4pt] (x3v1) -- (x3f2);
\draw [line width = 1.4pt] (x3f1) -- (x3v2);

\draw [line width = 1.4pt] (m32) -- (x3v2);
\draw [line width = 1.4pt] (m32) -- (x3f2);

\draw [line width = 1.4pt] (m31) -- (x3v1);
\draw [line width = 1.4pt] (m31) -- (x3f1);

\node[scale=1.7] at (13.2,2) {$\beta_4$};
\node[draw, circle, minimum height=0.2cm, minimum width=0.2cm, fill=black] (x4v1) at (14,2) {};

\node[scale=1.7] at (16.8,2) {$\notxi{\beta}{4}$};
\node[draw, circle, minimum height=0.2cm, minimum width=0.2cm, fill=blue] (x4f1) at (16,2) {};

\node[scale=1.7] at (16.8,1) {$\alpha_4$};
\node[draw, circle, minimum height=0.2cm, minimum width=0.2cm, fill=black] (x4v2) at (16,1) {};

\node[scale=1.7] at (13.2,1) {$\notxi{\alpha}{4}$};
\node[draw, circle, minimum height=0.2cm, minimum width=0.2cm, fill=blue] (x4f2) at (14,1) {};

\node[draw, diamond, minimum height=0.2cm, minimum width=0.2cm, fill=black!30!green] (m41) at (15,1) {};

\node[draw, diamond, minimum height=0.2cm, minimum width=0.2cm, fill=black!30!green] (m42) at (15,2) {};

\draw [line width = 0.4pt] (a) -- (x4v2);
\draw [line width = 0.4pt] (a) -- (x4f2);
\draw [line width = 0.4pt] (a) -- (m41);

\draw [line width = 0.4pt] (b) -- (x4v1);
\draw [line width = 0.4pt] (b) -- (x4f1);
\draw [line width = 0.4pt] (b) -- (m42);

\draw [line width = 1.4pt] (x4v1) -- (x4f2);
\draw [line width = 1.4pt] (x4f1) -- (x4v2);

\draw [line width = 1.4pt] (m42) -- (x4v2);
\draw [line width = 1.4pt] (m42) -- (x4f2);

\draw [line width = 1.4pt] (m41) -- (x4v1);
\draw [line width = 1.4pt] (m41) -- (x4f1);

\node[scale=1.7] at (18.2,2) {$\beta_5$};
\node[draw, circle, minimum height=0.2cm, minimum width=0.2cm, fill=black] (x5v1) at (19,2) {};

\node[scale=1.7] at (21.8,2) {$\notxi{\beta}{5}$};
\node[draw, circle, minimum height=0.2cm, minimum width=0.2cm, fill=blue] (x5f1) at (21,2) {};

\node[scale=1.7] at (21.8,1) {$\alpha_5$};
\node[draw, circle, minimum height=0.2cm, minimum width=0.2cm, fill=black] (x5v2) at (21,1) {};

\node[scale=1.7] at (18.2,1) {$\notxi{\alpha}{5}$};
\node[draw, circle, minimum height=0.2cm, minimum width=0.2cm, fill=blue] (x5f2) at (19,1) {};

\node[draw, diamond, minimum height=0.2cm, minimum width=0.2cm, fill=black!30!green] (m51) at (20,1) {};

\node[draw, diamond, minimum height=0.2cm, minimum width=0.2cm, fill=black!30!green] (m52) at (20,2) {};

\draw [line width = 0.4pt] (a) -- (x5v2);
\draw [line width = 0.4pt] (a) -- (x5f2);
\draw [line width = 0.4pt] (a) -- (m51);

\draw [line width = 0.4pt] (b) -- (x5v1);
\draw [line width = 0.4pt] (b) -- (x5f1);
\draw [line width = 0.4pt] (b) -- (m52);

\draw [line width = 1.4pt] (x5v1) -- (x5f2);
\draw [line width = 1.4pt] (x5f1) -- (x5v2);

\draw [line width = 1.4pt] (m52) -- (x5v2);
\draw [line width = 1.4pt] (m52) -- (x5f2);

\draw [line width = 1.4pt] (m51) -- (x5v1);
\draw [line width = 1.4pt] (m51) -- (x5f1);

\node[scale=1.7, black!30!green] at (7,-7.6) {$C_1$};
\node [draw, diamond, minimum height=0.2cm, minimum width=0.2cm, fill=black!30!green] (c1) at (7,-7) {};
\node[scale=1.7, black!30!green] at (10,-7.6) {$C_2$};
\node [draw, diamond, minimum height=0.2cm, minimum width=0.2cm, fill=black!30!green] (c2) at (10,-7) {};
\node[scale=1.7, black!30!green] at (13,-7.6) {$C_3$};
\node [draw, diamond, minimum height=0.2cm, minimum width=0.2cm, fill=black!30!green] (c3) at (13,-7){};

\draw [line width = 1.4pt] (c1) -- (a);
\draw [line width = 1.4pt] (c1) -- (x1f1);
\draw [line width = 1.4pt] (c1) -- (x2f1);
\draw [line width = 1.4pt] (c1) -- (x3f1);

\draw [line width = 1.4pt] (c2) -- (a);
\draw [line width = 1.4pt] (c2) -- (x1v1);
\draw [line width = 1.4pt] (c2) -- (x2f1);
\draw [line width = 1.4pt] (c2) -- (x5f1);

\draw [line width = 1.4pt] (c3) -- (a);
\draw [line width = 1.4pt] (c3) -- (x3v1);
\draw [line width = 1.4pt] (c3) -- (x4v1);
\draw [line width = 1.4pt] (c3) -- (x5v1);
\end{tikzpicture}